\documentclass[xcolor=svgnames,11pt,reqno]{amsart}
\usepackage[T1]{fontenc}
\usepackage[round]{natbib}
\usepackage{amsmath}
\usepackage{amssymb}
\usepackage{amsthm}
\usepackage{fancyhdr}
\usepackage{enumitem}
\usepackage{comment}
\usepackage{relsize}
\usepackage{nicefrac}
\usepackage{bm}
\usepackage{mathrsfs}
\usepackage{graphicx}
\usepackage[utf8]{inputenc}
\usepackage{cancel}
\usepackage{dsfont}
\usepackage{mathtools}
\usepackage{tikz,xcolor}
\usepackage{dirtytalk}
\usetikzlibrary{cd,decorations.pathreplacing,matrix,arrows,positioning,automata,shapes,shadows,calc,fadings,decorations,through,intersections}
\usepackage{float} 
\usepackage{subcaption}
\usepackage{multirow}
\usepackage[left=2.5cm,top=2.8cm,right=2.5cm,bottom=2.5cm,head=3cm,headsep=1cm, foot=3cm]{geometry}
\usepackage{hyperref}
\usepackage[skip=2mm,indent=12pt]{parskip}
\definecolor{darkblue}{rgb}{0.1,0.1,0.9}
\definecolor{darkred}{rgb}{0.9,0.1,0.1}

\hypersetup{colorlinks, allcolors= darkblue}

\newtheorem{theorem}{Theorem}[section]
\newtheorem{definition}[theorem]{Definition}
\newtheorem{proposition}[theorem]{Proposition}
\newtheorem{assumption}[theorem]{Assumption}
\newtheorem{corollary}[theorem]{Corollary}
\newtheorem{lemma}[theorem]{Lemma}
\newtheorem{remark}[theorem]{Remark}

\newcommand{\cL}{\mathcal{L}}

\newcommand{\cP}{\mathcal{P}}

\newcommand{\cV}{\mathcal{V}}
\newcommand{\cI}{\mathcal{I}}
\newcommand{\cB}{\mathcal{B}}

\newcommand{\cC}{\mathcal{C}}

\newcommand{\cR}{\mathcal{R}}
\newcommand{\bbR}{\mathbb{R}}

\newcommand{\cO}{\mathcal{O}}

\newcommand{\R}{\mathbb{R}}

\usepackage{booktabs}
\usepackage{siunitx}
\usepackage[colorinlistoftodos,textsize=small]{todonotes}
\usepackage{cleveref}
\usepackage{caption}
\usepackage{anyfontsize}
\usepackage{yhmath}
\usepackage{comment}

\renewcommand{\l}{\left}
\renewcommand{\r}{\right}
\renewcommand{\tilde}{\widetilde}

\DeclareMathOperator*{\argsup}{\arg\sup}

\allowdisplaybreaks
\makeatletter

\newcommand{\Rmnum}[1]{\expandafter\@slowromancap\romannumeral #1@}

\title[Incentive Pareto Efficiency under Adverse Selection]{Incentive Pareto Efficiency in\vspace{0.2cm}\\Monopoly Insurance Markets with Adverse Selection}

\author[Maria Andraos and Mario Ghossoub]{Maria Andraos\vspace{0.1cm}\\ University of Waterloo\vspace{0.8cm}\\
Mario Ghossoub\vspace{0.1cm}\\ University of Waterloo\vspace{0.8cm}\\ \today}

\address{{\bf Maria Andraos}: University of Waterloo -- Department of Statistics and Actuarial Science -- 200 University Ave.\ W.\ -- Waterloo, ON, N2L 3G1 -- Canada\vspace{0.1cm}}\email{\href{mailto:mandraos@uwaterloo.ca}{mandraos@uwaterloo.ca}\vspace{0.3cm}}

\address{{\bf Mario Ghossoub}: University of Waterloo -- Department of Statistics and Actuarial Science -- 200 University Ave.\ W.\ -- Waterloo, ON, N2L 3G1 -- Canada\vspace{0.1cm}}
\email{\href{mailto:mario.ghossoub@uwaterloo.ca}{mario.ghossoub@uwaterloo.ca}\vspace{0cm}}

\thanks{\textit{JEL Classification:} D42, D61, D82, D86, G22. \vspace{0.2cm}}

\thanks{\textit{Key Words and Phrases:} Optimal insurance; asymmetric information; hidden types; individual rationality; incentive compatibility; Pareto optimality; incentive efficiency. \vspace{0.2cm}}

\thanks{Mario Ghossoub acknowledges financial support from the Natural Sciences and Engineering Research Council of Canada (NSERC Grant No.\ 2024-03744).}

\begin{document}

\maketitle

\begin{abstract}
We study a monopolistic insurance market with hidden information, where the agent's type $\theta$ is private information that is unobservable to the insurer, and it is drawn from a continuum of types. The hidden type affects both the loss distribution and the risk attitude of the agent. Within this framework, we show that a menu of contracts is incentive efficient if it maximizes social welfare function, subject to incentive compatibility and individual rationality constraints. This holds for general utility functionals. In the special case of Yaari Dual Utility, we provide two partial converse statements to this result, and we give a semi-explicit characterization of optimal solutions to the social welfare maximization problem. We do this under two different settings: (i) the first assumes that types are ordered in a way such that larger values of $\theta$ correspond to more risk-averse types who face stochastically larger losses; whereas (ii) the second assumes that larger values of $\theta$ correspond to less risk-averse types who face stochastically larger losses. In both settings, the structure of optimal menus of contracts depends on the level of the social welfare weight, and we examine several properties thereof. 
\end{abstract} 

\sloppy

\bigskip
\section{Introduction}
In insurance markets, contracts are written between two parties who generally do not share the same information about the underlying risk. In these markets, information asymmetry arises naturally because the agent typically knows more about their own exposure or behavior than the insurer can observe, as initially noted by \cite{allais53b}. Two major obstacles for a smooth running of the insurance mechanism are moral hazard and adverse selection, which attracted the attention of economists. Both problems have been extensively studied for their implications on contract design and market efficiency.

Moral hazard arises when the outcome of a contract is partly influenced by the agent’s actions, and the insurer cannot, without incurring costs, observe or verify to what extent the reported losses are attributable to the agent’s behavior. Specifically, \textit{ex ante} moral hazard occurs when the agent's unobservable actions affect the probability of a loss before it occurs. This has been studied by \cite{pauly1978overinsurance}, \cite{marshall1976moral}, \cite{holmstrom1979moral} and \cite{shavell1979moral}, among others. \textit{Ex post} moral hazard, on the other hand, occurs when the agent can misreport or influence the realized magnitude of the loss after it occurs. This was first pointed out by \cite{spence1978insurance} and later studied by \cite{townsend1979optimal}, for instance.

Our focus in this paper is on adverse selection, where the agent possesses private information about their risk characteristics and may use this hidden information to their own advantage. The insurer offers a menu of contracts designed in such a way that each agent type selects the contract intended for them, thereby revealing the agent's private information through their contract choice. This self selection mechanism must satisfy incentive compatibility, ensuring that each agent prefers the contract designed for their own type over those intended for other types. \cite{stiglitz1976equilibrium} and \cite{stiglitz1977monopoly} study insurance markets where the agent’s type, high risk or low risk, is the private information. They both assume that the agent is risk averse with expected utility preferences.  
In particular, \cite{stiglitz1976equilibrium} consider a competitive market with multiple risk-neutral insurers, and they show that under information asymmetry, a separating equilibrium may arise. Different risk types are offered different insurance contracts tailored to their own characteristics. Low-risk agents receive partial coverage at a lower premium, while high-risk agents obtain full coverage but pay a higher premium.
In contrast, \cite{stiglitz1977monopoly} considers a monopoly market with a single risk-neutral insurer offering a non-linear pricing menu subject to individual rationality and incentive compatibility constraints. He shows that under information asymmetry the equilibrium is also separating. In this setting, low-risk types may prefer not to purchase any coverage but if they do, they receive partial coverage; whereas high-risk types receive full coverage. 
\cite{chade2012optimal} extend the work of \cite{stiglitz1977monopoly} by moving beyond the two-type framework, to a setting with a continuum of types. They show that in equilibrium, the monopolist insurer expects a strictly positive profit. The highest risk type receives full coverage, the lowest risk type is indifferent between insurance and no insurance, and all other types receive partial coverage.
\cite{gershkov2023optimal} reexamine the classic monopoly insurance problem under adverse selection of \cite{stiglitz1977monopoly}, allowing for a continuum of privately known types, a type-dependent loss distribution hidden from the insurer, and Yaari's dual utility for the agent's preferences (\cite{yaari1987dual}) rather than expected utility. This dual utility of Yaari is represented by a Choquet integral with respect to a distorted probability, where the distortion function represents the risk attitude (risk aversion) of the policyholder, and it is assumed to be known by the insurer. The monopolist risk neutral insurer's problem is formulated as a constrained optimal contracting problem of expected-profit maximization subject to incentive compatibility and individual rationality constraints. They show that the optimal menu of contracts takes the form of a layered deductible indemnity schedule, under a regularity condition. Moreover, under specific technical conditions, the optimal menus consist of either linear deductible contracts or of upper-limit contracts. Consistent with the monopolist setting, they also show that, under asymmetric information, the insurer earns strictly positive profit. 
Recently, \cite{ghossoub2025optimal} consider a monopoly insurance market in which the insurer is risk neutral and profit maximizing. The agent's preferences are given by Yaari's dual utility, and the agent's risk aversion level (or risk attitude) is his private information. Hence, in contrast to \cite{gershkov2023optimal}, the insurer observes the loss distribution but is unable to observe the agent's risk attitude. They formulate the insurer's problem as designing an incentive compatible and individually rational menu of contracts that maximizes expected profit. They show that the optimal menu consists of layered deductible contracts, insurance coverage and premia are monotone in the level of risk aversion, the most risk averse agent receives full coverage, and the insurer earns strictly positive profit.

Pareto efficiency under asymmetric information has been studied extensively in the literature. 
For example, early contributions by \cite{prescott1984pareto}, \cite{jerez2003dual}, and \cite{bisin2006efficient} analyze constrained Pareto efficiency in environments with incentive compatibility, primarily in settings with finitely many types. 
\cite{ghossoub2025optimal} label this efficiency as incentive Pareto optimality. They show that any individually-rational and incentive-compatible menu that maximizes a social welfare function is incentive Pareto optimal, thereby providing a sufficient condition for incentive efficiency. Crucially, the necessity part of the equivalence between social welfare maximization and incentive Pareto efficiency was left unaddressed. This is arguably the more interesting, and the more complex result, which we establish in this paper.

In this paper, we consider a monopolistic insurance market, in which the agent's type is private information, hidden from the insurer and drawn from a continuum $\Theta$ of types. The agent faces a type-dependent loss, with a continuous distribution function that is unknown to the insurer. Additionally, the agent's utility functional $U$ is a function of the type $\theta$, in that $\theta$ is a parameter of the agent's utility evaluation of their welfare. Consequently, the agent's type affects both the riskiness of the agent (the loss distribution) and the risk-attitude of the agent (e.g., their risk aversion, through a parameterization of the utility functional). We extend the notion of incentive efficiency (or incentive Pareto optimality) introduced in \cite{ghossoub2025optimal} in the context of Yaari's Dual Utility to arbitrary type-dependent utility functionals for both the policyholder and the insurer.

Our first result shows that a menu of contracts is incentive Pareto efficient if it maximizes a social welfare functional, subject to individual rationality and incentive compatibility, in the presence of hidden information and a continuum of types. This result holds for arbitrary utility functionals of both the insurer and the agent, thereby providing the theoretical foundation for subsequent analysis. In the case of Yaari's Dual Utilities, we also provide two converse results.

We then characterize optimal solutions to the social welfare problem in the specific case of Yaari's Dual Utility. Specifically, in this special case of our general setup, we assume that both the insurer's and agent's preferences are represented by Yaari Dual Utility functionals, expressed as a Choquet integral with respect to a distorted probability. In this case, the insurer can observe neither the agent's risk attitude nor their type-dependent loss distribution. We consider two orderings of the type space. In the first case, higher types are more risk averse and face stochastically larger losses. In the second case, higher types are less risk averse and face stochastically larger losses. In both cases, each type of agent is assumed to be weakly more risk averse than the monopolistic insurer.
We show in Theorem \ref{th:solution_characterization_IPO} that, depending on the social weight level and under some technical conditions, the optimal menu of contracts exhibits one of two distinct forms: either a layered marginal retention structure, or full coverage (zero retention).

Additionally, in both aforementioned orderings of the type space, the optimal welfare-maximizing menu displays some desirable monotonicity properties. Specifically, in the first ordering of the type space, higher types facing stochastically larger losses receive more coverage at higher premia. In the separating regions of coverage, efficiency at the top holds: full coverage is provided to the highest type, if the highest type's loss distribution is nontrivial and the insurer is strictly less risk averse than the highest type. This echoes the results of \cite{chade2012optimal}, \cite{gershkov2023optimal}, and \cite{ghossoub2025optimal}. The insurer absorbs the surplus from the lowest type leaving them indifferent between insuring and not insuring, and higher types derive lower utilities from the optimal menu. On the insurer's side, the utility depends on the degree of loss transfer. Serving higher types who face stochastically larger losses does not necessarily yield higher utility for the insurer. In particular, Proposition \ref{prop:insurerutilityoptimalmenu} characterizes how the insurer's utility from the optimal menu varies across types. Similar results hold for the second kind of ordering of the type space.

The rest of the paper is organized as follows. Section \ref{sec:insurancemarket} introduces the insurance market model. In Section \ref{sec:IPO}, we define incentive-efficient menus of contracts and establish a social-welfare sufficient condition for incentive efficiency, in the general case of type-dependent utility functionals. Section \ref{sec:DU} provides a characterization of optimal welfare-maximizing menus in the Yaari Dual Utility framework, under different assumptions on the ordering of the type space. Section \ref{sec:conclusion} concludes. Proofs and related analysis are given in the \hyperlink{LinkToAppendix}{Appendices}.

\bigskip
\section{The Insurance Market}
\label{sec:insurancemarket}

Let $(S, \Sigma, \mathbb{P})$ be a probability space, and denote by $B(\Sigma)$ the space of bounded, real-valued, and $\Sigma$-measurable functions. We consider an insurance market in which an agent is facing an insurable loss, modeled as an element of $B(\Sigma)$, and seeking coverage from a monopolist insurer, in return for a premium payment.

The agent has a type denoted by $\theta$, which is private information that is unobservable to the insurer. We assume that the agent's type $\theta$ is drawn from a continuum $\Theta = [\underline{\theta}, \bar \theta]$ of types. Let $\cB(\Theta)$ denote the Borel sigma-algebra on the type space $\Theta$, and equip the measurable space of types $(\Theta, \cB(\Theta))$ with the Lebesgue measure $\cL$. 

We assume that the loss faced by the agent is type dependent. Specifically, for $\theta \in \Theta$, the type-$\theta$ agent faces a nonnegative loss $L_\theta \in B(\Sigma)$, which can be covered by an indemnity function $I_\theta (L_\theta)$, in exchange for a nonnegative premium payment $p_\theta \in \bbR_+$. The loss $L_\theta$ takes values in $[0, \bar L_\theta]$, for some $\bar L_\theta < +\infty$. For simplicity, one can assume that for each $\theta \in \Theta$, the random variable $L_\theta$ takes values in the interval $[0, \bar L]$, where $\bar L := \underset{\theta \in \Theta}{\sup} \ \bar L_\theta < +\infty$ is the uniform upper bound.

To rule out potential \textit{ex post} moral hazard, we impose the customary restriction that the market only offers indemnities that satisfy the no-sabotage condition of \cite{CarlierDana}.

\begin{assumption} \label{ass:feasible_I}
We restrict the set of admissible indemnities to the following set of 1-Lipschitz and non-decreasing functions:
$$
\cI =  \{ I: [0, \bar L] \to [0, \bar L], I(0) =0 , 0 \leq I(l_1) - I(l_2) \leq l_1 - l_2, \ \forall \, 0 \leq l_2 \leq l_1 \leq \bar L \  \}.
$$
\end{assumption}

Not knowing the agent's type, the insurer sets out to design a menu of contracts, from which the agent can select one single contract. 

\begin{definition}
\label{def:contract_menu}
A contract is a pair $(I,p)\in \cI\times \bbR_+$, where $I\in\cI$ is a feasible indemnity and $p\geq 0$ is the premium paid by the policyholder to the insurer. A menu of contracts is a collection
$$
(I_{\theta}, p_{\theta})_{\theta \in \Theta},
$$
such that $(I_{\theta},p_{\theta})\in\cI\times\bbR_+$, for all $\theta\in\Theta$, the map $\theta \mapsto p_\theta$ is measurable, and the map $(\theta, l) \mapsto I_\theta(l)$ is jointly Borel measurable on $\Theta \times [0, \bar L]$.
\end{definition}

\medskip

Preferences in this market are represented by functionals $U,V : \Theta \times \cI \times \bbR_+ \to \R$, where for a given triplet $\left(\theta, I, p\right) \in \Theta \times \cI \times \bbR_+$, 
$$U\left(\theta, I, p\right)$$

\noindent denotes the end-of-period utility of a type-$\theta$ agent after purchasing the contract $(I, p)$; and 
$$V\left(\theta, I, p\right)$$

\noindent denotes the insurer's utility from providing the contract $(I,p)$ to a type-$\theta$ agent. All throughout, we make the following normalization:
$$V(\theta, 0, 0) = 0, \ \ \forall \, \theta \in \Theta.$$

\medskip

For notational convenience, we write
$$
U_{\theta}(I, p) := U(\theta, I, p) \ \ \text{and} \ \ 
V_{\theta}(I, p) := V(\theta, I, p),
$$

\noindent where $\theta$ captures type dependence in the loss riskiness and in the agent's risk characteristics.

\bigskip 
\section{Efficiency under asymmetric information}
\label{sec:IPO}
In full information settings, classical Pareto efficiency ensures that no one can be made better off without making someone else worse off. However, under asymmetric information, this classical concept of efficiency is no longer appropriate, unless incentive compatibility is imposed. This is because without incentive compatibility, a type-$\theta$ agent might misreport their type and select a contract intended for other types. In this section, we discuss the notion of incentive Pareto optimality previously examined by \cite{ghossoub2025optimal}, and we provide a social-welfare characterization thereof. 

\medskip
\subsection{Incentive Pareto Optimality}
Let $\mu$ be a probability measure on the measurable space of types $(\Theta, \cB(\Theta))$ representing the distribution of agent types in the market. That is, $\mu(B)$ denotes the proportion of agent types lying in $B$, for any measurable set $B \in \cB(\Theta)$. 

\medskip

\begin{assumption}
We assume that the probability measure $\mu$ is absolutely continuous with respect to the Lebesgue measure $\cL$ with Radon-Nikodym derivative $q$. That is, 
$$
\mu(B) = \int_B q(\theta) \, d\theta, \ \text{for all $B \in \cB(\Theta)$ }.
$$
\end{assumption}

\medskip

\noindent The cumulative distribution function over types is defined by,
$$
Q(\theta) = \mu( [\underline{\theta}, \theta] ), \ \forall \theta \in \Theta,
$$

\noindent with corresponding density function $q$, with respect to Lebesgue measure.

\medskip

\begin{assumption}\label{ass:integrability}
For any menu of contracts $(I_{\theta}, p_{\theta})_{\theta \in \Theta}$, the mappings $\theta \mapsto U_\theta(I_\theta,p_\theta)$ and $\theta \mapsto V_\theta(I_\theta,p_\theta)$ are in $L^1 (\Theta, \mu)$.
\end{assumption}

Assumption \ref{ass:integrability} is a technical condition that ensures that the agent’s and the insurer's utilities are integrable over the type space $\Theta$, for any menu of contracts. Consequently, aggregate utilities are well-defined as Bochner integrals. We refer to Appendix \ref{AppBochner} for a detailed discussion of Bochner spaces. 

\medskip

\begin{definition} \label{def:IR}
A menu of contracts $(I_{\theta}, p_{\theta} )_{\theta \in \Theta}$ is said to be individually rational IR if both of the following hold.
\begin{enumerate}
\item[(P1)] Each agent type is incentivized to participate in the market. That is,
$$
U_\theta (I_{\theta}, p_{\theta}) \geq U_\theta( L_{\theta}, 0), \, \text{for each $\theta \in \Theta$},
$$
where $U_\theta (L_{\theta}, 0)$ denotes the utility of the type-$\theta$ agent in the absence of insurance.

\smallskip

\item[(P2)] The insurer is incentivized to participate in the market. That is, 
$$\int_{\Theta} V_{\theta} (I_{\theta}, p_{\theta}) \, d\mu(\theta) \geq \int_{\Theta} V_{\theta} (0,0) \, d \mu(\theta) = 0,$$
where $V_{\theta} (0,0)$ denotes the insurer's utility when no insurance is provided to the type-$\theta$ agent. 
\end{enumerate}
\end{definition}

\noindent We denote by $\cI\cR$ the set of all individual rational menus. 

\medskip

\begin{definition} \label{def:IC}
A menu of contracts $(I_{\theta}, p_{\theta} )_{\theta \in \Theta}$ is said to be incentive compatible IC if no type $\theta$ can benefit from choosing the contract of another type $\theta^{\prime}$. That is, 
$$
U_\theta (I_{\theta}, p_{\theta} ) \geq U_\theta( I_{\theta ^{\prime} }, p_{\theta^{\prime}}), \ \ \text{for each $\theta, \theta ^{\prime} \in \Theta$}.
$$
\end{definition}
\noindent Let $\cI\cC$ be the set of all incentive compatible menus.

\medskip

\begin{definition} 
\label{def:IPO}
A menu $ \left( I^*_{\theta}, p^*_{\theta} \right)_{\theta \in \Theta} \in \cI\cR \cap \cI\cC$ is said to be \textit{incentive efficient} or \textit{incentive Pareto optimal} (IPO), if there does not exist another menu $(I_{\theta}, p_{\theta})_{\theta \in \Theta}  \in \cI\cR \cap \cI\cC$ such that the following two conditions hold:
\begin{enumerate}
\item For $\mu$-almost every $\theta \in \Theta$, 
$$U_{\theta}(I_{\theta}, p_{\theta}) \geq U_{\theta}(I^*_{\theta}, p^*_{\theta}),$$
and in addition
$$\int_{\Theta} V_{\theta} (I_{\theta}, p_{\theta}) \, d\mu (\theta) \geq \int_{\Theta} V_{\theta} (I^*_{\theta}, p^*_{\theta}) \, d \mu(\theta).$$

\medskip

\item At least one of the two following conditions holds: 
$$\int_{\Theta} V_{\theta}(I_{\theta}, p_{\theta}) \, d \mu(\theta) > \int_{\Theta} V_{\theta}(I^*_{\theta}, p^*_{\theta}) \, d \mu(\theta),$$ 

\noindent or 
$$ \mu \left(\left\{ \theta \in \Theta \,; \, U_{\theta} (I_{\theta}, p_{\theta}) >  U_{\theta}(I^*_{\theta}, p^*_{\theta}) \,\right\} \right) >0.$$ 
\end{enumerate}
\end{definition}

\medskip

\noindent We denote by $\cI \cP \cO \subseteq \cI\cR \cap \cI \cC$ the set of all incentive efficient menus.

\medskip
\subsection{Social Welfare Maximization}\label{sub:construction}
We now establish the link between incentive efficient menus of contracts and social welfare maximization.

\medskip

\begin{theorem}
\label{th:IPOiff}
If there exists a probability measure $\eta$ on the measurable space of types $(\Theta, \cB(\Theta))$ that is equivalent to $\mu$, and some $\alpha \in (0,1)$ such that a menu of contracts $( I^*_{\theta} , p^*_{\theta})_{\theta \in \Theta} $ is optimal for the problem
\begin{equation} \label{eq:IPO_sup}
\underset{(I_{\theta}, p_{\theta})_{\theta \in \Theta} \in \cI\cR \cap \cI\cC }  { \sup} \left \{ \alpha \int_{\Theta} U_{\theta} (I_{\theta}, p_{\theta} ) \, d\eta(\theta) + (1-\alpha) \int_{\Theta} V_{\theta} (I_{\theta}, p_{\theta} ) \, d\mu (\theta)\right\},
\end{equation}

\noindent then $(I^*_{\theta} , p^*_{\theta})_{\theta \in \Theta} $ is incentive efficient.
\end{theorem}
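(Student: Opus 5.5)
We argue by contradiction, as in the standard proof that weighted-welfare maximizers are Pareto optimal. Suppose $(I^*_\theta,p^*_\theta)_{\theta\in\Theta}$ solves \eqref{eq:IPO_sup} for some $\eta\sim\mu$ and $\alpha\in(0,1)$, but is not incentive efficient. By Definition~\ref{def:IPO}, there is then a menu $(I_\theta,p_\theta)_{\theta\in\Theta}\in\cI\cR\cap\cI\cC$ satisfying conditions (1) and (2). Since this menu is feasible for \eqref{eq:IPO_sup}, it suffices to show that its objective value strictly exceeds that of the starred menu.

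First we check that all integrals are well defined. Assumption~\ref{ass:integrability} gives integrability with respect to $\mu$, but the agent's term is integrated against $\eta$. We would either read Assumption~\ref{ass:integrability} as also covering $\eta$ (implicitly assumed, since \eqref{eq:IPO_sup} is posed with finite values), or note that the objective is finite at the optimum. In the latter case we can work with the difference $U_\theta(I_\theta,p_\theta)-U_\theta(I^*_\theta,p^*_\theta)$, which is $\mu$-a.e. and hence $\eta$-a.e. nonnegative, so its $\eta$-integral exists in $[0,\infty]$. Either way, the welfare difference
\[
\Delta:=\alpha\int_\Theta \big(U_\theta(I_\theta,p_\theta)-U_\theta(I^*_\theta,p^*_\theta)\big)\,d\eta(\theta)+(1-\alpha)\int_\Theta\big(V_\theta(I_\theta,p_\theta)-V_\theta(I^*_\theta,p^*_\theta)\big)\,d\mu(\theta)
\]
is meaningful. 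This technical point is the only real obstacle.

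Next we transfer the a.e. statements from $\mu$ to $\eta$ using equivalence. The set $N=\{\theta: U_\theta(I_\theta,p_\theta)<U_\theta(I^*_\theta,p^*_\theta)\}$ has $\mu(N)=0$, hence $\eta(N)=0$, so the first integrand in $\Delta$ is $\eta$-a.e. nonnegative. Condition (1) also makes the second term nonnegative. So $\Delta\ge0$.

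For strictness we split along condition (2). If the insurer's aggregate utility strictly increases, then the second term is strictly positive because $1-\alpha>0$, so $\Delta>0$. Otherwise, $A=\{\theta: U_\theta(I_\theta,p_\theta)>U_\theta(I^*_\theta,p^*_\theta)\}$ has $\mu(A)>0$, hence $\eta(A)>0$ by equivalence. A nonnegative function that is strictly positive on a set of positive $\eta$-measure has strictly positive $\eta$-integral, so with $\alpha>0$ the first term is strictly positive and again $\Delta>0$.

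In both cases the menu $(I_\theta,p_\theta)_{\theta\in\Theta}$ attains a strictly larger value in \eqref{eq:IPO_sup} than the starred menu, contradicting optimality. Hence the starred menu is incentive efficient. Equivalence of $\eta$ and $\mu$ is used in both directions: absolute continuity $\eta\ll\mu$ preserves the a.e. inequality, and $\mu\ll\eta$ preserves positivity of $A$. The strict bounds $\alpha\in(0,1)$ ensure that both kinds of strict improvement register in the objective.
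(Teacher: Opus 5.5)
Your proof is correct and follows the paper's argument essentially step for step. Both argue by contradiction, transfer the $\mu$-a.e.\ inequality and the condition $\mu(A)>0$ to $\eta$ via equivalence, and split into the same two cases according to which part of condition (2) holds. You also raise a point the paper skips: since $d\eta/d\mu$ need not be bounded, Assumption~\ref{ass:integrability} does not by itself give $\eta$-integrability of the agent's utility. Your way of handling this, through the $\eta$-a.e.\ nonnegative difference and finiteness of the objective at the optimum, is sound.
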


\begin{proof}
The proof can be found in Appendix \ref{App:proofIPOiff}.
\end{proof}

We denote by $W_{\eta, \alpha} \big( (I_{\theta}, p_{\theta})_{\theta \in \Theta} \big)$ the social welfare function that combines the agent's and insurer's aggregate utilities under some welfare weight $\alpha \in (0,1)$ and a probability measure $\eta$:
$$
W_{\eta, \alpha} \big( (I_{\theta}, p_{\theta})_{\theta \in \Theta} \big) = \alpha \int_{\Theta} U_{\theta} (I_{\theta}, p_{\theta} ) \, d\eta(\theta) + (1-\alpha) \int_{\Theta} V_{\theta} (I_{\theta}, p_{\theta} ) \, d\mu(\theta).
$$

Theorem \ref{th:IPOiff} provides a sufficient welfare-maximization condition for incentive Pareto optimality. This is a general result that holds for any well-defined utility functionals $U$ and $V$. In Appendix \ref{AppendixPOsupport}, we provide a partial converse results in the case of Dual Utilities.

\medskip
\subsection{Retention Functions}
In this section, we represent menus of contracts using retention functions, rather than indemnification functions. This reformulation will be used in the remainder of this paper.

\medskip

For $\theta \in \Theta$, the end-of-period wealth of the type-$\theta$ agent is given by
$$
-p_{\theta} - L_{\theta} + I_{\theta}(L_{\theta}) = - p_{\theta} - R_{\theta} (L_{\theta}),
$$

\noindent where $R_{\theta}(L_{\theta}) := L_{\theta} - I_{\theta} (L_{\theta}) \geq 0$ is the loss retained by the type-$\theta$ agent, that is the part of the agent's loss $L_\theta$ that is not covered by the insurer. The insurer's end-of-period wealth after receiving $p_{\theta}$ from the type-$\theta$ agent in exchange for $I_{\theta}(L_{\theta})$, is given by
$$
p_{\theta} - I_{\theta}(L_{\theta}) = p_{\theta} -( L_{\theta} - R_{\theta}(L_{\theta}) ).
$$

\medskip 

\begin{remark}
Since Lipschitz-continuous functions are absolutely continuous, an indemnity function $I$ belongs to $\cI$ if and only if the associated retention function $R$ belongs to the set
$$
\cR = 
\left\{ 
R: [0, \bar L] \rightarrow [0, \bar L] ; \ R(0) =0, \ 0 \leq \frac{\partial R(l)}{\partial l} \leq 1,  \ l \in [0, \bar L], \text{ a.e.}
\right\}.
$$
\end{remark}

Definitions \ref{def:IR}, \ref{def:IC}, and \ref{def:IPO} can be restated using a retention-form menu of contracts $(R_{\theta}, p_{\theta})_{\theta \in \Theta}$, where $(R_\theta,p_\theta)\in\cR\times\bbR_+$ for all $\theta\in\Theta$. Since $I_{\theta} (L_{\theta}) = L_{\theta} - R_{\theta} (L_{\theta})$ for each $\theta \in \Theta$, we can write
\begin{align} 
&U_{\theta} (I_{\theta}, p_{\theta}) = U_{\theta} ( L_{\theta} - R_{\theta} (L_{\theta}) ,p_{\theta}) := \tilde U_{\theta}(R_{\theta}, p_{\theta}), \ \text{and}\label{eq:Utilde} \\ 
&
V_{\theta}(I_{\theta}, p_{\theta}) = V_{\theta} (L_{\theta} - R_{\theta} (L_{\theta}) , p_{\theta}) := \tilde V_{\theta}(R_{\theta}, p_{\theta}).
\label{eq:Vtilde}
\end{align}

\medskip

\begin{remark} \label{re:ItoR}
Theorem \ref{th:IPOiff} can be equivalently stated in terms of menus of contracts of the form $(R_{\theta}, p_{\theta})_{\theta \in \Theta} \in \cI\cR \cap \cI\cC$ using utility functionals $\tilde U_{\theta}(R_{\theta}, p_{\theta})$ and $\tilde V_{\theta}(R_{\theta}, p_{\theta})$ defined in \eqref{eq:Utilde} and \eqref{eq:Vtilde}, respectively. 
\end{remark}

\bigskip
\section{The Case of Dual Utilities} 
\label{sec:DU}

So far, the characterization of incentive efficiency in Theorem \ref{th:IPOiff}, which can be restated using menus of retentions  by Remark \ref{re:ItoR}, has been established for general utility functionals. In this section, we specialize this general framework to Yaari's Dual Utilities, and we provide a crisper characterization of the structure of these efficient menus.

\medskip
\subsection{Dual Utility Framework}
The Dual Utility of \cite{yaari1987dual} is defined as a Choquet integral with respect to a distorted probability.

\medskip

\begin{definition}
For a given random variable $X$, the dual utility of $X$ is given by:
\begin{align*}
DU( X ) 
&=
\int X dg \circ \mathbb{P}  
:= \int_{ - \infty }^0 \bigg( g \big( 1 - \mathbb{P} ( X\leq x ) \big) -1 \bigg) dx + \int_0^{ + \infty } g\big( 1- \mathbb{P}  (X \leq x ) \big) dx,
\end{align*}

\noindent where  $g : [0, 1] \rightarrow [0, 1]$ is a distortion function, that is, an increasing function with $g(0) =0 $ and $g(1)=1$. 
\end{definition}

\medskip

\begin{definition}\label{dominance}
Let $g_1$ and $g_2$ be two distortion functions. We say that $g_1$ dominates $g_2$ if:
$$
g_1(t) \geq g_2(t), \ \text{for all $t \in [0,1]$}.
$$
\end{definition}

\medskip

Unlike Expected Utility Theory, where risk aversion is captured by the curvature of the utility function, in Rank-Dependent Utility (RDU -- e.g.,  \cite{quiggin2012generalized}), both the utility function and the distortion function contribute to risk aversion (e.g., \cite{ChewKarniSafra1987}). Yaari's Dual Utility is a special case of RDU, in which the utility function is linear and risk aversion is captured entirely by the distortion function. Hence, in our setting, strong risk aversion is equivalent to the distortion function $g$ being convex, and weak risk aversion requires $g(x) \leq x$, for all $x \in [0,1]$. See, for instance, \cite{quiggin2012generalized}, \cite{yaari1987dual}, \cite{chateauneuf1994risk}, or \cite{ChewKarniSafra1987}.

\medskip

Moreover, it follows from \cite{quiggin2012generalized} and \cite{ghossoub2021comparative} that weak risk aversion in both RDU and DU can be characterized by the dominance relation between probability weighting functions. Specifically, if $g_1$ and $g_2$ are two distortion functions with associated Dual Utilities ${DU}_1$ and ${DU}_2$, and if $g_1$ dominates $g_2$ in the sense of Definition \ref{dominance}, then ${DU}_2$ is weakly more risk averse than ${DU}_1$. 

\medskip

In this section, we make the following assumptions on the utility functionals.

\medskip 

\begin{assumption}
For each $\theta \in \Theta$, the type-$\theta$ agent has preferences that admit a representation in terms of a Yaari Dual Utility:
$$
DU_{\theta}(\cdot) = \int \cdot \,\,d \ g_{\theta}\circ \mathbb{P}, \ \text{where $g_{\theta}$ denotes the type $\theta$'s distortion function.} 
$$

\noindent Similarly, the monopolistic insurer's preferences admit a representation in terms of the following Yaari Dual Utility functional:
$$
DU^{In}(\cdot) = \int \cdot \ d\ g^{In}\circ \mathbb{P}, \ \text{where $g^{In}$ denotes the insurer's distortion function.} 
$$
\end{assumption}

\medskip

\begin{assumption} \label{ass:distortion_insurer_agent}
For each $t \in [0,1]$, and for all $\theta \in \Theta$, 
$$
g^{In} (t) \geq g_{\theta}(t).
$$
\end{assumption}

Assumption \ref{ass:distortion_insurer_agent} states that the insurer's distortion function $g^{In}$ dominates each type's distortion function $g_{\theta}$ for all $\theta \in \Theta$. Consequently, $DU_{\theta}$ for each type-$\theta$ agent is weakly more risk averse than $DU^{In}$ of the monopolistic insurer. 

\medskip

Dual utilities are translation invariant, meaning that for any random variable $X$ and any constant $c$, $DU(X+c) = DU(X) + c$. The end-of-period utility of a type-$\theta$ agent is therefore given by:
$$
U_{\theta} (R_{\theta}, p_{\theta}) 
= {DU}_{\theta} (-p_{\theta} - R_{\theta}(L_{\theta}) ) 
= -p_{\theta} + {DU}_{\theta} ( -R_{\theta} (L_{\theta})).
$$

\noindent Since $- R_{\theta}(L_{\theta}) \leq0$,
\begin{align*}
U_{\theta} (R_{\theta}, p_{\theta})
&= - p_{\theta} + \int_{-\infty}^0 \left[ g_{\theta} \big(1- \mathbb{P}(-R_{\theta}(L_{\theta}) \leq x) \big) -1\right] \, dx \\
&=  - p_{\theta} - \int_0^{+ \infty} \left[1- g_{\theta} \big(\mathbb{P}(R_{\theta}(L_{\theta}) \leq l) \big)\right] \, dl \\
&=- p_{\theta} - \int_0^{ \bar L}  \left[ 1 - g_{\theta} \big( F_{\theta}(l)  \big)  \right] \, \frac{ \partial R_{\theta}(l) }{ \partial l } \, dl,
\end{align*}

\noindent where $F_\theta(l) := \mathbb{P}(L_\theta \leq l)$ denotes the cumulative loss distribution function, for a given $\theta \in \Theta$. 

\medskip

In the case of no insurance, the dual utility of a type-$\theta$ agent is given by:
$$
U_{\theta} (L_{\theta}, 0) = DU_{\theta} (-L_{\theta}) =  - \int_0^{ \bar L }  \left [ 1 - g_{\theta}  \left( F_{\theta}(l) \right) \right] \ dl.
$$

\medskip

Additionally, by translation invariance, the insurer's end-of-period utility from providing a contract $(R_{\theta}, p_{\theta})$ to the type-$\theta$ agent is given by 
$$
V_{\theta} (R_{\theta}, p_{\theta})
= {DU}^{In}(p_{\theta} -L_{\theta} + R_{\theta}(L_{\theta}) ) 
= p_{\theta} + {DU}^{In} ( -L_{\theta} + R_{\theta}(L_{\theta}) ),
$$

\noindent Since $-L_\theta + R_\theta (L_\theta) = - I_\theta (L_\theta) \leq 0$, we obtain
\begin{align*}
V_{\theta} (R_{\theta}, p_{\theta})
&= p_{\theta} - \int_0^{+ \infty} \left[1 - g^{In} (\mathbb{P} (L_{\theta} - R_{\theta}(L_{\theta}) \leq l ))\right] \, dl \\
&= p_{\theta} - \int_0^{I_{\theta}(\bar L_{\theta})} \left[1 - g^{In} (\mathbb{P} (I_{\theta}(L_{\theta}) \leq l ))\right] \, dl \\
&= p_{\theta} - \int_0^{\bar L} \left[ 1 - g^{In} (F_{\theta}(l))\right] \left( 1 - \frac{\partial R_{\theta}(l)}{\partial l }\right) \, dl.
\end{align*}

\medskip

\begin{remark}
The insurer does not observe the agent's realized type and therefore does not know which loss distribution and distortion function apply to that agent. However, the family $\{F_\theta,g_\theta\}_{\theta\in\Theta}$ and the prior distribution of types are assumed to be common knowledge.
\end{remark}

\medskip

\begin{proposition} \label{IR_characterization}
A menu of contracts $(R_{\theta}, p_{\theta})_{\theta \in \Theta}$ is individually rational if and only if it satisfies:
\begin{enumerate}
\item $\displaystyle\int_\Theta V_\theta (R_\theta, p_\theta) \,d\mu(\theta) \geq0$; and,
\medskip
\item $p_{\theta}  \leq \displaystyle\int_0 ^{ \bar L} \left[  1- g_{\theta}( F_{\theta}(l) )  \right] \left[ 1- \frac{ \partial R_{\theta} (l) }{\partial l} \right] \, dl$, for all $\theta \in \Theta$.
\end{enumerate}
\end{proposition}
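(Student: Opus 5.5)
This is a direct unpacking of Definition \ref{def:IR} using the dual-utility expressions derived just above. Condition (P2) of Definition \ref{def:IR} is literally item (1): the insurer's participation constraint is $\int_\Theta V_\theta(R_\theta,p_\theta)\,d\mu(\theta)\ge 0$, using the normalization $V_\theta(0,0)=0$. Note that in retention form the null contract corresponds to $R_\theta(l)=l$ and $p_\theta=0$; for this contract the integrand in $V_\theta$ vanishes because $1-\partial R_\theta/\partial l=0$, which is consistent with the normalization. So the work reduces to showing that (P1) is equivalent to item (2) for every $\theta\in\Theta$.

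For (P1), I will write both sides explicitly. From the computation preceding the proposition,
$$U_\theta(R_\theta,p_\theta)=-p_\theta-\int_0^{\bar L}\left[1-g_\theta(F_\theta(l))\right]\frac{\partial R_\theta(l)}{\partial l}\,dl ,$$
and the no-insurance utility is
$$U_\theta(L_\theta,0)=-\int_0^{\bar L}\left[1-g_\theta(F_\theta(l))\right]dl .$$
The inequality $U_\theta(R_\theta,p_\theta)\ge U_\theta(L_\theta,0)$ is therefore equivalent to
$$p_\theta\le \int_0^{\bar L}\left[1-g_\theta(F_\theta(l))\right]dl-\int_0^{\bar L}\left[1-g_\theta(F_\theta(l))\right]\frac{\partial R_\theta(l)}{\partial l}\,dl ,$$
and combining the two integrals gives item (2). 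Each step is reversible, so this yields the equivalence for each $\theta$, and hence (P1) holds if and only if (2) holds for all $\theta\in\Theta$.

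The only step needing care is justifying the splitting and recombining of the integrals. Both integrands are bounded and measurable on the compact interval $[0,\bar L]$: $g_\theta\circ F_\theta$ takes values in $[0,1]$, and $\partial R_\theta/\partial l\in[0,1]$ almost everywhere because $R_\theta\in\cR$. All integrals are therefore finite and linearity applies.

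One also has to make sure the earlier formula for $U_\theta(R_\theta,p_\theta)$ is valid, that is, the change of variables from the distribution of $R_\theta(L_\theta)$ to $\partial R_\theta/\partial l$. I expect this to be the main (though mild) obstacle. It rests on $R_\theta$ being nondecreasing and absolutely continuous, so that
$$\mathbb{P}(R_\theta(L_\theta)>R_\theta(l))=1-F_\theta(l)$$
up to the flat pieces of $R_\theta$, which contribute nothing because $\partial R_\theta/\partial l=0$ there. Since the paper has already derived this formula, I will simply invoke it.
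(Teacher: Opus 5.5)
Your proposal is correct and follows essentially the same route as the paper's proof. You identify (P2) directly with item (1), and you show (P1) is equivalent to item (2) by substituting the dual-utility expressions for $U_\theta(R_\theta,p_\theta)$ and $U_\theta(L_\theta,0)$ and rearranging; your extra remarks on integrability and on the change-of-variables formula are harmless checks that the paper takes for granted.
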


\medskip

Proposition \ref{IR_characterization} is an immediate implication of the definition of individual rationality. Specifically, a menu of contracts is individually rational if and only if the insurer's aggregate utility is non-negative and the associated premium does not exceed a certain upper bound at which the agent is indifferent.

\medskip
\subsection{Type Ordering under Dual Utility}\label{sub:typeordering}
In this setting, the agent's type affects their loss distribution and risk attitude. 
The following assumptions impose an ordering on the type space $\Theta$. 

\medskip

\begin{assumption}\label{Ass:cdf_family}
Let $L_{\theta}$ be the loss faced by a type-$\theta$ agent, with cumulative distribution function $F_\theta$. 
\begin{enumerate}
\item  The family of cumulative distribution functions $\{F_{\theta}\}_{\theta \in \Theta}$ is uniformly Lipschitz continuous in $\theta$, with common Lipschitz constant $c' < + \infty$. 

\medskip

\item \label{ass:F_i2}Type-dependent losses $L_{\theta}$ are ordered in the sense of first order stochastic dominance. Specifically, for $\theta_1 < \theta_2$, we have $L_{\theta_1} \preccurlyeq _{FOSD}   L_{\theta_2}$, that is, $F_{\theta_1}(l) \geq F_{\theta_2}(l)$, for all $l \in [0, \bar L]$. Equivalently, 
$$
\frac{\partial F_{\theta} (l) }{ \partial \theta} \leq 0, \ \forall \, l.
$$
\end{enumerate}
\end{assumption}

\noindent Assumption \ref{Ass:cdf_family}-\eqref{ass:F_i2} states that larger types face a stochastically larger loss, in the sense of first-order dominance.

\medskip

\begin{assumption}\label{Ass:distortion_family}
We assume that the following holds:
\smallskip
\begin{enumerate}
\item  $\{ g_{\theta} (t) \}_{\theta \in \Theta}$ is uniformly Lipschitz continuous in $t \in [0,1]$ with common Lipschitz constant $ \delta < + \infty$. That is, for each $\theta \in \Theta$, 
$$
g ^\prime _\theta (t) \leq \delta, \ \forall \, t \in [0,1].
$$

\medskip

\item $\{ g_{\theta} \}_{\theta \in \Theta}$ is uniformly Lipschitz continuous in $\theta$, with common Lipschitz constant $c < + \infty$. 

\medskip

\item \label{ass:g_i3} The type space $\Theta$ is ordered such that:
$$
\frac{\partial g_{\theta} (t) }{\partial \theta} \leq 0,\ \text{ for $t \in (0,1)$. }
$$
\end{enumerate}
\end{assumption}

\noindent Assumption \ref{Ass:distortion_family}-\eqref{ass:g_i3} states that the distortion function $g_{\theta}$ is pointwise smaller for larger values of $\theta$. If $\theta_1, \theta_2 \in \Theta$ are such that $\theta_1 \leq \theta_2$, then $g_{\theta_1}(t) \geq g_{\theta_2}(t)$ for $t \in [0,1]$. This means that the type $\theta_2$-agent is weakly more risk averse than the type $\theta_1$-agent.  

\medskip

Assumption \ref{Ass:cdf_family}-\eqref{ass:F_i2} and Assumption \ref{Ass:distortion_family}-\eqref{ass:g_i3} state that the type space $\Theta$ is ordered such that higher types (larger values of $\theta$) are more (weakly) risk averse and face stochastically larger losses.

\medskip

\begin{remark} \label{Re:chain_rule}
Note that $g_{\theta} \big(  F_{\theta} (l)  \big) $ can be written as the composed function $ \big(  g_{\theta} \circ F_{\theta}  \big) (l)$, for $\theta \in \Theta$ and $l \in [0, \bar L_{\theta}] \subseteq [0, \bar L]$. Hence,
\begin{align*}
\frac{\partial }{ \partial \theta } \left [ g_{\theta} \big(  F_{\theta} (l)  \big)  \right ] 
=
\frac{\partial  g_{\theta} }{ \partial \theta } \big(  F_{\theta} (l)  \big) 
+ g^{\prime}_{\theta}\big(  F_{\theta} (l)  \big)  \frac{\partial  F_{\theta}(l) }{ \partial \theta },
\end{align*}

\noindent where $g^{\prime}_{\theta}\big(  F_{\theta} (l)  \big)  := \frac{ \partial g_{\theta} }{ \partial t } \big(  F_{\theta} (l)  \big)  \bigg|_{t = F_{\theta}(l) } $. 
\end{remark}

The composed function $g_{\theta} \circ F_{\theta}$ is monotone in $\theta$ for all $l$:
$$
\frac{\partial }{ \partial \theta } \left [ g_{\theta} \big(  F_{\theta} (l)  \big)  \right ]  \leq 0,
$$
which follows from Assumption \ref{Ass:cdf_family} and Assumption \ref{Ass:distortion_family}, and since $g_{\theta}(t)$ is increasing in $t$ for all $\theta$. Hence, as $\theta$ increases, the composition $g_{\theta} \big( F_{\theta}(l)\big)$ decreases, for all $l$. If $\theta_1, \theta_2 \in \Theta$ are such that $\theta_1 \leq \theta_2$, then $ g_{\theta _1} \big( F_{\theta_1}(l)\big) \geq g_{\theta_2} \big( F_{\theta_2}(l)\big)$, for all $l$. In other words, higher types, who are more risk averse, assign lower distorted cumulative distribution functions to the loss, meaning that they distort their own perceived loss distributions more pessimistically.

\medskip
\subsection{Solution Characterization Under Dual Utility}
\label{sec:solutionIPO}

It follows from Theorem \ref{th:IPOiff} and Remark \ref{re:ItoR} that if $\eta$ is a probability measure on $(\Theta, \cB(\Theta))$ equivalent to $\mu$, and $\alpha \in (0,1)$, then solutions $(R^*_{\theta}, p^*_{\theta})_{\theta \in \Theta}$ to the problem
\begin{equation} \label{eq:IPO_sup_retention}
\underset{(R_{\theta}, p_{\theta})_{\theta \in \Theta} \in \cI\cR \cap \cI\cC }  { \sup} \left \{ \alpha \int_{\Theta}  U_{\theta} (R_{\theta}, p_{\theta} ) \, d\eta(\theta) + (1-\alpha) \int_{\Theta} V_{\theta} (R_{\theta}, p_{\theta} ) \, d\mu(\theta) \right\}
\end{equation}

\noindent are incentive efficient. Throughout the remainder of the paper, the feasible set is understood to consist of admissible menus with nonnegative premia, as in Definition \ref{def:contract_menu}.

\medskip

Here, we aim to characterize the optimal solutions to Problem \eqref{eq:IPO_sup_retention}, under Yaari's Dual Utility framework. We start by presenting preliminary results about individual rationality and incentive compatibility. The proofs of all results are provided in Appendix \ref{Appendixproofs}.

\medskip

\begin{proposition}\label{IC_characterization}
If a menu of contracts $(R_{\theta}, p_{\theta})_{\theta \in \Theta}$ is incentive compatible, then for any $\theta \in \Theta$, the premium $p_{\theta}$ is of the following form:
\begin{align} \label{eq:premium}
p_{\theta} \ 
&=
p_{\underline{\theta} } + \int_0^{ \bar L}  \left[ 1 - g_{ \underline{ \theta }} \big( F_{\underline{ \theta} } (l)  \big)   \right] \,\, \frac{ \partial R_{\underline{ \theta }}(l) }{ \partial l } \, dl  - \int_{\underline{\theta}} ^ {\theta} \int_0 ^{\bar L}  \left[ \frac{\partial g_s}{ \partial s} (F_s(l)) + g'_s(F_s(l)) \frac{\partial F_s(l)}{\partial s} \right] \frac{\partial R_s(l)}{\partial l } \, dl \,  ds  \nonumber \\
& \quad -
\int_0^{ \bar L }  \left[ 1 - g_{\theta} \big( F_{\theta}(l)  \big)   \right] \,\, \frac{ \partial R_{\theta}(l) }{ \partial l } \, dl. 
\end{align}
\end{proposition}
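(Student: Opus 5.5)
This is the standard envelope-theorem (Mirrlees) argument. Write $A(\theta,R) := \int_0^{\bar L} [1-g_\theta(F_\theta(l))]\,\frac{\partial R(l)}{\partial l}\,dl$, so that $U_\theta(R,p) = -p - A(\theta,R)$. Define the indirect utility
$$
u(\theta) := U_\theta(R_\theta,p_\theta) = \sup_{\theta'\in\Theta} U_\theta(R_{\theta'},p_{\theta'}),
$$
where the equality is exactly incentive compatibility (Definition \ref{IC_characterization}'s hypothesis via Definition \ref{def:IC}). Once we show
$$
u(\theta) = u(\underline\theta) + \int_{\underline\theta}^{\theta} \frac{\partial}{\partial s} U_s(R_{s'},p_{s'})\Big|_{s'=s} ds ,
$$
the formula \eqref{eq:premium} follows by rearrangement. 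Indeed, $p_\theta = -u(\theta) - A(\theta,R_\theta)$ and $u(\underline\theta) = -p_{\underline\theta} - A(\underline\theta,R_{\underline\theta})$. The partial derivative $\frac{\partial}{\partial s}U_s(R,p) = \int_0^{\bar L} \frac{\partial}{\partial s}[g_s(F_s(l))]\,\frac{\partial R(l)}{\partial l}\,dl$ is given by the chain rule of Remark \ref{Re:chain_rule}. Substituting yields exactly the stated expression, including the sign of the double integral.

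\textbf{Step 1: $u$ is Lipschitz.} For fixed $(R,p)\in\cR\times\bbR_+$, consider the map $\theta\mapsto U_\theta(R,p)$. Since $0\le \partial R/\partial l\le 1$, we have
$$
|U_{\theta_1}(R,p)-U_{\theta_2}(R,p)| \le \int_0^{\bar L}\bigl|g_{\theta_1}(F_{\theta_1}(l))-g_{\theta_2}(F_{\theta_2}(l))\bigr|\,dl .
$$
Split the integrand as
$$
|g_{\theta_1}(F_{\theta_1})-g_{\theta_2}(F_{\theta_1})| + |g_{\theta_2}(F_{\theta_1})-g_{\theta_2}(F_{\theta_2})| .
$$
The first term is at most $c|\theta_1-\theta_2|$ by Assumption \ref{Ass:distortion_family}(2). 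The second is at most $\delta c'|\theta_1-\theta_2|$ by Assumption \ref{Ass:distortion_family}(1) and Assumption \ref{Ass:cdf_family}(1). Hence $\theta\mapsto U_\theta(R,p)$ is Lipschitz with constant $K=\bar L(c+\delta c')$, uniformly in $(R,p)$. A supremum of uniformly $K$-Lipschitz functions is $K$-Lipschitz, so $u$ is Lipschitz, hence absolutely continuous and differentiable a.e., with
$$
u(\theta)=u(\underline\theta)+\int_{\underline\theta}^\theta u'(s)\,ds .
$$

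\textbf{Step 2: identify $u'(s)$ (envelope step).} I expect this to be the main obstacle. Fix $s$ where $u$ is differentiable and where $\partial_s U_s(R_s,p_s)$ exists. By IC, the function $h(t) := u(t)-U_t(R_s,p_s) \ge 0$ attains its minimum $0$ at $t=s$. Taking one-sided difference quotients gives $u'(s) = \partial_t U_t(R_s,p_s)|_{t=s}$. To justify differentiating under the integral, I would assume (as the paper implicitly does in Remark \ref{Re:chain_rule}) that $\theta\mapsto g_\theta(F_\theta(l))$ is differentiable. The difference quotients are bounded by $K$ and $\partial R_s/\partial l\in[0,1]$, so dominated convergence applies. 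Joint measurability of $(s,l)\mapsto \partial R_s(l)/\partial l$, which follows from Definition \ref{def:contract_menu}, makes the double integral well defined via Fubini.

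If pointwise differentiability in $\theta$ is only available a.e., I would instead use Milgrom–Segal's envelope theorem directly. That theorem requires only absolute continuity of each $U_\cdot(R,p)$ with an integrable uniform bound on its derivative, which Step 1 provides. Combining Steps 1 and 2 with the rearrangement described at the start completes the proof.
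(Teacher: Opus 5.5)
Your proposal is correct and follows the paper's route: uniform Lipschitz continuity of $\theta\mapsto U_\theta(R_{\theta'},p_{\theta'})$ with constant $(c+c'\delta)\bar L$, then the Milgrom--Segal envelope representation of $U_\theta(R_\theta,p_\theta)$, then equating it with $-p_\theta-\int_0^{\bar L}[1-g_\theta(F_\theta(l))]\,\frac{\partial R_\theta(l)}{\partial l}\,dl$ and solving for $p_\theta$. You are more explicit than the paper about the absolute continuity of the value function and the diagonal differentiability step; the one quibble is that your fallback overstates Milgrom--Segal, whose integral formula still requires $t\mapsto U_t(R_s,p_s)$ to be differentiable at $t=s$ (this does hold here for a.e.\ $s$ simultaneously for all contracts, by Fubini and dominated convergence, so nothing breaks).
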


\medskip

\begin{definition} \label{def:submodular}
A collection of retention functions $\{R_{\theta}\}_{\theta \in \Theta}$ is submodular if $\frac{\partial R_{\theta} (l) }{\partial l } $ is non-increasing in $\theta$, for all $l \in [0, \bar L]$.
\end{definition}

As the agent's type increases (representing more risk aversion), higher types are willing to pay higher premia for more coverage than less risk averse types are unwilling to pay. 
Submodularity of retention functions provides a natural alignment between the agent's risk attitude and the structure of coverage. 
Definition \ref{def:submodular} says that if $\theta_1, \theta_2 \in \Theta$ are such that $\theta_1 \leq \theta_2$, then 
$$ \frac{ \partial R_{\theta_1} (l)}{ \partial l} \geq \frac{ \partial R_{\theta_2} (l)}{ \partial l}, \ \forall l.$$

\noindent This ensures that coverage becomes progressively more generous as risk aversion increases, higher types receive greater coverage, transferring a larger portion of loss to the insurer and retaining less to themselves. Consequently, each type pays a premium consistent with their own preferences.  

\medskip

\begin{proposition} \label{prop:IC_iff}
Consider a submodular collection of retention functions $\{ R_{\theta}\}_{\theta \in \Theta}$. Then a menu of contracts $(R_{\theta}, p_{\theta})_{\theta \in \Theta}$ is in $\cI \cC$ if and only if $\{p_{\theta}\}_{\theta \in \Theta}$ satisfies \eqref{eq:premium}.
\end{proposition}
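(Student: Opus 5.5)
The "only if" direction is exactly Proposition \ref{IC_characterization}, so the work is in the "if" direction. Assume $\{R_\theta\}$ is submodular and $p_\theta$ is given by \eqref{eq:premium}. Write
$$
h(s,l) := g_s(F_s(l)), \qquad A_\theta(R) := \int_0^{\bar L}\left[1-h(\theta,l)\right]\frac{\partial R(l)}{\partial l}\,dl,
$$
so that $U_\theta(R,p) = -p - A_\theta(R)$. Define the indirect utility $u(\theta) := U_\theta(R_\theta,p_\theta)$. Substituting \eqref{eq:premium}, the last term cancels $A_\theta(R_\theta)$, and we get
$$
u(\theta) = u(\underline\theta) + \int_{\underline\theta}^{\theta}\int_0^{\bar L} \frac{\partial h}{\partial s}(s,l)\,\frac{\partial R_s(l)}{\partial l}\,dl\,ds ,
$$
where $u(\underline\theta) = -p_{\underline\theta} - A_{\underline\theta}(R_{\underline\theta})$. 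The derivative $\partial h/\partial s$ is the bracket in \eqref{eq:premium}, by Remark \ref{Re:chain_rule}. This is the envelope formula. The IC condition $U_\theta(R_\theta,p_\theta) \ge U_\theta(R_{\theta'},p_{\theta'})$ then becomes the single-crossing check
$$
u(\theta) - u(\theta') \ \ge\ U_\theta(R_{\theta'},p_{\theta'}) - U_{\theta'}(R_{\theta'},p_{\theta'}) .
$$

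The right-hand side is computed directly, since the premium cancels:
$$
U_\theta(R_{\theta'},p_{\theta'}) - U_{\theta'}(R_{\theta'},p_{\theta'}) = \int_0^{\bar L}\left[h(\theta,l)-h(\theta',l)\right]\frac{\partial R_{\theta'}(l)}{\partial l}\,dl = \int_{\theta'}^{\theta}\int_0^{\bar L}\frac{\partial h}{\partial s}(s,l)\,\frac{\partial R_{\theta'}(l)}{\partial l}\,dl\,ds .
$$
The second equality uses the fundamental theorem of calculus in $s$. This is justified because $s\mapsto h(s,l)$ is Lipschitz, uniformly in $l$: by Assumptions \ref{Ass:cdf_family} and \ref{Ass:distortion_family} the constant is $c+\delta c'$, so $h(\cdot,l)$ is absolutely continuous. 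Fubini applies since $0\le \partial_l R\le 1$ and the integrand is bounded and jointly measurable.

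Subtracting the two expressions, IC is equivalent to
$$
\int_{\theta'}^{\theta}\int_0^{\bar L}\frac{\partial h}{\partial s}(s,l)\left[\frac{\partial R_s(l)}{\partial l}-\frac{\partial R_{\theta'}(l)}{\partial l}\right]dl\,ds \ \ge 0 .
$$
We check the sign in two cases.
\begin{itemize}
\item If $\theta>\theta'$, then for $s\in[\theta',\theta]$ submodularity gives $\partial_l R_s \le \partial_l R_{\theta'}$. Since $\partial h/\partial s\le 0$, as noted after Remark \ref{Re:chain_rule}, the integrand is a product of two nonpositive factors, hence nonnegative, and the integral is over an increasing interval.
\item If $\theta<\theta'$, then for $s\in[\theta,\theta']$ the bracket is $\ge 0$, so the integrand is $\le 0$. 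The orientation of the integral is reversed, so the whole expression is again $\ge 0$.
\end{itemize}
This gives IC.

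The main obstacle is the technical justification rather than the sign argument. Specifically, $\partial h/\partial s$ exists only almost everywhere, so the chain rule of Remark \ref{Re:chain_rule} must be applied a.e. The joint measurability of $(s,l)\mapsto \partial_l R_s(l)$ also needs care. I would handle this by working throughout with the difference quotient $h(\theta,l)-h(\theta',l)$ and Lipschitz bounds, which avoids pointwise derivatives. I would invoke the joint Borel measurability in Definition \ref{def:contract_menu}, and take $\partial_l R_s$ as a measurable version, for example a limsup of difference quotients, before applying Fubini.
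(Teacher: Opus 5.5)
Your proof is correct and takes essentially the same approach as the paper. Both substitute \eqref{eq:premium}, then use the fundamental theorem of calculus in $s$ to rewrite $g_\theta(F_\theta(l))-g_{\theta'}(F_{\theta'}(l))$ as an integral of $\frac{\partial}{\partial s}(g_s\circ F_s)(l)$, then conclude from submodularity and the nonpositivity of that derivative, with the converse given by Proposition \ref{IC_characterization}. Your oriented-integral identity covers both orderings $\theta<\theta'$ and $\theta>\theta'$ at once, whereas the paper writes out one case and says the other is similar.
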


The following proposition shows that an incentive compatible menu is individually rational if and only if the contract offered to the lowest type is individually rational.

\begin{proposition} \label{prop:IR_lowest_type}
If $(R_{\theta}, p_{\theta})_{\theta \in \Theta} \in \cI \cC$ is such that $\int_\Theta V_\theta (R_\theta, p_\theta)\, d\mu(\theta) \geq 0$, then $(R_{\theta}, p_{\theta})_{\theta \in \Theta} \in \cI \cR$ if and only if, for the lowest type, $(R_{\underline {\theta} } , p_{ \underline{ \theta } } )$ satisfies the agent's participation (P1) of Definition \ref{def:IR}.
\end{proposition}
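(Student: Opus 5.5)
Since the insurer's aggregate condition (P2) is assumed, Proposition \ref{IR_characterization} reduces individual rationality of the IC menu to the agent-side condition (P1) for every type. Necessity is therefore immediate: if the menu is in $\cI\cR$, then (P1) holds for every $\theta$, in particular for $\underline{\theta}$. The work is in sufficiency. There I will show that the agent's surplus from participating, measured against no insurance, is non-decreasing in $\theta$ along any IC menu; (P1) at $\underline{\theta}$ then propagates to all types.

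Define the surplus
$$
S(\theta) := U_\theta(R_\theta,p_\theta) - U_\theta(L_\theta,0) = -p_\theta + \int_0^{\bar L}\left[1-g_\theta(F_\theta(l))\right]\left(1-\frac{\partial R_\theta(l)}{\partial l}\right)dl .
$$
Rather than go through the premium formula \eqref{eq:premium}, which would require differentiability along the menu, I will argue directly from incentive compatibility. Fix $\theta > \underline{\theta}$. By IC, type $\theta$ weakly prefers its own contract to the one intended for $\underline{\theta}$, so
$$
S(\theta) \ \ge\ U_\theta(R_{\underline\theta},p_{\underline\theta}) - U_\theta(L_\theta,0) = -p_{\underline\theta} + \int_0^{\bar L}\left[1-g_\theta(F_\theta(l))\right]\left(1-\frac{\partial R_{\underline\theta}(l)}{\partial l}\right)dl .
$$

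It remains to compare this right-hand side with $S(\underline\theta)$. The two expressions share the term $-p_{\underline\theta}$ and the factor $1-\partial_l R_{\underline\theta}$, which is nonnegative because $R_{\underline\theta}\in\cR$. They differ only in the weight: $1-g_\theta(F_\theta(l))$ for type $\theta$ versus $1-g_{\underline\theta}(F_{\underline\theta}(l))$ for the lowest type. By the monotonicity of $g_\theta\circ F_\theta$ in $\theta$, established after Remark \ref{Re:chain_rule} from Assumptions \ref{Ass:cdf_family} and \ref{Ass:distortion_family}, we have $g_\theta(F_\theta(l)) \le g_{\underline\theta}(F_{\underline\theta}(l))$ for all $l$. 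Hence the weight for type $\theta$ is pointwise at least the weight for $\underline\theta$, and multiplying by the nonnegative factor and integrating gives the right-hand side above $\ge S(\underline\theta)$. Combining the two displays, $S(\theta)\ge S(\underline\theta)\ge 0$, so (P1) holds for every type.

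The only delicate point is justifying the pointwise inequality $g_\theta\circ F_\theta \le g_{\underline\theta}\circ F_{\underline\theta}$ for all $l$, including boundary values where $F$ equals $0$ or $1$. I will handle this directly rather than through the derivative. First, $F_\theta \le F_{\underline\theta}$ by FOSD, and $g_\theta$ is increasing, so $g_\theta(F_\theta(l)) \le g_\theta(F_{\underline\theta}(l))$. Second, $g_\theta \le g_{\underline\theta}$ on $(0,1)$ by Assumption \ref{Ass:distortion_family}-\eqref{ass:g_i3}, and the inequality holds trivially at the endpoints $t=0,1$, where all distortions agree. Chaining the two gives $g_\theta(F_\theta(l)) \le g_{\underline\theta}(F_{\underline\theta}(l))$. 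This also avoids any integrability issue, since everything is bounded by $\bar L$.
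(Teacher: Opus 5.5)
Your proof is correct, but it takes a different route from the paper's.

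The paper proves sufficiency through the envelope representation
$U_\theta(R_\theta,p_\theta)=U_{\underline\theta}(R_{\underline\theta},p_{\underline\theta})+\int_{\underline\theta}^{\theta}\int_0^{\bar L}\frac{\partial}{\partial s}(g_s\circ F_s)(l)\,\frac{\partial R_s(l)}{\partial l}\,dl\,ds$.
It bounds the integrand below by $\frac{\partial}{\partial s}(g_s\circ F_s)(l)$, which is a non-positive number times a weight in $[0,1]$. It then recognizes the resulting integral as the envelope integral of the no-insurance utility $U_s(L_s,0)$. In effect, the paper shows that your surplus $S$ has derivative $\int_0^{\bar L}\frac{\partial}{\partial\theta}(g_\theta\circ F_\theta)(l)\,\bigl(\frac{\partial R_\theta(l)}{\partial l}-1\bigr)\,dl\ge 0$.

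You instead use a single IC constraint (type $\theta$ does not mimic $\underline\theta$) together with the level ordering $g_\theta\circ F_\theta\le g_{\underline\theta}\circ F_{\underline\theta}$, weighted by $1-\frac{\partial R_{\underline\theta}(l)}{\partial l}\ge 0$.

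What your route buys:
\begin{itemize}
\item It is more elementary. It needs neither the envelope theorem nor differentiability of $\theta\mapsto g_\theta\circ F_\theta$, only its pointwise ordering in $\theta$.
\item It invokes only the downward IC constraints toward $\underline\theta$. Replacing $\underline\theta$ by any $\theta'<\theta$ gives full monotonicity of $S$ by the same computation.
\end{itemize}
What the paper's route buys: it reuses the envelope identity it needs anyway for the premium formula \eqref{eq:premium} and the welfare expression.

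One caveat on your ``delicate point''. Chaining the separate monotonicity of $F_\theta$ and of $g_\theta$ in $\theta$ is specific to the first type ordering. Under Assumption \ref{ass:ordering2}, $g_\theta$ increases in $\theta$, and the paper reuses this proposition there. In that setting the composite ordering comes only from Assumption \ref{ass:lossdominatesaversion}, by integrating $\frac{\partial}{\partial s}(g_s\circ F_s)(l)\le 0$ over $[\underline\theta,\theta]$. Since your main step uses only the composite ordering, it carries over once that inequality is justified this way.
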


\medskip

\begin{corollary} \label{cor:IRandIC_iff}
Assume that the collection of retention functions $\{ R_{\theta}\}_{\theta \in \Theta}$ is submodular. Then $(R_{\theta}, p_{\theta})_{\theta \in \Theta} \in \cI \cR \cap \cI\cC $ if and only if both of the following conditions hold:
\begin{enumerate}
\item The premia $\{ p_{\theta}\}_{\theta \in \Theta}$ satisfy \eqref{eq:premium}, with
$$
p_{\underline{\theta}}
\leq
\int_0 ^{ \bar L } \left[ \,\, 1- g_{ \underline{ \theta}}( F_{ \underline{ \theta} }(l) ) \,\, \right] \left[ 1- \frac{ \partial R_{ \underline{ \theta } } (l) }{\partial l} \right] \,\, dl .
$$

\smallskip

\item The insurer's participation (P2) of Definition \ref{def:IR} is satisfied. That is,
$$
\int_\Theta V_\theta (R_\theta, p_\theta)\, d\mu(\theta) \geq 0. 
$$
\end{enumerate}
\end{corollary}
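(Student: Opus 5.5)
The corollary follows by combining Proposition \ref{prop:IC_iff}, Proposition \ref{prop:IR_lowest_type}, and Proposition \ref{IR_characterization}. The plan is to prove the two directions separately, assuming throughout that $\{R_\theta\}_{\theta\in\Theta}$ is submodular.

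\emph{Necessity.} Suppose $(R_{\theta}, p_{\theta})_{\theta \in \Theta} \in \cI\cR \cap \cI\cC$.
\begin{enumerate}
\item Since the menu is in $\cI\cC$ and the retentions are submodular, Proposition \ref{prop:IC_iff} (indeed already Proposition \ref{IC_characterization}) gives that the premia satisfy \eqref{eq:premium}.
\item Since the menu is in $\cI\cR$, condition (2) of Proposition \ref{IR_characterization} holds at every type. Evaluating it at $\theta=\underline{\theta}$ gives exactly the stated upper bound on $p_{\underline{\theta}}$.
\item Condition (1) of Proposition \ref{IR_characterization} is precisely the insurer's participation (P2).
\end{enumerate}

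\emph{Sufficiency.} Suppose the premia satisfy \eqref{eq:premium}, the bound on $p_{\underline\theta}$ holds, and (P2) holds.
\begin{enumerate}
\item By Proposition \ref{prop:IC_iff} and submodularity, the menu is in $\cI\cC$.
\item In the dual-utility representation, the agent's participation constraint (P1) at $\underline{\theta}$ reads
$$-p_{\underline\theta}-\int_0^{\bar L}\left[1-g_{\underline\theta}(F_{\underline\theta}(l))\right]\frac{\partial R_{\underline\theta}(l)}{\partial l}\,dl \;\geq\; -\int_0^{\bar L}\left[1-g_{\underline\theta}(F_{\underline\theta}(l))\right]dl .$$
Rearranging terms, this is equivalent to the assumed bound on $p_{\underline{\theta}}$, so (P1) holds at $\underline{\theta}$.
\item The menu is in $\cI\cC$ and satisfies $\int_\Theta V_\theta\,d\mu\ge 0$, and its lowest-type contract satisfies (P1). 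Proposition \ref{prop:IR_lowest_type} then yields that the menu is in $\cI\cR$.
\end{enumerate}

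The only potential subtlety is whether Proposition \ref{prop:IR_lowest_type} needs submodularity. Its hypotheses are only incentive compatibility and (P2), which we have already verified, so it applies directly. Beyond that, the argument is bookkeeping.
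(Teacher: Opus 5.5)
Your proposal is correct and follows essentially the same route as the paper's proof. Sufficiency goes through Proposition \ref{prop:IC_iff} for incentive compatibility, then the rearranged lowest-type participation inequality, then Proposition \ref{prop:IR_lowest_type}. Necessity goes through Proposition \ref{prop:IC_iff} for the premium formula and (P2) directly from individual rationality. The only cosmetic difference is that you read the bound on $p_{\underline{\theta}}$ directly off condition (2) of Proposition \ref{IR_characterization} at $\theta=\underline{\theta}$, whereas the paper obtains it via Proposition \ref{prop:IR_lowest_type}; both steps are immediate.
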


Let $Q_{\eta}$ and $\bar Q_{\eta}$ denote respectively the cumulative and decumulative distribution functions over types induced by the probability measure $\eta$. That is,
$$
Q_{\eta}(\theta) := \eta ([ \underline{\theta}, \theta ]),  \ \text{and} \ \bar Q_{\eta}(\theta) := \eta ( ( \theta, \bar \theta ]), \ \forall \,\theta \in \Theta .
$$

\medskip

\begin{assumption} \label{ass:eta2}
The densities $q$ and $q_\eta$ are continuous on $\Theta$, and $q$ is strictly positive on $\Theta$.
Moreover, for all $\theta \in [\underline\theta, \bar\theta)$, we have 
$$
\frac{q(\theta)}{\bar Q(\theta)} \geq \frac{ q_{\eta} (\theta) }{ \bar Q_{\eta} (\theta) }.
$$
\end{assumption}

Assumption \ref{ass:eta2} states that the hazard rate over types under $\mu$ is greater than or equal to the hazard rate over types under $\eta$. 
Consequently, the distribution over types under $\mu$ is smaller in the hazard rate order than the one under $\eta$. Moreover, we can show that Assumption \ref{ass:eta2} implies that $\frac{ \bar Q_{\eta} (\theta) }{\bar Q(\theta)}$ is non-decreasing in $\theta$. Indeed,  
$$
\left( \frac{ \bar Q_{\eta} (\theta) }{\bar Q(\theta)} \right) ^{\prime} = \frac{- q_{\eta} (\theta) \bar Q(\theta) + q(\theta) \bar Q_{\eta} (\theta)}{\bar Q (\theta) ^2} = \frac{\bar Q_{\eta}(\theta) }{ \bar Q(\theta) } \cdot \left[ \frac{q(\theta)}{\bar Q(\theta)} - \frac{q_{\eta} (\theta) }{ \bar Q_{\eta}(\theta)}\right] \geq 0 \,, \,\, \forall \, \theta \in [\underline\theta, \bar\theta).
$$

\medskip

\begin{remark} \label{re:Q_eta}
Note that, at $\theta = \underline{\theta}$,
\begin{equation} \label{eq:frac_at_underbar_theta}
\frac{ \bar Q_{\eta} (\theta) }{\bar Q(\theta)} \bigg|_{\theta = \underline{\theta}} = 1 .
\end{equation}

\noindent Moreover, at $\theta = \bar \theta$, and L'Hospital's rule gives
\begin{equation} \label{eq:frac_at_bar_theta}
\frac{ \bar Q_{\eta} (\theta) }{\bar Q(\theta)} \bigg|_{\theta = \bar \theta }  = \underset{ \theta \to \bar \theta } {\lim} \frac{ \bar Q_{\eta} (\theta) }{\bar Q(\theta)} =  \underset{ \theta \to \bar \theta } {\lim} \frac{ - q_{\eta} (\theta) }{-q(\theta)}   = \frac{ q_{\eta} ( \bar \theta) }{ q( \bar \theta) }\,.
\end{equation}

\noindent Since $ \frac{ \bar Q_{\eta} (\theta) }{\bar Q(\theta)} $ is non-decreasing in $\theta$, it then follows that
\begin{equation}\label{eq:frac>1}
\frac{ q_{\eta} (\bar\theta) }{ q(\bar \theta)}  \geq 1. 
\end{equation}
\end{remark}

\medskip

\begin{proposition}\label{prop:social_welfare_expression}
Consider a menu $(R_\theta, p_\theta)_{\theta \in \Theta} \in \cI\cR\cap \cI\cC$, and suppose that Assumption \ref{ass:eta2} holds. Then the social welfare function is given by:
\begin{align}\label{eq:social_welfare}
W_{\eta, \alpha} \left( (R_{\theta}, p_{\theta})_{\theta \in \Theta} \right) 
&= (1-2\alpha)  \left[ p_{\underline{\theta}} +  \int_0^{ \bar L}  \left[ 1 - g_{ \underline{ \theta }} \big( F_{\underline{ \theta} } (l)  \big)   \right] \, \frac{ \partial R_{\underline{ \theta }}(l) }{ \partial l } \, dl \right] 
-
\int_{\Theta} \int_0^{\bar L} J_{\theta, \eta} (l) \frac{ \partial R_{\theta}(l) }{ \partial l } \, dl \,d\mu(\theta) \nonumber \\
&\qquad- 
(1-\alpha) \int_{\Theta} \int_0^{\bar L} \left[ 1 - g^{In}(F_{\theta}(l))  \right] \,dl \, d\mu(\theta),
\end{align}

\noindent where 
\begin{align}
\label{eq:Jeta}
J_{\theta, \eta}(l) 
&:= (1-\alpha) \left[  g^{In} (F_{\theta}(l))-  g_{\theta} (F_{\theta}(l)) \right] \nonumber\\
&\quad+
\left(\frac{\bar Q(\theta) }{ q (\theta) } \right) \left[ \frac{\partial g_{\theta}}{ \partial \theta} (F_{\theta}(l)) + g'_{\theta}(F_{\theta}(l)) \frac{\partial F_{\theta}(l)}{\partial \theta} \right]  \left[ (1-\alpha) - \alpha \frac{ \bar Q_{\eta}(\theta)}{\bar Q(\theta)}\right] .
\end{align}
\end{proposition}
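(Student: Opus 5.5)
The plan is a direct computation: substitute the incentive-compatible premium formula into $W_{\eta,\alpha}$ and then exchange the order of integration in the type variable. Write $R'_\theta(l)$ for $\frac{\partial R_\theta(l)}{\partial l}$ and
$$
D_s(l) := \frac{\partial g_s}{\partial s}(F_s(l)) + g'_s(F_s(l))\frac{\partial F_s(l)}{\partial s},
\qquad
K_s := \int_0^{\bar L} D_s(l)\, R'_s(l)\, dl .
$$
Let $A := p_{\underline\theta} + \int_0^{\bar L}\left[1-g_{\underline\theta}(F_{\underline\theta}(l))\right]R'_{\underline\theta}(l)\,dl$. Since the menu is in $\cI\cC$, Proposition \ref{IC_characterization} gives
$$
p_\theta = A - \int_{\underline\theta}^\theta K_s\,ds - \int_0^{\bar L}\left[1-g_\theta(F_\theta(l))\right]R'_\theta(l)\,dl .
$$

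\textbf{Step 1 (agent's utility).} Plug this $p_\theta$ into the dual-utility expression $U_\theta = -p_\theta - \int_0^{\bar L}[1-g_\theta(F_\theta)]R'_\theta\,dl$. The last integral cancels, leaving
$$
U_\theta = -A + \int_{\underline\theta}^\theta K_s\,ds .
$$

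\textbf{Step 2 (insurer's utility).} Plug the same $p_\theta$ into $V_\theta = p_\theta - \int_0^{\bar L}[1-g^{In}(F_\theta)](1-R'_\theta)\,dl$. The two terms multiplying $R'_\theta$ combine into $-\int_0^{\bar L}[g^{In}(F_\theta)-g_\theta(F_\theta)]R'_\theta\,dl$, so
$$
V_\theta = A - \int_{\underline\theta}^\theta K_s\,ds - \int_0^{\bar L}\left[g^{In}(F_\theta(l))-g_\theta(F_\theta(l))\right]R'_\theta(l)\,dl - \int_0^{\bar L}\left[1-g^{In}(F_\theta(l))\right]dl .
$$

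\textbf{Step 3 (aggregation).} Integrate $U_\theta$ against $\alpha\,d\eta$ and $V_\theta$ against $(1-\alpha)\,d\mu$. The constant $A$ collects the coefficient $\alpha\cdot(-1)+(1-\alpha)\cdot 1 = 1-2\alpha$. The double integrals are handled by Fubini on the triangle $\{\underline\theta\le s\le\theta\le\bar\theta\}$:
$$
\int_\Theta\int_{\underline\theta}^\theta K_s\,ds\,d\eta(\theta) = \int_\Theta K_s\,\bar Q_\eta(s)\,ds = \int_\Theta K_s\,\frac{\bar Q_\eta(s)}{q(s)}\,d\mu(s),
$$
and likewise the $\mu$-integral equals $\int_\Theta K_s\,\frac{\bar Q(s)}{q(s)}\,d\mu(s)$. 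Dividing by $q$ is legitimate because $q>0$ and is continuous, by Assumption \ref{ass:eta2}.

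Collecting the $K$-terms gives
$$
\int_\Theta K_s\,\frac{\alpha\bar Q_\eta(s)-(1-\alpha)\bar Q(s)}{q(s)}\,d\mu(s)
= -\int_\Theta \frac{\bar Q(s)}{q(s)}\left[(1-\alpha)-\alpha\frac{\bar Q_\eta(s)}{\bar Q(s)}\right]K_s\,d\mu(s).
$$
At $s=\bar\theta$ the factor $\bar Q/q$ vanishes, so the bracket causes no trouble there, by Remark \ref{re:Q_eta}. Adding the $(1-\alpha)[g^{In}-g_\theta]$ term from Step 2 produces exactly $-\int_\Theta\int_0^{\bar L} J_{\theta,\eta}R'_\theta\,dl\,d\mu$. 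The remaining piece $-(1-\alpha)\int_\Theta\int_0^{\bar L}[1-g^{In}(F_\theta)]\,dl\,d\mu$ is carried over unchanged, which yields \eqref{eq:social_welfare}.

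\textbf{Main obstacle: justifying Fubini.} The algebra is routine; the real work is checking that the interchange of integrals is valid. I would argue as follows.
\begin{enumerate}
\item Boundedness: by Assumptions \ref{Ass:cdf_family} and \ref{Ass:distortion_family}, $|D_s(l)|\le c+\delta c'$, and $0\le R'_s\le 1$, so $|K_s|\le \bar L(c+\delta c')$.
\item Measurability: joint measurability of $(s,l)\mapsto R'_s(l)$ follows from the joint Borel measurability of $(\theta,l)\mapsto I_\theta(l)$ in Definition \ref{def:contract_menu}, by taking difference quotients.
\end{enumerate}
Since $\eta$ and $\mu$ are probability measures, the integrand is then bounded and jointly measurable on a finite measure space, so Fubini applies.
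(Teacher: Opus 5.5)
Your proposal is correct and follows essentially the same route as the paper: substitute the IC premium from Proposition~\ref{IC_characterization}, then move the inner type integral outside so that the survival functions $\bar Q$ and $\bar Q_\eta$ appear as weights. The differences are cosmetic. The paper first expands $W_{\eta,\alpha}$ in $p_\theta$ and $\partial R_\theta/\partial l$ before substituting, and it swaps the integration order by integrating by parts, with the boundary terms vanishing because $\bar Q(\bar\theta)=\bar Q_\eta(\bar\theta)=0$. You instead simplify $U_\theta$ and $V_\theta$ first and apply Fubini on the triangle, which also lets you justify the interchange explicitly; the paper does not address that justification.
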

\begin{proof}
The proof is provided in Appendix \ref{App:prop:social_welfare_expression}.
\end{proof}

\medskip
\begin{lemma}\label{le:alpha0}
Suppose that Assumption \ref{ass:eta2} holds, and let
$$
\alpha_0 := \frac{q (\bar \theta)}{q_{\eta}(\bar \theta) + q( \bar \theta)}.
$$
Then $ 0 < \alpha_0 \leq \frac{1}{2}$, and the following hold.

\smallskip
\begin{enumerate}
\item[(i)] If $\alpha \in (0, \alpha_0)$, then 
$$ 1-\alpha -  \alpha \frac{\bar Q_{\eta}(\theta)}{ \bar Q(\theta)} >0, \ \forall \, \theta \in \Theta. $$

\smallskip
\item[(ii)] If $\alpha \in [\alpha_0, \frac{1}{2}]$, then there exists $\theta_\alpha \in \Theta$ such that: 
$$\frac{\bar Q_\eta(\theta_\alpha)}{\bar Q(\theta_\alpha)} = \frac{1-\alpha}{\alpha}.$$
\end{enumerate}
\end{lemma}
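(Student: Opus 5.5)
The plan is to work with the ratio function $h(\theta) := \bar Q_\eta(\theta)/\bar Q(\theta)$ on $\Theta$. Following Remark \ref{re:Q_eta}, we set $h(\bar\theta) := q_\eta(\bar\theta)/q(\bar\theta)$. We will use three facts already established. First, $h$ is non-decreasing under Assumption \ref{ass:eta2}. Second, $h(\underline\theta)=1$ by \eqref{eq:frac_at_underbar_theta}. Third, $h(\bar\theta)=q_\eta(\bar\theta)/q(\bar\theta)\ge 1$ by \eqref{eq:frac_at_bar_theta}--\eqref{eq:frac>1}. In addition we need continuity of $h$ on all of $\Theta$. On $[\underline\theta,\bar\theta)$ this holds because $\bar Q_\eta$ and $\bar Q$ are continuous (they are integrals of continuous densities) and $\bar Q>0$ there, since $q>0$. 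At $\bar\theta$ it holds by the L'Hospital computation in Remark \ref{re:Q_eta}, which is legitimate because $q_\eta$ and $q$ are continuous and $q(\bar\theta)>0$. So $h$ is a continuous non-decreasing map from $\Theta$ onto $[1, q_\eta(\bar\theta)/q(\bar\theta)]$.

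The bounds on $\alpha_0$ come first. Since $q(\bar\theta)>0$ and $q_\eta(\bar\theta)\ge0$, we get $\alpha_0>0$. The inequality $\alpha_0\le \tfrac12$ is equivalent to $q(\bar\theta)\le q_\eta(\bar\theta)$, which is exactly \eqref{eq:frac>1}. We also record the identity $(1-\alpha_0)/\alpha_0 = q_\eta(\bar\theta)/q(\bar\theta) = h(\bar\theta)$.

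For (i), the claimed inequality $1-\alpha-\alpha h(\theta)>0$ is equivalent to $h(\theta) < (1-\alpha)/\alpha$. Because $h$ is non-decreasing, it suffices to check this at $\theta=\bar\theta$, that is, $h(\bar\theta) < (1-\alpha)/\alpha$. The map $\alpha\mapsto(1-\alpha)/\alpha$ is strictly decreasing on $(0,1)$, so $\alpha<\alpha_0$ gives $(1-\alpha)/\alpha > (1-\alpha_0)/\alpha_0 = h(\bar\theta)$, and we are done.

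For (ii), take $\alpha\in[\alpha_0,\tfrac12]$. By the same monotonicity of $(1-\alpha)/\alpha$, we have $1 \le (1-\alpha)/\alpha \le h(\bar\theta)$. This target value lies in $[h(\underline\theta),h(\bar\theta)]$, so the intermediate value theorem applied to the continuous function $h$ yields $\theta_\alpha\in\Theta$ with $h(\theta_\alpha)=(1-\alpha)/\alpha$. The only delicate step in the whole argument is the continuity of $h$ at the endpoint $\bar\theta$, where $\bar Q(\bar\theta)=0$. Here we must interpret the ratio as its limit, as Remark \ref{re:Q_eta} does, and justify that limit using the continuity of $q,q_\eta$ and the positivity of $q(\bar\theta)$. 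The endpoint cases $\alpha=\alpha_0$ (giving $\theta_\alpha=\bar\theta$) and $\alpha=\tfrac12$ (giving, e.g., $\theta_\alpha=\underline\theta$) are covered by this argument.
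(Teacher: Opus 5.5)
Your proof is correct and follows essentially the same route as the paper. Both use the monotonicity of $\bar Q_\eta/\bar Q$ with its continuous extension $q_\eta(\bar\theta)/q(\bar\theta)$ at $\bar\theta$, an endpoint comparison for (i), and the intermediate value theorem on $[1,\,q_\eta(\bar\theta)/q(\bar\theta)]$ for (ii), though you are slightly more careful in explicitly justifying $\alpha_0>0$ and the continuity of the ratio at $\bar\theta$, which the paper's IVT step uses implicitly.
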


\begin{proof}
The proof is provided in Appendix \ref{App:le:alpha0}. 
\end{proof}

The following result provides a characterization of optimal solutions to Problem \eqref{eq:IPO_sup_retention}.

\medskip

\begin{theorem} \label{th:solution_characterization_IPO}
Suppose that Assumption \ref{ass:eta2} holds and let $\alpha_0$ be defined as in Lemma \ref{le:alpha0}. Consider the following three cases of social weight $\alpha$. 

\medskip

\begin{enumerate}
\item \label{alpha1} The case where $\alpha \in \left(0, \alpha_0\right)$. Define a marginal retention $r^*_\theta(l)$ by 
\begin{equation} 
\label{eq:optimal_marginal_retention_Sec:PO}
r^*_\theta(l) := 
\begin{cases}
0 & J_{\theta, \eta}(l) >0 , \\
\in [0,1] & J_{\theta, \eta}(l) = 0 , \\
1 & J_{\theta, \eta}(l) < 0 ,	
\end{cases}
\end{equation}

\noindent where the function $J_{ \theta, \eta }(l)$ is given in \eqref{eq:Jeta}.
Define  $R^*_\theta$ by $R^*_\theta(l): =\int_0^l r^*_\theta(s) \,ds$, and define the premium schedule $p^*_\theta$ by
\begin{align}\label{eq:optimal_premium}
p^*_{\theta} 
&:= \int_0^{ \bar L}  \left[ 1 - g_{ \underline{\theta}  } \big( F_{\underline{\theta} }(l)  \big)   \right] \, dl 
- \int_{\underline{\theta}} ^ {\theta} \int_0 ^{\bar L}  \left[ \frac{ \partial g_s}{\partial s } (F_s(l)) + g'_s(F_s(l)) \frac{\partial F_s(l)}{\partial s} \right] \frac{\partial R^*_s(l)}{\partial l } \ dl \, ds  \nonumber  \\
&\quad -
\int_0^{ \bar L }  \left[ 1 - g_{\theta} \big( F_{\theta}(l)\big) \right] \, \frac{ \partial R^*_{\theta}(l) }{ \partial l } \, dl.
\end{align}

\noindent Suppose that the following hold:
\begin{enumerate}
\item[(i)] $J_{\theta,\eta}(l)$ is measurable and non-decreasing in $\theta$, for each $l$.
\item[(ii)] The value of $r^*_\theta(l)$ on the set $\{J_{\theta,\eta}(l)=0\}$ is chosen so that $r^*_\theta(l)$ is measurable and non-increasing in $\theta$.
\item[(iii)] $\int_\Theta V_\theta(R^*_\theta,p^*_\theta)\,d\mu(\theta) \geq 0.$
\end{enumerate}

\medskip

Then for a given $\eta$ and $\alpha$, the collection $\{R^*_\theta\}_{\theta \in \Theta}$ is submodular, and $\{p^*_\theta\}_{\theta \in \Theta}$ is non-negative. 
Moreover,  $(R^*_\theta,p^*_\theta)_{\theta\in\Theta}$ is an optimal solution for Problem \eqref{eq:IPO_sup_retention}.

\medskip 

\item\label{alpha2} The case where $\alpha \in \left[ \alpha_0, \frac{1}{2} \right]$.
There exists $\theta_\alpha\in\Theta$ satisfying $\frac{\bar Q_{\eta} (\theta_{\alpha}) }{\bar Q (\theta_{\alpha}) } = \frac{1-\alpha}{\alpha}$.
\begin{enumerate}
\item[(a)] For $\theta < \theta_{\alpha}$, let $R^*_\theta(l) :=\int_0^l r^*_\theta(s) \,ds$, where $r^*_\theta(l)$ satisfies \eqref{eq:optimal_marginal_retention_Sec:PO} and $J_{\theta, \eta}(l)$ is given by \eqref{eq:Jeta}.

\medskip

\item[(b)] For every $\theta\geq \theta_\alpha$, define $R^*_\theta$ by $ R^*_{ \theta}(l) =0$ for all $l \in [0, \bar L]$.
\end{enumerate}

\smallskip
Define the premium schedule $p^*_\theta$ by \eqref{eq:optimal_premium}. Suppose that the following two conditions hold for $\theta<\theta_\alpha$:
\begin{enumerate}
\item [(i)] $J_{\theta,\eta}(l)$ is measurable and non-decreasing in $\theta$, for each $l$.
\item[(ii)] The value of $r^*_\theta(l)$ on the set $\{J_{\theta,\eta}(l)=0\}$ is chosen so that $r^*_\theta(l)$ is measurable and non-increasing in $\theta$.
\end{enumerate}

\medskip

\noindent Then for a given $\eta$ and $\alpha$, the collection $\{R^*_\theta\}_{\theta \in \Theta}$ is submodular, and the collection $\{p^*_\theta\}_{\theta \in \Theta}$ is non-negative. Moreover, if $\int_\Theta V_\theta(R^*_\theta,p^*_\theta)\,d\mu(\theta) \geq 0$, then $(R^*_\theta, p^*_\theta)_{\theta \in \Theta}$ is an optimal solution to Problem \eqref{eq:IPO_sup_retention}. 
\medskip

\item\label{alpha3} The case where $\alpha\in\left(\frac{1}{2},1\right)$.
Suppose that the following condition holds:
\begin{equation}
\label{eq_case3_condition}
\int_\Theta\int_0^{\bar L}
\left[
1-g^{In}(F_\vartheta(l))
\right]\,dl\,d\mu(\vartheta)
\leq
\int_0^{\bar L} \left[1 - g_{\underline\theta}(F_{\underline\theta}(l))\right]  \, dl.
\end{equation}

\noindent Then there exists an optimal solution
$(R_\theta^*,p_\theta^*)_{\theta\in\Theta}$ to Problem \eqref{eq:IPO_sup_retention}, such that
$R_\theta^*(l)=0, \, \forall l\in[0,\bar L], \, \forall\theta\in\Theta$,
and 
\begin{equation}
\label{eq:high_alpha_binding_premium_theorem}
p^*_\theta = \int_\Theta\int_0^{\bar L}
\left[
1-g^{In}(F_\vartheta(l))
\right]\,dl\,d\mu(\vartheta) \geq 0,
\ \forall \, \theta\in\Theta.
\end{equation}

\noindent This is the unique optimal full-coverage pooling menu, and it binds the insurer's participation constraint.
\end{enumerate}
\end{theorem}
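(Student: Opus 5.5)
The plan is to follow the standard Mirrlees-type route: reduce the constrained problem to a pointwise maximization of the welfare expression in Proposition \ref{prop:social_welfare_expression}, and then check that the pointwise maximizer is feasible.

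For any menu in $\cI\cR\cap\cI\cC$, Proposition \ref{IC_characterization} pins down the premia by \eqref{eq:premium}. Proposition \ref{prop:social_welfare_expression} then writes $W_{\eta,\alpha}$ as the sum of three pieces:
\begin{itemize}
\item the term $(1-2\alpha)A$, with $A:=p_{\underline\theta}+\int_0^{\bar L}[1-g_{\underline\theta}(F_{\underline\theta})]R'_{\underline\theta}\,dl$;
\item the term $-\int_\Theta\int_0^{\bar L} J_{\theta,\eta}R'_\theta\,dl\,d\mu$;
\item a constant that does not depend on the menu.
\end{itemize}
By (P1) for the lowest type (Proposition \ref{IR_characterization}), $A\le \int_0^{\bar L}[1-g_{\underline\theta}(F_{\underline\theta})]\,dl$. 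When $\alpha\le 1/2$ the coefficient $1-2\alpha$ is nonnegative, so we obtain an upper bound on $W_{\eta,\alpha}$ over all feasible menus, submodular or not. This bound equals
\[
(1-2\alpha)\int_0^{\bar L}[1-g_{\underline\theta}(F_{\underline\theta})]\,dl \;-\; \int_\Theta\int_0^{\bar L}\max_{r\in[0,1]} \bigl(J_{\theta,\eta}(l)\,r\bigr)^{-}\ldots,
\]
that is, we maximize $-J_{\theta,\eta}(l)\,r$ pointwise over $r\in[0,1]$. The pointwise maximizer is exactly $r^*$ in \eqref{eq:optimal_marginal_retention_Sec:PO}.

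In case (1), condition (ii) gives that $r^*_\theta(l)$ is non-increasing in $\theta$, so $\{R^*_\theta\}$ is submodular by Definition \ref{def:submodular}. The premium \eqref{eq:optimal_premium} is \eqref{eq:premium} with $p_{\underline\theta}$ chosen so that lowest-type participation binds. By Corollary \ref{cor:IRandIC_iff} together with (iii), the menu is therefore in $\cI\cR\cap\cI\cC$ and attains the upper bound, hence it is optimal.

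Nonnegativity of $p^*_\theta$ follows by rewriting \eqref{eq:optimal_premium} using $r^*\in[0,1]$ and the monotonicity of $g_s\circ F_s$ in $s$ (Remark \ref{Re:chain_rule}): the middle integral is nonnegative. We then bound
\[
p^*_\theta\ge \int_0^{\bar L}[1-g_{\underline\theta}(F_{\underline\theta})]\,dl-\int_0^{\bar L}[1-g_\theta(F_\theta)]R^{*\prime}_\theta\,dl,
\]
and compare via $g_\theta(F_\theta)\le g_{\underline\theta}(F_{\underline\theta})$. The cleanest way to do this is to note that $p^*_\theta$ equals $\int[1-g_\theta(F_\theta)](1-R^{*\prime}_\theta)\,dl$ minus the accumulated information rent. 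I expect this inequality to be the main technical obstacle, and I would handle it by writing $U_\theta(R^*_\theta,p^*_\theta)-U_\theta(L_\theta,0)$ as an integral of envelope terms.

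For case (2), Lemma \ref{le:alpha0} supplies $\theta_\alpha$. Since $\bar Q_\eta/\bar Q$ is non-decreasing, the factor $(1-\alpha)-\alpha\bar Q_\eta/\bar Q$ is $\le0$ for $\theta\ge\theta_\alpha$. Together with $\partial_\theta[g_\theta(F_\theta)]\le0$ and Assumption \ref{ass:distortion_insurer_agent}, this gives $J_{\theta,\eta}\ge0$ there, so $r^*=0$ is the pointwise maximizer on that region. The full-coverage choice $R^*_\theta=0$ is thus a selection of \eqref{eq:optimal_marginal_retention_Sec:PO} on $[\theta_\alpha,\bar\theta]$. It joins monotonically with the region below $\theta_\alpha$, because any $r^*\ge0$ is at least $0$, which preserves submodularity. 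The rest of the argument is identical to case (1).

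For case (3), $1-2\alpha<0$, so the pointwise argument no longer works. Instead I would combine the social welfare as $\alpha\int U\,d\eta+(1-\alpha)\int V\,d\mu$ and use translation invariance. Premia cancel only partially, so the welfare is increased by lowering premia, down to the point where (P2) binds. Moreover, $U_\theta(R,p)+V_\theta(R,p)$ is maximized at full coverage because $g^{In}\ge g_\theta$. Concretely:
\begin{itemize}
\item Any feasible menu has $\int V\,d\mu\ge0$ and $U_\theta\le -p_\theta-\int[1-g_\theta(F_\theta)]R'\,dl$.
\item Bound $\alpha\int U\,d\eta$ using IC, which forces $U_\theta$ to be at most what pooling full coverage at the smallest break-even premium gives.
\end{itemize}
Here I would show that the pooling menu with constant premium $P:=\int\int[1-g^{In}(F_\vartheta)]\,dl\,d\mu$ is feasible, with (P1) guaranteed by \eqref{eq_case3_condition} together with monotonicity in $\theta$. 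I would then show it dominates every other feasible menu. The comparison step uses IC-type rent arguments, and uniqueness follows from strictness in $\alpha>1/2$ of the trade-off between premium and agent utility.
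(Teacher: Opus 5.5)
Your cases (1)--(2) follow the paper's route: make the lowest type's participation bind, then maximize $-J_{\theta,\eta}(l)\,r$ pointwise over $r\in[0,1]$. Your upper-bound framing via $p_{\underline\theta}+\int_0^{\bar L}[1-g_{\underline\theta}(F_{\underline\theta}(l))]\,\partial_lR_{\underline\theta}(l)\,dl\le\int_0^{\bar L}[1-g_{\underline\theta}(F_{\underline\theta}(l))]\,dl$ is a clean way to say it. The genuine gap is in case (3).

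The one concrete claim you make there is that IC forces each $U_\theta$ below $-K$, the utility from full-coverage pooling at $K:=\int_\Theta\int_0^{\bar L}[1-g^{In}(F_\vartheta(l))]\,dl\,d\mu(\vartheta)$. As stated, this is false. Let $a$ be the top of the support of $L_{\underline\theta}$, and take the pooling menu with indemnity $\min(l,a)$ priced at the insurer's break-even premium. It is IC, and it is IR under \eqref{eq_case3_condition}. Yet it gives type $\underline\theta$ utility $-\int_\Theta\int_0^{a}[1-g^{In}(F_\vartheta(l))]\,dl\,d\mu(\vartheta)>-K$ whenever $\int_\Theta\int_a^{\bar L}[1-g^{In}(F_\vartheta(l))]\,dl\,d\mu(\vartheta)>0$.

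Only an aggregate bound holds, and it needs an ingredient you never invoke: Assumption \ref{ass:eta2}. Write $U_\theta:=U_\theta(R_\theta,p_\theta)$ and $V_\theta:=V_\theta(R_\theta,p_\theta)$.
\begin{enumerate}
\item Since $\bar Q_\eta\ge\bar Q$, $\eta$ first-order dominates $\mu$.
\item For every IC menu, $\theta\mapsto U_\theta$ is non-increasing. Indeed, for $\theta<\theta'$, $U_\theta(R_\theta,p_\theta)\ge U_\theta(R_{\theta'},p_{\theta'})\ge U_{\theta'}(R_{\theta'},p_{\theta'})$, using $g_\theta\circ F_\theta\ge g_{\theta'}\circ F_{\theta'}$. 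Hence $\int_\Theta U_\theta\,d\eta\le\int_\Theta U_\theta\,d\mu$.
\item Your pointwise observation $U_\theta+V_\theta\le-\int_0^{\bar L}[1-g^{In}(F_\theta(l))]\,dl$ (Assumption \ref{ass:distortion_insurer_agent}) then gives $\int_\Theta U_\theta\,d\eta+\int_\Theta V_\theta\,d\mu\le-K$.
\end{enumerate}
Combining this with (P2) and $\alpha>\frac12$,
\[
W_{\eta,\alpha}=\alpha\Bigl(\int_\Theta U_\theta\,d\eta+\int_\Theta V_\theta\,d\mu\Bigr)+(1-2\alpha)\int_\Theta V_\theta\,d\mu\le-\alpha K .
\]
The menu $(0,K)_{\theta\in\Theta}$ attains this bound and is feasible by \eqref{eq_case3_condition}. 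The paper's own proof of this case sets $\partial_lR_\theta\equiv0$ pointwise and then optimizes only over pooling premia in $[K,M]$. It is this aggregate inequality that rules out partial-coverage menus with a lower $p_{\underline\theta}$, so your ``IC-type rent arguments'' must be replaced by it.

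A second, smaller gap is nonnegativity of $p^*_\theta$, which admissibility requires and hence the optimality claims in cases (1)--(2) rely on. After dropping the middle integral, you compare via $g_\theta(F_\theta)\le g_{\underline\theta}(F_{\underline\theta})$. That inequality runs the wrong way: it makes $\int_0^{\bar L}[1-g_\theta(F_\theta)]\,\partial_lR^*_\theta\,dl$ larger, not smaller. Your resulting lower bound is negative, for example, when $\partial_lR^*_\theta\equiv1$, where in fact $p^*_\theta=0$. Writing $p^*_\theta$ as the reservation premium minus the information rent also does not close the step unless you bound the rent.

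The missing ingredient is submodularity. The paper rewrites
\[
p^*_\theta=\int_0^{\bar L}[1-g_{\underline\theta}(F_{\underline\theta}(l))]\bigl(1-\partial_lR^*_\theta(l)\bigr)\,dl+\int_{\underline\theta}^{\theta}\int_0^{\bar L}\partial_s(g_s\circ F_s)(l)\bigl(\partial_lR^*_\theta(l)-\partial_lR^*_s(l)\bigr)\,dl\,ds ,
\]
and both terms are nonnegative because $\partial_lR^*_\theta\le\partial_lR^*_s$ for $s\le\theta$. Equivalently, once Proposition \ref{prop:IC_iff} gives IC, type $\underline\theta$'s constraint against type $\theta$'s contract yields $p^*_\theta\ge\int_0^{\bar L}[1-g_{\underline\theta}(F_{\underline\theta}(l))]\bigl(1-\partial_lR^*_\theta(l)\bigr)\,dl\ge0$ directly.
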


\begin{proof}
The proof is provided in Appendix \ref{App:thIPOsolution}.
\end{proof}

\medskip

Theorem \ref{th:solution_characterization_IPO} shows that the structure of incentive-efficient menus depends on the value of the social weight $\alpha$. 
Specifically, when $\alpha \leq \frac{1}{2}$ and $\theta < \theta_\alpha$, that is when $\alpha$ satisfies cases \eqref{alpha1} and \eqref{alpha2}-(a), a separating equilibrium emerges. The optimal retention function is submodular and is characterized by the marginal retention in \eqref{eq:optimal_marginal_retention_Sec:PO} ensuring that higher types receive more coverage.
If $\alpha$ satisfies case \eqref{alpha2}-(b), full coverage is offered to agent types $\theta \geq \theta_{\alpha}$.
Moreover, for both cases \eqref{alpha1} and \eqref{alpha2}, the optimal premia satisfy \eqref{eq:optimal_premium} and are non-decreasing in $\theta$. That is, optimal coverage is more expensive for higher types, who are more risk averse and face stochastically larger losses. 
Finally, when $\alpha$ satisfies case \eqref{alpha3}, that is, if a high social weight is placed on the agent's welfare, then the optimal menu is a full-coverage pooling menu. 
Moreover, the quantity 
$$\int_\Theta\int_0^{\bar L}\left[1-g^{In}(F_\vartheta(l))\right]\,dl\,d\mu(\vartheta)$$ 
is the insurer's aggregate certainty-equivalent cost of providing full coverage, averaged across types; whereas the quantity 
$$\int_0^{\bar L}\left[1-g_{\underline\theta}(F_{\underline\theta}(l))\right]\,dl$$
is the maximum premium that the lowest type is willing to pay for full coverage, i.e., the lowest type's reservation premium. Condition \eqref{eq_case3_condition} means that the insurer's break-even premium for full coverage is affordable for the lowest type. That is, the minimum premium needed to make the insurer willing to provide full coverage is not larger than the maximum premium the lowest type is willing to pay for full coverage. This is essentially a market-making condition under which the common optimal premium $p^*$ across types binds the insurer's participation constraint, and is equal to the insurer's average cost of providing full coverage.

\medskip

For cases \eqref{alpha1} and \eqref{alpha2} of Theorem \ref{th:solution_characterization_IPO}, the insurer's participation constraint must be satisfied so that the optimal characterized menu $(R^*_\theta, p^*_\theta)_{\theta \in \Theta}$ is individually rational. 
The following lemma provides sufficient conditions ensuring the satisfaction of the insurer's participation constraint.

\medskip

\begin{lemma}
\label{le:IRimplication}
Let $\alpha \in (0, \frac{1}{2}]$ and let $(R^*_\theta, p^*_\theta)_{\theta \in \Theta}$ be a solution characterized in Theorem \ref{th:solution_characterization_IPO}, whose premia are given by \eqref{eq:optimal_premium}. 

\begin{enumerate}
\item If $r^*_\theta(l)\in [0,1]$ for a.e. $l \in [0, \bar L]$, then the insurer's participation constraint is satisfied if the following condition holds
\begin{align}\label{eq:condition_IR_partialcoverage}
\int_\Theta \int_0^{\bar L} \left[ 1 - g^{In}(F_\theta(l))\right] \left( 1- r^*_\theta(l) \right) \,dl \, d\mu(\theta) 
&\leq \int_\Theta p^*_\theta \, d\mu(\theta).
\end{align}

\medskip

\item If $r^*_\theta \equiv 0$ and condition \eqref{eq_case3_condition} holds, then the insurer's participation constraint is satisfied. 

\medskip
\item If $r^*_\theta \equiv 1$, then $ V_\theta (R^*_\theta , p^*_\theta)= 0$, for all $\theta \in \Theta$. Hence the insurer's participation constraint binds. That is, $\displaystyle\int_\Theta V_\theta (R^*_\theta, p^*_\theta) \, d\mu(\theta) = 0$.
\end{enumerate}
\end{lemma}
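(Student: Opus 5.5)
The plan is to compute the insurer's aggregate utility directly from the Dual Utility formula for $V_\theta$ and treat the three items separately. Throughout, write $\partial R^*_\theta(l)/\partial l = r^*_\theta(l)$; this holds for a.e.\ $l$ because $R^*_\theta(l)=\int_0^l r^*_\theta(s)\,ds$. Section \ref{sec:DU} gives
$$
V_\theta(R^*_\theta,p^*_\theta) = p^*_\theta - \int_0^{\bar L}\left[1-g^{In}(F_\theta(l))\right]\left(1-r^*_\theta(l)\right)dl .
$$
Integrating against $\mu$ is legitimate by Assumption \ref{ass:integrability} and Fubini, since the integrand is bounded and jointly measurable. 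This yields
$$
\int_\Theta V_\theta(R^*_\theta,p^*_\theta)\,d\mu(\theta) = \int_\Theta p^*_\theta\,d\mu(\theta) - \int_\Theta\int_0^{\bar L}\left[1-g^{In}(F_\theta(l))\right]\left(1-r^*_\theta(l)\right)dl\,d\mu(\theta).
$$
Item (1) follows immediately: the right-hand side is nonnegative exactly when \eqref{eq:condition_IR_partialcoverage} holds, which is (P2). The hypothesis $r^*_\theta\in[0,1]$ is only needed so that the integrands are well defined and bounded.

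For item (3), take $r^*_\theta\equiv 1$, so $R^*_\theta(l)=l$ and $\partial R^*_\theta/\partial l=1$. Substituting into \eqref{eq:optimal_premium} gives
$$
p^*_\theta=\int_0^{\bar L}[1-g_{\underline\theta}(F_{\underline\theta})]\,dl-\int_{\underline\theta}^{\theta}\int_0^{\bar L}\partial_s[g_s(F_s(l))]\,dl\,ds-\int_0^{\bar L}[1-g_\theta(F_\theta)]\,dl .
$$
The middle term equals $\int_0^{\bar L}[g_\theta(F_\theta)-g_{\underline\theta}(F_{\underline\theta})]\,dl$. Here we use Remark \ref{Re:chain_rule} together with the uniform Lipschitz continuity in Assumptions \ref{Ass:cdf_family} and \ref{Ass:distortion_family}: these give absolute continuity in $s$, which justifies the fundamental theorem of calculus and the interchange of integrals. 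All terms then cancel, so $p^*_\theta=0$. Since $1-r^*_\theta=0$, we get $V_\theta=0-0=0$ for every $\theta$, and the aggregate constraint binds.

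For item (2), take $r^*_\theta\equiv0$, so the menu is full coverage. In \eqref{eq:optimal_premium} the two terms involving $\partial R^*/\partial l$ vanish, and so $p^*_\theta=\int_0^{\bar L}[1-g_{\underline\theta}(F_{\underline\theta}(l))]\,dl$ for all $\theta$. The aggregate insurer utility is then
$$
\int_0^{\bar L}[1-g_{\underline\theta}(F_{\underline\theta})]\,dl-\int_\Theta\int_0^{\bar L}[1-g^{In}(F_\theta)]\,dl\,d\mu(\theta),
$$
which is nonnegative precisely by \eqref{eq_case3_condition}.

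The main obstacle is the justification in item (3) of differentiating under the integral and applying the fundamental theorem of calculus in $\theta$ for $g_\theta(F_\theta(l))$. I would handle it by noting that $\theta\mapsto g_\theta(F_\theta(l))$ is Lipschitz with constant $c+\delta c'$, uniformly in $l$, and then applying Fubini on the bounded domain $[\underline\theta,\theta]\times[0,\bar L]$.
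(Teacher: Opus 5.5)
Your proposal is correct and follows the paper's proof essentially step for step. You write $\int_\Theta V_\theta(R^*_\theta,p^*_\theta)\,d\mu(\theta)$ as $\int_\Theta p^*_\theta\,d\mu(\theta)$ minus the insurer's distorted cost of the ceded layers and read off item (1); you substitute $r^*_\theta\equiv 0$ into \eqref{eq:optimal_premium} for item (2); and you substitute $r^*_\theta\equiv 1$ and apply the fundamental theorem of calculus in $\theta$ for item (3). The paper leaves two things implicit that you make explicit: the Lipschitz bound $c+\delta c'$ that justifies the fundamental theorem of calculus and Fubini step, and the observation that in item (3) this forces $p^*_\theta=0$.
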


\begin{proof}
The proof can be found in Appendix \ref{App:proofleIRimplication}.
\end{proof}

Lemma \ref{le:IRimplication} discusses the conditions under which the optimal menu $(R^*_\theta, p^*_\theta)_{\theta \in \Theta}$ characterized in Theorem \ref{th:solution_characterization_IPO} satisfies the insurer's participation constraint (P2) of Definition \ref{def:IR}, for $\alpha \in (0, \frac{1}{2}]$ and depending on whether the optimal marginal retention corresponds to full, partial, or zero coverage.
When coverage is partial, condition \eqref{eq:condition_IR_partialcoverage} requires that the insurer's certainty-equivalent cost of providing partial coverage does not exceed the aggregate optimal premium $\int_\Theta p^*_\theta \, d\mu(\theta)$, collected across types. This aggregate optimal premium admits the representation given in Lemma \ref{le:IRimplication}-(1), where the quantity 
$$\int_0^{\bar L} [1 - g_{\underline{\theta}}(F_{\underline{\theta}}(l))] \, dl$$ 
corresponds to the maximum premium that the lowest type is willing to pay for full coverage; whereas the quantity 
$$\int_\Theta\int_0^{\bar L} [1 - g_\theta(F_\theta(l))] \frac{\partial R^*_\theta(l)}{\partial l} \, dl \, d\mu(\theta)$$ 
corresponds to the aggregate dual utility generated by the retained loss borne by the agent under the partial coverage menu. The remaining terms of the aggregate premium compare the utility of the lowest type with the aggregate utility across types under the optimal menu.
In contrast, when full coverage is offered to the agent, the insurer's aggregate certainty-equivalent cost of providing full coverage must not exceed the lowest type's willingness to pay for full coverage, so that the insurer is willing to participate in the market. 
If the agent retains the entire loss, then the insurer earns zero utility and is indifferent between participating and not participating in the market.

\medskip
\subsection{On the Monotonicity of the Function $\theta \mapsto J_{ \theta, \eta}(l)$}
\label{sec:solutionIPOmon}
Consider now cases \eqref{alpha1} and \eqref{alpha2}-(a), where $\alpha \leq \frac{1}{2}$ and $\theta<\theta_\alpha$. 
To achieve the separating layered equilibrium described by the marginal retention functions given in \eqref{eq:optimal_marginal_retention_Sec:PO}, we require the function $J_{\theta, \eta}(l)$ defined in \eqref{eq:Jeta} to be non-decreasing in $\theta$. We examine in this section some sufficient conditions for this monotonicity.

\medskip

\begin{proposition}
Consider cases \eqref{alpha1} or \eqref{alpha2}-(a) of the social weight $\alpha$. The function $J_{\theta,\eta}(l)$ is non-decreasing in $\theta$ for all $l  \in [0, \bar L]$, if the following conditions hold for  $\theta \in [\underline{\theta},\bar \theta)$.
\begin{enumerate}
\item \label{C1} $ 0 \leq \left( \frac{\bar Q(\theta) }{ q(\theta) }  \right) ' \leq 1$;

\medskip

\item \label{C2} The function $ \theta \mapsto F_{\theta}$ is convex in $\theta$ for all $l$, that is, $\frac{ \partial ^2 F_{\theta}(l) }{ \partial \theta^2 } \geq 0$;

\medskip

\item \label{C3} The function $ \theta \mapsto g_{\theta} (t)$ is convex in $\theta$ for all $t$, that is, for $F_{\theta}(l) \in [0,1]$, $ \frac{ \partial ^2 g_{\theta} }{ \partial \theta^2 } (F_{\theta}(l)) \geq 0 $;

\medskip

\item \label{C4} The function $t \mapsto g_{\theta}(t)$ is convex in $t$ for all $\theta \in \Theta$, that is, for $F_{\theta}(l) \in [0,1]$, $ g^{\prime \prime}_{\theta} (F_{\theta}(l)) \geq 0$;

\medskip

\item \label{C5} The function $g: (\theta, t) \mapsto g_\theta(t)$ is submodular such that \begin{equation}\label{eq:gsubmodular}
\frac{\partial ^2 g_{\theta}}{\partial \theta \partial t} \leq 0,
\end{equation} 
and satisfies the following:
$$
2 \,\frac{\bar Q(\theta)}{q(\theta)} \, \frac{\partial ^2 g_{\theta}}{\partial \theta \partial t}(F_{\theta}(l)) \leq - g^{In \prime}(F_{\theta}(l)).
$$ 
\end{enumerate}
\end{proposition}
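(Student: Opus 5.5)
The plan is to differentiate $J_{\theta,\eta}(l)$ in $\theta$ for fixed $l$ and show that the derivative is nonnegative on $[\underline\theta,\bar\theta)$, restricted to $\theta<\theta_\alpha$ in case \eqref{alpha2}-(a). For bookkeeping, fix $l$, write $F=F_\theta(l)$, and set $h(\theta):=\bar Q(\theta)/q(\theta)\geq 0$ and $k(\theta):=(1-\alpha)-\alpha\,\bar Q_\eta(\theta)/\bar Q(\theta)$. Let
$$D(\theta):=\frac{\partial g_\theta}{\partial\theta}(F)+g'_\theta(F)\frac{\partial F_\theta(l)}{\partial\theta},$$
which is $\frac{\partial}{\partial\theta}[g_\theta(F_\theta(l))]$ by Remark \ref{Re:chain_rule}. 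Then $J_{\theta,\eta}(l)=(1-\alpha)[g^{In}(F)-g_\theta(F)]+h\,D\,k$.

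I will first record the sign facts used later.
\begin{itemize}
\item[(a)] $D\leq 0$, by Assumptions \ref{Ass:cdf_family} and \ref{Ass:distortion_family}.
\item[(b)] $0<k\leq 1-\alpha$, by Lemma \ref{le:alpha0} in case \eqref{alpha1}, and by the monotonicity of $\bar Q_\eta/\bar Q$ for $\theta<\theta_\alpha$ in case \eqref{alpha2}-(a).
\item[(c)] $k'\leq 0$, because $\bar Q_\eta/\bar Q$ is non-decreasing under Assumption \ref{ass:eta2}.
\end{itemize}
The product rule then gives
$$\partial_\theta J=(1-\alpha)\,g^{In\prime}(F)\,\partial_\theta F-(1-\alpha)D+h'Dk+hD'k+hDk'.$$

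Next I group the terms. The term $hDk'$ is a product of $h\geq0$, $D\leq0$ and $k'\leq0$, so it is nonnegative. Since $D\leq0$ and $k\leq 1-\alpha$, we have $Dk\geq(1-\alpha)D$. Combined with $0\leq h'\leq1$ from \eqref{C1}, this gives $-(1-\alpha)D+h'Dk\geq-(1-\alpha)(1-h')D\geq0$.

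For $D'$, I expand the second derivative of $g_\theta(F_\theta(l))$:
$$D'=\partial^2_\theta g_\theta(F)+2\,\partial_\theta\partial_t g_\theta(F)\,\partial_\theta F+g''_\theta(F)(\partial_\theta F)^2+g'_\theta(F)\,\partial^2_\theta F.$$
The first, third and fourth terms are nonnegative by \eqref{C3}, \eqref{C4} and \eqref{C2} together with $g'\geq0$. The cross term is also nonnegative, since $\partial_\theta\partial_t g\leq0$ by \eqref{eq:gsubmodular} and $\partial_\theta F\leq0$. Moreover, multiplying the inequality in \eqref{C5} by $\partial_\theta F\leq 0$ gives $2h\,\partial_\theta\partial_t g\,\partial_\theta F\geq -g^{In\prime}(F)\,\partial_\theta F\geq0$. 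Hence $hkD'\geq -k\,g^{In\prime}(F)\,\partial_\theta F$.

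The main obstacle is the remaining term $(1-\alpha)g^{In\prime}(F)\partial_\theta F\leq0$, which is the only negative contribution. Adding the bound from the previous step yields $(1-\alpha-k)\,g^{In\prime}(F)\,\partial_\theta F=\alpha\frac{\bar Q_\eta}{\bar Q}\,g^{In\prime}\partial_\theta F$, which is still $\leq 0$. So \eqref{C5} absorbs only a fraction $k/(1-\alpha)$ of the negative term.
\begin{itemize}
\item My first attempt is to use the slack terms I had discarded, namely $-(1-\alpha)(1-h')D$, $hDk'$, the $g''$ and $F_{\theta\theta}$ terms, and the unused half of the cross term. The aim is to cover the residual $\alpha(\bar Q_\eta/\bar Q)g^{In\prime}\partial_\theta F$.
\item If that does not close, I would instead read \eqref{C5} as the bound needed against $(1-\alpha)g^{In\prime}\partial_\theta F$ divided by $k$, which means strengthening it by the factor $(1-\alpha)/k$.
\end{itemize}
Once $\partial_\theta J\geq0$ holds pointwise, monotonicity of $\theta\mapsto J_{\theta,\eta}(l)$ follows by integrating in $\theta$.
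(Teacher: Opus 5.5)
Your route is the paper's route: differentiate $J_{\theta,\eta}(l)$ in $\theta$ and sign the pieces. Your bookkeeping up to the residual is correct: $hDk'\geq 0$, $-(1-\alpha)D+h'Dk\geq -(1-\alpha)(1-h')D\geq 0$, and $hkD'\geq -k\,g^{In\prime}(F)\,\partial_\theta F_\theta(l)$. But the proposal stops there, so it does not prove the statement. The term $\alpha\frac{\bar Q_\eta(\theta)}{\bar Q(\theta)}\,g^{In\prime}(F)\,\partial_\theta F_\theta(l)\leq 0$ is never controlled, and neither fallback works. Rescaling condition~(5) by $(1-\alpha)/k$ proves a different proposition. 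The slack cannot absorb the residual using pointwise sign information alone. Take a point where $h'=1$, $k'=0$, $\partial^2_\theta g_\theta(F)=g''_\theta(F)=\partial^2_\theta F_\theta(l)=0$, and the inequality in (5) holds with equality. There your expansion collapses to
\[
\frac{\partial J_{\theta,\eta}(l)}{\partial\theta}
=\alpha\,\frac{\bar Q_\eta(\theta)}{\bar Q(\theta)}\left[\bigl(g^{In\prime}(F)-g'_\theta(F)\bigr)\frac{\partial F_\theta(l)}{\partial\theta}-\frac{\partial g_\theta}{\partial\theta}(F)\right].
\]
This is negative whenever $g^{In\prime}(F)>g'_\theta(F)$, $\partial_\theta F_\theta(l)<0$, and $|\partial_\theta g_\theta(F)|$ is small.

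Your diagnosis is correct, and the paper's own argument has the same hole in a different place. It splits the cross term $2hk\,\partial^2_{\theta t}g\,\partial_\theta F$ into two pieces. The first, $2(1-\alpha)h\,\partial^2_{\theta t}g\,\partial_\theta F$, together with (5) cancels all of $(1-\alpha)g^{In\prime}\partial_\theta F$. The second, $-2\alpha\frac{\bar Q_\eta}{\bar Q}h\,\partial^2_{\theta t}g\,\partial_\theta F\leq 0$, is left inside the bracket $-\alpha\,\partial_\theta F\,\bigl[g'_\theta\,(h\bar Q_\eta/\bar Q)'+2(\bar Q_\eta/\bar Q)\,h\,\partial^2_{\theta t}g\bigr]$, which the paper asserts is nonnegative. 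Conditions (1)--(5) do not sign that bracket. The paper's alternative-ordering version of this proposition adds an extra inequality precisely to sign these $\alpha$-terms.

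What actually makes the statement true is global, and degenerate. Under Assumption~\ref{ass:eta2}, $h=\bar Q/q$ is continuous on $\Theta$ with $h(\underline\theta)=1/q(\underline\theta)>0$ and $h(\bar\theta)=0$. So $h$ cannot be non-decreasing on $[\underline\theta,\bar\theta)$: condition~(1) is never satisfied, and the proposition holds only vacuously.

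Independently, suppose the submodularity in (5) holds on all of $[0,1]$, as stated. Then $t\mapsto\partial_\theta g_\theta(t)$ is non-increasing with $\partial_\theta g_\theta(0)=\partial_\theta g_\theta(1)=0$, hence identically zero. So $\partial^2_{\theta t}g\equiv 0$, and the inequality in (5) forces $g^{In\prime}(F_\theta(l))=0$. Your residual then vanishes, which closes your own chain of inequalities.

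Either observation completes the argument. Without one of them, your sketch does not establish $\partial_\theta J_{\theta,\eta}(l)\geq 0$.
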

\begin{proof}
First, note that in this region, $1- \alpha -\alpha \frac{\bar Q_{\eta}(\theta) }{\bar Q (\theta) } \geq 0$. The partial derivative of $J_{\theta, \eta}(l)$ with respect to $\theta$ is given by:
\begin{align*}
&\frac{\partial J_{\theta, \eta}(l)}{\partial \theta} 
= \frac{\partial g_{\theta}}{ \partial \theta} (F_{\theta}(l)) \, \left[ (1-\alpha) \left( \left(  \frac{\bar Q(\theta) }{q(\theta) } \right)^{\prime } - 1 \right)  - \alpha \left( \frac{\bar Q(\theta) }{q(\theta) } \,\frac{\bar Q_{\eta}(\theta) }{\bar Q (\theta) } \right) ^{\prime} \right] \\
&\quad +
(1-\alpha) \frac{\partial F_{\theta}(l)}{\partial \theta} \, \left[ g^{\prime}_{\theta}(F_{\theta}(l))  \left( \left( \frac{\bar Q(\theta) }{ q (\theta) } \right)^{\prime} -  1 \right) \right]
+ (1-\alpha) \frac{\partial F_{\theta}(l)}{\partial \theta} \,   \left[ {\color{blue} 2} \, \frac{\bar Q(\theta) }{ q (\theta) }\frac{ \partial ^2 g _{\theta} }{ \partial \theta \,\, \partial t} (F_{\theta}(l)) +  g^{In \ \prime}(F_{\theta}(l))\right] \\
&\quad -
\alpha \frac{\partial F_{\theta}(l)}{\partial \theta} \, \left[ g^{\prime}_{\theta}(F_{\theta}(l)) \left( \frac{\bar Q(\theta) }{q(\theta) } \,\frac{\bar Q_{\eta}(\theta) }{\bar Q (\theta) } \right) ^{\prime} +{\color{blue} 2}\, \frac{\bar Q_{\eta}(\theta) }{\bar Q (\theta) } \frac{\bar Q(\theta) }{ q (\theta) }\frac{ \partial ^2 g _{\theta} }{ \partial \theta \,\, \partial t} (F_{\theta}(l)) \right] \\
&\quad + 
\left[ 1- \alpha -\alpha \frac{\bar Q_{\eta}(\theta) }{\bar Q (\theta) } \right] \frac{\bar Q(\theta) }{q(\theta) }\, \left[   g^{\prime}_{\theta} (F_{\theta}(l))  \frac{\partial^2 F_{\theta}(l)}{\partial \theta^2}    +
\frac{ \partial ^2 g_{\theta} }{ \partial \theta^2 } (F_{\theta}(l))  +
g^{\prime \prime} _{\theta} (F_{\theta}(l)) \left( \frac{\partial F_{\theta}(l)}{\partial \theta} \right)^2  \right].
\end{align*}

Using the monotonicity implications of Assumptions \ref{Ass:cdf_family}, \ref{Ass:distortion_family}, and \ref{ass:eta2}, if Conditions \ref{C1} to \ref{C5} hold on $[\underline{\theta}, \bar \theta)$, then $\frac{\partial J_{\theta, \eta}(l)}{\partial \theta} \geq 0$, and hence $\theta \mapsto J_{\theta, \eta}(l)$ is non-decreasing in $\theta$ on $[\underline{\theta}, \bar \theta)$, for all $l \in [0, \bar L]$. Moreover,
$$
J_{\theta, \eta}(l) \bigg|_{\theta = \bar \theta} := \underset{\theta \to \bar \theta}{\lim} J_{\theta, \eta}(l) =
(1-\alpha)
\left[g^{In}(F_{\bar\theta}(l))
-g_{\bar\theta}(F_{\bar\theta}(l))
\right].
$$

Since $\theta \mapsto J_{\theta, \eta}(l)$ is continuous on $\Theta$ and non-decreasing on $[\underline{\theta},\bar \theta)$, then it follows that
$\theta \mapsto J_{\theta, \eta}(l)$ is non-decreasing in $\theta$ on $\Theta$.
\end{proof}

\medskip

The ratio $ \frac{ \bar Q ( \theta) }{ q(\theta)}$ represents the inverse hazard rate and measures how many agent types are left above $\theta$ relative to the density of the type-$\theta$ agent. Condition \ref{C1} implies that the distribution over types is heavy tailed, placing more probability on larger $\theta$ values.  
Moreover, the growth of the inverse hazard rate is bounded and does not increase too quickly as $\theta$ increases. As a result, the population of higher types does not thin out too rapidly.

\medskip

Conditions \ref{C2} and \ref{C3} are respectively saying that as $\theta$ increases, the loss distribution becomes riskier, and higher types become more risk averse at a decreasing rate.
Condition \ref{C4} reflects strong risk aversion through the convexity of the distortion function $g_{\theta}(t)$ in $t$ for all $\theta$.   
Condition \ref{C5} implies that the function $g$ is submodular. 
In other words, higher types assign lower marginal weight to favorable probabilities, distorting them more pessimistically and reflecting greater risk aversion. 

\medskip
\subsection{Properties of the Optimal Menu} \label{sub:properties_optimalmenu}
\begin{proposition}
\label{prop:insurerutilityoptimalmenu}
Suppose that the function $J_{\theta, \eta}(l)$ given in \eqref{eq:Jeta} is non-decreasing in $\theta$, for all $l$. Let $\alpha \in (0, \frac{1}{2}]$ and let $(R^*_\theta, p^*_\theta)_{\theta \in \Theta}$ be an optimal solution to Problem \eqref{eq:IPO_sup_retention}, as characterized in Theorem \ref{th:solution_characterization_IPO}, whose premia satisfy \eqref{eq:optimal_premium}. Then the following hold.
\medskip
\begin{enumerate}
\item If $r^*_\theta(l) \in [0,1]$ for a.e.\ $l \in [0, \bar L]$, then $\theta \mapsto V_\theta (R^*_\theta, p^*_\theta)$ is non-decreasing whenever the following condition holds:
\begin{align}\label{eq:inequality_condition}
&\int_\theta^{\theta'}
\int_0^{\bar L}
g^{In\prime}(F_s(l))
\left(-\frac{\partial F_s(l)}{\partial s}\right)
(1-r_{\theta'}^*(l))
\,dl\,ds\nonumber  \\ &\leq
\int_0^{\bar L}
[g^{In}(F_\theta(l))-g_\theta(F_\theta(l))]
(r_\theta^*(l)-r_{\theta'}^*(l))\,dl+
\int_\theta^{\theta'}
\int_0^{\bar L}
\frac{\partial }{\partial s} (g_s \circ F_s)(l)
(r_{\theta'}^*(l)-r_s^*(l))
\,dl\,ds.
\end{align}

\smallskip 
\item If $r^*_\theta \equiv 0$, then the mapping $\theta \mapsto V_\theta (R^*_\theta, p^*_\theta)$ is non-increasing.

\smallskip

\item If $r^*_\theta \equiv 1$, then $V_\theta (R^*_\theta, p^*_\theta) =0 $. 
\end{enumerate}
\end{proposition}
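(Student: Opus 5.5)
We begin by computing $V_\theta(R^*_\theta,p^*_\theta)$ explicitly from the Dual Utility representation, with $\partial R^*_\theta/\partial l = r^*_\theta$:
$$
V_\theta(R^*_\theta,p^*_\theta) = p^*_\theta - \int_0^{\bar L}\left[1-g^{In}(F_\theta(l))\right]\left(1-r^*_\theta(l)\right)dl .
$$
We then substitute the premium formula \eqref{eq:optimal_premium}. Writing $h_s(l):=g_s(F_s(l))$, Remark \ref{Re:chain_rule} identifies the integrand of the double integral in \eqref{eq:optimal_premium} as $\frac{\partial}{\partial s}h_s(l)$. Hence
$$
V_\theta = \int_0^{\bar L}[1-h_{\underline\theta}]\,dl - \int_{\underline\theta}^{\theta}\int_0^{\bar L}\partial_s h_s\, r^*_s\,dl\,ds - \int_0^{\bar L}[1-h_\theta]\,r^*_\theta\,dl - \int_0^{\bar L}[1-g^{In}(F_\theta)](1-r^*_\theta)\,dl .
$$

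Cases (2) and (3) follow at once from this expression.

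\textbf{Case (3), $r^*\equiv 1$.} Here $\int_{\underline\theta}^{\theta}\partial_s h_s\,ds = h_\theta-h_{\underline\theta}$. Substituting, the first three terms cancel exactly, and the last term vanishes because $1-r^*_\theta=0$. So $V_\theta=0$ for every $\theta$.

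\textbf{Case (2), $r^*\equiv 0$.} The premium is constant, equal to $\int_0^{\bar L}[1-h_{\underline\theta}]\,dl$. Therefore
$$
V_\theta = \text{const} - \int_0^{\bar L}\left[1-g^{In}(F_\theta(l))\right]dl .
$$
By Assumption \ref{Ass:cdf_family}, $F_\theta$ is non-increasing in $\theta$, and $g^{In}$ is increasing. So $1-g^{In}(F_\theta)$ is non-decreasing in $\theta$, and $V_\theta$ is non-increasing.

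\textbf{Case (1), general $r^*\in[0,1]$.} Fix $\theta<\theta'$ and compute the difference $V_{\theta'}-V_\theta$ from the expression above. I would organize it in three pieces.

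The double-integral term contributes
$$
-\int_\theta^{\theta'}\int_0^{\bar L}\partial_s h_s\, r^*_s\,dl\,ds .
$$

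The agent-retention terms contribute $\int_0^{\bar L}\big[(1-h_\theta)r^*_\theta-(1-h_{\theta'})r^*_{\theta'}\big]dl$. Add and subtract $(1-h_\theta)r^*_{\theta'}$, and write $h_{\theta'}-h_\theta=\int_\theta^{\theta'}\partial_s h_s\,ds$. This piece becomes
$$
\int_0^{\bar L}(1-h_\theta)(r^*_\theta-r^*_{\theta'})\,dl + \int_\theta^{\theta'}\int_0^{\bar L}\partial_s h_s\,r^*_{\theta'}\,dl\,ds .
$$

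The insurer terms contribute $\int_0^{\bar L}\big[(1-g^{In}(F_\theta))(1-r^*_\theta)-(1-g^{In}(F_{\theta'}))(1-r^*_{\theta'})\big]dl$. Add and subtract $(1-g^{In}(F_\theta))(1-r^*_{\theta'})$, and use $g^{In}(F_{\theta'})-g^{In}(F_\theta)=\int_\theta^{\theta'}g^{In\prime}(F_s)\,\partial_sF_s\,ds$. This piece becomes
$$
-\int_0^{\bar L}(1-g^{In}(F_\theta))(r^*_\theta-r^*_{\theta'})\,dl + \int_\theta^{\theta'}\int_0^{\bar L}g^{In\prime}(F_s)\,\partial_sF_s\,(1-r^*_{\theta'})\,dl\,ds .
$$

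Now combine the three pieces. The two $(r^*_\theta-r^*_{\theta'})$ integrals merge into $\int_0^{\bar L}[g^{In}(F_\theta)-g_\theta(F_\theta)](r^*_\theta-r^*_{\theta'})\,dl$. The two double integrals in $\partial_s h_s$ merge into $\int_\theta^{\theta'}\int_0^{\bar L}\partial_s h_s\,(r^*_{\theta'}-r^*_s)\,dl\,ds$. The remaining term is the insurer double integral, $-\int_\theta^{\theta'}\int_0^{\bar L}g^{In\prime}(F_s)\,(-\partial_sF_s)(1-r^*_{\theta'})\,dl\,ds$. So $V_{\theta'}-V_\theta$ equals the right-hand side of \eqref{eq:inequality_condition} minus its left-hand side, and it is $\ge 0$ exactly under that condition. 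This gives monotonicity.

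The main obstacle is the regularity needed to justify these steps, namely that $s\mapsto h_s(l)$ and $s\mapsto g^{In}(F_s(l))$ are absolutely continuous in $s$ so the fundamental-theorem identities hold. I would handle this via the uniform Lipschitz conditions of Assumptions \ref{Ass:cdf_family} and \ref{Ass:distortion_family}, assuming $g^{In}$ is Lipschitz (as is implicitly used when the paper writes $g^{In\prime}$). The Fubini interchanges are then justified by boundedness of all integrands on $\Theta\times[0,\bar L]$.
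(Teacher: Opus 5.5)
Your proposal is correct and follows the paper's proof essentially line by line. Like the paper, you substitute \eqref{eq:optimal_premium}, apply the fundamental theorem of calculus in $s$, write $V_{\theta'}(R^*_{\theta'},p^*_{\theta'})-V_\theta(R^*_\theta,p^*_\theta)$ as the right-hand side of \eqref{eq:inequality_condition} minus its left-hand side, and read off case (1) from that identity. The differences are minor: you check case (3) by direct cancellation rather than citing Lemma \ref{le:IRimplication}, and you get case (2) from the monotonicity of $\theta\mapsto g^{In}(F_\theta(l))$ directly, which needs no differentiability of $g^{In}$, whereas the paper specializes the general difference formula. Your remark on the Lipschitz/absolute-continuity justification is a useful addition that the paper leaves implicit.
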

\begin{proof}
The proof can be found in Appendix \ref{App:proof_Vthetaoptimum}. 
\end{proof}

\medskip

Proposition \ref{prop:insurerutilityoptimalmenu} studies the insurer's utility at the optimum, and shows that the monotonicity of this utility with respect to the agent's type $\theta$ depends on the optimal marginal retention. In particular,  when partial coverage is provided, the insurer has a higher utility at the optimum from higher types, who are more risk averse, if \eqref{eq:inequality_condition} holds. If full coverage is provided to the agent at the optimum, the insurer's utility decreases with the agent's type $\theta$. An implication of this is that for higher types, who are more risk averse and face stochastically larger losses, the insurer's utility is lower than for lower risk types, as intuition would suggest. If, in contrast, zero coverage is provided to the agent, i.e., the agent retains the entire loss, we saw in Lemma \ref{le:IRimplication} that the insurer's utility is equal to zero, and hence the participation constraint binds leaving them indifferent between participating or not in the market. Consequently, the insurer's utility is neither increasing nor decreasing in the agent's type.

\medskip

\begin{proposition}
\label{prop:properties_optimalmenu}
Let $(R^*_\theta,p^*_\theta)_{\theta\in\Theta}$ be an optimal solution to Problem \eqref{eq:IPO_sup_retention}, as characterized in Theorem \ref{th:solution_characterization_IPO} and consider the separating regions of coverage where $\alpha \in(0, \frac{1}{2}]$. Suppose that the sufficient conditions of Lemma \ref{le:IRimplication} ensuring the satisfaction of the insurer's participation constraint at the optimum hold, and that the function $J_{\theta, \eta}(l)$ given in \eqref{eq:Jeta} is non-decreasing in $\theta$, for all $l$. Then the following properties of the optimal menu hold.

\smallskip
\begin{enumerate}
\item\label{Prop1} $R^*_\theta$ is non-increasing with $\theta$ for every $l \in [0 ,\bar L]$, and $p^*_\theta$ is non-decreasing in $\theta$.

\medskip

\item\label{Prop2} For $\theta = \bar \theta$, $R^*_{\bar \theta}(l)=0$ for all $l \in [0,\bar L]$ if the following conditions hold:
\begin{enumerate}
\item[(a)]$F_{\bar\theta}(l)\in(0,1)$, for almost every $l \in (0, \bar L)$, 
\medskip
\item [(b)] $g^{In}(t) > g_{\bar \theta} (t)$, for all $t \in (0, 1)$.
\end{enumerate}

\medskip

\item\label{Prop3} For $\theta = \underline{\theta}$, $U_{\underline{\theta}}(R^*_{\underline{\theta}}, p^*_{\underline{\theta}}) = U_{\underline{\theta}}(L_{\underline{\theta}}, 0)$. 

\medskip

\item\label{Prop4} 
The function $\theta \mapsto U_\theta (R^*_\theta, p^*_\theta)$ is non-increasing. Moreover, it is convex if the following hold:
\begin{enumerate}
\item $g$ is submodular, as in \eqref{eq:gsubmodular};
\medskip
\item $g_\theta(t)$ is convex in $\theta$ for all $t \in [0,1]$;
\medskip
\item $g_\theta(t)$ is convex in $t$ for all $\theta \in \Theta$;
\medskip
\item $F_\theta$ is convex in $\theta$ for all $l \in [0,\bar L]$. 
\end{enumerate}
\end{enumerate}
\end{proposition}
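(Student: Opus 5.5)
The plan is to verify the four properties one at a time, using the explicit form of the optimal menu in Theorem \ref{th:solution_characterization_IPO} together with the premium formula \eqref{eq:optimal_premium}.

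\textbf{Property (1).} By hypothesis, $J_{\theta,\eta}(l)$ is non-decreasing in $\theta$. The rule \eqref{eq:optimal_marginal_retention_Sec:PO}, with ties chosen monotonically, then makes $r^*_\theta(l)$ non-increasing in $\theta$. In case (2) this still holds across the threshold $\theta_\alpha$, because $r^*_\theta \equiv 0$ for $\theta\ge\theta_\alpha$ and $0\le r^*_{\theta'}$ for every $\theta'$. Integrating in $l$ shows that $R^*_\theta(l)=\int_0^l r^*_\theta(s)\,ds$ is non-increasing in $\theta$.

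For the premia, I will differentiate \eqref{eq:optimal_premium} in $\theta$, or equivalently take differences $p^*_{\theta'}-p^*_\theta$ for $\theta<\theta'$. Writing $h_s(l):=g_s(F_s(l))$, the difference is
\[
p^*_{\theta'}-p^*_\theta=-\int_\theta^{\theta'}\!\!\int_0^{\bar L}\partial_s h_s\, r^*_s\,dl\,ds-\int_0^{\bar L}(1-h_{\theta'})r^*_{\theta'}\,dl+\int_0^{\bar L}(1-h_\theta)r^*_\theta\,dl .
\]
I rewrite the last two terms via $(1-h_{\theta'})r^*_{\theta'}-(1-h_\theta)r^*_\theta=(1-h_\theta)(r^*_{\theta'}-r^*_\theta)-(h_{\theta'}-h_\theta)r^*_{\theta'}$ and $h_{\theta'}-h_\theta=\int_\theta^{\theta'}\partial_s h_s\,ds$. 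This gives
\[
p^*_{\theta'}-p^*_\theta=\int_0^{\bar L}(1-h_\theta)(r^*_\theta-r^*_{\theta'})\,dl+\int_\theta^{\theta'}\!\!\int_0^{\bar L}\partial_s h_s\,(r^*_{\theta'}-r^*_s)\,dl\,ds .
\]
The first term is nonnegative by the monotonicity of $r^*$. In the second, $\partial_s h_s\le 0$ by Remark \ref{Re:chain_rule} and Assumptions \ref{Ass:cdf_family}--\ref{Ass:distortion_family}, and $r^*_{\theta'}-r^*_s\le 0$ for $s\le\theta'$, so the integrand is nonnegative. Hence $p^*_\theta$ is non-decreasing. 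This is the standard envelope-type computation, and it is the same as the proof of Proposition \ref{prop:prop:IC_iff}-style IC sufficiency.

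\textbf{Property (2).} If $\alpha\ge\alpha_0$ then $\bar\theta\ge\theta_\alpha$, so $R^*_{\bar\theta}\equiv0$ directly from case (2)(b). Otherwise I evaluate $J$ at $\bar\theta$. Since $\bar Q(\bar\theta)=0$, $\bar Q/q\to 0$ by continuity and positivity of $q$, and the bracket stays bounded by Remark \ref{re:Q_eta}. This leaves $J_{\bar\theta,\eta}(l)=(1-\alpha)[g^{In}(F_{\bar\theta}(l))-g_{\bar\theta}(F_{\bar\theta}(l))]$, which is $>0$ for a.e.\ $l$ by (a) and (b). Hence $r^*_{\bar\theta}=0$ a.e., and therefore $R^*_{\bar\theta}\equiv0$.

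\textbf{Property (3).} This is immediate: substitute $\theta=\underline\theta$ into \eqref{eq:optimal_premium}. We get $p^*_{\underline\theta}=\int(1-h_{\underline\theta})(1-r^*_{\underline\theta})\,dl$, so $U_{\underline\theta}=-p^*_{\underline\theta}-\int(1-h_{\underline\theta})r^*_{\underline\theta}\,dl=-\int(1-h_{\underline\theta})\,dl=U_{\underline\theta}(L_{\underline\theta},0)$.

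\textbf{Property (4).} Using \eqref{eq:optimal_premium}, I expect the envelope formula
\[
U_\theta(R^*_\theta,p^*_\theta)=U_{\underline\theta}(R^*_{\underline\theta},p^*_{\underline\theta})+\int_{\underline\theta}^\theta\!\!\int_0^{\bar L}\partial_s h_s(l)\,r^*_s(l)\,dl\,ds,
\]
where the constant telescopes. The integrand is $\le0$, so $U_\theta$ is non-increasing. For convexity, it suffices that $s\mapsto\int \partial_s h_s\,r^*_s\,dl$ is non-decreasing. Since $\partial_sh_s\le0$ and $r^*_s\ge0$ is non-increasing, the product $\partial_s h_s\, r^*_s$ is non-decreasing provided $\partial_s h_s$ is itself non-decreasing in $s$, that is, provided $\partial^2_s h_s\ge0$. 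Expanding,
\[
\partial^2_s h_s=\partial^2_s g_s(F_s)+2\,\partial_s\partial_t g_s(F_s)\,\partial_sF_s+g''_s(F_s)(\partial_sF_s)^2+g'_s(F_s)\,\partial^2_sF_s .
\]
Each term is nonnegative under (a)--(d): the cross term is nonnegative because submodularity of $g$ and $\partial_sF_s\le0$ are both nonpositive factors. A product of a nonpositive non-decreasing function and a nonnegative non-increasing function is non-decreasing, which completes the argument.

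\textbf{Main obstacle.} I expect the main obstacle to be property (2), specifically the rigorous passage to the limit $\bar Q/q\to0$ at $\bar\theta$. I will also need to justify that the monotonicity of $r^*$ survives across $\theta_\alpha$.
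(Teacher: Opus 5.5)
Your proof is correct and follows essentially the same route as the paper. Properties (2) and (3), and the convexity step in (4), match it step for step: you use the product of a nonpositive non-decreasing $\frac{\partial}{\partial s}(g_s\circ F_s)(l)$ with a nonnegative non-increasing $r^*_s(l)$, together with convexity of $\theta\mapsto g_\theta(F_\theta(l))$ under (a)--(d), exactly as the paper does. The only minor differences are two. For the premia in (1), the paper simply invokes the IC inequality $U_\theta(R^*_\theta,p^*_\theta)\ge U_\theta(R^*_{\theta'},p^*_{\theta'})$ to get $p^*_{\theta'}-p^*_\theta\ge\int_0^{\bar L}[1-g_\theta(F_\theta(l))](r^*_\theta(l)-r^*_{\theta'}(l))\,dl\ge 0$, whereas your exact difference formula also exhibits the extra nonnegative slack term. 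In (4), the paper obtains $\partial U_\theta(R^*_\theta,p^*_\theta)/\partial\theta$ from the envelope theorem rather than by telescoping the premium formula as you do.
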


\begin{proof}
The proof can be found in Appendix \ref{App:proofpropoptimalmenu}. 
\end{proof}

\medskip

Proposition \ref{prop:properties_optimalmenu} provides some important properties of the optimal solution to Problem \eqref{eq:IPO_sup_retention} when $\alpha \in (0, \frac{1}{2}]$. Specifically, the optimal retention decreases with the agent's type $\theta$, whereas the optimal premium is non-decreasing with $\theta$ as established in Theorem \ref{th:solution_characterization_IPO}. That is, higher types, who are more risk averse and face stochastically larger losses, receive more coverage for a larger premium. Additionally, the highest risk type (who is the most risk averse and faces the largest loss) receives full coverage at every possible loss level $l$, under additional conditions ensuring strict positivity of the virtual value function at $\bar \theta$. Specifically, the first condition requires the highest type's loss distribution to be nontrivial, while the second requires the insurer to be strictly less risk averse than the highest type agent. This property, commonly referred to as \textit{efficiency at the top}, was shown to hold by \cite{chade2012optimal}, and later by \cite{gershkov2023optimal} and \cite{ghossoub2025optimal}. In contrast, the lowest risk type (the least risk averse) is indifferent in participating at the optimum, and the monopolist insurer absorbs all of the surplus from this agent.  

\medskip

Proposition \ref{prop:properties_optimalmenu}  also shows that the agent's utility at the optimum decreases with the agent's type $\theta$. Moreover, if $g$ is submodular as in \eqref{eq:gsubmodular}, if both the agent's distortion and the cumulative loss distribution are convex in type, and if the agent's distortion is convex in $t$ for all types, then the agent's utility is also convex in type. 
While the monotonicity of the agent’s utility is consistent with the findings of previous literature (e.g., \cite{chade2012optimal},  \cite{gershkov2023optimal}), \cite{ghossoub2025optimal} assume that the agent's type affects only their risk attitude, and they show that the agent's utility decreases with risk type. Moreover, it is convex in type if convexity of the agent's distortion in type is satisfied. 

\medskip
\subsection{Special Cases} \label{sec:specialcases}
We can clearly notice the importance of the social weight $\alpha$ in the characterization of optimal solutions to Problem \eqref{eq:IPO_sup_retention} given in Theorem \ref{th:solution_characterization_IPO}.
Particularly, when $\alpha$ is close to zero, the objective places almost all weight on the insurer's aggregate utility, whereas if $\alpha$ is close to $1$, then in this case, the objective primarily reflects the agent's aggregate utility. 
Hence, we examine the maximization of aggregate utilities of the insurer alone, and the agent alone, respectively. The proofs of all results of this section can be found in Appendix \ref{Appendixproofs}.  

\medskip

\subsubsection{Insurer's Welfare Maximization}
We consider the following problem. 
\begin{equation} 
\label{eq:sup_a0}
\underset{(R_{\theta}, p_{\theta})_{\theta \in \Theta} \in \cI\cR \cap \cI\cC }  { \sup}  \ \int_{\Theta} V_{\theta} (R_{\theta}, p_{\theta} ) \, d\mu(\theta).
\end{equation}

\noindent This reduces Problem \eqref{eq:IPO_sup_retention} to the maximization of the insurer's total utility subject to the incentive compatibility and individual rationality constraints, and no longer defines an IPO solution.

\medskip

The following proposition provides a necessary condition for incentive compatibility, characterizing the insurer's total utility.

\medskip

\begin{proposition}\label{prop:insureraggutility}
If $(R_{\theta}, p_{\theta})_{\theta \in \Theta} \in \cI\cC$, then the monopolistic insurer's total utility is given by:
\begin{align}\label{eq:barV}
\int_{\Theta} V_{\theta}(R_{\theta}, p_{\theta}) \, d\mu(\theta) 
&= p_{\underline{\theta} } + \int_0^{ \bar L}  \left[ 1 - g_{ \underline{ \theta }} \big( F_{\underline{ \theta} } (l)  \big)   \right] \,\, \frac{ \partial R_{\underline{ \theta }}(l) }{ \partial l } \,\, dl - \int_{\Theta} \int_0^{\bar L} \left[ 1 - g^{In} (F_{\theta}(l))\right] \, dl \, d\mu(\theta) \nonumber \\
&\quad -
\int_{\Theta} \int_0^{\bar L} J^I_{\theta}(l)  \frac{\partial R_\theta(l)}{\partial l } \,  dl \, d\mu(\theta),
\end{align}

\noindent where
\begin{equation} 
\label{eq:Jtheta}
J^I_{\theta}(l) = 
\left[   g^{In} (F_{\theta}(l)) - g_{\theta} \big( F_{\theta}(l)  \big) \right] + \frac{\bar Q(\theta)}{q(\theta)} \left[ \frac{\partial g_{\theta}}{ \partial \theta } (F_\theta(l)) + g'_\theta(F_\theta (l))\frac{\partial F_\theta(l)}{\partial \theta} \right] .        
\end{equation}
\end{proposition}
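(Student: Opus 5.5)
The plan is to start from the explicit form of the insurer's utility under dual utility, $V_\theta(R_\theta,p_\theta) = p_\theta - \int_0^{\bar L}[1-g^{In}(F_\theta(l))](1-\partial_l R_\theta(l))\,dl$, and integrate it against $\mu$. Because the menu is incentive compatible, Proposition \ref{IC_characterization} applies, so every $p_\theta$ can be replaced by the envelope expression \eqref{eq:premium}. This reduces the aggregate premium $\int_\Theta p_\theta\,d\mu(\theta)$ to a sum of four pieces:
\begin{enumerate}
\item the constant $p_{\underline\theta} + \int_0^{\bar L}[1-g_{\underline\theta}(F_{\underline\theta}(l))]\partial_l R_{\underline\theta}(l)\,dl$, which integrates to itself since $\mu$ is a probability measure;
\item the term $-\int_\Theta\int_0^{\bar L}[1-g_\theta(F_\theta(l))]\partial_l R_\theta(l)\,dl\,d\mu(\theta)$;
\item the double-integral term $-\int_\Theta \int_{\underline\theta}^{\theta} \int_0^{\bar L} \frac{\partial}{\partial s}(g_s\circ F_s)(l)\,\partial_l R_s(l)\,dl\,ds\,d\mu(\theta)$, using Remark \ref{Re:chain_rule} to write the bracket as this derivative;
\item together with the insurer's cost term from $V_\theta$.
\end{enumerate}

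The key step is the third piece. Apply Fubini on the region $\{\underline\theta\le s\le\theta\le\bar\theta\}$ to swap the $s$ and $\theta$ integrals. Integrating $d\mu(\theta)$ over $\theta\in[s,\bar\theta]$ gives $\bar Q(s)$, so the term becomes
\[
-\int_\Theta\int_0^{\bar L}\bar Q(s)\,\frac{\partial}{\partial s}(g_s\circ F_s)(l)\,\partial_l R_s(l)\,dl\,ds .
\]
Writing $\bar Q(s)\,ds = \frac{\bar Q(s)}{q(s)}\,d\mu(s)$ then converts it into an integral against $d\mu$ carrying the inverse hazard rate factor.

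To justify Fubini, note that all integrands are bounded. The derivative $\frac{\partial}{\partial s}(g_s\circ F_s)$ is bounded by $c + \delta c'$ by the uniform Lipschitz bounds in Assumptions \ref{Ass:cdf_family} and \ref{Ass:distortion_family}. Also $0\le \partial_l R\le 1$, and joint measurability holds by Definition \ref{def:contract_menu}. The division by $q$ is well defined because of the positivity of $q$ in Assumption \ref{ass:eta2}.

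Finally, split the insurer's cost term as
\[
-\int_\Theta\int_0^{\bar L}[1-g^{In}(F_\theta)]\,dl\,d\mu(\theta) + \int_\Theta\int_0^{\bar L}[1-g^{In}(F_\theta)]\,\partial_l R_\theta\,dl\,d\mu(\theta).
\]
Combine its second part with the second piece above. The coefficient of $\partial_l R_\theta$ becomes $[1-g^{In}(F_\theta)] - [1-g_\theta(F_\theta)] = -(g^{In}(F_\theta)-g_\theta(F_\theta))$. Adding the transformed double-integral term produces exactly $-\int\!\int J^I_\theta\,\partial_l R_\theta$ with $J^I_\theta$ as in \eqref{eq:Jtheta}.

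The main obstacle is only the careful justification of the Fubini interchange and of the integral form of \eqref{eq:premium}. The remainder is bookkeeping of signs; I would double-check the sign of the $\bar Q/q$ term against the analogous computation behind Proposition \ref{prop:social_welfare_expression}, which this result mirrors with $\alpha=0$.
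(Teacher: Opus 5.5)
Your proposal is correct and follows the paper's route. Both substitute the IC premium formula \eqref{eq:premium} into the aggregate insurer utility and convert the nested $\int_\Theta\int_{\underline\theta}^{\theta}\cdots\,ds\,d\mu(\theta)$ term into an integral weighted by $\bar Q/q$. The paper does this conversion by integration by parts, whose boundary terms vanish because the inner integral is zero at $\underline\theta$ and $\bar Q(\bar\theta)=0$; your Fubini swap is the same identity. Your signs are right, and the resulting $J^I_\theta$ agrees with Proposition \ref{prop:social_welfare_expression} at $\alpha=0$.
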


\medskip

\begin{theorem}
\label{th:solution_characterization}
Define a marginal retention function $r_\theta^*$ by
$$
r_\theta^*(l)=
\begin{cases}
0, & J_\theta^I(l)>0,\\
\in[0,1], & J_\theta^I(l)=0,\\
1, & J_\theta^I(l)<0.
\end{cases}
$$

\noindent where the function $J^I_{\theta}(l)$ is given in \eqref{eq:Jtheta}. Define $R_\theta^*$ by
$R_\theta^*(l):=\int_0^l r_\theta^*(s)\,ds$,
and define $p_\theta^*$ by 
\begin{align*}
p^*_{\theta} 
&= \int_0^{ \bar L}  \left[ 1 - g_{ \underline{\theta}  } \big( F_{\underline{\theta} }(l)  \big)   \right] \, dl  - \int_{\underline{\theta}} ^ {\theta} \int_0 ^{\bar L}  \left[ \frac{ \partial g_s}{\partial s } (F_s(l)) + g'_s(F_s(l)) \frac{\partial F_s(l)}{\partial s}  \right] \frac{\partial R^*_s(l)}{\partial l } \, dl \, ds\\
&\quad -
\int_0^{ \bar L }  \left[ 1 - g_{\theta} \big( F_{\theta}(l)  \big)   \right] \, \frac{ \partial R^*_{\theta}(l) }{ \partial l } \, dl.  
\end{align*}

\noindent Suppose that the following conditions hold:
\begin{enumerate}
\item[(i)]$J^I_{\theta}(l)$ is measurable and non-decreasing in $\theta$, for each $l$.
\item [(ii)]The value of $r^*_\theta(l)$ on the set $\{J^I_{\theta}(l)=0\}$ is chosen so that $r^*_\theta(l)$ is measurable and non-increasing in $\theta$.
\end{enumerate}

\smallskip

\noindent Then the collection $\{R^*_{\theta}\}_{\theta \in \Theta} $ is submodular, and $\{p_\theta^*\}_{\theta\in\Theta}$
is non-negative. Moreover, if $\int_\Theta V_\theta (R^*_\theta, p^*_\theta) \, d\mu(\theta) \geq 0$, then $(R^*_\theta, p^*_\theta)_{\theta \in \Theta} \in \cI\cR\cap \cI\cC$ is an optimal solution to Problem \eqref{eq:sup_a0}.
\end{theorem}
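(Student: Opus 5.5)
The plan is the standard pointwise maximization of a virtual surplus, carried out on the relaxed problem that keeps only incentive compatibility, the lowest type's participation and submodularity, and then checked for feasibility. By Proposition \ref{prop:insureraggutility}, for any incentive compatible menu the objective of Problem \eqref{eq:sup_a0} equals
\[
p_{\underline\theta}+\int_0^{\bar L}\bigl[1-g_{\underline\theta}(F_{\underline\theta}(l))\bigr]\tfrac{\partial R_{\underline\theta}(l)}{\partial l}\,dl
-\int_\Theta\int_0^{\bar L}\bigl[1-g^{In}(F_\theta(l))\bigr]dl\,d\mu(\theta)
-\int_\Theta\int_0^{\bar L}J^I_\theta(l)\tfrac{\partial R_\theta(l)}{\partial l}\,dl\,d\mu(\theta).
\]
The third term does not depend on the menu. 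By Proposition \ref{IR_characterization} and Proposition \ref{prop:IR_lowest_type}, every feasible menu satisfies the lowest type's participation constraint. That constraint bounds the first two terms together by $\int_0^{\bar L}[1-g_{\underline\theta}(F_{\underline\theta}(l))]\,dl$, a constant. The remaining term is bounded above by the pointwise minimization of $J^I_\theta(l)\,r$ over $r\in[0,1]$, which is exactly what $r^*_\theta$ achieves. Hence the candidate value is an upper bound for the supremum over $\cI\cR\cap\cI\cC$. It is worth noting that this bound needs no submodularity of the competitor menus: the representation in Proposition \ref{prop:insureraggutility} only requires incentive compatibility, so the upper bound holds for every feasible menu.

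The second step is to show that the candidate attains this bound and is feasible.

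\textbf{Submodularity and admissibility.} By (ii), $r^*_\theta(l)$ is non-increasing in $\theta$, so $\{R^*_\theta\}$ is submodular. Since $r^*_\theta\in[0,1]$ and $R^*_\theta(0)=0$, each $R^*_\theta\in\cR$. Joint measurability of $(\theta,l)\mapsto R^*_\theta(l)$ follows from (i) and (ii).

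\textbf{Incentive compatibility.} The premium $p^*_\theta$ is exactly \eqref{eq:premium} with $p^*_{\underline\theta}=\int_0^{\bar L}[1-g_{\underline\theta}(F_{\underline\theta})](1-r^*_{\underline\theta})\,dl$ (evaluate the formula at $\theta=\underline\theta$). So Proposition \ref{prop:IC_iff} gives $\cI\cC$.

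\textbf{Individual rationality.} This value of $p^*_{\underline\theta}$ makes the lowest-type participation constraint bind, so by Corollary \ref{cor:IRandIC_iff} and the hypothesis $\int_\Theta V_\theta\,d\mu\ge0$ the menu lies in $\cI\cR\cap\cI\cC$. Substituting into the representation shows that the first two terms equal the bound and the last term is the pointwise minimum, so the menu is optimal.

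\textbf{Non-negativity of premia.} This is the step needing care. I would rewrite $p^*_\theta$ as $DU$-type quantities. The lowest type's full reservation premium minus its retained part gives $p^*_{\underline\theta}\ge0$. For the $\theta$-dependence, I would show $p^*_\theta$ is non-decreasing in $\theta$ by the usual envelope argument. By incentive compatibility, $\theta\mapsto U_\theta(R^*_\theta,p^*_\theta)$ has derivative $-\int\partial_\theta(g_\theta\circ F_\theta)\,r^*_\theta\,dl$. Then
\[
p^*_\theta=-U_\theta-\int[1-g_\theta\circ F_\theta]\,r^*_\theta\,dl .
\]
Alternatively, write $p^*_{\theta'}-p^*_\theta$ for $\theta<\theta'$ as
\[
\int_\theta^{\theta'}\!\int \partial_s(g_s\circ F_s)\,\bigl(r^*_{\theta'}-r^*_s\bigr)\,dl\,ds+\int\bigl[1-g_{\theta}\circ F_{\theta}\bigr]\bigl(r^*_\theta-r^*_{\theta'}\bigr)\,dl .
\]
The first integral is $\ge0$ since $\partial_s(g_s\circ F_s)\le0$ (the remark following Remark \ref{Re:chain_rule}) and $r^*_{\theta'}\le r^*_s$ for $s\le\theta'$. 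The second is $\ge0$ since $r^*_\theta\ge r^*_{\theta'}$ and $1-g_\theta\circ F_\theta\ge0$. Monotonicity then gives $p^*_\theta\ge p^*_{\underline\theta}\ge0$. I expect this rearrangement, i.e. getting the signs in the telescoping identity right, to be the main obstacle; the optimality part is routine once Proposition \ref{prop:insureraggutility} is used.
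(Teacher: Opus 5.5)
Your proposal is correct and follows essentially the paper's route: the representation from Proposition \ref{prop:insureraggutility}, the maximal $p_{\underline\theta}$ forced by the lowest type's participation constraint (which you phrase more cleanly as an explicit upper bound valid for every feasible competitor, not only submodular ones), pointwise minimization of $J^I_\theta(l)\,r$ over $r\in[0,1]$, submodularity from (i)--(ii), and IC/IR via Proposition \ref{prop:IC_iff} and Corollary \ref{cor:IRandIC_iff}. Your telescoping proof that $p^*_\theta$ is non-decreasing, combined with $p^*_{\underline\theta}\geq 0$, is the same identity the paper uses, anchored at $\underline\theta$; the only slip is in the abandoned envelope aside, where the derivative of $\theta\mapsto U_\theta(R^*_\theta,p^*_\theta)$ is $+\int_0^{\bar L}\partial_\theta(g_\theta\circ F_\theta)(l)\,r^*_\theta(l)\,dl\leq 0$ rather than its negative, and this plays no role in your argument.
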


Theorem \ref{th:solution_characterization} shows that the optimal menu of contracts consists of a collection of layered retention functions, assuming that $J^I_{\theta}(l)$ is monotone and that the insurer's participation constraint is satisfied. Particularly, the sufficient conditions of Lemma \ref{le:IRimplication} ensure that the insurer's participation constraint holds.
Moreover, when $J^I_{\theta}(l)>0$, full coverage is provided. $J^I_{\theta}(l)<0$ corresponds to no coverage of this loss level, and when $J^I_{\theta}(l)=0$, the optimal retention allows for some flexibility, as long as feasibility is maintained. 
It remains to examine the sufficient conditions that ensure the monotonicity of $J^I_\theta(l)$. First, the partial derivative of $J^I_\theta(l)$ with respect to $\theta$ is given by:
\begin{align*}
\frac{\partial J^I_{\theta}(l) }{\partial \theta} 
&=
\left( \left(  \frac{\bar Q(\theta) }{q(\theta) } \right)^{\prime } - 1 \right) \left[
\frac{\partial g_{\theta}}{ \partial \theta} (F_{\theta}(l)) + g^{\prime}_{\theta}(F_{\theta}(l)) \frac{\partial F_{\theta}(l)}{\partial \theta}  \right] \\
&\quad + \frac{\partial F_{\theta}(l)}{\partial \theta} \, \left[ 2\, \frac{\bar Q(\theta) }{ q (\theta) }\frac{ \partial ^2 g _{\theta} }{ \partial \theta \,\, \partial t} (F_{\theta}(l)) +  g^{In \ \prime}(F_{\theta}(l))\right] \\
&\quad + 
\frac{\bar Q(\theta) }{q(\theta) }\, \left[   g^{\prime}_{\theta} (F_{\theta}(l))  \frac{\partial^2 F_{\theta}(l)}{\partial \theta^2}    +
\frac{ \partial ^2 g_{\theta} }{ \partial \theta^2 } (F_{\theta}(l))  +
g^{\prime \prime} _{\theta} (F_{\theta}(l)) \left( \frac{\partial F_{\theta}(l)}{\partial \theta} \right)^2 \ \right].
\end{align*}

\noindent Conditions \ref{C1} to  \ref{C5} of the monotonicity of $J_{\theta, \eta} (l)$ are also sufficient for $J^I_{\theta}(l)$ to be non-decreasing in $\theta$ for all $l$.

\medskip

\subsubsection{Agent's Welfare Maximization}

Consider the following problem:
\begin{equation} 
\label{eq:sup_a1}
\underset{(R_{\theta}, p_{\theta})_{\theta \in \Theta} \in \cI\cR \cap \cI\cC }  { \sup}  \,\int_{\Theta} U_{\theta} (R_{\theta}, p_{\theta} ) \,d\mu(\theta), 
\end{equation}

\noindent which reduces Problem \eqref{eq:IPO_sup_retention} to the maximization of the agent's aggregate utility subject to the incentive compatibility and individual rationality constraints, and no longer defines an IPO solution.

\medskip
\begin{proposition}
\label{prop:objective_agent}
If $(R_{\theta}, p_{\theta})_{\theta \in \Theta} \in \cI\cC$, then the agent's aggregate utility is given by:
$$
\int_\Theta U_\theta(R_\theta, p_\theta)\,d\mu (\theta)
= 
- \left[ p_{\underline{\theta}} +  \int_0^{ \bar L}  \left[ 1 - g_{ \underline{ \theta }} \big(F_{\underline{ \theta} } (l)\big)\right] \, \frac{ \partial R_{\underline{ \theta }}(l) }{ \partial l }  dl \right] -  \int_{\Theta} \int_0^{\bar L} J^A_\theta(l) \frac{\partial R_{\theta}(l)}{\partial l } \, dl \, d\mu(\theta), 
$$

\noindent where $ J^A_\theta(l) = - \frac{ \bar Q(\theta)}{q(\theta)}\left[ \frac{\partial g_{\theta}}{ \partial \theta} (F_{\theta}(l)) + g'_{\theta}(F_{\theta}(l))\frac{\partial F_{\theta}(l)}{\partial \theta} \right]\geq0$. 
\end{proposition}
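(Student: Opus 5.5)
We use the premium representation of Proposition \ref{IC_characterization} to rewrite $U_\theta(R_\theta,p_\theta)$ for an incentive compatible menu, and then integrate over $\theta$ with respect to $\mu$. An integration by parts (Fubini swap) turns the double integral in $\theta$ into a single integral against $\bar Q(\theta)/q(\theta)$.

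First, recall that for an IC menu the agent's utility is $U_\theta(R_\theta,p_\theta) = -p_\theta - \int_0^{\bar L}[1-g_\theta(F_\theta(l))]\,\partial_l R_\theta(l)\,dl$. Substituting \eqref{eq:premium}, the last term of \eqref{eq:premium} cancels exactly with the retained-loss term. Hence
$$
U_\theta(R_\theta,p_\theta) = -\Big[p_{\underline\theta}+\int_0^{\bar L}[1-g_{\underline\theta}(F_{\underline\theta}(l))]\,\partial_l R_{\underline\theta}(l)\,dl\Big] + \int_{\underline\theta}^{\theta}\int_0^{\bar L} h_s(l)\,\partial_l R_s(l)\,dl\,ds,
$$
where $h_s(l):=\frac{\partial g_s}{\partial s}(F_s(l))+g'_s(F_s(l))\frac{\partial F_s(l)}{\partial s}$, which is $\frac{\partial}{\partial s}(g_s\circ F_s)(l)$ by Remark \ref{Re:chain_rule}. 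This is the usual envelope-type expression, and it is the same computation underlying Proposition \ref{prop:insureraggutility}.

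Second, we integrate against $d\mu(\theta)=q(\theta)\,d\theta$. The bracketed constant integrates to itself since $\mu$ is a probability measure. For the double integral we apply Fubini on $\{(s,\theta):\underline\theta\le s\le\theta\le\bar\theta\}$:
$$
\int_\Theta\int_{\underline\theta}^{\theta}\Phi(s)\,ds\,q(\theta)\,d\theta = \int_\Theta \Phi(s)\,\bar Q(s)\,ds = \int_\Theta \frac{\bar Q(s)}{q(s)}\,\Phi(s)\,d\mu(s),
$$
with $\Phi(s)=\int_0^{\bar L}h_s(l)\,\partial_lR_s(l)\,dl$. Since $h_s$ is bounded (by the Lipschitz constants $c$, $c'$, $\delta$ of Assumptions \ref{Ass:cdf_family} and \ref{Ass:distortion_family}), $0\le\partial_l R_s\le 1$, and $R$ is jointly measurable, the swap is legitimate. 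Division by $q$ is allowed because $q>0$ (Assumption \ref{ass:eta2}). This yields the claimed formula with $J^A_\theta(l)=-\frac{\bar Q(\theta)}{q(\theta)}h_\theta(l)$.

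Third, the sign $J^A_\theta(l)\ge 0$ holds because $\frac{\partial g_\theta}{\partial\theta}\le 0$ (Assumption \ref{Ass:distortion_family}), $\frac{\partial F_\theta}{\partial\theta}\le 0$ (Assumption \ref{Ass:cdf_family}), $g'_\theta\ge 0$, and $\bar Q/q\ge 0$; this is the monotonicity noted after Remark \ref{Re:chain_rule}.

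The main obstacle is the rigor of the first step: the envelope representation requires $\theta\mapsto g_\theta(F_\theta(l))$ to be differentiable (or at least absolutely continuous) so that the integrand is meaningful and measurable in $s$. Rather than re-derive it, we would invoke Proposition \ref{IC_characterization} directly and rely on the uniform Lipschitz assumptions for a.e.\ differentiability and integrability.
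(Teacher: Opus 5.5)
Your proposal is correct and follows essentially the paper's route: you substitute the IC premium formula \eqref{eq:premium} so that the type-$\theta$ retention term cancels, convert the iterated integral into a $\bar Q$-weighted single integral, and read off $J^A_\theta\geq 0$ from the monotonicity of $g_\theta\circ F_\theta$; the paper calls the middle step integration by parts, while your Fubini swap is the same computation. Your remarks on integrability of the integrand and on needing $q>0$ to divide by $q$ make explicit points the paper leaves implicit.
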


\medskip
\begin{proposition}
\label{prop:solution_characterization_a1}
Suppose that condition \eqref{eq_case3_condition} holds. Then there exists an optimal solution $(R^*_\theta,p^*_\theta)_{\theta\in\Theta}\in\cI\cR\cap\cI\cC$ to Problem \eqref{eq:sup_a1} whose retention is obtained by pointwise maximization and satisfies $R^*_\theta(l)=0$ for all $l$ and for all $\theta\in\Theta$, and the common pooling premium is given by:
$$
p^*=\int_\Theta\int_0^{\bar L}\left[1-g^{In}(F_\theta(l))\right]\,dl\,d\mu(\theta)\geq 0.
$$

\noindent Moreover, the insurer's participation constraint binds. 
\end{proposition}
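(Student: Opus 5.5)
Using Proposition \ref{prop:objective_agent}, for any menu in $\cI\cC$ the agent's aggregate utility equals
$$-\Big[p_{\underline\theta}+\int_0^{\bar L}[1-g_{\underline\theta}(F_{\underline\theta}(l))]\,\tfrac{\partial R_{\underline\theta}(l)}{\partial l}\,dl\Big]-\int_\Theta\int_0^{\bar L}J^A_\theta(l)\,\tfrac{\partial R_\theta(l)}{\partial l}\,dl\,d\mu(\theta).$$
Since $J^A_\theta\ge 0$, the second term is maximized pointwise by $\partial R_\theta/\partial l\equiv 0$, that is, $R^*_\theta\equiv 0$ for all $\theta$. This collection is trivially submodular, so Proposition \ref{prop:IC_iff} applies. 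Equation \eqref{eq:premium} then reduces to $p_\theta=p_{\underline\theta}$ for all $\theta$, so the menu is a pooling menu with common premium $p$. The objective becomes $-p$, and we must minimize $p$ subject to IR.

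By Corollary \ref{cor:IRandIC_iff}, IR for $R\equiv0$ requires two things. First, $p\le\int_0^{\bar L}[1-g_{\underline\theta}(F_{\underline\theta}(l))]\,dl$. Second, the insurer's participation constraint must hold, namely
$$\int_\Theta V_\theta(0,p)\,d\mu=p-\int_\Theta\int_0^{\bar L}[1-g^{In}(F_\theta(l))]\,dl\,d\mu\ge 0.$$
So the feasible premia form the interval $[p^*,\,\int_0^{\bar L}[1-g_{\underline\theta}(F_{\underline\theta})]\,dl]$, with $p^*$ as stated. Condition \eqref{eq_case3_condition} says exactly that this interval is nonempty. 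The minimum is attained at $p=p^*$, where the insurer's constraint binds, and $p^*\ge0$ because the integrand is nonnegative.

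It remains to show optimality against all menus in $\cI\cR\cap\cI\cC$, including non-submodular ones. The identity of Proposition \ref{prop:objective_agent} holds for any IC menu. For any such menu, the objective is at most $-[p_{\underline\theta}+\int_0^{\bar L}[1-g_{\underline\theta}(F_{\underline\theta})]\,R'_{\underline\theta}\,dl]$, because the $J^A$ term is nonpositive. We then bound this bracket, which equals $p_\theta+\int_0^{\bar L}[1-g_\theta(F_\theta)]R'_\theta\,dl$ plus the nonnegative information-rent integral, from below using the insurer's participation constraint.

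This lower bound is the main obstacle. Integrating over $\theta$ and using (P2), we get
$$\int_\Theta p_\theta\,d\mu\ge\int_\Theta\int_0^{\bar L}[1-g^{In}(F_\theta)](1-R'_\theta)\,dl\,d\mu.$$
Combining this with the premium formula \eqref{eq:premium}, and applying Fubini to the rent term, should give
$$p_{\underline\theta}+\int_0^{\bar L}[1-g_{\underline\theta}(F_{\underline\theta})]R'_{\underline\theta}\,dl\ \ge\ p^*+\int_\Theta\int_0^{\bar L}\Big([g^{In}-g_\theta](F_\theta)+\tfrac{\bar Q}{q}\,\partial_\theta(g_\theta\circ F_\theta)\Big)R'_\theta\,dl\,d\mu.$$
Here the first bracket inside the integral is nonnegative by Assumption \ref{ass:distortion_insurer_agent}, while the second is nonpositive. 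So this route does not immediately close.

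I would therefore handle it directly: add the two bounds, so that the $J^A$ term cancels the rent term. This yields
$$\text{objective}\le -p^*-\int_\Theta\int_0^{\bar L}[g^{In}(F_\theta)-g_\theta(F_\theta)]\,R'_\theta\,dl\,d\mu\le -p^*.$$
This holds for every IR and IC menu, so the proposed menu is optimal and the insurer's participation constraint binds at it.
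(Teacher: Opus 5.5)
Your proof is correct, and its second half does something the paper's proof does not. Write $K:=\int_\Theta\int_0^{\bar L}[1-g^{In}(F_\theta(l))]\,dl\,d\mu(\theta)$ (your $p^*$) and $M:=\int_0^{\bar L}[1-g_{\underline\theta}(F_{\underline\theta}(l))]\,dl$.

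The paper stops essentially where your first part ends. It maximizes the retention terms of Proposition \ref{prop:objective_agent} pointwise, as though they were decoupled from $p_{\underline\theta}$, and concludes $R^*_\theta\equiv 0$. It then uses IC to force a pooling premium and optimizes only over full-coverage menus $(0,p)$ with $p\in[K,M]$. It never excludes a menu with positive retention that, through (P2), allows $p_{\underline\theta}+\int_0^{\bar L}[1-g_{\underline\theta}(F_{\underline\theta}(l))]R'_{\underline\theta}(l)\,dl$ to fall below $K$. That is exactly the coupling you flag.

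Your cancellation rests on $J^I_\theta+J^A_\theta=g^{In}(F_\theta)-g_\theta(F_\theta)$. It closes this gap and gives a genuine global optimality proof. It also shows that every optimal menu binds (P2).

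You can shorten it considerably. Premiums and rents are pure transfers, so for any menu
$\int_\Theta (U_\theta+V_\theta)\,d\mu=-K-\int_\Theta\int_0^{\bar L}[g^{In}(F_\theta(l))-g_\theta(F_\theta(l))]R'_\theta(l)\,dl\,d\mu(\theta)$.
Hence $\int_\Theta U_\theta\,d\mu\le -K$ follows from (P2) and Assumption \ref{ass:distortion_insurer_agent} alone. You need neither IC, nor the envelope formula, nor integration by parts. Your first part is then needed only to show that $(0,K)$ is feasible and attains the bound. Feasibility is where condition \eqref{eq_case3_condition} and the type ordering enter.

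There is one slip, harmless because you abandon that route. The bracket $p_{\underline\theta}+\int_0^{\bar L}[1-g_{\underline\theta}(F_{\underline\theta})]R'_{\underline\theta}\,dl$ equals $-U_{\underline\theta}$. That in turn equals $p_\theta+\int_0^{\bar L}[1-g_\theta(F_\theta)]R'_\theta\,dl$ plus $\int_{\underline\theta}^{\theta}\int_0^{\bar L}\partial_s(g_s\circ F_s)(l)\,R'_s(l)\,dl\,ds\le 0$, not plus a nonnegative rent. This is because utilities decrease in type here.
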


It follows from Proposition \ref{prop:solution_characterization_a1} that $R^*_\theta(l)=0$ for each $\theta \in \Theta$ and $l \in [0, \bar L]$. Particularly, \eqref{eq:sup_a1} describes the maximization of the agent's aggregate utility subject to individual rationality and incentive compatibility. In this case, the agent's welfare is maximized by full insurance, meaning zero retention. However, even in this agent-welfare maximization problem, the premium schedule must still satisfy the insurer's participation constraint because the feasible set is $\cI\cR\cap\cI\cC$. Therefore the premium cannot generally be set equal to zero. If full coverage is optimal, incentive compatibility forces a pooling premium, and if the insurer's participation constraint binds, the common premium must cover the insurer's distorted cost of full coverage.

\medskip
\subsection{Alternative Ordering} \label{sec:ordering2}
The results obtained so far rely on the type ordering assumptions of Subsection \ref{sub:typeordering}, under which higher types are more risk averse and face stochastically larger losses. 
Alternatively, one may consider a setting in which higher types face stochastically larger losses but are less risk averse. We formalize this by the following assumption.

\medskip

\begin{assumption}\label{ass:ordering2}
The distortion function $g_\theta$ is non-decreasing in $\theta$. That is,
$$
\frac{\partial g _{\theta} }{ \partial \theta }(t) \geq 0, \ \forall \, t \in (0,1).
$$

\noindent Moreover, losses $L_\theta$ for $\theta \in \Theta$ are ordered in the first order stochastic dominance sense, such that for $\theta_1, \theta_2 \in \Theta$ with $\theta_1<\theta_2$, 
$
L_{\theta_1} \preccurlyeq _{FOSD}  L_{\theta_2}, \ \text{or equivalently,} \ F_{\theta_1}(l) \geq F_{\theta_2}(l).
$
That is, 
$$
\frac{\partial F_\theta (l)}{\partial \theta} \leq 0, \ \forall \, l \in [0, \bar L]. 
$$
\end{assumption}

\medskip

\begin{assumption}\label{ass:lossdominatesaversion}
$$
g^\prime _\theta (F_\theta (l)) \left|\frac{\partial F_\theta (l)}{\partial \theta} \right| \geq \left| \frac{\partial g _{\theta} }{ \partial \theta } (F_\theta (l))\right|, \ \forall \, l \in [0, \bar L]. 
$$
\end{assumption}

Assumption \ref{ass:lossdominatesaversion} ensures that as the agent's type increases, the effect of facing larger losses dominates the reduction in risk aversion.
Moreover, it follows that
$$
\frac{\partial }{ \partial \theta } \left [ g_{\theta} \big(  F_{\theta} (l)  \big)  \right ] 
=
\frac{\partial  g_{\theta} }{ \partial \theta } \big(  F_{\theta} (l)  \big) + g^{\prime}_{\theta}\big(  F_{\theta} (l)  \big)  \frac{\partial  F_{\theta}(l) }{ \partial \theta } 
\leq 0.
$$

\medskip

In this setting, the characterization of individually rational and incentive compatible menus follows exactly as in Subsection \ref{sec:solutionIPO}. 
We aim to characterize solutions to Problem \eqref{eq:IPO_sup_retention} under the alternative type ordering assumptions. 

\medskip

\begin{proposition}
\label{prop:ordering2}
Suppose that $\eta$ satisfies Assumption \ref{ass:eta2}, and let $\alpha_0$ be defined as in Lemma \ref{le:alpha0}.  Consider the following cases. 
\medskip
\begin{enumerate}
\item \label{alpha1_o2} The case where $\alpha \in \left(0, \alpha_0 \right)$. Define $R^*_\theta$ by 
$R^*_\theta(l) = \int_0^l r^*_\theta(s) \, ds$,
where the marginal retention $r^*_\theta$ satisfies \eqref{eq:optimal_marginal_retention_Sec:PO} and $J_{\theta, \eta}(l)$ satisfies \eqref{eq:Jeta}. Define $p^*_\theta$ by \eqref{eq:optimal_premium}, and suppose that the following conditions hold.
\begin{enumerate}
\item[(i)] $J_{\theta,\eta}(l)$ is measurable and non-decreasing in $\theta$, for each $l$.
\item[(ii)] The value of $r^*_\theta(l)$ on the set $\{J_{\theta,\eta}(l)=0\}$ is chosen so that $r^*_\theta(l)$ is measurable and non-increasing in $\theta$.
\item[(iii)] $\int_\Theta V_\theta(R^*_\theta,p^*_\theta)\,d\mu(\theta) \geq 0.$
\end{enumerate}

\medskip

Then for a given $\eta$ and $\alpha$, the collection $\{R^*_\theta\}_{\theta \in \Theta}$ is submodular, and $\{p^*_\theta\}_{\theta \in \Theta}$ is non-negative. 
Moreover,  $(R^*_\theta,p^*_\theta)_{\theta\in\Theta}$ is an optimal solution for Problem \eqref{eq:IPO_sup_retention}.

\medskip 

\item\label{alpha2_o2} The case where $\alpha \in \left[\alpha_0 , \frac{1}{2} \right]$.
There exists $\theta_\alpha\in\Theta$ satisfying $\frac{\bar Q_{\eta} (\theta_{\alpha}) }{\bar Q (\theta_{\alpha}) } = \frac{1-\alpha}{\alpha}$.
\begin{enumerate}
\item[(a)] For $\theta < \theta_{\alpha}$, let $R^*_\theta(l) :=\int_0^l r^*_\theta(s) \,ds$, where $r^*_\theta(l)$ satisfies \eqref{eq:optimal_marginal_retention_Sec:PO} and $J_{\theta, \eta}(l)$ is given by \eqref{eq:Jeta}. 

\medskip

\item[(b)] For every $\theta\geq \theta_\alpha$, define $R^*_\theta$ by $ R^*_{ \theta}(l) =0$ for all $l \in [0, \bar L]$.
\end{enumerate}

\smallskip
Define the premium schedule $p^*_\theta$ by \eqref{eq:optimal_premium}. Suppose that the following two conditions hold for $\theta<\theta_\alpha$:
\begin{enumerate}
\item [(i)] $J_{\theta,\eta}(l)$ is measurable and non-decreasing in $\theta$, for each $l$.
\item[(ii)] The value of $r^*_\theta(l)$ on the set $\{J_{\theta,\eta}(l)=0\}$ is chosen so that $r^*_\theta(l)$ is measurable and non-increasing in $\theta$.
\end{enumerate}

\noindent Then for a given $\eta$ and $\alpha$, the collection $\{R^*_\theta\}_{\theta \in \Theta}$ is submodular, and the collection $\{p^*_\theta\}_{\theta \in \Theta}$ is non-negative. Moreover, if $\int_\Theta V_\theta(R^*_\theta,p^*_\theta)\,d\mu(\theta) \geq 0$, then $(R^*_\theta, p^*_\theta)_{\theta \in \Theta}$ is an optimal solution to Problem \eqref{eq:IPO_sup_retention}. 
\medskip

\item\label{alpha3_o2} The case where $\alpha\in\left(\frac{1}{2},1\right)$.
Suppose that condition \eqref{eq_case3_condition} holds, then there exists an optimal solution
$(R_\theta^*,p_\theta^*)_{\theta\in\Theta}$ to Problem \eqref{eq:IPO_sup_retention}, such that
$R_\theta^*(l)=0, \, \forall l\in[0,\bar L], \, \forall\theta\in\Theta$, and the premia $\{p^*_\theta\}_{\theta \in \Theta}$ satisfy \eqref{eq:high_alpha_binding_premium_theorem}.
Moreover, this is the unique optimal full-coverage pooling menu, and it binds the insurer's participation constraint.
\end{enumerate}
\end{proposition}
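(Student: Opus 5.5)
The plan is to show that the proof of Theorem \ref{th:solution_characterization_IPO} never uses the sign of $\frac{\partial g_\theta}{\partial\theta}$ on its own. It only uses the sign of the total derivative $\frac{\partial}{\partial\theta}[g_\theta(F_\theta(l))]$, and Assumptions \ref{ass:ordering2} and \ref{ass:lossdominatesaversion} give this derivative the same sign as before. So the first step is to recheck the preliminary results under the new ordering. Propositions \ref{IR_characterization} and \ref{IC_characterization} use only translation invariance and the envelope argument, not the ordering. In Proposition \ref{prop:IC_iff}, the sufficiency direction uses the single crossing inequality
\begin{equation*}
\int_{\theta'}^{\theta}\int_0^{\bar L}\frac{\partial}{\partial s}\big[g_s(F_s(l))\big]\left(\frac{\partial R_s(l)}{\partial l}-\frac{\partial R_{\theta'}(l)}{\partial l}\right)dl\,ds\ \geq 0 \quad\text{(or its analogue)} .
\end{equation*}
This inequality follows from submodularity of $\{R_\theta\}$ together with $\frac{\partial}{\partial s}[g_s\circ F_s]\le 0$. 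Proposition \ref{prop:IR_lowest_type} uses that $\theta\mapsto U_\theta(R_\theta,p_\theta)-U_\theta(L_\theta,0)$ is non-increasing, and that derivative equals $\int_0^{\bar L}\frac{\partial}{\partial\theta}[g_\theta\circ F_\theta](1-\partial_l R_\theta)\,dl\le 0$. Corollary \ref{cor:IRandIC_iff}, Proposition \ref{prop:social_welfare_expression} (an integration by parts in $\theta$) and Lemma \ref{le:alpha0} (which depends only on $\eta,\mu$) therefore carry over verbatim.

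For cases \eqref{alpha1_o2} and \eqref{alpha2_o2}, I would copy the pointwise maximization argument of the theorem. Relax the problem by dropping IC beyond the premium formula \eqref{eq:premium} and the lowest-type IR. The objective \eqref{eq:social_welfare} is then decreasing in $p_{\underline\theta}$ when $\alpha\le 1/2$. We set $p_{\underline\theta}$ at its IR bound, and the objective reduces to maximizing $-\int\!\int J_{\theta,\eta}\,\partial_lR_\theta$ pointwise over $\partial_lR_\theta\in[0,1]$, which gives \eqref{eq:optimal_marginal_retention_Sec:PO}. For $\theta\ge\theta_\alpha$, the bracket $(1-\alpha)-\alpha\bar Q_\eta/\bar Q$ is nonpositive, so $J_{\theta,\eta}$ is a sum of two nonnegative terms. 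The first term is nonnegative by Assumption \ref{ass:distortion_insurer_agent}. The second is nonnegative because the factor $\frac{\partial}{\partial\theta}[g_\theta\circ F_\theta]$ is still $\le 0$, and this is the key place where Assumption \ref{ass:lossdominatesaversion} is used. Hence full coverage is optimal there. Monotonicity hypotheses (i)–(ii) give submodularity, which extends across $\theta_\alpha$ because zero retention is the minimal slope. Proposition \ref{prop:IC_iff} then yields IC, and the participation hypothesis yields IR.

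Nonnegativity of $p^*_\theta$ comes from rewriting \eqref{eq:optimal_premium} as
\begin{equation*}
p^*_\theta=\int_0^{\bar L}[1-g_\theta(F_\theta(l))](1-\partial_lR^*_\theta(l))\,dl+\int_{\underline\theta}^{\theta}\int_0^{\bar L}\big(-\partial_s[g_s\circ F_s]\big)(1-\partial_lR^*_s(l))\,dl\,ds .
\end{equation*}
Both terms are nonnegative by the sign of the total derivative, using integration by parts in $\theta$ with the identity $\partial_\theta\int[1-g_\theta\circ F_\theta]\,dl=-\int\partial_\theta[g_\theta\circ F_\theta]\,dl$. The same representation shows that $p^*_\theta$ is non-decreasing.

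For case \eqref{alpha3_o2}, with $\alpha>1/2$, the coefficient $1-2\alpha$ on the lowest type's term is negative. The argument of the theorem then goes through: any IC menu with IR is dominated by full coverage with a pooling premium lowered until the insurer's participation binds. Condition \eqref{eq_case3_condition} guarantees this premium satisfies the lowest type's IR, and uniqueness follows from strictness of the linear dependence on the common premium. The main obstacle is making sure nothing in Appendix \ref{App:thIPOsolution}, especially the IC sufficiency and the sign of $J_{\theta,\eta}$ above $\theta_\alpha$, silently used $\partial_\theta g_\theta\le0$ separately. I will check each such step against the composite sign $\partial_\theta(g_\theta\circ F_\theta)\le0$, and if one fails I will replace it by Assumption \ref{ass:lossdominatesaversion} directly.
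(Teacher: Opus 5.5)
Your overall strategy is the paper's own. Its proof of this proposition is the one-line observation that, under Assumptions \ref{ass:ordering2} and \ref{ass:lossdominatesaversion}, $\theta\mapsto g_\theta(F_\theta(l))$ is still non-increasing, so the proof of Theorem \ref{th:solution_characterization_IPO} applies unchanged. However, several of your explicit re-checks have the sign reversed, and one of them breaks a step you need: nonnegativity of $p^*_\theta$, which is required for the menu to be admissible.

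Write $A_s(l):=\frac{\partial}{\partial s}(g_s\circ F_s)(l)\le 0$ and $r^*_s:=\partial_l R^*_s$. Substituting $1-g_{\underline\theta}(F_{\underline\theta}(l))=1-g_\theta(F_\theta(l))+\int_{\underline\theta}^{\theta}A_s(l)\,ds$ into \eqref{eq:optimal_premium} gives
\[
p^*_\theta=\int_0^{\bar L}\bigl[1-g_\theta(F_\theta(l))\bigr]\bigl(1-r^*_\theta(l)\bigr)\,dl+\int_{\underline\theta}^{\theta}\int_0^{\bar L}A_s(l)\bigl(1-r^*_s(l)\bigr)\,dl\,ds .
\]
The second term is $\le 0$, not $\ge 0$ as in your representation. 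As a check, take $r^*\equiv 0$: the premium must be the constant $\int_0^{\bar L}[1-g_{\underline\theta}(F_{\underline\theta}(l))]\,dl$, and your formula does not return it. So the sign of the composite derivative alone does not give $p^*_\theta\ge 0$; you also need submodularity. The paper expands around $\underline\theta$ instead:
\[
p^*_\theta=\int_0^{\bar L}\bigl[1-g_{\underline\theta}(F_{\underline\theta}(l))\bigr]\bigl(1-r^*_\theta(l)\bigr)\,dl+\int_{\underline\theta}^{\theta}\int_0^{\bar L}A_s(l)\bigl(r^*_\theta(l)-r^*_s(l)\bigr)\,dl\,ds .
\]
Here the second term is nonnegative because $A_s\le 0$ and $r^*_\theta\le r^*_s$ for $s\le\theta$. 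In case (2) this uses submodularity across $\theta_\alpha$, which you did establish. Your remark that the same representation shows $p^*_\theta$ is non-decreasing is likewise unsupported. That monotonicity is not part of this statement; it follows from IC plus submodularity, as in Proposition \ref{prop:properties_optimalmenu}.

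Two further sign slips do no damage in effect but are wrong as written.
\begin{itemize}
\item For Proposition \ref{prop:IR_lowest_type}: along an IC menu, the derivative of $\theta\mapsto U_\theta(R_\theta,p_\theta)-U_\theta(L_\theta,0)$ is $\int_0^{\bar L}A_\theta(l)\,(r_\theta(l)-1)\,dl\ge 0$. So the IR slack is non-decreasing in $\theta$, which is exactly why IR at $\underline\theta$ propagates to all types. With the direction you state, the lowest type would have the most slack and that proposition would fail.
\item For $\alpha\le\frac12$, the coefficient $1-2\alpha\ge 0$ makes \eqref{eq:social_welfare} non-decreasing, not decreasing, in $p_{\underline\theta}$. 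That is why $p_{\underline\theta}$ is pushed to its upper IR bound.
\end{itemize}
With these signs corrected, the rest of your verification carries over as in the paper. That covers IC sufficiency, $J_{\theta,\eta}\ge 0$ for $\theta\ge\theta_\alpha$, submodularity across $\theta_\alpha$, and the pooling argument for $\alpha>\frac12$.
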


\medskip

Under Assumptions \ref{ass:ordering2} and \ref{ass:lossdominatesaversion}, the composite function $g_\theta (F_\theta(l))$ remains non-increasing in $\theta$ for all $l \in [0, \bar L]$. Hence the proof of Theorem \ref{th:solution_characterization_IPO} applies unchanged to Proposition \ref{prop:ordering2}. 
In particular, Proposition \ref{prop:ordering2} shows that when $\alpha$ satisfies cases \eqref{alpha1_o2} and \eqref{alpha2_o2}-(a), a separating layered equilibrium emerges. Optimal retentions are submodular ensuring that higher types, facing larger losses, receive more coverage, despite being less risk averse. This is because the larger loss faced by a high-type agent dominates their lower risk aversion, so the agent still requires more coverage, as captured by Assumption \ref{ass:lossdominatesaversion}.
If $\alpha \leq \frac{1}{2}$ and $\theta \geq \theta_\alpha$, full coverage is offered to the agent and optimal premia satisfy \eqref{eq:optimal_premium}. 
On the other hand, if $\alpha > \frac{1}{2}$, there exists a unique optimal pooling menu that provides full coverage to all types. Provided that the insurer’s break-even premium for full insurance does not exceed the lowest type's willingness to pay for full coverage, incentive compatibility implies a common premium $p^*$ across types, equal to the insurer’s aggregate certainty-equivalent cost of providing full coverage. In this case, the insurer’s participation constraint binds leaving them indifferent between participating in the market or not. That is, $ \int_\Theta V_\theta (0, p^*) \,d\mu(\theta)= 0$. 

\medskip

The sufficient conditions of Lemma \ref{le:IRimplication} ensuring the insurer’s participation constraint continue to apply in this setting. Next, we provide sufficient conditions that ensure the monotonicity of $J_{\theta, \eta}(l)$, when $\alpha$ satisfies cases \eqref{alpha1_o2} and \eqref{alpha2_o2}-(i) .

\medskip
\begin{proposition}
When $\alpha$ satisfies cases \eqref{alpha1_o2} or \eqref{alpha2_o2}-(a), the function
$J_{\theta, \eta}(l)$ is non-decreasing in $\theta$ for all $l \in [0,\bar L]$ if the following conditions hold for $\theta \in [\underline{\theta}, \bar \theta)$.
\smallskip
\begin{enumerate}
\item \label{C21} $ 0 \leq \left( \frac{\bar Q(\theta) }{ q(\theta) }  \right) ' \leq 1$;

\medskip
\item \label{C22} The function $ \theta \mapsto F_{\theta}$ is convex in $\theta$ for all $l$, that is, $\frac{ \partial ^2 F_{\theta}(l) }{ \partial \theta^2 } \geq 0$;

\medskip
\item \label{C23} The function $ \theta \mapsto g_{\theta} (t)$ is convex in $\theta$ for all $t$, that is, for $F_{\theta}(l) \in [0,1]$, $ \frac{ \partial ^2 g_{\theta} }{ \partial \theta^2 } (F_{\theta}(l)) \geq 0 $;

\medskip
\item \label{C24} The function $t \mapsto g_{\theta}(t)$ is convex in $t$ for all $\theta \in \Theta$, that is, for $F_{\theta}(l) \in [0,1]$, $ g^{\prime \prime} _{\theta} (F_{\theta}(l)) \geq 0$;

\smallskip
\item \label{C25} The function $g:(\theta, t) \mapsto g_\theta (t)$ satisfies 
$$
\frac{\partial^2 g_\theta}{\partial \theta \ \partial t}(t)\leq 0,
$$
such that the following hold:
$$
2\, \frac{\bar Q(\theta)}{q(\theta)}  \frac{\partial^2 g_\theta}{\partial \theta \ \partial t}(F_\theta (l)) \leq - g^{In \ \prime} \big(F_\theta(l) \big), \ \text{and}
$$
$$
\left(\frac{\bar Q_{\eta}(\theta) }{\bar Q (\theta) } \frac{\bar Q(\theta)}{q(\theta)} \right) ^{\prime}
\left[ \frac{\partial g_{\theta}}{ \partial \theta} (F_{\theta}(l)) + g^{\prime}_{\theta}\big(  F_{\theta} (l)  \big) \frac{\partial F_{\theta}(l)}{\partial \theta} \right] 
\leq - 2\,
\frac{\bar Q_\eta(\theta)}{\bar Q(\theta)} \frac{\bar Q(\theta)}{q(\theta)}  \frac{\partial F_\theta (l)}{\partial \theta}\frac{\partial^2 g_\theta}{\partial \theta \ \partial t}(F_\theta (l)).
$$
\end{enumerate} 
\end{proposition}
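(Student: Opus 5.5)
I will follow the same route as the proof of the analogous monotonicity proposition under the first ordering: differentiate $J_{\theta,\eta}(l)$ in $\theta$ and show that $\partial J_{\theta,\eta}(l)/\partial\theta \geq 0$ on $[\underline\theta,\bar\theta)$. Continuity of $\theta\mapsto J_{\theta,\eta}(l)$ then extends monotonicity to the closed interval $\Theta$, using the limit $J_{\bar\theta,\eta}(l)=(1-\alpha)[g^{In}(F_{\bar\theta}(l))-g_{\bar\theta}(F_{\bar\theta}(l))]$. The derivative formula displayed in that earlier proof is purely algebraic and does not use the sign of $\partial g_\theta/\partial\theta$, so I reuse it verbatim. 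Write $A(\theta):=\bar Q(\theta)/q(\theta)$, $B(\theta):=\bar Q_\eta(\theta)/\bar Q(\theta)$, $D_\theta(l):=\frac{\partial g_\theta}{\partial\theta}(F_\theta(l))+g'_\theta(F_\theta(l))\frac{\partial F_\theta(l)}{\partial\theta}$. In the cases considered, $1-\alpha-\alpha B(\theta)\geq 0$, by Lemma \ref{le:alpha0} and the definition of $\theta_\alpha$ together with the monotonicity of $B$.

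The key change is that $\partial g_\theta/\partial\theta\geq 0$ now, so the individual terms in the $\partial g_\theta/\partial\theta$ bracket no longer have a sign. Only the composite $D_\theta(l)\leq 0$ does, by Assumption \ref{ass:lossdominatesaversion}. I will therefore regroup the derivative as
\begin{align*}
\frac{\partial J_{\theta,\eta}(l)}{\partial\theta}
&= (1-\alpha)\bigl(A'(\theta)-1\bigr) D_\theta(l) - \alpha\,(AB)'(\theta)\, D_\theta(l)
+ (1-\alpha)\frac{\partial F_\theta(l)}{\partial\theta}\Bigl[2A\,\frac{\partial^2 g_\theta}{\partial\theta\,\partial t}(F_\theta(l)) + g^{In\prime}(F_\theta(l))\Bigr] \\
&\quad - 2\alpha\, A B\,\frac{\partial F_\theta(l)}{\partial\theta}\,\frac{\partial^2 g_\theta}{\partial\theta\,\partial t}(F_\theta(l))
+ \bigl(1-\alpha-\alpha B\bigr) A\Bigl[g'_\theta \frac{\partial^2 F_\theta}{\partial\theta^2} + \frac{\partial^2 g_\theta}{\partial\theta^2} + g''_\theta\Bigl(\frac{\partial F_\theta}{\partial\theta}\Bigr)^2\Bigr].
\end{align*}
I will then check each group in turn.

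First, $(1-\alpha)(A'-1)D_\theta\geq 0$, by Condition \ref{C21} together with $D_\theta\leq 0$.

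Second, the third group is nonnegative. Indeed, $\partial F_\theta/\partial\theta\leq 0$, and the bracket is $\leq 0$ by the first inequality in Condition \ref{C25}.

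Third, the last group is nonnegative by Conditions \ref{C22}--\ref{C24}, since $A\geq 0$, $g'_\theta\geq 0$, and $1-\alpha-\alpha B\geq 0$.

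Fourth, the remaining pair, $-\alpha(AB)'D_\theta - 2\alpha AB\,\frac{\partial F_\theta}{\partial\theta}\,\frac{\partial^2 g_\theta}{\partial\theta\partial t}$, is nonnegative precisely by the second inequality in Condition \ref{C25}. This is the reason that extra condition appears. Under the first ordering one could instead argue $(AB)'\geq 0$ and $D_\theta\leq 0$. Here $(AB)'$ has no a priori sign, since $A'\geq 0$ and $B'\geq 0$ give $(AB)'\geq 0$, but its product with $D_\theta$ must be balanced against the cross term, which is now nonpositive.

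I expect the main obstacle to be the bookkeeping in the regrouping. I must make sure that every occurrence of $\partial g_\theta/\partial\theta$ is absorbed into a $D_\theta$ factor or into a term controlled by the conditions. In particular, the $-(1-\alpha)\,\partial g_\theta/\partial\theta$ arising from differentiating $g^{In}-g_\theta$ has to combine with the $-(1-\alpha)g'_\theta\,\partial F_\theta/\partial\theta$ term into $-(1-\alpha)D_\theta$. I will verify this by rederiving $\partial J_{\theta,\eta}/\partial\theta$ directly from \eqref{eq:Jeta} with the product and chain rules, as in Remark \ref{Re:chain_rule}, rather than trusting the earlier display. Once all five groups are signed, the endpoint continuity argument finishes the proof.
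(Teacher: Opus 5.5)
Your proposal is correct and is essentially the paper's proof. You use the same derivative of $J_{\theta,\eta}(l)$, which you rightly note is algebraically identical to the first-ordering display. You split it into the same four groups, sign each by the same conditions, and finish with the same continuity extension at $\bar\theta$.

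One small correction to your aside: the cross term $-2\alpha AB\,\frac{\partial F_\theta}{\partial\theta}\frac{\partial^2 g_\theta}{\partial\theta\,\partial t}$ is nonpositive under the first ordering too, so there $(AB)'\geq 0$ and $D_\theta\leq 0$ alone do not sign that pair. This does not affect your argument here, where the second inequality in the fifth condition takes care of it.
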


\begin{proof}
Recall that in this region, we have $1- \alpha -\alpha \frac{\bar Q_{\eta}(\theta) }{\bar Q (\theta) } \geq 0$. The partial derivative of $J_{\theta, \eta}(l)$ with respect to $\theta$ is given by:
\begin{align*}
\frac{\partial J_{\theta, \eta}(l)}{\partial \theta} 
&=  (1-\alpha) \left( \left(  \frac{\bar Q(\theta) }{q(\theta) } \right)^{\prime } - 1 \right) \left[ \frac{\partial g_{\theta}}{ \partial \theta} (F_{\theta}(l)) + g^{\prime}_{\theta}\big(  F_{\theta} (l)  \big) \frac{\partial F_{\theta}(l)}{\partial \theta} \right]\\
&\quad
+ (1-\alpha) \frac{\partial F_{\theta}(l)}{\partial \theta}  \left[ 2\, \frac{\bar Q(\theta) }{ q (\theta) }\frac{ \partial ^2 g _{\theta} }{ \partial \theta \,\, \partial t} (F_{\theta}(l)) +  g^{In \ \prime}(F_{\theta}(l))\right] \\
&\quad -
\alpha \left[ \left(\frac{\bar Q_{\eta}(\theta) }{\bar Q(\theta) }\frac{\bar Q(\theta)}{q(\theta)} \right) ^{\prime}
\left[ \frac{\partial g_{\theta}}{ \partial \theta} (F_{\theta}(l)) + g^{\prime}_{\theta}\big(  F_{\theta} (l)  \big) \frac{\partial F_{\theta}(l)}{\partial \theta} \right] + 2 \,\frac{\partial F_{\theta}(l)}{\partial \theta} \frac{\bar Q_\eta(\theta)}{\bar Q(\theta)}\frac{\bar Q(\theta)}{q(\theta)}\frac{ \partial ^2 g _{\theta} }{ \partial \theta \partial t} (F_{\theta}(l))
\right] \\
&\quad + 
\left[ 1- \alpha -\alpha \frac{\bar Q_{\eta}(\theta) }{\bar Q (\theta) } \right] \frac{\bar Q(\theta) }{q(\theta) }\cdot \left[   g^{\prime}_{\theta} (F_{\theta}(l))  \frac{\partial^2 F_{\theta}(l)}{\partial \theta^2}    +
\frac{ \partial ^2 g_{\theta} }{ \partial \theta^2 } (F_{\theta}(l))  +
g^{\prime \prime} _{\theta} (F_{\theta}(l)) \left( \frac{\partial F_{\theta}(l)}{\partial \theta} \right)^2  \right].
\end{align*}

Using the monotonicity implications of Assumptions \ref{ass:eta2} and \ref{ass:ordering2}, if conditions \ref{C21} to \ref{C25} hold on $[\underline{\theta}, \bar \theta)$, then $\frac{\partial J_{\theta, \eta}(l)}{\partial \theta} \geq 0$, and hence $\theta \mapsto J_{\theta, \eta}(l)$ is non-decreasing in $\theta$ on $[\underline{\theta}, \bar \theta)$, for all $l \in [0, \bar L]$. Moreover,
$$
J_{\theta, \eta}(l) \bigg|_{\theta = \bar \theta} := \underset{\theta \to \bar \theta}{\lim} J_{\theta, \eta}(l) =
(1-\alpha)
\left[g^{In}(F_{\bar\theta}(l))
-g_{\bar\theta}(F_{\bar\theta}(l))
\right].
$$

Since $\theta \mapsto J_{\theta, \eta}(l)$ is continuous on $\Theta$ and non-decreasing on $[\underline{\theta},\bar \theta)$, then it follows that
$\theta \mapsto J_{\theta, \eta}(l)$ is non-decreasing in $\theta$ on $\Theta$.
\end{proof}

\medskip

Condition \ref{C21} ensures that the population of higher types does not thin out too rapidly. As $\theta$ increases, Condition \ref{C22} ensures that the loss distribution becomes riskier at a decreasing rate, and Condition \ref{C23} ensures that higher types become less risk averse at an increasing rate.
Condition \ref{C24} guarantees strong risk aversion for each type. Condition \ref{C25} implies submodularity of the function $g$, meaning that as $\theta$ increases, the marginal distortion decreases. 
While Conditions \ref{C21} to \ref{C24} remain unchanged under the alternative type ordering compared to the sufficient conditions of monotonicity in Subsection \ref{sec:solutionIPOmon}, Condition \ref{C25} must be strengthened to preserve monotonicity.

\medskip

The results obtained in Subsection \ref{sub:properties_optimalmenu} on the optimal menu of contracts $(R^*_\theta, p^*_\theta)_{\theta \in \Theta}$, continue to hold in this setting. The insurer's participation constraint has distinct implications on the optimal menu of contracts, depending on the level of the social weight $\alpha$ and the value of the optimal marginal retention.
In particular, for small values of the social weight $\alpha$, when the optimal premium satisfies \eqref{eq:optimal_premium}, Lemma \ref{le:IRimplication} implies that if the agent retains the entire loss, the insurer becomes indifferent between participating in the market or not. Otherwise, the insurer’s participation constraint imposes an upper bound on the insurer’s aggregate certainty-equivalent cost of coverage.
Moreover, we see in Proposition \ref{prop:insurerutilityoptimalmenu} that, when full coverage is provided to the agent for all types $\theta \in \Theta$, the insurer's utility becomes lower for higher types, who are less risk averse and face stochastically larger losses. If partial coverage is provided, then the insurer's utility increases with the agent's type if \eqref{eq:inequality_condition} holds.
On the other hand, for larger values of the social weight, namely when $\alpha > \frac{1}{2}$, and if condition \eqref{eq_case3_condition} holds, the unique optimal full-coverage pooling menu leaves the insurer indifferent only in the aggregate, as shown in Proposition \ref{prop:ordering2}.

Proposition \ref{prop:properties_optimalmenu} demonstrates that higher types of the agent, who are less risk averse and face stochastically larger losses, receive more coverage in exchange for higher premia. The highest type $\bar \theta$, who is the least risk averse but faces the largest loss, receives full coverage at every loss level, if the highest type's loss distribution is nontrivial and the insurer is strictly less risk averse than the highest type. The lowest type $\underline{\theta}$, who is the most risk averse but faces the smallest loss, is indifferent between participating in the market and not participating. The agent's utility decreases with the type, meaning that higher types, i.e., less risk-averse types facing stochastically larger losses, receive lower utilities at the optimum. Additionally, the agent's utility is convex in types, if the same conditions of Proposition \ref{prop:properties_optimalmenu}-(4) hold.

\newpage
\section{Conclusion}
\label{sec:conclusion}

This paper examines a monopolistic insurance market with hidden information, where the agent's risk attitude and loss distribution are private information, and the agent's type is drawn from a continuum. Within this framework, we study the concept of incentive Pareto optimality, which extends the classical Pareto efficiency to settings of information asymmetry, and is constrained by requirements of incentive compatibility and individual rationality on optimal menus of contracts.

\medskip

Our first result shows that, for general utility functionals, if a menu of insurance contracts maximizes a social welfare function, subject to individual rationality and incentive compatibility constraints, then it is incentive efficient. Furthermore, in the special case of Yaari Dual Utilities, two partial converse results hold, under additional technical conditions. Under Yaari's Dual Utility, we characterize optimal menus of contracts that solve the social welfare maximization problem, and we show that under two distinct assumptions on the ordering of the type space, and with some regularity conditions, the optimal contract can either provide full coverage, or exhibit a layered structure of marginal retention functions, depending on the level of the social weight. 

\medskip

In addition, in the separating/layered regions, the optimal retention and the optimal premium are both monotone in the agent's type, with higher types receiving more coverage at the optimum in exchange for higher premium payments. Efficiency at the top holds under some strictness conditions, whereby full coverage is provided to the highest type. The insurer extracts all of the surplus from the lowest type agent, who is indifferent between participating and not participating at the optimum. Moreover, we show that, as the agent faces stochastically larger losses, their utility from the optimal menu decreases.
However, in the full-coverage pooling region where $\alpha > \frac{1}{2}$, incentive compatibility forces a common premium that binds the insurer's participation constraint if the insurer's aggregate certainty-equivalent cost of full coverage does not exceed the lowest type's reservation premium.
We also study the variation of the insurer's utility across types at the optimum, when a separating equilibrium holds. Particularly, if the agent is offered full coverage, the insurer benefits more from lower types of the agent that face smaller losses. If, on the other hand, the agent retains the entire loss, the insurer is indifferent in participating in the market. Finally, when partial coverage is offered, the insurer benefits more from higher types if a certain condition holds.

\newpage

\setlength{\parskip}{0.5ex}
\hypertarget{LinkToAppendix}{\ }
\appendix
\numberwithin{equation}{section}
\renewcommand{\theequation}{\thesection.\arabic{equation}}
\vspace{-0.4cm}

\section{Mathematical Background}
\label{AppMathBackground}

Throughout this appendix, let $(\Theta,\Omega,\mu)$ be a finite measure space, and let $(X,\|\cdot\|_X)$ be a Banach space.

\medskip
\subsection{Bochner Spaces}
\label{AppBochner}

We begin by recalling the basic notions of strong measurability, Bochner integrability, and Bochner $L^p$-spaces.

\medskip

\begin{definition}
A function $u:\Theta\to X$ is said to be \emph{strongly measurable} if there exists a sequence of simple functions $(u_n)_{n\ge 1}$ from $\Theta$ into $X$ such that
$$
u_n(\theta)\to u(\theta)
\ \ \hbox{for $\mu$-a.e.\ }\theta\in\Theta.
$$
\end{definition}

\medskip

\begin{definition}
A strongly measurable function $u:\Theta\to X$ is said to be \emph{Bochner integrable} if
$$
\int_\Theta \|u(\theta)\|_X\,d\mu(\theta)<\infty.
$$
In that case, its Bochner integral is denoted by
$$
\int_\Theta u(\theta)\,d\mu(\theta)\in X.
$$
\end{definition}

\medskip

\begin{definition}
Let $1 \leq p < \infty$. The Bochner space $L^p(\Theta;X)$ is the set of all strongly measurable functions $u:\Theta\to X$ such that
$$
\int_\Theta \|u(\theta)\|_X^p\,d\mu(\theta)<\infty,
$$

\noindent where two functions are identified whenever they agree $\mu$-a.e. The space $L^p(\Theta;X)$ is endowed with the norm
$$
\|u\|_{L^p}
:=
\left(\int_\Theta \|u(\theta)\|_X^p\,d\mu(\theta)\right)^{1/p}.
$$
\end{definition}

\medskip

\begin{remark}
When $X=\bbR^m$ is finite-dimensional, strong measurability is equivalent to ordinary measurability of the coordinate functions. In particular, if
$$
u=(u_1,\dots,u_m):\Theta\to\bbR^m,
$$
then
$$
u\in L^p(\Theta;\bbR^m)
\quad\Longleftrightarrow\quad
u_i\in L^p(\Theta,\mu), \ \, \forall \, i=1,\dots,m.
$$

\medskip

\noindent Moreover, the Bochner integral is computed componentwise:
$$
\int_\Theta u(\theta)\,d\mu(\theta)
=
\left(
\int_\Theta u_1(\theta)\,d\mu(\theta),
\dots,
\int_\Theta u_m(\theta)\,d\mu(\theta)
\right).
$$
\end{remark}

\medskip

\begin{theorem}
For every Banach space $X$ and every $1\le p<\infty$, the space $L^p(\Theta;X)$ is a Banach space.
\end{theorem}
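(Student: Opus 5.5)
The statement is the standard fact that the Bochner space $L^p(\Theta;X)$ is complete. I will show that every absolutely summable series converges; in a normed space this is equivalent to completeness.

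First, I check that $\|\cdot\|_{L^p}$ is a norm on equivalence classes. If $u$ is strongly measurable, then $\theta\mapsto\|u(\theta)\|_X$ is measurable, because it is the a.e. limit of the simple real functions $\|u_n(\theta)\|_X$. Sums of strongly measurable functions are strongly measurable, since one can add the approximating simple sequences. Minkowski's inequality for the scalar functions $\|u(\cdot)\|_X$, combined with the pointwise triangle inequality in $X$, gives the triangle inequality in $L^p(\Theta;X)$. Definiteness holds by construction of the quotient.

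Next, take $(u_k)\subset L^p(\Theta;X)$ with $\sum_k\|u_k\|_{L^p}=M<\infty$, and set $G(\theta):=\sum_k\|u_k(\theta)\|_X\in[0,\infty]$. By monotone convergence and Minkowski applied to the partial sums, $\|G\|_{L^p(\mu)}\le M$. Hence $G<\infty$ for $\mu$-a.e. $\theta$. At each such $\theta$, the series $\sum_k u_k(\theta)$ converges absolutely in $X$, and therefore converges because $X$ is Banach. Define $u(\theta)$ as this sum, and set $u(\theta):=0$ on the null set where $G=\infty$.

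The main obstacle is showing that $u$ is strongly measurable, since this is not automatic for arbitrary pointwise limits of $X$-valued functions. I plan to prove that an a.e. limit of strongly measurable functions is strongly measurable. First I approximate each partial sum $S_N$ by a simple function $s_N$ such that $\|S_N-s_N\|_X$ is small off a set of small measure. Because $\mu$ is finite, Egorov's theorem applies to the scalar functions $\|S_N-s_{N,j}\|_X$ and lets me control the bad sets. A diagonal choice with bad sets of measure at most $2^{-N}$, followed by Borel–Cantelli, then gives $s_N\to u$ almost everywhere.

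Finally, $\|u-S_N\|_X\le\sum_{k>N}\|u_k\|_X\le G$ a.e., and the tail sum tends to $0$ a.e. Dominated convergence with the integrable majorant $G^p$ yields $\|u-S_N\|_{L^p}\to0$. The same bound shows that $u\in L^p(\Theta;X)$. This completes the plan.
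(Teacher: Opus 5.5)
Your proof is correct. The paper gives no argument for this statement and only cites Diestel--Uhl, Chapter~2, so your proposal is a self-contained replacement for a citation rather than a competing proof.

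Your argument is the Riesz--Fischer scheme: completeness via convergence of absolutely summable series. You correctly identify the one place where the vector-valued setting differs from the scalar case, namely that the pointwise sum $u$ is again strongly measurable. Your Egorov, diagonal-choice and Borel--Cantelli argument handles this cleanly.

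Note that this step uses the finiteness of $\mu$. That is legitimate here only because the appendix fixes $(\Theta,\Omega,\mu)$ as a finite measure space. The cited reference covers general measure spaces, where one would instead use Pettis's measurability theorem (a.e.\ separably valued plus weakly measurable) or a $\sigma$-finite exhaustion.

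Also, measurability of $\theta\mapsto\|u(\theta)\|_X$ as an a.e.\ limit holds only after modification on a null set, unless the measure is complete. This is harmless because you work with equivalence classes.

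In exchange, your version shows explicitly which ingredients the paper's later use of $L^p(\Theta;\bbR^2)$ depends on: completeness of $X$, Minkowski, monotone and dominated convergence, and closure of strong measurability under a.e.\ limits. The citation, by contrast, buys brevity and full generality.
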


\begin{proof}
See, for example, \cite[Chap.\ 2]{diestel1977uhl}.
\end{proof}

\medskip

In the body of the paper, we work with $X=\bbR^2$, equipped with the Euclidean norm $\|\cdot\|_{\bbR^2}$. Thus, for $1<p<\infty$, the space $L^p(\Theta;\bbR^2)$ is the Banach space of all $\bbR^2$-valued strongly measurable maps $u=(u_1,u_2)$ such that
$$
\int_\Theta \|u(\theta)\|_{\bbR^2}^p\,d\mu(\theta)<\infty.
$$

\medskip
\subsection{Reflexivity and Weak Compactness}

We next recall the notions of duality, reflexivity, and weak compactness used in the paper. For any normed space $N$, we denote by $N^*$ its continuous dual and by $N^{**}$ its bidual.

\medskip

\begin{definition}
Let $N$ be a normed space. The canonical map $J_N: N \to N^{**}$ is defined by
$$
J_N(x)(f):=f(x),
\ \ \forall \, x\in N,\ \forall \, f\in N^*.
$$

\noindent The space $N$ is said to be \emph{reflexive} if $J_N$ is surjective.
\end{definition}

\medskip

\begin{remark}
The following are standard observations:
\medskip
\begin{enumerate}
\item Every finite-dimensional normed space is reflexive.
\medskip
\item Every reflexive normed space is complete, and hence is a Banach space.
\end{enumerate}
\end{remark}

\medskip

\noindent Since $\bbR^2$ is finite-dimensional, it follows that $\bbR^2$ is reflexive.

\medskip

\begin{theorem}
\label{thm:Lp_reflexive_appendix}
Let $X$ be a reflexive Banach space, and let $1<p<\infty$. Then $L^p(\Theta;X)$ is reflexive.
\end{theorem}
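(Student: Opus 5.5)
The plan is to reduce reflexivity of $L^p(\Theta;X)$ to an explicit identification of its dual and bidual. The main intermediate result is that, for $1<p<\infty$ with conjugate exponent $p'$ ($1/p+1/p'=1$), the map
$$
\Phi: L^{p'}(\Theta;X^*) \to \big(L^p(\Theta;X)\big)^*, \qquad \Phi(v)(u) := \int_\Theta \langle v(\theta), u(\theta)\rangle \, d\mu(\theta),
$$
is an isometric isomorphism whenever $X^*$ has the Radon--Nikodym property (RNP) with respect to $\mu$. Reflexivity of $X$ supplies this hypothesis: every reflexive Banach space has the RNP (Dunford--Pettis / Phillips). Since $X^*$ is also reflexive, $X^*$ has the RNP too.

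\textbf{Step 1 (isometric embedding).} Show that $\Phi(v)$ is well defined and that $\|\Phi(v)\|\le \|v\|_{L^{p'}}$, using measurability of $\theta\mapsto\langle v(\theta),u(\theta)\rangle$ and Hölder's inequality applied to $\|v(\theta)\|_{X^*}\|u(\theta)\|_X$. For the reverse inequality, first take a simple $v$ and norming vectors $x_i\in X$ for its values, and set $u=\|v\|^{p'-1}\,x$ on each piece. Then pass to general $v$ by density of simple functions.

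\textbf{Step 2 (surjectivity; the main obstacle).} Take $\varphi\in (L^p(\Theta;X))^*$ and define the vector measure $G(E)\in X^*$ by $G(E)(x):=\varphi(\mathbf 1_E x)$. Check three properties of $G$: it is countably additive, it is $\mu$-continuous, and it has bounded variation. Countable additivity comes from dominated convergence in $L^p$. Bounded variation comes from an estimate of $\sum_i\|G(E_i)\|$ by testing $\varphi$ against $\sum_i \mathbf 1_{E_i}x_i$ with suitable weights, which yields a bound $\|\varphi\|\mu(\Theta)^{1/p'}$ on finite measure spaces. The RNP of $X^*$ then produces $v\in L^1(\Theta;X^*)$ with $G(E)=\int_E v\,d\mu$. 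Showing $v\in L^{p'}$ is the delicate part. I would truncate: on $E_n=\{\|v\|\le n\}$, apply the norming argument of Step 1 to get $\|v\mathbf 1_{E_n}\|_{L^{p'}}\le\|\varphi\|$, then let $n\to\infty$ by monotone convergence. Finally, $\varphi=\Phi(v)$ holds on simple functions by construction, and hence everywhere by density.

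\textbf{Step 3 (conclusion).} Apply Step 2 twice: $(L^p(\Theta;X))^{**}\cong (L^{p'}(\Theta;X^*))^*\cong L^p(\Theta;X^{**})$. Reflexivity identifies $X^{**}$ with $X$ via $J_X$. It then remains to check that the composite isomorphism is exactly the canonical map $J_{L^p(\Theta;X)}$. This reduces to the pointwise identity $\langle J_X u(\theta), v(\theta)\rangle=\langle v(\theta),u(\theta)\rangle$, integrated over $\Theta$. Hence $J_{L^p(\Theta;X)}$ is surjective.

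For the paper's actual use, $X=\bbR^2$, and the RNP step is trivial (apply scalar Radon--Nikodym to each coordinate). One could also bypass everything by noting that $L^p(\Theta;\bbR^2)\cong L^p(\Theta)\times L^p(\Theta)$, a product of reflexive spaces, with an equivalent norm.
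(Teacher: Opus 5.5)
Your proof is correct and is essentially the argument behind the paper's one-line citation to Diestel--Uhl, Chap.~IV: you identify $(L^p(\Theta;X))^*$ with $L^{p'}(\Theta;X^*)$ using the RNP of $X^*$ (the content of the paper's Theorem~\ref{thm:bochner_duality_general}), apply this twice, and check that the composite is $J_{L^p(\Theta;X)}$. Two small fixes: testing $\varphi$ against $\sum_i \mathbf 1_{E_i}x_i$ with $\|x_i\|\le 1$ bounds the variation by $\|\varphi\|\,\mu(\Theta)^{1/p}$, not $\mu(\Theta)^{1/p'}$ (harmless, since only finiteness matters), and norm countable additivity of $G$ should be read off from the uniform bound $\|G(E)\|_{X^*}\le\|\varphi\|\,\mu(E)^{1/p}$, since dominated convergence for each fixed $x$ only gives weak$^*$ countable additivity.
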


\begin{proof}
See, for example, \cite[Chap.\ 4]{diestel1977uhl}.
\end{proof}

\medskip

\noindent Applying Theorem \ref{thm:Lp_reflexive_appendix} to $X=\bbR^2$, it follows that $L^p(\Theta;\bbR^2)$ is reflexive whenever $1<p<\infty$.

\medskip

The following is a standard result, often referred to as Kakutani's theorem. 

\medskip

\begin{theorem}
\label{thm:reflexive_iff_weakly_compact_ball}
A Banach space $X$ is reflexive if and only if its closed unit ball is weakly compact.
\end{theorem}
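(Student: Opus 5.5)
The plan is to prove the two implications separately. Both rest on three facts: the canonical map $J_X:X\to X^{**}$ is a linear isometry; the Banach--Alaoglu theorem; and Goldstine's theorem. Throughout, $B_X$ and $B_{X^{**}}$ denote the closed unit balls, $\sigma(X,X^*)$ the weak topology on $X$, and $\sigma(X^{**},X^*)$ the weak$^*$ topology on $X^{**}$.

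\emph{Reflexive $\Rightarrow$ weakly compact ball.} First, $J_X$ is a homeomorphism from $(X,\sigma(X,X^*))$ onto its image in $(X^{**},\sigma(X^{**},X^*))$. Indeed, both topologies are the initial topologies generated by the same family $\{x\mapsto f(x)\}_{f\in X^*}$, since $J_X(x)(f)=f(x)$. Next, $J_X$ is an isometry (Hahn--Banach gives $\|J_X(x)\|=\|x\|$), so $J_X(B_X)\subseteq B_{X^{**}}$. If $J_X$ is surjective, every $\xi\in B_{X^{**}}$ equals $J_X(x)$ with $\|x\|=\|\xi\|\le 1$, hence $J_X(B_X)=B_{X^{**}}$. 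Banach--Alaoglu says $B_{X^{**}}$ is weak$^*$ compact. Pulling it back through the homeomorphism $J_X^{-1}$ shows $B_X$ is weakly compact.

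\emph{Weakly compact ball $\Rightarrow$ reflexive.} Suppose $B_X$ is weakly compact. Since $J_X$ is weak-to-weak$^*$ continuous, $J_X(B_X)$ is weak$^*$ compact, and therefore weak$^*$ closed because the weak$^*$ topology is Hausdorff. Goldstine's theorem says $J_X(B_X)$ is weak$^*$ dense in $B_{X^{**}}$. Being both closed and dense, $J_X(B_X)=B_{X^{**}}$. Scaling then gives $J_X(X)=X^{**}$, so $X$ is reflexive.

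The main obstacle is Goldstine's theorem, which I would prove directly. Let $K$ be the weak$^*$ closure of $J_X(B_X)$, and note $K\subseteq B_{X^{**}}$ because the ball is weak$^*$ closed. Suppose some $\xi\in B_{X^{**}}$ lies outside $K$. The set $K$ is convex and closed in the locally convex space $(X^{**},\sigma(X^{**},X^*))$, whose continuous dual is exactly $X^*$. Hahn--Banach separation therefore gives $f\in X^*$ and $c\in\bbR$ with
$$
\xi(f)>c\ge \sup_{x\in B_X} f(x)=\|f\|.
$$
This contradicts $|\xi(f)|\le\|\xi\|\,\|f\|\le\|f\|$. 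The one step to check with care is the identification of the weak$^*$ dual of $X^{**}$ with $X^*$: a weak$^*$-continuous linear functional is bounded on a basic neighbourhood, hence lies in the span of finitely many evaluations at elements of $X^*$. This is a standard finite-dimensional linear-algebra lemma.
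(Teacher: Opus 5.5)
Your proof is correct and follows the standard route: Banach--Alaoglu for the forward direction, and Goldstine's theorem together with weak$^*$ closedness of $J_X(B_X)$ for the converse. The paper gives no proof of its own and simply cites Conway (Theorem V.4.2), which argues in exactly this way; your self-contained proof of Goldstine's theorem via weak$^*$ Hahn--Banach separation is also sound.
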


\begin{proof}
\cite[Theorem V.4.2]{conway2019course}.
\end{proof}

\medskip

\begin{corollary}
\label{cor:closed_bounded_convex_weakly_compact}
If $X$ is a reflexive Banach space, then every closed, bounded, and convex subset of $X$ is weakly compact.
\end{corollary}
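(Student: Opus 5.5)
The plan is to obtain the corollary directly from Theorem \ref{thm:reflexive_iff_weakly_compact_ball}, together with Mazur's theorem and a rescaling argument. Let $C \subseteq X$ be closed, bounded, and convex.

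First, since $C$ is bounded, there is some $r>0$ such that $C \subseteq r B_X$, where $B_X$ is the closed unit ball. Because $X$ is reflexive, Theorem \ref{thm:reflexive_iff_weakly_compact_ball} tells us that $B_X$ is weakly compact. Scalar multiplication $x \mapsto r x$ is weak-to-weak continuous, since for every $f\in X^*$ we have $f(rx)=r f(x)$. Hence $rB_X$ is weakly compact as well.

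Second, I show that $C$ is weakly closed; this is the only genuine step. By Mazur's theorem, a convex set that is norm closed is also weakly closed. I would justify this via Hahn--Banach separation. Take $x_0 \notin C$. Because $C$ is closed and convex, there exist $f \in X^*$ and $c\in\bbR$ with
$$
f(x_0) > c \geq \sup_{x \in C} f(x).
$$
The set $\{x : f(x) > c\}$ is then a weakly open neighbourhood of $x_0$ that does not meet $C$. It follows that the complement of $C$ is weakly open.

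Finally, $C = C \cap rB_X$ is a weakly closed subset of a weakly compact set, and is therefore weakly compact. This holds because compactness passes to closed subsets in any topological space, here $X$ equipped with its weak topology.

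The step most likely to need care is the Mazur step, which requires separating a point from a closed convex set. Convexity is essential here: without it, norm-closed sets need not be weakly closed.
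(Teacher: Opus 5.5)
Your proposal is correct and follows essentially the same argument as the paper: embed the set in a scaled closed unit ball, which is weakly compact by Kakutani's theorem; observe that a norm-closed convex set is weakly closed; and conclude that a weakly closed subset of a weakly compact set is weakly compact. The only difference is that you prove the weak-closedness step (Mazur) directly via Hahn--Banach separation and check weak continuity of scaling, where the paper simply cites these facts.
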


\begin{proof}
Let $A\subset X$ be closed, bounded, and convex. Then there exists some $\lambda > 0$ such that
$$
A \subset \lambda \, B_X,
$$
where $B_X$ denotes the closed unit ball of $X$. By Theorem \ref{thm:reflexive_iff_weakly_compact_ball}, the set $B_X$ is weakly compact, and hence so is $\lambda \, B_X$. Since $A$ is norm closed and convex, it is weakly closed \cite[Corollary 4, p.12]{Diestel1984}. Therefore $A$, being a weakly closed subset of the weakly compact set $\lambda \, B_X$, is weakly compact.
\end{proof}

\medskip
\subsection{Weak Sequential Compactness}

\begin{theorem}
\label{thm:eberlein_smulian_appendix}
Let $X$ be a reflexive Banach space. Then every bounded sequence in $X$ admits a weakly convergent subsequence.
\end{theorem}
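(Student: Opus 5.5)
The statement is the Eberlein--Šmulian-type fact: every bounded sequence in a reflexive Banach space has a weakly convergent subsequence. I will reduce it to the separable case and then use a diagonal argument in a separable dual.

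\emph{Step 1: reduce to a separable reflexive space.} Let $(x_n)_{n\ge1}$ be bounded in $X$, say $\|x_n\|_X\le M$. Let $Y$ be the norm-closed linear span of $\{x_n\}$. Rational linear combinations of the $x_n$ are dense in $Y$, so $Y$ is separable. A closed subspace of a reflexive space is reflexive. To see this, use Theorem \ref{thm:reflexive_iff_weakly_compact_ball}: $B_Y = B_X\cap Y$, and $Y$ is convex and norm closed, hence weakly closed. So $B_Y$ is weakly compact in $X$. By Hahn--Banach, the weak topology of $Y$ coincides with the restriction of the weak topology of $X$, so $B_Y$ is weakly compact in $Y$. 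Hence $Y$ is reflexive.

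\emph{Step 2: separability of $Y^*$.} Since $Y$ is reflexive, $Y^{**}=J_Y(Y)$ is separable. A Banach space whose dual is separable is itself separable (standard Hahn--Banach argument), so $Y^*$ is separable. Fix a countable dense set $\{f_k\}_{k\ge1}\subset Y^*$.

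\emph{Step 3: diagonal extraction.} For each $k$, the scalar sequence $(f_k(x_n))_n$ is bounded by $\|f_k\|M$. By Bolzano--Weierstrass and a Cantor diagonal argument, I extract a subsequence $(x_{n_j})$ such that $\lim_j f_k(x_{n_j})$ exists for every $k$. A $3\varepsilon$ argument using density of $\{f_k\}$ and the bound $M$ then shows that $\lim_j f(x_{n_j})=:\Phi(f)$ exists for every $f\in Y^*$. The map $\Phi$ is linear with $|\Phi(f)|\le M\|f\|$, so $\Phi\in Y^{**}$. By reflexivity, $\Phi=J_Y(x)$ for some $x\in Y$, so $f(x_{n_j})\to f(x)$ for all $f\in Y^*$.

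\emph{Step 4: lift back to $X$.} Any $g\in X^*$ restricts to $g|_Y\in Y^*$. Therefore $g(x_{n_j})\to g(x)$ for every $g\in X^*$, which means $x_{n_j}\rightharpoonup x$ weakly in $X$.

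The main obstacle is Step 2: Step 3 needs a countable norming family on $Y$, and that requires knowing $Y^*$ is separable, which in turn relies on reflexivity. If I wanted to avoid this, I would instead note that $J_Y(Y)$ is separable and pick a countable subset of $S_{Y^*}$ that norms a dense subset of $Y$. That only separates points, so the limit functional would not be identified on all of $Y^*$ without further work, and I would still prefer the route through separability of $Y^*$. The alternative of invoking weak compactness of the ball plus metrizability of the weak topology on bounded sets of a separable-dual space amounts to the same ingredient.
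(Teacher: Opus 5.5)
Your proof is correct, but it takes a different route from the paper. The paper's proof rests on two citations. Kakutani's theorem (Theorem~\ref{thm:reflexive_iff_weakly_compact_ball}) makes $\lambda B_X$ weakly compact. The Eberlein--\v{S}mulian theorem (for subsets of a Banach space, weak compactness and weak sequential compactness coincide) then gives a weakly convergent subsequence directly.

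You never invoke Eberlein--\v{S}mulian. Instead you:
\begin{enumerate}
\item pass to the separable closed span $Y$;
\item show $Y$ is reflexive, using Kakutani in both directions together with Mazur's theorem and Hahn--Banach;
\item deduce that $Y^*$ is separable from the separability of $Y^{**}=J_Y(Y)$;
\item run a diagonal argument over a countable dense subset of $Y^*$;
\item use reflexivity to realize the limit functional $\Phi$ as a point of $Y$, and lift to $X$ by restricting functionals from $X^*$.
\end{enumerate}
This is the classical direct proof, and in effect it proves the reflexive case of Eberlein--\v{S}mulian by hand.

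The paper's version is shorter, but all the passage from compactness to sequences sits inside a deep theorem. That theorem's general proof is considerably harder than what is needed here. Your version is self-contained and uses only Hahn--Banach-level tools. It also makes explicit why sequences suffice: when $Y^*$ is separable, countably many functionals control the weak topology on bounded subsets of $Y$.

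The alternative you set aside in your last paragraph also closes quickly. Let $\{f_k\}\subset Y^*$ be merely a countable separating family; one exists whenever $Y$ is separable. Then every weak cluster point $y$ of the diagonal subsequence satisfies $f_k(y)=\lim_j f_k(x_{n_j})$ for all $k$, so the cluster point is unique. Weak compactness of $MB_Y$ then forces weak convergence to that point, so you could skip Step~2 entirely.
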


\begin{proof}
Let $(x_n)_{n\ge 1}$ be a bounded sequence in $X$. Then there exists some $\lambda>0$ such that
$$
\|x_n\|_X\leq \lambda, \ \ \forall\,n\geq 1.
$$

\noindent Hence $x_n \in \lambda \, B_X$, for all $n \geq 1$, where $B_X:=\{x\in X:\|x\|_X\leq 1\}$ is the closed unit ball of $X$. Since $X$ is reflexive, Theorem \ref{thm:reflexive_iff_weakly_compact_ball} implies that $B_X$ is weakly compact, and therefore so is $\lambda \, B_X$. By the Eberlein-\v{S}mulian theorem \cite[p.18]{Diestel1984}, weak compactness and weak sequential compactness coincide in Banach spaces. Thus the bounded sequence $(x_n)_{n\ge 1}\subset \lambda \,B_X$ admits a weakly convergent subsequence.
\end{proof}

\medskip
\subsection{Local Convexity}

\begin{definition}
A topological vector space is said to be locally convex if it admits a local base at $0$ consisting of convex sets.
\end{definition}

\medskip

\begin{remark}
Every normed space is a locally convex Hausdorff topological vector space. In particular, $L^p(\Theta;\bbR^2)$ is locally convex and Hausdorff.
\end{remark}

\medskip
\subsection{Radon-Nikodym Property and Bochner Duality}

We now recall the duality theorem for Bochner $L^p$-spaces.

\medskip

\begin{definition}
A Banach space $X$ is said to have the Radon-Nikodym property (RNP) if, for every probability space $(S,\Sigma,\nu)$ and every $X$-valued countably additive vector measure $\zeta$ of bounded variation that is absolutely continuous with respect to $\nu$, there exists a Bochner integrable function $h\in L^1(S;X)$ such that
$$
\zeta(A)=\int_A h\,d\nu, \ \ \forall\,A\in\Sigma.
$$
\end{definition}

\medskip

\begin{remark}
If $X$ is reflexive, then $X$ has the RNP \cite[Chap.\ 4]{diestel1977uhl}.
\end{remark}

\medskip

\begin{theorem}
\label{thm:bochner_duality_general}
\label{th:duality}
Let $(\Theta,\Omega,\mu)$ be a finite measure space, let $1 \leq p<\infty$, let $q\in[1,\infty]$ satisfy $\frac{1}{p} + \frac{1}{q} = 1$, and let $X$ be a Banach space. The dual of $L^p(\Theta, X)$ is given by $L^p(\Theta; X) ^* = L^q(\Theta; X^*)$ if and only if $X^*$ has the RNP.

\medskip

Moreover, when $X^*$ has the RNP, every continuous linear functional $\ell$ on $L^p(\Theta, X)$ can be represented uniquely by some $g \in L^q(\Theta; X^*)$, as follows:
$$
\ell(f) = \int_{\Theta} \langle g(\theta), f(\theta) \rangle \, d\mu(\theta) , \ \ \forall \,  f \in L^p(\Theta;X),
$$
where $ \langle g(\theta), f(\theta) \rangle $ is the dual pairing. 
\end{theorem}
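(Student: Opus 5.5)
The plan is to prove the two directions separately, after first setting up the canonical map. For $g\in L^q(\Theta;X^*)$, define $\ell_g(f):=\int_\Theta\langle g(\theta),f(\theta)\rangle\,d\mu(\theta)$. Strong measurability of $g$ and $f$ makes $\theta\mapsto\langle g(\theta),f(\theta)\rangle$ measurable. Hölder's inequality gives $|\ell_g(f)|\le\|g\|_{L^q}\|f\|_{L^p}$. For the reverse inequality $\|\ell_g\|\ge\|g\|_{L^q}$, I would approximate $g$ by countably-valued functions $\sum_i x_i^*\mathbf 1_{A_i}$. I would then choose $x_i\in X$ norming $x_i^*$ up to $\varepsilon$, and test $\ell_g$ on $f=\sum_i \|x_i^*\|^{q-1}x_i\mathbf 1_{A_i}$, suitably truncated when $q=\infty$. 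Hence $g\mapsto\ell_g$ is a linear isometry of $L^q(\Theta;X^*)$ into $L^p(\Theta;X)^*$, and uniqueness of the representation is automatic. The content of the theorem is therefore that this isometry is onto exactly when $X^*$ has the RNP.

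\emph{Sufficiency (RNP $\Rightarrow$ onto).} Fix $\ell\in L^p(\Theta;X)^*$ and define the vector measure $G:\Omega\to X^*$ by $G(A)(x):=\ell(\mathbf 1_A x)$. Since $\|G(A)\|\le\|\ell\|\mu(A)^{1/p}$, the measure $G$ vanishes on $\mu$-null sets, and countable additivity follows from $\|\mathbf 1_{A\setminus\bigcup_{k\le n}A_k}x\|_{L^p}\to0$. The main obstacle is bounded variation. For a partition $(A_i)$, choose unit vectors $x_i$ with $G(A_i)(x_i)\ge\|G(A_i)\|-\varepsilon\mu(A_i)$, and test $\ell$ on $\sum_i c_i\mathbf 1_{A_i}x_i$ with optimal scalars $c_i$. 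This yields
\[
\Big(\sum_i\|G(A_i)\|^q\mu(A_i)^{1-q}\Big)^{1/q}\le\|\ell\|.
\]
Combining this with Hölder against $\mu(A_i)^{1/p}$ gives $\sum_i\|G(A_i)\|\le\|\ell\|\,\mu(\Theta)^{1/p}<\infty$, using finiteness of $\mu$. The RNP, applied to the normalized measure $\mu/\mu(\Theta)$, then yields a density $g\in L^1(\Theta;X^*)$ with $G(A)=\int_A g\,d\mu$. Next, the displayed $q$-variation bound, applied to conditional expectations of $g$ on finer and finer partitions, shows $g\in L^q(\Theta;X^*)$ with $\|g\|_{L^q}\le\|\ell\|$; a martingale or Fatou argument handles the limit. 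By construction $\ell=\ell_g$ on simple functions, which are dense in $L^p(\Theta;X)$ since $p<\infty$, so $\ell=\ell_g$.

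\emph{Necessity (onto $\Rightarrow$ RNP).} Let $\zeta$ be an $X^*$-valued measure of bounded variation with $\zeta\ll\nu$. It suffices to find a density with respect to $|\zeta|$, since $|\zeta|\ll\nu$ and the scalar Radon–Nikodym theorem then finishes. On $L^1(|\zeta|;X)$, define $\ell(\mathbf 1_Ax):=\zeta(A)(x)$; this has norm at most $1$, and on a finite measure space it is also bounded on $L^p$, because $\|f\|_1\le|\zeta|(\Theta)^{1/q}\|f\|_p$. The duality hypothesis then gives $g\in L^q$ with $\zeta(A)(x)=\int_A\langle g,x\rangle\,d|\zeta|$, and a norming argument upgrades this to $\zeta(A)=\int_A g\,d|\zeta|$ as a Bochner integral. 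The delicate point is that the hypothesis concerns the fixed space $(\Theta,\Omega,\mu)$, while the RNP quantifies over all probability spaces. I would resolve this with the standard reduction that it suffices to test the RNP on measures that are absolutely continuous with respect to Lebesgue measure on $[0,1]$, which is the setting used in the body of the paper. The fully general equivalence can be cited from \cite[Chap.~IV]{diestel1977uhl}.
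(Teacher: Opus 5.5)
Your sufficiency direction is sound. It is the standard argument behind \cite[Chap.~IV, Theorem~1]{diestel1977uhl}, which is all the paper offers: it only cites that result, and only this direction, with $X=\mathbb{R}^2$, is ever used.

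\textbf{Necessity.} The gap is in the necessity direction. The hypothesis is that duality holds for the single fixed space $(\Theta,\Omega,\mu)$. You instead apply it to $L^p(|\zeta|;X)$, a space built over an arbitrary probability space $(S,\Sigma,\nu)$, and the hypothesis says nothing about that space. Restricting the test measures to Lebesgue-continuous measures on $[0,1]$ does not close this, because your base measure is still $|\zeta|$, not $\mu$. For a general finite measure space no such transfer is possible.

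In fact, with the unrestricted RNP of the paper's definition, the ``only if'' direction is false as stated. Take $\Theta=\{a\}$ and $\mu=\delta_a$. Then $L^p(\Theta;X)^*=X^*=L^q(\Theta;X^*)$, with exactly the displayed representation, for every Banach space $X$. For $X=\ell^1$ this holds, yet $X^*=\ell^\infty$ fails the RNP since it contains $c_0$. Citing Diestel--Uhl for the ``fully general equivalence'' does not rescue this, because their theorem says ``$X^*$ has the RNP \emph{with respect to $\mu$}''.

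\textbf{What can be proved.} The localized statement is provable, but the argument has to be run on $L^p(\mu;X)$ itself. For $\zeta\ll\mu$ of bounded variation, the map $f\mapsto\int f\,d\zeta$ need not be bounded on $L^p(\mu;X)$, since $h:=d|\zeta|/d\mu$ is only integrable. Fix this by splitting $\Theta$ into $E_n:=\{n-1\le h<n\}$:
\begin{itemize}
\item Define $\ell_n(\mathbf{1}_A x):=\zeta(A\cap E_n)(x)$. It satisfies $|\ell_n(f)|\le n\|f\|_{L^1(\mu)}\le n\,\mu(\Theta)^{1/q}\|f\|_{L^p(\mu)}$.
\item Duality on $L^p(\mu;X)$ then yields $g_n\in L^q(\mu;X^*)$, vanishing a.e.\ off $E_n$, with $\zeta(A\cap E_n)=\int_A g_n\,d\mu$.
\item Glue: $g:=\sum_n g_n$ is a Bochner density for $\zeta$, since $\int\|g\|\,d\mu=\sum_n|\zeta|(E_n)=|\zeta|(\Theta)<\infty$.
\end{itemize}

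To reach the unrestricted RNP you also need $\mu$ to be rich enough. In the paper's concrete setting, $\mu$ has a strictly positive continuous density on an interval and is therefore equivalent to Lebesgue measure. An affine change of variables then carries every Lebesgue-continuous measure on $[0,1]$ to a $\mu$-continuous one. Your cited reduction (RNP with respect to Lebesgue measure on $[0,1]$ implies the RNP) then finishes the argument.
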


\begin{proof}
See \cite[Chap.\ 4, Theorem 1]{diestel1977uhl}.
\end{proof}

\medskip

We now specialize to the case $X=\bbR^2$.

\medskip

\begin{proposition}
\label{prop:duality}
Let $1\leq p<\infty$, and let $q\in[1,\infty]$ satisfy $\frac{1}{p}+\frac{1}{q}=1$. Then
$$
\l(L^p(\Theta;\bbR^2)\r)^*=L^q(\Theta;\bbR^2).
$$

\noindent More precisely, every continuous linear functional $\ell$ on $L^p(\Theta;\bbR^2)$ can be represented uniquely by some $g=(g_1,g_2)\in L^q(\Theta;\bbR^2)$ such that
$$
\ell(f)
=
\int_\Theta f_1(\theta)\,g_1(\theta)\,d\mu(\theta)
+
\int_\Theta f_2(\theta)\,g_2(\theta)\,d\mu(\theta),
\ \ \forall\,f=(f_1,f_2)\in L^p(\Theta;\bbR^2).
$$

\noindent Moreover,
$$
|\ell(f)|\le \|f\|_{L^p}\,\|g\|_{L^q} < +\infty,
\ \ \forall\,f\in L^p(\Theta;\bbR^2).
$$
\end{proposition}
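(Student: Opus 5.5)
The plan is to deduce Proposition \ref{prop:duality} directly from the general Bochner duality theorem (Theorem \ref{thm:bochner_duality_general}) with $X=\bbR^2$ equipped with the Euclidean norm.

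\textbf{Step 1: the RNP hypothesis.} First I verify that $X^*$ has the RNP. Since $\bbR^2$ is finite dimensional, every linear functional on it is continuous. Each such functional is given by $y\mapsto \langle a,y\rangle$ for a unique $a\in\bbR^2$. By Cauchy--Schwarz, the dual norm of the Euclidean norm is again the Euclidean norm, so $X^*\cong\bbR^2$ isometrically. Finite-dimensional spaces are reflexive, and reflexive spaces have the RNP (the remark preceding Theorem \ref{thm:bochner_duality_general}). Hence $X^*$ has the RNP.

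\textbf{Step 2: the duality and the representation.} Theorem \ref{thm:bochner_duality_general} now gives $(L^p(\Theta;\bbR^2))^*=L^q(\Theta;\bbR^2)$ for $1\le p<\infty$ on the finite measure space $(\Theta,\cB(\Theta),\mu)$. It also gives a unique $g\in L^q(\Theta;\bbR^2)$ with
\[
\ell(f)=\int_\Theta\langle g(\theta),f(\theta)\rangle\,d\mu(\theta).
\]
Writing the Euclidean pairing coordinatewise, $\langle g,f\rangle=f_1g_1+f_2g_2$. Each product is integrable by the scalar H\"older inequality, since by the remark on finite-dimensional Bochner spaces $f_i\in L^p(\Theta,\mu)$ and $g_i\in L^q(\Theta,\mu)$. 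So the integral splits into the two integrals in the statement. Uniqueness carries over directly from the theorem.

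\textbf{Step 3: the norm bound.} Pointwise Cauchy--Schwarz in $\bbR^2$ gives
\[
|\langle g(\theta),f(\theta)\rangle|\le\|f(\theta)\|_{\bbR^2}\,\|g(\theta)\|_{\bbR^2}.
\]
Applying scalar H\"older to the real functions $\theta\mapsto\|f(\theta)\|$ and $\theta\mapsto\|g(\theta)\|$ then yields $|\ell(f)|\le\|f\|_{L^p}\|g\|_{L^q}$. For $q=\infty$, this reads with $\|g\|_{L^\infty}=\operatorname{ess\,sup}\|g(\theta)\|$. The right-hand side is finite because $f\in L^p$ and $g\in L^q$.

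The only step needing care is the identification of $X^*$ with $\bbR^2$ under the same Euclidean norm, so that the $L^q(\Theta;X^*)$ norm coincides with the stated $\|g\|_{L^q}$. Everything else is a direct specialization of the cited theorem.
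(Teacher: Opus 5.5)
Your proposal is correct and follows the paper's proof essentially step for step. It invokes the Bochner duality theorem with $X=\bbR^2$ (finite-dimensional, hence reflexive, hence RNP), expands the Euclidean pairing coordinatewise, and bounds $|\ell(f)|$ by pointwise Cauchy--Schwarz followed by H\"older. Two of your steps make explicit what the paper leaves implicit: identifying $X^*$ isometrically with $(\bbR^2,\|\cdot\|_{\bbR^2})$, since the theorem's hypothesis is on $X^*$, and checking integrability of each product before splitting the integral.
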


\medskip

\begin{proof}
Since $\bbR^2$ is finite-dimensional, it is reflexive, and therefore it has the RNP. Therefore, by Theorem \ref{thm:bochner_duality_general}, we have 
$$
\l(L^p(\Theta;\bbR^2)\r)^* = L^q(\Theta;\bbR^2).
$$

\noindent Hence every continuous linear functional $\ell$ on $L^p(\Theta;\bbR^2)$ admits a unique representation
$$
\ell(f)=\int_\Theta \langle f(\theta),g(\theta)\rangle_{\bbR^2}\,d\mu(\theta),
\ \ \forall\,f\in L^p(\Theta;\bbR^2),
$$

\noindent for some unique $g=(g_1,g_2)\in L^q(\Theta;\bbR^2)$. Writing out the Euclidean inner product, we obtain
$$
\ell(f)
=
\int_\Theta f_1(\theta)\,g_1(\theta)\,d\mu(\theta)
+
\int_\Theta f_2(\theta)\,g_2(\theta)\,d\mu(\theta),
\ \ \forall\,f=(f_1,f_2)\in L^p(\Theta;\bbR^2).
$$

Finally, for every $f=(f_1,f_2)\in L^p(\Theta;\bbR^2)$, we have
\begin{align*}
|\ell(f)|
&=
\left|
\int_\Theta \langle f(\theta),g(\theta)\rangle_{\bbR^2}\,d\mu(\theta)
\right|\\
&\leq
\int_\Theta \bigl|\langle f(\theta),g(\theta)\rangle_{\bbR^2}\bigr|\,d\mu(\theta)\\
&\leq
\int_\Theta \|f(\theta)\|_{\bbR^2}\,\|g(\theta)\|_{\bbR^2}\,d\mu(\theta)\\
&\leq
\|f\|_{L^p}\,\|g\|_{L^q}
<+\infty,
\end{align*}

\noindent where the second inequality follows from the Euclidean Cauchy-Schwarz inequality and the third from H\"{o}lder's inequality.
\end{proof}

\bigskip
\section{Proofs of Main Results} \label{Appendixproofs}

\subsection{Proof of Theorem \ref{th:IPOiff}}\label{App:proofIPOiff}

Assume that there exists a probability measure $\eta$ on $(\Theta, \cB(\Theta))$ that is equivalent to $\mu$, and  $\alpha \in (0,1)$ such that the menu of contracts $ \left(I^*_{ \theta} , p^*_{ \theta} \right) _{\theta \in \Theta}$ is optimal for Problem \eqref{eq:IPO_sup}.
For the sake of contradiction, we assume that $ \left(I^*_{ \theta} , p^*_{\theta} \right) _{\theta \in \Theta} \notin \cI\cP\cO$. By Definition \ref{def:IPO} there exists  $ ( I_{\theta}, p_{\theta})_{\theta \in \Theta} \in \cI\cR \cap \cI\cC$ such that for $\mu$-almost every $\theta \in \Theta$,
$$
U_{\theta}(I_{\theta}, p_{\theta}) \geq U_{\theta}(I^*_{\theta}, p^*_{\theta}) 
\ \ 
\hbox{and}
\ \
\int_{\Theta} V_{\theta} (I_{\theta}, p_{\theta}) \, d\mu(\theta) \geq \int_{\Theta} V_{\theta} (I^*_{\theta}, p^*_{\theta}) \, d\mu(\theta). 
$$

\medskip

\noindent In addition, at least one of the following two conditions holds: 
$$
\int_{\Theta} V_{\theta} (I_{\theta}, p_{\theta}) \, d\mu(\theta) >\int_{\Theta} V_{\theta} (I^*_{\theta}, p^*_{\theta}) \, d\mu(\theta)
\ \ 
\hbox{or}
\ \ 
\mu \left( \left\{ \theta \in \Theta \,; \, U_{\theta} (I_{\theta}, p_{\theta}) >  U_{\theta}(I^*_{\theta}, p^*_{\theta}) \,\right\} \right) >0.
$$

\medskip

\noindent We consider the following two cases:
\medskip
\begin{enumerate}
\item Suppose that $ \int_{\Theta}  V_{\theta} (I_{\theta}, p_{\theta}) d \mu (\theta)> \int_{\Theta}  V_{\theta} (I^*_{\theta}, p^*_{\theta}) d \mu(\theta) $.
Since $\eta$ is equivalent to $\mu$, and
$$
U_{\theta}(I_{\theta}, p_{\theta}) \geq U_{\theta}(I^*_{\theta}, p^*_{\theta}), \, \hbox{ $\mu$-a.e.,}
$$
    
\noindent this inequality also holds $\eta$-a.e. Moreover,
$$
\alpha \int_{\Theta} U_{\theta} (I_{\theta}, p_{\theta} ) \, d\eta(\theta) + (1-\alpha) \int_{\Theta} V_{\theta} (I_{\theta}, p_{\theta} ) \, d\mu(\theta) >  \alpha \int_{\Theta} U_{\theta} (I^*_{\theta}, p^*_{\theta} ) \, d\eta(\theta) + (1-\alpha) \int_{\Theta} V_{\theta} (I^*_{\theta}, p^*_{\theta} ) \, d\mu(\theta),
$$
   
\noindent contradicting the optimality of $(I^*_{\theta}, p^*_{\theta} ) _{\theta \in \Theta}$ for Problem \eqref{eq:IPO_sup}. Hence,  $  (I^*_{\theta}, p^*_{\theta} ) _{\theta \in \Theta} \in \cI \cP \cO$.

\bigskip    

\item Suppose now that $ \int_{\Theta}  V_{\theta} (I_{\theta}, p_{\theta}) \,d \mu(\theta) = \int_{\Theta}  V_{\theta} (I^*_{\theta}, p^*_{\theta})\, d \mu(\theta) $,
and hence
$$
\mu \left( \left\{ \theta \in \Theta \,; \,\, U_{\theta} (I_{\theta}, p_{\theta}) >  U_{\theta}(I^*_{\theta}, p^*_{\theta}) \,\right\} \right) >0. 
$$

\noindent It follows that,
$$ 
\eta \left( \left\{ \theta \in \Theta \,; \,\, U_{\theta} (I_{\theta}, p_{\theta}) >  U_{\theta}(I^*_{\theta}, p^*_{\theta}) \,\right\} \right) >0 .
$$

\noindent Therefore, 
$$
\alpha \int_{\Theta} U_{\theta} (I_{\theta}, p_{\theta} ) \, d\eta(\theta) + (1-\alpha) \int_{\Theta} V_{\theta} (I_{\theta}, p_{\theta} ) \, d\mu(\theta) >  \alpha \int_{\Theta} U_{\theta} (I^*_{\theta}, p^*_{\theta} ) \, d\eta(\theta) + (1-\alpha) \int_{\Theta} V_{\theta} (I^*_{\theta}, p^*_{\theta} ) \, d\mu(\theta),
$$

\noindent contradicting the optimality of $  (I^*_{\theta}, p^*_{\theta} ) _{\theta \in \Theta}$ for Problem \eqref{eq:IPO_sup}. Hence,  $  (I^*_{\theta}, p^*_{\theta} ) _{\theta \in \Theta} \in \cI \cP \cO$. \qed
\end{enumerate}

\bigskip
\subsection{Proof of Proposition \ref{IR_characterization}}\label{App:proofIRcharacterization}
Let  $(R_{\theta}, p_{\theta})_{\theta \in \Theta} \in \cI\cR$. Then it follows from Definition \ref{def:IR} that
$$
U_{\theta} ( R_{\theta} , p_{\theta} ) \geq U_{\theta}(L_{\theta}, 0), \,\,\, \forall \, \theta \in \Theta,
 \ \ 
\hbox{and}
 \ \ 
\int_\Theta V_\theta (R_\theta, p_\theta) \, d\mu \geq 0.
$$

\noindent Moreover, 
$$
U_{\theta} ( R_{\theta} , p_{\theta} ) 
= - p_{\theta} - \int_0^{ \bar L}  \left[ 1 - g_{\theta} \big( F_{\theta}(l)  \big)   \right] \,\, \frac{ \partial R_{\theta}(l) }{ \partial l } \, dl, 
$$
and
$$
U_{\theta} (L_{\theta} , 0) = - \int_0^{ \bar L} \left( 1-g_{\theta}\left(F_{\theta}(l)\right) \right)\, dl.
$$

\noindent Using (P1), we obtain:
$$
- p_{\theta} - \int_0^{ \bar L }  \left[ 1 - g_{\theta} \big( F_{\theta}(l)  \big)   \right] \,\, \frac{ \partial R_{\theta}(l) }{ \partial l } \,\, dl  
\geq 
- \int_0^{ \bar L }  \left[  1- g_{\theta}\left(F_{\theta}(l)\right) \right]\,\, dl .
$$

\noindent Hence,
$$
p_{\theta} 
\leq  
\int_0^{ \bar L }  \left[ 1 - g_{\theta} \big( F_{\theta}(l)  \big)   \right] \,\, \left [ 1-\frac{ \partial R_{\theta}(l) }{ \partial l } \right] \,\, dl  .
$$

\medskip

Conversely, consider a menu $(R_{\theta}, p_{\theta})_{\theta \in \Theta}$ that satisfies $\displaystyle\int_\Theta V_\theta (R_\theta, p_\theta) d\mu \geq 0$ and
$$
p_{\theta} 
\leq  
\int_0^{ \bar L }  \left[ 1 - g_{\theta} \big( F_{\theta}(l)  \big)   \right] \,\, \left [ 1-\frac{ \partial R_{\theta}(l) }{ \partial l } \right] \,\, dl.
$$ 

The above inequality can be rewritten as:
$$
p_{\theta} 
\leq 
\int_0^{ \bar L }  \left[ 1 - g_{\theta} \big( F_{\theta}(l)  \big)   \right] \,\,dl 
-
\int_0^{ \bar L }  \left[ 1 - g_{\theta} \big( F_{\theta}(l)  \big)   \right] \,\, \frac{ \partial R_{\theta}(l) }{ \partial l }\,\, dl,
$$

\noindent or equivalently,
$$
p_{\theta} + \int_0^{ \bar L }  \left[ 1 - g_{\theta} \big( F_{\theta}(l)  \big)   \right] \,\, \frac{ \partial R_{\theta}(l) }{ \partial l }\,\, dl
\leq 
\int_0^{ \bar L }  \left[ 1 - g_{\theta} \big( F_{\theta}(l)  \big)   \right] \,\,dl   ,
$$

\noindent which implies that
$$
U_{\theta} ( R_{\theta} , p_{\theta} ) \geq U_{\theta}(L_{\theta}, 0), \,\,\,\, \forall \theta \in \Theta .
$$

\medskip

\noindent Consequently, $(R_\theta,p_\theta)_{\theta \in \Theta} \in \cI\cR$ since it satisfies (P1) and (P2) of Definition \ref{def:IR}.  \qed

\bigskip
\subsection{Proof of Proposition \ref{IC_characterization}}
Consider an incentive compatible menu of contracts $( R_{\theta} , p_{\theta} ) _{ \theta \in \Theta} \in \cI \cC$. First, we know that for $\theta \in \Theta$,
\begin{equation}
\label{eqStar}
U_{\theta} ( R_{\theta} , p_{\theta} )
=
- p_{\theta} - \int_0^{ \bar L }  \left[ 1 - g_{\theta} \big( F_{\theta}(l)  \big)   \right] \,\, \frac{ \partial R_{\theta}(l) }{ \partial l } \,\, dl .
\end{equation}

\noindent We can also express the utility of a type-$\theta$ agent using $(R_{\theta ^\prime}, p_{\theta^ \prime})$, where $\theta^\prime  \in \Theta,\,\, \theta ^\prime\neq \theta$: 
\begin{align*}
U_{\theta} ( R_{\theta^\prime} , p_{\theta^\prime} )
&= - p_{\theta^\prime} - \int_0^{ \bar L }  \left[ 1 - g_{\theta} \big( F_{\theta}(l)  \big)   \right] \,\, \frac{ \partial R_{\theta^\prime}(l) }{ \partial l } \, dl  .
\end{align*}

\noindent Moreover, it follows from Remark \ref{Re:chain_rule} that:
\begin{align*}
\left| \frac{\partial U_{\theta} (R_{\theta^\prime}, p_{\theta^\prime})}{\partial \theta} \right| 
&= \left| \int_0^{\bar L} 
\left[ \frac{ \partial g_{\theta} }{ \partial \theta }(F_{\theta}(l)) + g^\prime _{\theta} (F_{\theta}(l)) \frac{\partial F_{\theta}}{\partial \theta}(l) \right] \, \frac{ \partial R_{\theta^\prime}(l) }{ \partial l } \,\, dl
\right|.
\end{align*}

\noindent Since $R_{\theta^\prime } \in \cR$, and by Assumption \ref{Ass:cdf_family} and Assumption \ref{Ass:distortion_family}, we obtain
$$
\left| 
\frac{\partial U_{\theta} (R_{\theta^\prime}, p_{\theta^\prime})}{\partial \theta} 
\right| \leq
\int_0^{\bar L } 
\left| \left[
\frac{ \partial g_{\theta} }{ \partial \theta }(F_{\theta}(l)) 
+ 
g^\prime_{\theta} (F_{\theta}(l)) \frac{\partial F_{\theta}}{\partial \theta}(l)  \,\, 
\right] 
\frac{ \partial R_{\theta^\prime}(l) }{ \partial l } \ 
\right|  dl
\leq  ( c + c' \delta) \bar L < +\infty. 
$$

\medskip

\noindent Hence, $U_{ \theta } ( R_{\theta^\prime}, p_{\theta^\prime})$ is Lipschitz continuous in $\theta$.
By the envelope theorem (e.g.,  \cite{milgrom2002envelope}), for any $\theta \in \Theta$, we have:
\begin{equation}
\label{eqStarStar}
\begin{split}
U_{\theta}(R_{\theta}, p_{\theta}) 
&= U_{ \underline{ \theta }} (R_{ \underline{ \theta} }, p_{ \underline{ \theta } })  + \int_{\underline{\theta} } ^ {\theta} \frac{\partial U_{s'}(R_s, p_s) }{ \partial s'}\bigg|_{s'=s} \,\, ds \\
&= - p_{\underline{ \theta }} - \int_0^{ \bar L}  \left[ 1 - g_{ \underline{ \theta }}  \big( F_{\underline{ \theta }}(l)  \big)   \right] \,\, \frac{ \partial R_{\underline{ \theta } }(l) }{ \partial l } \,\, dl 
+
\int_{\underline{\theta} } ^ {\theta} \int_0 ^{\bar L}  \left[ \frac{\partial g_s}{ \partial s} (F_s(l)) + g'_s(F_s(l)) \frac{\partial F_s(l)}{\partial s} \right] \frac{\partial R_s(l)}{\partial l } \, dl \, ds.
\end{split}
\end{equation}

\noindent Equating \eqref{eqStar} and \eqref{eqStarStar} yields
\begin{align*}
p_{\theta} = 
p_{\underline {\theta} } \,\, &+\int_0^{ \bar L}  \left[ 1 - g_{ \underline{ \theta }} \big( F_{\underline{ \theta }}(l)  \big)   \right] \,\, \frac{ \partial R_{\underline{\theta}}(l) }{ \partial l } \,\, dl 
-
\int_{ \underline{\theta}} ^ { \theta} \int_0 ^{\bar L}  \left[ \frac{\partial g_s}{ \partial s} (F_s(l)) + g'_s(F_s(l)) \frac{\partial F_s(l)}{\partial s} \right] 
\frac{\partial R_s(l)}{\partial l } \, dl \, ds \\
&\quad  -
\int_0^{ \bar L }  \left[ 1 - g_{\theta} \big( F_{\theta}(l)  \big)   \right] \,\, \frac{ \partial R_{\theta}(l) }{ \partial l } \,\, dl.  
\end{align*} \qed

\bigskip
\subsection{Proof of Proposition \ref{prop:IC_iff}}
We start by assuming that $ \{ R_{\theta} \}_{ \theta \in \Theta} $ is submodular and $ \{ p_{ \theta } \}_{ \theta \in \Theta}$ satisfies \eqref{eq:premium}. That is,
\begin{align*}
p_{\theta} =  p_{\underline {\theta} } \,\, 
&+
\int_0^{ \bar L}  \left[ 1 - g_{ \underline{ \theta }} \big( F_{\underline{ \theta }}(l)  \big)   \right] \,\, \frac{ \partial R_{\underline{\theta}}(l) }{ \partial l } \,\, dl  - \int_{ \underline{\theta}} ^ { \theta} \int_0 ^{\bar L}  \left[ \frac{\partial g_s}{ \partial s} (F_s(l)) + g'_s(F_s(l))  \frac{\partial F_s(l)}{\partial s} \right] \frac{\partial R_s(l)}{\partial l } \, dl \, ds \\
&-
\int_0^{ \bar L }  \left[ 1 - g_{\theta} \big( F_{\theta}(l)  \big)   \right] \,\, \frac{ \partial R_{\theta}(l) }{ \partial l } \,\, dl.  
\end{align*}

\noindent We show that the menu $(R_{\theta}, p_{\theta})_{\theta \in \Theta}$ is incentive compatible. To do so, we aim to show that for $ \theta,\theta^\prime \in \Theta$, and $\theta^\prime \neq \theta$, the following holds:
$$
U_{\theta} (R_{\theta}, p_{\theta})  \geq U_{\theta}(R_{\theta^\prime} , p_{\theta^\prime}) .
$$

\medskip

We first consider $\theta, \theta ^ \prime \in \Theta$ such that, $\theta < \theta ^\prime$. We have: 
\begin{align*}
U_{\theta}(R_{\theta^\prime} , p_{\theta^\prime})  
&=   
- p_{\theta^\prime} - \int_0^{ \bar L }  \left[ 1 - g_{\theta} \big( F_{\theta}(l)  \big)   \right] \,\, \frac{ \partial R_{\theta^\prime}(l) }{ \partial l } \,\, dl.
\end{align*}

\noindent Substituting $p_{\theta^\prime}$ by the corresponding expression given by \eqref{eq:premium} for $\theta^\prime \in \Theta$, we obtain the following:
\begin{align*}
U_{\theta}(R_{\theta^\prime} , p_{\theta^\prime})
&= - p_{\underline{\theta}} - \int_0^{ \bar L}  \left[ 1 - g_{ \underline{\theta}} \big( F_{\underline{\theta}}(l)  \big)   \right] \,\, \frac{ \partial R_{\underline{\theta}}(l) }{ \partial l } \,\, dl +  \int_{ \underline{\theta}} ^ { \theta^\prime} \int_0 ^{\bar L}   \left[  \frac{\partial g_s}{ \partial s} (F_s(l)) + g'_s(F_s(l))  \frac{\partial F_s(l)}{\partial s} \right]  \frac{\partial R_s(l)}{\partial l } \, dl \, ds \\
&\quad +  
\int_0^{ \bar L } \left[ g_{\theta}\big(F_{\theta}(l)\big) - g_{\theta^\prime }\big( F_{\theta^ \prime}(l)\big)\right] \frac{ \partial R_{\theta ^\prime}(l) }{ \partial l } \,dl.
\end{align*}

\noindent Since $\underline{\theta} \leq \theta < \theta^\prime$, the third term can be written as
$$
\int_{ \underline{\theta}} ^ { \theta } \int_0 ^{\bar L}   \left[  \frac{\partial g_s}{ \partial s} (F_s(l)) + g'_s(F_s(l))  \frac{\partial F_s(l)}{\partial s} \right]  \frac{\partial R_s(l)}{\partial l } \, dl \, ds 
+
\int_{ \theta} ^ { \theta^\prime} \int_0 ^{\bar L}   \left[  \frac{\partial g_s}{ \partial s} (F_s(l)) + g'_s(F_s(l))  \frac{\partial F_s(l)}{\partial s} \right]  \frac{\partial R_s(l)}{\partial l }\, dl \, ds .
$$

\medskip

\noindent Hence, 
\begin{align*}
U_{\theta}(R_{\theta^\prime} , p_{\theta^\prime})
&= U_{\theta} (R_{\theta}, p_{\theta})   +    \int_{ \theta} ^ { \theta^\prime} \int_0 ^{\bar L}   \left[  \frac{\partial g_s}{ \partial s} (F_s(l)) + g'_s(F_s(l))  \frac{\partial F_s(l)}{\partial s} \right]  \frac{\partial R_s(l)}{\partial l}\, dl \, ds \\
&\qquad +  \int_0^{ \bar L } \left[ g_{\theta}\big(F_{\theta}(l)\big) - g_{\theta^\prime }\big( F_{\theta^ \prime}(l)\big)\right] \frac{ \partial R_{\theta ^\prime}(l) }{ \partial l } \,dl.
\end{align*}

\noindent That is, 
\begin{align*}
U_{\theta}(R_{\theta^\prime} , p_{\theta^\prime})
&= U_{\theta} (R_{\theta}, p_{\theta})   +    \int_{ \theta} ^ { \theta^\prime} \int_0 ^{\bar L}   \left[  \frac{\partial g_s}{ \partial s} (F_s(l)) + g'_s(F_s(l))  \frac{\partial F_s(l)}{\partial s} \right]  \frac{\partial R_s(l)}{\partial l} \, dl \, ds \\
&\qquad +  \int_0^{ \bar L }\int_{\theta}^{\theta^\prime} - \frac{\partial}{\partial s} \big( g_s \circ F_s \big) (l) ds \,\, \frac{ \partial R_{\theta^\prime }(l) }{ \partial l } \, dl\\\
&=  U_{\theta} (R_{\theta}, p_{\theta})   +    \int_{ \theta} ^ { \theta^\prime} \int_0 ^{\bar L}   \left[  \frac{\partial g_s}{ \partial s} (F_s(l)) + g'_s(F_s(l))  \frac{\partial F_s(l)}{\partial s} \right]  \frac{\partial R_s(l)}{\partial l } \, dl \, ds \\
&\qquad - \int_{ \theta} ^ { \theta^\prime} \int_0 ^{\bar L}   \left[  \frac{\partial g_s}{ \partial s} (F_s(l)) + g'_s(F_s(l))  \frac{\partial F_s(l)}{\partial s} \right]  \frac{\partial R_{\theta^\prime}(l)}{\partial l } \, dl \, ds \\
&= U_{\theta} (R_{\theta}, p_{\theta})   +    \int_{ \theta} ^ { \theta^\prime} \int_0 ^{\bar L}   \left[  \frac{\partial g_s}{ \partial s} (F_s(l)) + g'_s(F_s(l))  \frac{\partial F_s(l)}{\partial s} \right] \left[  \frac{\partial R_s(l)}{\partial l } -  \frac{\partial R_{\theta^\prime}(l)}{\partial l } \right]\, dl \, ds \\
&\leq U_{\theta} (R_{\theta}, p_{\theta}) . 
\end{align*}

\medskip

The above inequality holds for the following two reasons. First, because $\{ R_{\theta }\}_{\theta \in \Theta}$ is submodular, it follows that for $\theta < \theta^\prime$, we have $ \frac{\partial R_{\theta}(l)}{\partial l } $ is non-increasing in $\theta$. Hence, for any $s  \in [\theta, \theta^\prime]$: 
$$
\frac{\partial R_s(l)}{\partial l } - \frac{ \partial R_{\theta^\prime}(l) }{ \partial l }  \geq 0.
$$

\noindent Second, we know from Subsection \ref{sub:typeordering} that $\frac{\partial g_s}{ \partial s} (F_s(l)) + g'_s(F_s(l)) \frac{\partial F_s(l)}{\partial s}  \leq 0$. 
Therefore,
$$
\left[   \frac{\partial g_s}{ \partial s} (F_s(l)) + g'_s(F_s(l)) \frac{\partial F_s(l)}{\partial s} \right] \cdot \left[ \frac{\partial R_s(l)}{\partial l } - \frac{ \partial R_{\theta^\prime}(l) }{ \partial l } \right]  \leq 0 . 
$$

\medskip

We can similarly prove that for $\theta < \theta^\prime$, $U_{\theta^\prime} (R_{\theta}, p_{\theta}) \leq U_{\theta^\prime} (R_{\theta^\prime}, p_{\theta^\prime}) $. Therefore, we conclude that  $(R_{\theta}, p_{\theta})_{\theta \in \Theta} \in \cI \cC$. The converse follows immediately from Proposition \ref{IC_characterization}. \qed

\bigskip
\subsection{Proof of Proposition \ref{prop:IR_lowest_type}}
Let $(R_{\theta}, p_{\theta})_{\theta \in \Theta} \in \cI \cC$ be such that $\int_\Theta V_\theta (R_\theta, p_\theta) d\mu \geq 0$. Assume that for the lowest type $\underline{\theta}$, the contract $( R_{\underline{\theta}}, p_{\underline{\theta}})$  satisfies the agent's participation (P1) of Definition \ref{def:IR}. We show that $(R_{\theta}, p_{\theta})_{\theta \in \Theta} \in \cI\cR$. 
Since (P2) of Definition \ref{def:IR} is satisfied, it remains to show that
$$
U_{\theta} (R_{\theta}, p_{\theta}) \geq U_{\theta} (L_{\theta}, 0), \ \forall \theta \in \Theta. 
$$

\noindent We have seen by the envelope theorem that:
\begin{align*}
U_{\theta} (R_{\theta}, p_{\theta}) 
&= U_{ \underline{ \theta } }(R_{ \underline{\theta} }, p_{ \underline{\theta} })  
+ \int_{\underline{\theta}} ^ {\theta} \frac{\partial U_{s '}(R_s, p_s) }{ \partial s'} \bigg|_{s'=s}\, ds \\
&\geq U_{\underline{\theta}} (L_{\underline{\theta}}, 0) +  \int_{\underline{\theta}} ^ {\theta} \frac{\partial U_{s'}(R_s, p_s) }{ \partial s'} \bigg|_{s'=s}\, ds \\
&= U_{\underline{\theta}} (L_{\underline{\theta}}, 0) +  \int_{\underline{\theta}} ^ {\theta} \int_0^{\bar L} \left[   \frac{\partial g_s}{ \partial s} (F_s(l)) + g'_s(F_s(l)) \frac{\partial F_s(l)}{\partial s} \right]  \frac{\partial R_s(l)}{\partial l } \,dl \, ds.
\end{align*}

\noindent Since $  \frac{\partial g_s}{ \partial s} (F_s(l))+
g^\prime _{s} (F_{s}(l)) \frac{\partial F_{s}}{\partial s}(l)  \leq 0 $, and for any $R_{s} \in \cR$,  $ 0 \leq \frac{ \partial R_{s}(l) }{ \partial l }  \leq 1$, we have:
$$
\left[   \frac{\partial g_s}{ \partial s} (F_s(l)) + g'_s(F_s(l)) \frac{\partial F_s(l)}{\partial s} \right]  \frac{\partial R_s(l)}{\partial l } 
\geq  
\frac{\partial g_s}{ \partial s} (F_s(l)) + g'_s(F_s(l)) \frac{\partial F_s(l)}{\partial s}  . 
$$

\noindent Hence,
\begin{align*}
U_{\theta} (R_{\theta}, p_{\theta}) 
&\geq 
U_{\underline{\theta}} (L_{\underline{\theta}}, 0) +  \int_{\underline{\theta}} ^ {\theta}  \int_0^{\bar L } \left[ \frac{\partial g_s}{ \partial s} (F_s(l)) + g'_s(F_s(l)) \frac{\partial F_s(l)}{\partial s}  \right] dl \, ds \\
&=
U_{\underline{\theta}} (L_{\underline{\theta}}, 0) + \int_{\underline{\theta}} ^ {\theta}  \frac{\partial U_{s'} (L_{s},0)} {\partial s'} \bigg|_{s'=s}  ds 
= 
U_{\theta} (L_{\theta}, 0) . 
\end{align*}

\noindent This implies that $(R_{\theta}, p_{\theta})_{\theta \in \Theta} \in \cI \cR$. Conversely, if $(R_{\theta}, p_{\theta})_{\theta \in \Theta} \in \cI \cR$ then it is trivial that $(R_{\underline{\theta}}, p_{\underline{\theta}})$  satisfies the agent's participation (P1) of Definition \ref{def:IR}. \qed

\bigskip
\subsection{Proof of Corollary \ref{cor:IRandIC_iff}}
Consider a collection of submodular retention functions $\{R_{\theta} \}_{\theta \in \Theta}$. Assume that $ \int_\Theta V_\theta (R_\theta, p_\theta)\, d\mu(\theta) \geq 0 $, and $\{ p_{\theta} \}_{\theta \in \Theta}$ satisfies \eqref{eq:premium} with,
$$
p_{ \underline{\theta} }
\leq
\int_0 ^{ \bar L} \left[ \, 1- g_{ \underline{\theta} }( F_{ \underline{ \theta} }(l) ) \, \right] \left[ 1- \frac{ \partial R_{ \underline{ \theta } } (l) }{\partial l} \right] \, dl .
$$

It follows from Proposition \ref{prop:IC_iff} that $(R_{\theta}, p_{\theta})_{\theta \in \Theta} \in \cI\cC $.
It remains to show that $(R_{\theta}, p_{\theta})_{\theta \in \Theta} \in \cI\cR$. Since $\int_\Theta V_\theta (R_\theta, p_\theta)\, d\mu(\theta) \geq 0$ and by Proposition \ref{prop:IR_lowest_type}, it is enough to show that condition (P1) of Definition \ref{def:IR} holds for $(R_{ \underline{\theta} }, p_{ \underline{\theta} } )$.  
We have:
\begin{align*}
U_{ \underline{ \theta } }(R_{ \underline{\theta} }, p_{ \underline{\theta} }) 
&=  -  p_{ \underline{\theta} } - \int_0^{\bar L} \left[ \, 1- g_{ \underline{\theta} }( F_{ \underline{ \theta} }(l) ) \, \right]  \frac{ \partial R_{ \underline{ \theta } } (l) }{\partial l} \, dl  \\
&\geq - \int_0^{\bar L} \left[ \, 1- g_{ \underline{\theta} }( F_{ \underline{ \theta} }(l) ) \, \right] \left[ 1- \frac{ \partial R_{ \underline{ \theta } } (l) }{\partial l} \right] \, dl 
- \int_0^{\bar L} \left[ \, 1- g_{ \underline{\theta} }( F_{ \underline{ \theta} }(l) ) \, \right]  \frac{ \partial R_{ \underline{ \theta } } (l) }{\partial l} \, dl  \\
&= -  \int_0^{\bar L} \left[ \, 1- g_{ \underline{\theta} }( F_{ \underline{ \theta} }(l) ) \, \right] dl 
= U_{ \underline{ \theta } }(L_{ \underline{\theta} },0) .
\end{align*}

\medskip

\noindent Hence, $(R_{\theta}, p_{\theta})_{\theta \in \Theta} \in \cI\cR$. Conversely, assume that $(R_{\theta}, p_{\theta})_{\theta \in \Theta} \in \cI\cR \cap \cI\cC$. It follows from Proposition \ref{prop:IC_iff}, that $\{p_{\theta}\}_{\theta \in \Theta} $ satisfies \eqref{eq:premium}. 
Moreover, by individual rationality we know that
$$
\int_\Theta V_\theta (R_\theta, p_\theta) \,d\mu(\theta) \geq 0.
$$

\noindent Additionally, by Proposition \ref{prop:IR_lowest_type}, $(R_{ \underline{\theta} }, p_{ \underline{\theta} })$ satisfies (P1) of Definition \ref{def:IR}. That is,
$$
U_{ \underline{ \theta } }(R_{ \underline{\theta} }, p_{ \underline{\theta} })
\geq 
U_{ \underline{ \theta } }(L_{ \underline{\theta} },0),
$$

\noindent which implies that
$$
p_{ \underline{\theta} }
\leq
\int_0 ^{ \bar L_{\theta} } \left[ \, 1- g_{ \underline{\theta} }( F_{ \underline{ \theta} }(l) ) \, \right] \left[ 1- \frac{ \partial R_{ \underline{ \theta } } (l) }{\partial l} \right] \, dl.
$$ 
\qed

\bigskip
\subsection{Proof of Proposition \ref{prop:social_welfare_expression}}\label{App:prop:social_welfare_expression}
The social welfare function is given by:
\begin{align*}
W_{\eta, \alpha} \left( (R_{\theta},  p_{\theta})_{\theta \in \Theta} \right) 
&= \alpha \int_{\Theta} U_{\theta} (R_{\theta}, p_{\theta} ) \, d\eta(\theta)  + (1-\alpha) \int_{\Theta} V_{\theta} (R_{\theta}, p_{\theta} ) \, d \mu(\theta)\\
&=  \alpha \int_{\Theta} U_{\theta} (R_{\theta}, p_{\theta} ) \,\frac{dQ_{\eta}(\theta)}{dQ(\theta)} d \mu (\theta) + 
(1-\alpha) \int_{\Theta} V_{\theta} (R_{\theta}, p_{\theta} ) \,  d \mu (\theta).
\end{align*}
That is, 
\begin{align*}
W_{\eta, \alpha} \left( (R_{\theta},  p_{\theta})_{\theta \in \Theta} \right) 
&= \alpha \int_{\Theta} \left[ - p_{\theta} - \int_0^{\bar L} \left[ 1- g_{\theta} (F_{\theta}(l)) \right] \frac{ \partial R_{\theta}(l) }{ \partial l } dl \right]  \frac{dQ_{\eta}(\theta)}{dQ(\theta)}  \,d \mu(\theta)   \\
&\quad +
(1-\alpha)  \int_{\Theta} \left[   p_{\theta} - \int_0^{ \bar L }  \left [ 1 - g^{In}  \left( F_{\theta}(l) \right) \right]\left[ 1 - \frac{ \partial R_{\theta}(l) }{ \partial l } \right]\, dl \right] \,d \mu (\theta) \\
&=  \int_{\Theta} \int_0^{\bar L} \left[ (1-\alpha)\left[ 1 - g^{In}(F_{\theta}(l))  \right]  - \alpha  \left[ 1- g_{\theta} (F_{\theta}(l)) \right]  \frac{dQ_{\eta}(\theta)}{dQ(\theta)} \right] \cdot \frac{ \partial R_{\theta}(l) }{ \partial l } \, dl \, d \mu(\theta) \\
&\quad +
\int_{\Theta} \left[(1-\alpha) - \alpha \frac{dQ_{\eta}(\theta)}{dQ(\theta)}  \right] \cdot p_{\theta} \, d \mu(\theta )  - (1-\alpha) \int_{\Theta} \int_0^{\bar L} \left[ 1 - g^{In}(F_{\theta}(l))  \right] \,dl \, d \mu (\theta).
\end{align*}

\medskip

For every $(R_{\theta}, p_{\theta})_{\theta \in \Theta} \in \cI\cC$ we know that the premium $p_{\theta}$ satisfies \eqref{eq:premium}. Substituting this premium into the social welfare function, we obtain 
\begin{align*}
&W_{\eta, \alpha} \left( (R_{\theta},  p_{\theta})_{\theta \in \Theta} \right) \\
&\quad =
\int_{\Theta} \int_0^{\bar L} \left[  (1-\alpha)\left[ 1 - g^{In}(F_{\theta}(l))  \right]  - \alpha  \left[ 1- g_{\theta} (F_{\theta}(l)) \right]  \frac{dQ_{\eta}(\theta)}{dQ(\theta)} \right] \cdot \frac{ \partial R_{\theta}(l) }{ \partial l } \, dl \, d \mu (\theta) \\
&\quad +
\int_{\Theta} \left[(1-\alpha) - \alpha \frac{dQ_{\eta}(\theta)}{dQ(\theta)}  \right] \cdot p_{\underline{\theta}} \, d\mu(\theta ) \\
&\quad +
\int_{\Theta} \left[(1-\alpha) - \alpha \frac{dQ_{\eta}(\theta)}{dQ(\theta)}  \right] \int_0^{ \bar L}  \left[ 1 - g_{ \underline{ \theta }} \big( F_{\underline{ \theta} } (l)  \big)   \right] \, \frac{ \partial R_{\underline{ \theta }}(l) }{ \partial l } \, dl\, d\mu(\theta ) \\
&\quad - 
\int_{\Theta} \left[(1-\alpha) - \alpha \frac{dQ_{\eta}(\theta)}{dQ(\theta)}  \right] \int_{\underline{\theta}} ^ {\theta} \int_0 ^{\bar L} \left[ \frac{\partial g_s}{ \partial s} (F_s(l)) + g'_s(F_s(l)) \frac{\partial F_s(l)}{\partial s} \right] 
\frac{\partial R_s(l)}{\partial l } \, dl \, ds \,d\mu(\theta) \\
&\quad - 
\int_{\Theta} \left[(1-\alpha) - \alpha \frac{dQ_{\eta}(\theta)}{dQ(\theta)}  \right]  \int_0^{ \bar L }  \left[ 1 - g_{\theta} \big( F_{\theta}(l)  \big)   \right] \, \frac{ \partial R_{\theta}(l) }{ \partial l } \, dl \, d\mu(\theta) \\
&\quad -
(1-\alpha) \int_{\Theta} \int_0^{\bar L} \left[ 1 - g^{In}(F_{\theta}(l))  \right] \,dl d\mu(\theta) \\ &\\
&=
\int_{\Theta} \int_0^{\bar L} \left[  (1-\alpha)\left[ 1 - g^{In}(F_{\theta}(l))  \right]  - \alpha  \left[ 1- g_{\theta} (F_{\theta}(l)) \right]  \frac{dQ_{\eta}(\theta)}{dQ(\theta)} \right] \cdot \frac{ \partial R_{\theta}(l) }{ \partial l } \, dl \, d\mu(\theta) \\
&\quad + 
\int_{\Theta} \left[(1-\alpha) - \alpha \frac{dQ_{\eta}(\theta)}{dQ(\theta)}  \right] \cdot  \left[ p_{\underline{\theta}} +  \int_0^{ \bar L}  \left[ 1 - g_{ \underline{ \theta }} \big( F_{\underline{ \theta} } (l)  \big)   \right] \, \frac{ \partial R_{\underline{ \theta }}(l) }{ \partial l } \, dl \right]\, d\mu(\theta ) \\
&\quad - 
\int_{\Theta} \left[(1-\alpha) - \alpha \frac{dQ_{\eta}(\theta)}{dQ(\theta)}  \right] \int_{\underline{\theta}} ^ {\theta} \int_0 ^{\bar L}  \left[ \frac{\partial g_s}{ \partial s} (F_s(l)) + g'_s(F_s(l)) \frac{\partial F_s(l)}{\partial s} \right] 
\frac{\partial R_s(l)}{\partial l } \, dl \, ds \,d\mu(\theta) \\
&\quad - 
\int_{\Theta} \left[(1-\alpha) - \alpha \frac{dQ_{\eta}(\theta)}{dQ(\theta)}  \right]  \int_0^{ \bar L }  \left[ 1 - g_{\theta} \big( F_{\theta}(l)  \big)   \right] \, \frac{ \partial R_{\theta}(l) }{ \partial l } \, dl \, d\mu(\theta) \\
&\quad - 
(1-\alpha) \int_{\Theta} \int_0^{\bar L} \left[ 1 - g^{In}(F_{\theta}(l))  \right] \,dl \, d\mu(\theta).
\end{align*}

\medskip

\noindent Simplifying the first and fourth terms gives
\begin{align*}
&W_{\eta, \alpha} \left( (R_{\theta}, p_{\theta})_{\theta \in \Theta} \right) \\
&\quad = \int_{\Theta} \int_0^{\bar L } (1-\alpha) \left[ g_{\theta} (F_{\theta}(l))  - g^{In} (F_{\theta}(l))  \right] \frac{ \partial R_{\theta}(l) }{ \partial l } \, dl \, d\mu(\theta) \\ 
&\quad +
\int_{\Theta} \left[(1-\alpha) - \alpha \frac{dQ_{\eta}(\theta)}{dQ(\theta)}  \right] \cdot  \left[ p_{\underline{\theta}} +  \int_0^{ \bar L}  \left[ 1 - g_{ \underline{ \theta }} \big( F_{\underline{ \theta} } (l)  \big)   \right] \, \frac{ \partial R_{\underline{ \theta }}(l) }{ \partial l } \, dl \right]\, d\mu(\theta ) \\
&\quad - 
\int_{\Theta} \left[(1-\alpha) - \alpha \frac{dQ_{\eta}(\theta)}{dQ(\theta)}  \right] \int_{\underline{\theta}} ^ {\theta} \int_0 ^{\bar L}  \left[ \frac{\partial g_s}{ \partial s} (F_s(l)) + g'_s(F_s(l)) \frac{\partial F_s(l)}{\partial s} \right] \frac{\partial R_s(l)}{\partial l } \, dl \, ds \, d\mu(\theta) \\
&\quad -
(1-\alpha) \int_{\Theta} \int_0^{\bar L} \left[ 1 - g^{In}(F_{\theta}(l))  \right] \,dl \,d\mu (\theta) \\
&\\
&= \int_{\Theta} \int_0^{\bar L } (1-\alpha) \left[ g_{\theta} (F_{\theta}(l))  - g^{In} (F_{\theta}(l))  \right] \frac{ \partial R_{\theta}(l) }{ \partial l } \, dl \, d\mu (\theta) \\ 
&\quad+
\int_{\Theta} \left[(1-\alpha) - \alpha \frac{dQ_{\eta}(\theta)}{dQ(\theta)}  \right] \cdot  \left[ p_{\underline{\theta}} +  \int_0^{ \bar L}  \left[ 1 - g_{ \underline{ \theta }} \big( F_{\underline{ \theta} } (l)  \big)   \right] \, \frac{ \partial R_{\underline{ \theta }}(l) }{ \partial l } \, dl \right]\, d\mu (\theta ) \\
&\quad -
(1-\alpha) \int_{\Theta}\int_{\underline{\theta}} ^ {\theta} \int_0 ^{\bar L}  \left[\frac{\partial g_s}{ \partial s} (F_s(l)) + g'_s(F_s(l)) \frac{\partial F_s(l)}{\partial s} \right] \frac{\partial R_s(l)}{\partial l } \, dl \, ds \, d\mu(\theta) \\
&\quad +
\alpha \int_{\Theta}\int_{\underline{\theta}} ^ {\theta} \int_0 ^{\bar L}  \left[ \frac{\partial g_s}{ \partial s} (F_s(l)) + g'_s(F_s(l)) \frac{\partial F_s(l)}{\partial s} \right] \frac{\partial R_s(l)}{\partial l } \, dl \, ds\, dQ_{\eta}(\theta)
\\
&\quad -
(1-\alpha) \int_{\Theta} \int_0^{\bar L} \left[ 1 - g^{In}(F_{\theta}(l))  \right] \,dl \, d\mu(\theta).
\end{align*}

\medskip

\noindent Integrating the third and fourth terms by parts yields
\begin{align*}
&W_{\eta, \alpha} \left( (R_{\theta}, p_{\theta})_{\theta \in \Theta} \right) \\
&\quad=\int_{\Theta} \int_0^{\bar L } (1-\alpha) \left[ g_{\theta} (F_{\theta}(l))  - g^{In} (F_{\theta}(l))  \right] \frac{ \partial R_{\theta}(l) }{ \partial l } \, dl \, d\mu(\theta) \\ 
&\quad + 
\int_{\Theta} \left[(1-\alpha) - \alpha \frac{dQ_{\eta}(\theta)}{dQ(\theta)}  \right] \cdot  \left[ p_{\underline{\theta}} +  \int_0^{ \bar L}  \left[ 1 - g_{ \underline{ \theta }} \big( F_{\underline{ \theta} } (l)  \big)   \right] \, \frac{ \partial R_{\underline{ \theta }}(l) }{ \partial l } \, dl \right]\, d\mu(\theta ) \\
&\quad +
(1-\alpha) \left[ \int_{\underline{\theta}} ^{\theta} \int_0^{\bar L} \left[ \frac{\partial g_s}{ \partial s} (F_s(l)) + g'_s(F_s(l)) \frac{\partial F_s(l)}{\partial s} \right] \frac{\partial R_s(l)}{\partial l } \, dl \, ds  \cdot \bar Q(\theta)  \right] _{\theta=\underline{\theta}}^{\theta = \bar \theta} \\
&\quad -
(1-\alpha) \int_{\Theta} \int_0^{\bar L}  \left[ \frac{\partial g_{\theta}}{ \partial \theta} (F_{\theta}(l)) + g'_{\theta}(F_{\theta}(l)) \frac{\partial F_{\theta}(l)}{\partial \theta} \right] \frac{\partial R_{\theta}(l)}{\partial l } \, dl \bar Q(\theta) d\theta \\
&\quad -
\alpha \left[ \int_{\underline{\theta}} ^ {\theta} \int_0 ^{\bar L}  
\left[ \frac{\partial g_s}{ \partial s} (F_s(l)) + g'_s(F_s(l)) \frac{\partial F_s(l)}{\partial s} \right] 
\frac{\partial R_s(l)}{\partial l } \, dl \, ds \, \bar Q_{\eta} (\theta) \right]_{\theta = \underline{\theta}} ^{\theta = \bar \theta} \\
&\quad + 
\alpha \int_{\Theta} \int_0^{\bar L}  \left[ \frac{\partial g_{\theta}}{ \partial \theta} (F_{\theta}(l)) + g'_{\theta}(F_{\theta}(l)) \frac{\partial F_{\theta}(l)}{\partial \theta} \right] \frac{\partial R_{\theta}(l)}{\partial l } \, dl \, \bar Q_{\eta}(\theta) \, d\theta \\
&\quad -
(1-\alpha) \int_{\Theta} \int_0^{\bar L} \left[ 1 - g^{In}(F_{\theta}(l))  \right] \,dl\, d\mu(\theta) \\
&\\
&=
\int_{\Theta} \int_0^{\bar L } (1-\alpha) \left[ g_{\theta} (F_{\theta}(l))  - g^{In} (F_{\theta}(l))  \right] \frac{ \partial R_{\theta}(l) }{ \partial l } \, dl \, d\mu(\theta) \\ 
&\quad + 
\int_{\Theta} \left[(1-\alpha) - \alpha \frac{dQ_{\eta}(\theta)}{dQ(\theta)}  \right] \cdot  \left[ p_{\underline{\theta}} +  \int_0^{ \bar L}  \left[ 1 - g_{ \underline{ \theta }} \big( F_{\underline{ \theta} } (l)  \big)   \right] \, \frac{ \partial R_{\underline{ \theta }}(l) }{ \partial l } \, dl \right]\, d\mu(\theta ) \\
&\quad -
(1-\alpha) \int_{\Theta} \int_0^{\bar L}  \left[ \frac{\partial g_{\theta}}{ \partial \theta} (F_{\theta}(l)) + g'_{\theta}(F_{\theta}(l)) \frac{\partial F_{\theta}(l)}{\partial \theta} \right] \frac{\partial R_{\theta}(l)}{\partial l } \, dl \,\bar Q(\theta)\, d\theta \\
&\quad + 
\alpha \int_{\Theta} \int_0^{\bar L}  \left[\frac{\partial g_{\theta}}{ \partial \theta} (F_{\theta}(l)) + g'_{\theta}(F_{\theta}(l)) \frac{\partial F_{\theta}(l)}{\partial \theta} \right] \frac{\partial R_{\theta}(l)}{\partial l } \, dl\, \bar Q_{\eta}(\theta) \, d\theta \\
&\quad
- (1-\alpha) \int_{\Theta} \int_0^{\bar L} \left[ 1 - g^{In}(F_{\theta}(l))  \right] \,dl \,d\mu(\theta) .
\end{align*}

\medskip

\noindent Looking at the second term, we can see that $ \left[ p_{\underline{\theta}} +  \int_0^{ \bar L}  \left[ 1 - g_{ \underline{ \theta }} \big( F_{\underline{ \theta} } (l)  \big)   \right] \, \frac{ \partial R_{\underline{ \theta }}(l) }{ \partial l } \, dl \right]$ is independent of $\theta$. Hence,
\begin{align*}
&W_{\eta, \alpha} \left( (R_{\theta}, p_{\theta})_{\theta \in \Theta} \right) \\
&\quad =
\int_{\Theta} \int_0^{\bar L } (1-\alpha) \left[ g_{\theta} (F_{\theta}(l))  - g^{In} (F_{\theta}(l))  \right] \frac{ \partial R_{\theta}(l) }{ \partial l } \, dl \, d\mu(\theta) \\ 
&\quad + 
\left[ p_{\underline{\theta}} +  \int_0^{ \bar L}  \left[ 1 - g_{ \underline{ \theta }} \big( F_{\underline{ \theta} } (l)  \big)   \right] \, \frac{ \partial R_{\underline{ \theta }}(l) }{ \partial l } \, dl \right] \cdot \int_{\Theta} \left[(1-\alpha) - \alpha \frac{dQ_{\eta}(\theta)}{dQ(\theta)}  \right] \, d\mu(\theta ) \\
&\quad -
(1-\alpha) \int_{\Theta} \int_0^{\bar L}  \left[\frac{\partial g_{\theta}}{ \partial \theta} (F_{\theta}(l)) + g'_{\theta}(F_{\theta}(l)) \frac{\partial F_{\theta}(l)}{\partial \theta} \right] 
\frac{\partial R_{\theta}(l)}{\partial l } \, dl \, \bar Q(\theta)\, d\theta \\
&\quad + 
\alpha  \int_{\Theta} \int_0^{\bar L}  \left[ \frac{\partial g_{\theta}}{ \partial \theta} (F_{\theta}(l)) + g'_{\theta}(F_{\theta}(l)) \frac{\partial F_{\theta}(l)}{\partial \theta} \right] 
\frac{\partial R_{\theta}(l)}{\partial l } \, dl \,\bar Q_{\eta}(\theta)\, d\theta  
-
(1-\alpha) \int_{\Theta} \int_0^{\bar L} \left[ 1 - g^{In}(F_{\theta}(l))  \right] \,dl \, d\mu(\theta) \\
&\\
&=
\int_{\Theta} \int_0^{\bar L } (1-\alpha) \left[ g_{\theta} (F_{\theta}(l))  - g^{In} (F_{\theta}(l))  \right] \frac{ \partial R_{\theta}(l) }{ \partial l } \, dl \, d\mu(\theta) \\ 
&\quad + 
\left[ p_{\underline{\theta}} +  \int_0^{ \bar L}  \left[ 1 - g_{ \underline{ \theta }} \big( F_{\underline{ \theta} } (l)  \big)   \right] \, \frac{ \partial R_{\underline{ \theta }}(l) }{ \partial l } \, dl \right] \cdot \left[(1-\alpha) Q(\Theta) -\alpha Q_{\eta}(\Theta) \right] \\
&\quad -
(1-\alpha) \int_{\Theta} \int_0^{\bar L}  \left[ \frac{\partial g_{\theta}}{ \partial \theta} (F_{\theta}(l)) + g'_{\theta}(F_{\theta}(l)) \frac{\partial F_{\theta}(l)}{\partial \theta} \right] 
\frac{\partial R_{\theta}(l)}{\partial l } \, dl \,\bar Q(\theta)\, d\theta \\
&\quad + 
\alpha \int_{\Theta} \int_0^{\bar L}  \left[ \frac{\partial g_{\theta}}{ \partial \theta} (F_{\theta}(l)) + g'_{\theta}(F_{\theta}(l))\frac{\partial F_{\theta}(l)}{\partial \theta} \right] \frac{\partial R_{\theta}(l)}{\partial l } \, dl \, \bar Q_{\eta}(\theta)\, d\theta
- (1-\alpha) \int_{\Theta} \int_0^{\bar L} \left[ 1 - g^{In}(F_{\theta}(l))  \right] \,dl \, d\mu(\theta) \\
&\\
&=
\int_{\Theta} \int_0^{\bar L } (1-\alpha) \left[ g_{\theta} (F_{\theta}(l))  - g^{In} (F_{\theta}(l))  \right] \frac{ \partial R_{\theta}(l) }{ \partial l } \, dl \, d\mu(\theta)  
+ 
(1-2\alpha) \left[ p_{\underline{\theta}} +  \int_0^{ \bar L}  \left[ 1 - g_{ \underline{ \theta }} \big( F_{\underline{ \theta} } (l)  \big)   \right] \, \frac{ \partial R_{\underline{ \theta }}(l) }{ \partial l } \, dl \right] \\
&\quad -
(1-\alpha) \int_{\Theta} \int_0^{\bar L}  \left[ \frac{\partial g_{\theta}}{ \partial \theta} (F_{\theta}(l)) + g'_{\theta}(F_{\theta}(l))\frac{\partial F_{\theta}(l)}{\partial \theta} \right] \frac{\partial R_{\theta}(l)}{\partial l } \, dl \,\bar Q(\theta)\, d\theta \\
&\quad + 
\alpha  \int_{\Theta} \int_0^{\bar L}  \left[ \frac{\partial g_{\theta}}{ \partial \theta} (F_{\theta}(l)) + g'_{\theta}(F_{\theta}(l)) \frac{\partial F_{\theta}(l)}{\partial \theta} \right] 
\frac{\partial R_{\theta}(l)}{\partial l } \, dl \, \bar Q_{\eta}(\theta) \,d\theta 
- (1-\alpha) \int_{\Theta} \int_0^{\bar L} \left[ 1 - g^{In}(F_{\theta}(l))  \right] \,dl \, d\mu(\theta) \\ 
&\\
&= \int_{\Theta} \int_0^{\bar L } (1-\alpha) \left[ g_{\theta} (F_{\theta}(l))  - g^{In} (F_{\theta}(l))  \right] \frac{ \partial R_{\theta}(l) }{ \partial l } \, dl \, d\mu(\theta) 
+ 
(1-2\alpha) \left[ p_{\underline{\theta}} +  \int_0^{ \bar L}  \left[ 1 - g_{ \underline{ \theta }} \big(F_{\underline{ \theta} } (l)\big)\right] \, \frac{ \partial R_{\underline{ \theta }}(l) }{ \partial l } \, dl \right]\\
&\quad -
\int_{\Theta} \int_0^{\bar L}  \left[ \frac{\partial g_{\theta}}{ \partial \theta} (F_{\theta}(l)) + g'_{\theta}(F_{\theta}(l))\frac{\partial F_{\theta}(l)}{\partial \theta} \right] 
\frac{\partial R_{\theta}(l)}{\partial l }  \cdot \left[ \frac{(1-\alpha) \bar Q(\theta) - \alpha \bar Q_{\eta}(\theta)}{q(\theta)}\right]\, dl \, q(\theta) d\theta \\
&\quad - 
(1-\alpha) \int_{\Theta} \int_0^{\bar L} \left[ 1 - g^{In}(F_{\theta}(l))  \right] \,dl \, d\mu(\theta).
\end{align*}

\medskip

\noindent Thus, 
\begin{align*}
W_{\eta, \alpha} \left( (R_{\theta}, p_{\theta})_{\theta \in \Theta} \right) 
&= (1-2\alpha)  \left[ p_{\underline{\theta}} +  \int_0^{ \bar L}  \left[ 1 - g_{ \underline{ \theta }} \big( F_{\underline{ \theta} } (l)  \big)   \right] \, \frac{ \partial R_{\underline{ \theta }}(l) }{ \partial l } \, dl \right] 
-
\int_{\Theta} \int_0^{\bar L} J_{\theta, \eta} (l) \frac{ \partial R_{\theta}(l) }{ \partial l } \, dl \,d\mu(\theta)\\
&\qquad- 
(1-\alpha) \int_{\Theta} \int_0^{\bar L} \left[ 1 - g^{In}(F_{\theta}(l))  \right] \,dl \, d\mu(\theta),
\end{align*}

\noindent where, 
$$
J_{\theta, \eta}(l) = 
(1-\alpha) \left[   g^{In} (F_{\theta}(l)) -  g_{\theta} (F_{\theta}(l)) \right] +  \left[ \frac{\partial g_{\theta}}{ \partial \theta} (F_{\theta}(l)) + g'_{\theta}(F_{\theta}(l)) \frac{\partial F_{\theta}(l)}{\partial \theta} \right] \left[ \frac{(1-\alpha) \bar Q(\theta) - \alpha \bar Q_{\eta}(\theta)}{q(\theta)}\right],
$$

\noindent which can be rewritten as
\begin{align} 
J_{\theta, \eta}(l) &= 
(1-\alpha) \left[  g^{In} (F_{\theta}(l))-  g_{\theta} (F_{\theta}(l)) \right] \nonumber \\
&\quad + 
\left(\frac{\bar Q(\theta) }{ q (\theta) } \right) \left[ \frac{\partial g_{\theta}}{ \partial \theta} (F_{\theta}(l))  + g'_{\theta}(F_{\theta}(l)) \frac{\partial F_{\theta}(l)}{\partial \theta} \right]  \left[ (1-\alpha) - \alpha \frac{ \bar Q_{\eta}(\theta)}{\bar Q(\theta)}\right].
\end{align}\qed

\bigskip
\subsection{Proof of Lemma \ref{le:alpha0}}
\label{App:le:alpha0}

First, using \eqref{eq:frac>1}, we have that $\alpha_0 \leq \frac12$. Since $\theta \mapsto \frac{\bar Q_{\eta}(\theta)}{\bar Q(\theta)}$ is non-decreasing on $[\underline{\theta},\bar{\theta})$, it follows that $\theta \mapsto  (1-\alpha) -  \alpha \frac{\bar Q_{\eta}(\theta)}{ \bar Q(\theta)}$ is non-increasing on $[\underline{\theta},\bar{\theta})$. Moreover, we have by \eqref{eq:frac_at_bar_theta} that:
$$ \left[ (1-\alpha) -  \alpha \frac{\bar Q_{\eta}(\theta)}{ \bar Q(\theta)} \right] _{\theta = \bar \theta} =
\underset{ \theta \to \bar \theta } {\lim}\left[ (1-\alpha) -  \alpha \frac{\bar Q_{\eta}(\theta)}{ \bar Q(\theta)} \right]  
= 1- \alpha - \alpha \frac{q_{\eta}(\bar \theta)}{q( \bar\theta)}.$$

\noindent With this continuous extension, $\theta \mapsto  (1-\alpha) -  \alpha \frac{\bar Q_{\eta}(\theta)}{ \bar Q(\theta)}$ is non-increasing on $\Theta$, which means that for $\theta\in [\underline{\theta}, \bar \theta)$,
$$
1-\alpha -  \alpha \frac{\bar Q_{\eta}(\theta)}{ \bar Q(\theta)} \geq 1- \alpha - \alpha \frac{q_{\eta}(\bar \theta)}{q(\bar \theta)}. 
$$

\noindent At $\theta =\bar \theta$, the continuous extension gives equality. Hence if $0 < \alpha < \alpha_0$, we have: 
$$ 1-\alpha -  \alpha \frac{\bar Q_{\eta}(\theta)}{ \bar Q(\theta)} >0, \ \forall \, \theta \in \Theta.$$ 

On the other hand, suppose that $\alpha \in [\alpha_0, \frac{1}{2}]$. Since $\alpha \leq \frac{1}{2}$, it follows that 
$\frac{1-\alpha}{\alpha}\geq 1$. Moreover, since $\alpha \geq \alpha_0$, we obtain
$\frac{1-\alpha}{\alpha}\leq
\frac{q_\eta(\bar\theta)}{q(\bar\theta)}$, which means that:
$$1
\leq \frac{1-\alpha}{\alpha}
\leq \frac{q_\eta(\bar\theta)}{q(\bar\theta)}.
$$

\noindent Then it follows from \eqref{eq:frac_at_underbar_theta} and \eqref{eq:frac_at_bar_theta} that:
$$
\frac{1-\alpha}{\alpha}
\in
\left[
\frac{\bar Q_\eta(\underline{\theta})}{\bar Q(\underline{\theta})}, \ 
\lim_{\theta\to\bar\theta}\frac{\bar Q_\eta(\theta)}{\bar Q(\theta)}
\right].
$$

\noindent Hence by the intermediate value theorem, there exists $\theta_\alpha\in\Theta$ such that
$\frac{\bar Q_\eta(\theta_\alpha)}
{\bar Q(\theta_\alpha)}
=\frac{1-\alpha}{\alpha}$. \qed

\bigskip
\subsection{Proof of Theorem \ref{th:solution_characterization_IPO}}\label{App:thIPOsolution}
First, we know from Proposition \ref{prop:social_welfare_expression} that $W_{\eta, \alpha} \left( (R_{\theta}, p_{\theta})_{\theta \in \Theta} \right)$ satisfies \eqref{eq:social_welfare} and $J_{\theta, \eta}(l)$ is given in \eqref{eq:Jeta}.  
We consider the following three cases for the values of the social weight $\alpha \in (0,1)$.

\medskip

\subsubsection{ The case where $\alpha \leq \frac{1}{2}$}
The social welfare function $  W_{\eta, \alpha} \left( (R_{\theta}, p_{\theta})_{\theta \in \Theta} \right) $ given in \eqref{eq:social_welfare} is non-decreasing with respect to $p_{\underline{\theta}}$, and then at the optimum, $p^*_{\underline{ \theta}}$ must take its largest value. 
By individual rationality, we conclude from Proposition \ref{IR_characterization} that:
$$
p^*_{\underline{ \theta }} = \int_0 ^{ \bar L} \left[  1- g_{ \underline{\theta} }( F_{ \underline{ \theta } }(l) ) \, \right] \left[ 1- \frac{ \partial R_{ \underline{ \theta } } (l) }{\partial l} \right] \, dl \geq 0.
$$

\noindent Using the premium $p_{\underline{ \theta }} = \int_0 ^{ \bar L} \left[  1- g_{ \underline{\theta} }( F_{ \underline{ \theta } }(l) ) \, \right] \left[ 1- \frac{ \partial R_{ \underline{ \theta } } (l) }{\partial l} \right] \, dl $, the social welfare functions simplifies to: 
\begin{align*}
W_{\eta, \alpha} \left( (R_{\theta}, p_{\theta})_{\theta \in \Theta} \right)   
&= (1- 2\alpha) \cdot \int_0^{\bar L} 
\left[ 1- g_{\underline{\theta}}(F_{\underline{\theta}}(l)) \right] \, dl 
- \int_{\Theta} \int_0^{\bar L} J_{\theta, \eta} (l) \frac{ \partial R_{\theta}(l) }{ \partial l } \, dl \,d \mu(\theta) \nonumber \\
&\qquad-
(1-\alpha) \int_{\Theta} \int_0^{\bar L} \left[ 1 - g^{In}(F_{\theta}(l))  \right] \,dl \, d\mu(\theta).
\end{align*}

\medskip

To find a solution for Problem \eqref{eq:IPO_sup_retention}, we aim to maximize this function pointwise. We first start by analyzing $J_{\theta, \eta}(l)$ given by \eqref{eq:Jeta}. First, since $\alpha \leq \frac{1}{2}$ and by Assumption \ref{ass:distortion_insurer_agent}, the first term is non-negative: 
$$
(1-\alpha) \left[ g^{In} (F_{\theta}(l))  -  g_{\theta} (F_{\theta}(l))\right]  \geq 0 \,.
$$

\noindent Moreover, by Assumption \ref{Ass:cdf_family} and Assumptions \ref{Ass:distortion_family}, we know that
$$
\frac{\bar Q(\theta) }{ q(\theta) } 
\left[ \frac{\partial g_{\theta}}{ \partial \theta} (F_{\theta}(l)) +  g'_{\theta}(F_{\theta}(l)) \frac{\partial F_{\theta}(l)}{\partial \theta} \right] \leq 0 \,.
$$

\medskip

\noindent We know from Lemma \ref{le:alpha0} that if $\alpha \in (0, \alpha_0)$, then 
$$ 1-\alpha -  \alpha \frac{\bar Q_{\eta}(\theta)}{ \bar Q(\theta)} >0, \ \forall \, \theta \in \Theta.$$ 
It follows that
$$
\frac{\bar Q(\theta) }{ q(\theta) } \left[ (1-\alpha) -  \alpha \frac{\bar Q_{\eta}(\theta)}{ \bar Q(\theta)} \right] 
\left[ \frac{\partial g_{\theta}}{ \partial \theta} (F_{\theta}(l)) +  g'_{\theta}(F_{\theta}(l)) \frac{\partial F_{\theta}(l)}{\partial \theta} \right]  \leq 0 \,.
$$

\medskip

\noindent Hence if $\alpha \in \left(0, \alpha_0 \right)$, for a fixed $\theta\in\Theta$, let
$r_\theta(l):=\frac{\partial R_\theta(l)}{\partial l}$. Since admissible retention functions satisfy $0 \leq r_\theta(l) \leq 1$, maximizing the social welfare reduces to
$$
\max_{0\leq r_\theta(\cdot)\leq 1}\left\{-\int_0^{\bar L}J_{\theta, \eta}(l)\,r_\theta(l)\,dl\right\}.
$$

\noindent The pointwise maximizer is achieved when
$$
r^*_\theta(l) = 
\begin{cases}
0 & J_{\theta, \eta}(l) >0 , \\
\in [0,1] & J_{\theta, \eta}(l) = 0 , \\
1 & J_{\theta, \eta}(l) < 0 .
\end{cases}
$$

\medskip

Since $R_\theta^*(0)=0$, then $R_\theta^*(l):=\int_0^l r_\theta^*(s)\,ds$. 
When $\eta$ and $\alpha$ are specified, for any $\theta< \theta ^{\prime}$, and if $J_{\theta, \eta}(l)$ is non-decreasing in $\theta$ for all $l$, then on the sets where $\{J_{\theta, \eta}(l) >0\}$ or  $\{J_{\theta, \eta}(l) <0\}$, the pointwise maximization solution is non-increasing in $\theta$. That is, $r^*_{\theta^\prime}(l) \leq r^*_\theta(l)$ for a.e.\ $l \in [0, \bar L]$, or equivalently
$$
\frac{\partial R^* _{\theta^{\prime} } (l)}{\partial l} \leq \frac{\partial R^* _{\theta} (l)}{\partial l}, \ \text{for a.e.\ $l \in [0, \bar L]$.}
$$

\noindent On the set  $\{J_{\theta, \eta}(l) =0\}$,  the value of $r^*_\theta(l)$ is chosen so that $r^*_\theta(l)$ is non-increasing in $\theta$ for almost every $l$. Hence the collection $\{R^*_{\theta}\}_{\theta \in \Theta}$ is submodular.
Substituting $p^*_{\underline{\theta}}$ in the expression of the optimal premium, we obtain
\begin{align}\label{eq:app_premium}
p^*_\theta &= \int_0^{ \bar L}  \left[ 1 - g_{ \underline{\theta}  } \big( F_{\underline{\theta} }(l)  \big)   \right] \, dl 
- \int_{\underline{\theta}} ^ {\theta} \int_0 ^{\bar L}  \left[ \frac{ \partial g_s}{\partial s } (F_s(l)) + g'_s(F_s(l)) \frac{\partial F_s(l)}{\partial s} \right] \frac{\partial R^*_s(l)}{\partial l } \, dl \, ds  \nonumber  \\
&\quad -
\int_0^{ \bar L }  \left[ 1 - g_{\theta} \big( F_{\theta}(l)  \big)   \right] \,\frac{ \partial R^*_{\theta}(l) }{ \partial l } \, dl.  
\end{align}

\medskip

We now us verify that $p^*_\theta \in \R^+$. By the Fundamental Theorem of Calculus, for each $l\in[0,\bar L]$,
$$
1-g_\theta(F_\theta(l))=1-g_{\underline\theta}(F_{\underline\theta}(l))-\int_{\underline\theta}^{\theta} \frac{\partial }{\partial s} (g_s \circ F_s)(l) \,ds.
$$
Multiplying both sides by $\frac{\partial R_\theta^*(l)}{\partial l} \in [0,1]$, we obtain:
$$
\left(1-g_\theta(F_\theta(l))\right) \frac{\partial R_\theta^*(l)}{\partial l} = 
\left( 1-g_{\underline\theta}(F_{\underline\theta}(l))\right)\frac{\partial R_\theta^*(l)}{\partial l} - \left( \int_{\underline\theta}^{\theta} \frac{\partial }{\partial s} (g_s \circ F_s)(l) \,ds \right) \,\frac{\partial R_\theta^*(l)}{\partial l}.
$$

\noindent Substituting this identity into the expression of \(p_\theta^*\), we obtain
$$
p_\theta^*
= \int_0^{\bar L}
[1-g_{\underline\theta}(F_{\underline\theta}(l))]
\left(1-\frac{\partial R_\theta^*(l)}{\partial l}\right)\,dl + \int_{\underline\theta}^{\theta}\int_0^{\bar L}
\frac{\partial }{\partial s} (g_s \circ F_s)(l)
\left(\frac{\partial R_\theta^*(l)}{\partial l}-\frac{\partial R_s^*(l)}{\partial l}
\right) \,dl\,ds.
$$

\noindent The first term is non-negative. Moreover, by Assumptions \ref{Ass:cdf_family} and \ref{Ass:distortion_family}, we have that $\frac{\partial }{\partial s} (g_s \circ F_s)(l)\leq 0$.
Since $\{R_\theta^*\}_{\theta\in\Theta}$ is submodular, for $s\le\theta$,
$$
\frac{\partial R_\theta^*(l)}{\partial l}
- \frac{\partial R_s^*(l)}{\partial l}
\leq 0, \ \text{for a.e. }l.
$$

\noindent The second term is also non-negative, and hence $
p_\theta^*\geq 0$. 

\medskip
Moreover, it follows from Proposition \ref{prop:IC_iff} that $(R^*_\theta, p^*_\theta)_{\theta \in \Theta} \in \cI\cC$. We next verify that $(R^*_\theta, p^*_\theta)_{\theta \in \Theta} \in \cI\cR$. Since the insurer's participation constraint is satisfied by assumption, it suffices to show that the agent's participation constraint (P1) holds for the lowest type $\underline{\theta}$, by Proposition \ref{prop:IR_lowest_type}. We know that
$$
U_{\underline{\theta}} (R^*_{\underline{\theta}}, p^*_{\underline{\theta}}) = -p^*_{\underline{\theta}} - \int_0^{\bar L } \l[ 1- g_{\underline{\theta}}(F_{\underline{\theta}}(l)) \r] \frac{\partial R^*_{\underline{\theta}}(l)}{\partial l} \, dl. 
$$
Substituting $p^*_{\underline{\theta}}$ by its optimal expression yields:
\begin{align*}
U_{\underline{\theta}} (R^*_{\underline{\theta}}, p^*_{\underline{\theta}}) 
&=- \int_0^{ \bar L}  \left[ 1 - g_{ \underline{ \theta }} \big( F_{\underline{ \theta }}(l)  \big)   \right] \, dl  
= U_{\underline{\theta}} (L_{\underline{\theta}}, 0).
\end{align*}

\noindent Hence, $(R^*_\theta, p^*_\theta)_{\theta \in \Theta} \in \cI\cR \cap \cI\cC$.

\medskip

\subsubsection{The case where $\alpha \in \left[ \alpha_0, \frac{1}{2} \right]$} 
It follows from Lemma \ref{le:alpha0} that there exists $\theta_\alpha \in \Theta$ satisfying $\frac{\bar Q_\eta(\theta_\alpha)}
{\bar Q(\theta_\alpha)}=\frac{1-\alpha}{\alpha}$. Equivalently, there exists $\theta_{\alpha} \in \Theta$, such that $1 - \alpha - \alpha \frac{ \bar Q_{\eta} (\theta_{\alpha} )}{\bar Q ( \theta _{\alpha} ) } =0$. 
Since the function $\theta \mapsto  (1-\alpha) -  \alpha \frac{\bar Q_{\eta}(\theta)}{ \bar Q(\theta)}$ is non-increasing, it follows that:

\begin{enumerate}
\item[(i)]  If $ \theta < \theta_{\alpha}$, then $ 1 - \alpha - \alpha \frac{ \bar Q_{\eta} (\theta) }{\bar Q ( \theta) } \geq 0 $. Therefore, if the function $J_{\theta, \eta}(l)$ is non-decreasing in $\theta$, then the optimal marginal retention follows the form given in \eqref{eq:optimal_marginal_retention_Sec:PO}, and the optimal premium $p^*_\theta$ satisfies \eqref{eq:app_premium}. Similarly to the first case discussed above, $\{R^*_\theta\}_{\theta < \theta_\alpha}$ is submodular.

\medskip

\item[(ii)] If $ \theta \geq \theta_{\alpha}$, then $ 1 - \alpha - \alpha \frac{ \bar Q_{\eta} (\theta) }{\bar Q ( \theta) } \leq 0 $. Therefore $J_{\theta, \eta }(l) \geq 0$ for all $l \in [0, \bar L]$, hence the term $-
J_{\theta,\eta}(l)
\frac{\partial R_\theta(l)}{\partial l}$ is maximized by choosing
$$
\frac{\partial R^*_\theta(l)}{\partial l} = 0, \, \text{for a.e. } l \in [0,\bar L].
$$
Since admissible retention functions are absolutely continuous and satisfy \(R^*_\theta(0)=0\), it follows that
$$R^*_\theta(l)=
R^*_\theta(0)+
\int_0^l \frac{\partial R^*_\theta(s)}{\partial s} ds= 0, \, \forall\, l \in [0,\bar L].
$$

Hence, full coverage is optimal for all types $\theta \geq \theta_\alpha$, $\{R^*_\theta\}_{\theta \geq  \theta_\alpha}$ is submodular and $p^*_\theta$ satisfies \eqref{eq:app_premium}.
\end{enumerate} 

\medskip
The collection
$\{R_\theta^*\}_{\theta\in\Theta}$
is submodular.
Since $p^*_\theta$ satisfies \eqref{eq:app_premium}, it follows from Proposition \ref{prop:IC_iff} that $(R^*_\theta, p^*_\theta)_{\theta \in \Theta} \in \cI\cC$. 
It remains to check the individual rationality of $(R^*_\theta, p^*_\theta)_{\theta \in \Theta}$. 
\begin{align*}
U_{\underline{\theta}} (R^*_{\underline{\theta}}, p^*_{\underline{\theta}}) 
&=-p^*_{\underline{\theta}} - \int_0^{\bar L } \l[ 1- g_{\underline{\theta}}(F_{\underline{\theta}}(l)) \r] \frac{\partial R^*_{\underline{\theta}}(l)}{\partial l} \, dl \\
&= -  \int_0^{ \bar L}  \left[ 1 - g_{ \underline{\theta}  } \big( F_{\underline{\theta} }(l)  \big)   \right] \, dl + \int_0^{\bar L } \l[ 1- g_{\underline{\theta}}(F_{\underline{\theta}}(l)) \r] \frac{\partial R^*_{\underline{\theta}}(l)}{\partial l} \, dl - \int_0^{\bar L } \l[ 1- g_{\underline{\theta}}(F_{\underline{\theta}}(l)) \r] \frac{\partial R^*_{\underline{\theta}}(l)}{\partial l} \, dl \\
&=  U_{\underline{\theta}} (L_{\underline{\theta}}, 0).
\end{align*}

\noindent Since  the agent's participation constraint is satisfied for the lowest type $\underline{\theta}$ and the insurer's participation constraint holds by assumption, it follows from Proposition \ref{prop:IR_lowest_type} that $(R^*_\theta, p^*_\theta)_{\theta \in \Theta} \in \cI\cR$. 

\medskip

\subsubsection{The case where $\alpha > \frac{1}{2}$}
Recall from \eqref{eq:social_welfare} that
\begin{align*}
W_{\eta, \alpha}\left((R_\theta, p_\theta)_{\theta \in \Theta}\right)
&= (1 - 2\alpha) \left[ p_{\underline\theta} + \int_0^{\bar L} \left[1 - g_{\underline\theta}(F_{\underline\theta}(l))\right] \frac{\partial R_{\underline\theta}(l)}{\partial l} \, dl \right] \\
&\quad - \int_\Theta \int_0^{\bar L} J_{\theta, \eta}(l) \, \frac{\partial R_\theta(l)}{\partial l} \, dl \,  d\mu(\theta) \\
&\quad - (1 - \alpha) \int_\Theta \int_0^{\bar L} \left[1 - g^{In}(F_\theta(l))\right] dl \, d\mu(\theta),
\end{align*}
where $J_{\theta, \eta}$ is given in \eqref{eq:Jeta}:
$$
J_{\theta, \eta}(l) = (1 - \alpha) \left[g^{In}(F_\theta(l)) - g_\theta(F_\theta(l))\right] + \left(\frac{\bar Q(\theta)}{q(\theta)}\right) \left[\frac{\partial g_\theta}{\partial \theta}(F_\theta(l)) + g'_\theta(F_\theta(l)) \frac{\partial F_\theta(l)}{\partial \theta}\right] \left[(1 - \alpha) - \alpha \frac{\bar Q_\eta(\theta)}{\bar Q(\theta)}\right].
$$

\medskip

\noindent The last term of $W_{\eta, \alpha}$ is independent of the choice of retention and premium. Since $\alpha > \frac{1}{2}$, we have $1 - 2\alpha < 0$, and by definition $1 - g_{\underline\theta}(F_{\underline\theta}(l)) \geq 0$ for every $l \in [0, \bar L]$. Therefore the term
$$
(1 - 2\alpha) \int_0^{\bar L} \left[1 - g_{\underline\theta}(F_{\underline\theta}(l))\right] \frac{\partial R_{\underline\theta}(l)}{\partial l} \, dl
$$

\noindent is pointwise maximized by choosing $\frac{\partial R_{\underline\theta}(l)}{\partial l} = 0$, for every $l \in [0, \bar L]$.

\medskip

Moreover, since $\alpha > \frac{1}{2}$ and $\bar Q_\eta(\theta)/\bar Q(\theta) \geq 1$, we have
$$
(1 - \alpha) - \alpha \frac{\bar Q_\eta(\theta)}{\bar Q(\theta)} < 0, \ \forall \, \theta \in \Theta.
$$

\noindent Under Assumption \ref{Ass:cdf_family}-\eqref{ass:F_i2} and Assumption \ref{Ass:distortion_family}-\eqref{ass:g_i3},
$$
\frac{\partial g_\theta}{\partial \theta}(F_\theta(l)) + g'_\theta(F_\theta(l)) \frac{\partial F_\theta(l)}{\partial \theta} \leq 0,
$$

\noindent and since $\bar Q(\theta)/q(\theta) \geq 0$, the second term in $J_{\theta, \eta}(l)$ is nonnegative. The first term is also nonnegative by Assumption \ref{ass:distortion_insurer_agent}. Consequently,
$$
J_{\theta, \eta}(l) \geq 0, \ \forall \, \theta \in \Theta, \ \forall \, l \in [0, \bar L].
$$

\noindent Since admissible retention functions satisfy $0 \leq \frac{\partial R_\theta(l)}{\partial l} \leq 1$, a.e., the term $-
J_{\theta,\eta}(l)
\frac{\partial R_\theta(l)}{\partial l}$ is maximized by choosing $\frac{\partial R_\theta^*(l)}{\partial l}=0$. That is, the pointwise map
$$
r \mapsto - J_{\theta, \eta}(l) \, r, \ \hbox{ for } r \in [0, 1],
$$

\noindent is maximized by $r = 0$, for every $\theta \in \Theta$ and every $l \in [0, \bar L]$.

\medskip

Since each admissible retention function is absolutely continuous, and since $R_\theta^*(0)=0$, it follows that for all $\theta\in\Theta$ and all $l\in[0,\bar L]$, we have
$$
R^*_\theta(l)
=
R^*_\theta(0)
+
\int_0^l
\frac{\partial R^*_\theta(s)}{\partial s}\,ds
=
0.
$$

\medskip

It remains to characterize the premium schedule. Since $R_\theta^*=0$ for every $\theta \in\Theta$, it follows that for any $\theta,\theta'\in\Theta$, we have
$$
U_\theta(R_\theta^*,p_\theta^*)
=
U_\theta(0,p_\theta^*)
=
-p_\theta^*
\ \ \hbox{and} \ \ U_\theta(R_{\theta'}^*,p_{\theta'}^*)
=
U_\theta(0,p_{\theta'}^*)
=
-p_{\theta'}^*.
$$

\noindent By incentive compatibility, we have $U_\theta(0,p_\theta^*)
\geq U_\theta(0,p_{\theta'}^*)$, that is, $p_\theta^*\leq p_{\theta'}^*$. Reversing the roles of $\theta$ and $\theta'$ gives $p_{\theta'}^*\leq p_\theta^*$, and so
$$
p_\theta^*=p_{\theta'}^*,
\ \ \forall\,\theta,\theta'\in\Theta.
$$
Thus, there exists some  $p^*\in\bbR_+$ such that
$$
p_\theta^*=p^*,
\ \
\forall\,\theta\in\Theta.
$$
Letting
$$
K
:=
\int_\Theta\int_0^{\bar L}
\left[
1-g^{In}(F_\theta(l))
\right]\,dl\,d\mu(\theta),
$$
we obtain
$$
\int_\Theta V_\theta(0,p^*)\,d\mu(\theta)
=
\int_\Theta \l\{p^*
-
\int_0^{\bar L}
\left[
1-g^{In}(F_\theta(l))
\right]\,dl\r\} \,d\mu(\theta)
=
p^*-K,
$$

\noindent since $\mu(\Theta)=1$. Therefore, the insurer's participation constraint is equivalent to
$$p^* \geq K.$$

\medskip

\noindent On the other hand, under full coverage and pooling premia, each type $\theta$ receives the utility
$$U_{\theta} (0, p^*)
=- p^*
$$
Therefore, the participation constraint of a type-$\theta$ agent becomes
$$-p^* 
\geq 
U_{\theta} (L_{\theta}, 0)
=  - \int_0^{ \bar L }  \left [ 1 - g_{\theta}  \left( F_{\theta}(l) \right) \right] \ dl,
$$
or
$$p^* 
\leq 
\int_0^{ \bar L }  \left [ 1 - g_{\theta}  \left( F_{\theta}(l) \right) \right] \ dl.
$$
Let 
$$
M
:= \int_0^{\bar L} \left[1 - g_{\underline\theta}(F_{\underline\theta}(l))\right] \, dl,
$$
By Assumptions \ref{Ass:cdf_family}-(2) and \ref{Ass:distortion_family}-(3), a sufficient condition for the above inequality to hold for each $\theta \in \Theta$ is:
$$p^* 
\leq 
M
$$
Now, by hypothesis, we have $K \leq M$, and so any choice of $p^*$ in $[K,M]$ leads to an IC-IR menu $(0, p^*)_{\theta \in \Theta}$.

\medskip

We now determine which feasible pooling premium is optimal. Let
$$
W^{FC}_{\eta,\alpha}(p)
:=
W_{\eta,\alpha}\bigl((0,p)_{\theta\in\Theta}\bigr),
\ \ \forall \, p\in[K,M].
$$
For the menu \((0,p)_{\theta\in\Theta}\), we have
$$
U_\theta(0,p)=-p,
\ \ \forall\,\theta\in\Theta,
$$
and therefore
$$
\int_\Theta U_\theta(0,p)\,d\eta(\theta)=-p,
$$
because $\eta(\Theta)=1$. On the insurer's side, we have $
V_\theta(0,p)
=
p-\int_0^{\bar L}\left[1-g^{In}(F_\theta(l))\right]\,dl,
$
and so
$$
\int_\Theta V_\theta(0,p)\,d\mu(\theta)
=
p-K.
$$
Hence
\begin{align*}
W^{FC}_{\eta,\alpha}(p)
&=
\alpha\int_\Theta U_\theta(0,p)\,d\eta(\theta)
+
(1-\alpha)\int_\Theta V_\theta(0,p)\,d\mu(\theta) \\
&=
-\alpha p+(1-\alpha)(p-K)
=
(1-2\alpha)p-(1-\alpha)K.
\end{align*}

\noindent Since $\alpha>\frac{1}{2}$, $W^{FC}_{\eta,\alpha}(p)$ is strictly decreasing in $p$. It follows that, among all feasible full-coverage pooling premia $p\in[K,M]$, the unique welfare-maximizing premium is the smallest feasible one, namely $p=K$. Thus the optimal full-coverage pooling menu is
$$
(R_\theta^*,p_\theta^*)_{\theta\in\Theta}
=
(0,K)_{\theta\in\Theta}.
$$
Since all types receive the same full-coverage contract $(0, p^*)$, then for every $\theta \in \Theta$, $U_\theta (0, p^*) =- p^*$. Hence, incentive compatibility of the optimal menu holds trivially. At this premium,
$$
\int_\Theta V_\theta(0,p^*)\,d\mu(\theta)
=
p^*-K
=
0,
$$
and so the insurer's participation constraint binds. It remains to verify that the agent's participation holds so that individual rationality of the optimal menu is satisfied. We have that
$$
U_\theta(0, p^*) = -p^* = - \int_\Theta\int_0^{\bar L}
\left[
1-g^{In}(F_\theta(l))
\right]\,dl\,d\mu(\theta)\geq - \int_0^{\bar L} \left[1 - g_{\underline\theta}(F_{\underline\theta}(l))\right] \, dl . 
$$

Since the composite function $g_\theta \circ F_\theta$ is non-increasing in $\theta$ by Remark \ref{Re:chain_rule}, it follows that $g_\theta(F_\theta(l)) \leq g_{\underline{\theta}}(F_{\underline{\theta}}(l))$ for all $\theta \in \Theta$. Hence, 
$$
U_\theta(0, p^*) \geq - \int_0^{\bar L} \left[1 - g_{\underline\theta}(F_{\underline\theta}(l))\right] \, dl \geq - \int_0^{\bar L} \left[1 - g_{\theta}(F_{\theta}(l))\right] \, dl
= U_\theta(L_\theta,0), 
$$
which completes the proof.\qed

\bigskip
\subsection{Proof of Lemma \ref{le:IRimplication}} \label{App:proofleIRimplication}

Consider $(R^*_\theta, p^*_\theta)_{\theta \in \Theta}$ characterized in Theorem \ref{th:solution_characterization_IPO}. The insurer's aggregate utility is given by:
\begin{align*}
\int_\Theta V_\theta (R^*_\theta , p^*_\theta) \, d\mu(\theta) =
\int_\Theta p^*_\theta \,d\mu(\theta) - \int_\Theta \int_0^{\bar L} \left[ 1 - g^{In}(F_\theta(l))\right] \left( 1- r^*_\theta(l) \right) \,dl \, d\mu(\theta), 
\end{align*}

\noindent where the optimal premium $p^*_\theta$ satisfies \eqref{eq:optimal_premium}. We consider the following three cases for the optimal marginal retention function.

\medskip

\begin{enumerate}
\item For $r^*_\theta(l)\in [0,1]$ for a.e. $l \in [0, \bar L]$, the insurer's participation constraint is satisfied if the following holds
\begin{align*}
\int_\Theta \int_0^{\bar L} \left[ 1 - g^{In}(F_\theta(l))\right] \left( 1- r^*_\theta(l) \right) \,dl \, d\mu(\theta)\leq 
\int_\Theta p^*_\theta \,d\mu(\theta),
\end{align*}
where,
\begin{align*}
\int_\Theta p^*_\theta \, d\mu (\theta) 
&= \int_0^{ \bar L}  \left[ 1 - g_{ \underline{\theta}  } \big(F_{\underline{\theta} }(l)  \big)\right] \, dl  -\int_\Theta\int_{\underline{\theta}} ^ {\theta} \int_0 ^{\bar L} \left[ \frac{ \partial g_s}{\partial s } (F_s(l)) + g'_s(F_s(l)) \frac{\partial F_s(l)}{\partial s} \right]r^*_s(l) \, dl \, ds \, d\mu(\theta)  \\
&\quad -\int_\Theta\int_0^{ \bar L }  \left[ 1 - g_{\theta} \big( F_{\theta}(l)  \big)   \right] \, r^*_\theta(l) \, dl \, d\mu(\theta). 
\end{align*}

\medskip
\item For $r^*_\theta \equiv 0$, substituting $p^*_\theta$ by \eqref{eq:optimal_premium}, the insurer's aggregate utility reduces to
$$
\int_\Theta V_\theta (R^*_\theta , p^*_\theta) \, d\mu(\theta) = \int_0^{ \bar L}  \left[ 1 - g_{ \underline{\theta}  } \big(F_{\underline{\theta} }(l)  \big)\right] \, dl - \int_\Theta \int_0^{\bar L} \left[ 1 - g^{In}(F_\theta(l))\right]\, dl \, d\mu(\theta).
$$

If the following condition holds 
$$
\int_0^{ \bar L}  \left[ 1 - g_{ \underline{\theta}  } \big(F_{\underline{\theta} }(l)  \big)\right] \, dl \geq \int_\Theta \int_0^{\bar L} \left[ 1 - g^{In}(F_\theta(l))\right]\, dl \, d\mu(\theta), 
$$

\noindent then the insurer's participation constraint is satisfied. That is, $\int_\Theta V_\theta (R^*_\theta , p^*_\theta) \, d\mu(\theta) \geq 0$. 

\medskip

\item For $r^*_\theta \equiv 1$, then using the optimal premium $p^*_\theta$ given in \eqref{eq:optimal_premium} for $\theta \in \Theta$, the insurer's utility reduces to the following: 
\begin{align*}
V_\theta (R^*_\theta , p^*_\theta) 
& = \int_0^{ \bar L}  \left[ g_\theta(F_\theta(l)) - g_{ \underline{\theta}  } \big(F_{\underline{\theta} }(l)  \big)\right] \, dl -\int_{\underline{\theta}} ^ {\theta} \int_0 ^{\bar L} \left[ \frac{ \partial g_s}{\partial s } (F_s(l)) + g'_s(F_s(l)) \frac{\partial F_s(l)}{\partial s} \right] \, dl \, ds \\
&= \int_0^{ \bar L}  \int_{\underline{\theta}}^{\theta} \frac{\partial }{\partial s} (g_s \circ F_s)(l) ds \, dl  -\int_{\underline{\theta}} ^ {\theta} \int_0 ^{\bar L} \left[ \frac{ \partial g_s}{\partial s } (F_s(l)) + g'_s(F_s(l)) \frac{\partial F_s(l)}{\partial s} \right] \, dl \, ds \\
&= 0. 
\end{align*}

Hence the insurer's participation constraint binds. That is, $\int_\Theta V_\theta (R^*_\theta, p^*_\theta) \, d\mu(\theta) =0$. \qed
\end{enumerate}

\bigskip
\subsection{Proof of Proposition \ref{prop:insurerutilityoptimalmenu}} \label{App:proof_Vthetaoptimum}

Consider the optimal menu of contracts $(R^*_\theta, p^*_\theta)_{\theta \in \Theta}$, characterized in Theorem \ref{th:solution_characterization_IPO}. We know that
\begin{align*}
V_\theta (R^*_\theta,p^*_\theta) = p^*_\theta - \int_0^{\bar L} \left[1-g^{In}(F_\theta(l))\right] \left(1 - r^*_\theta(l) \right) \, dl,
\end{align*}
where $r^*_\theta(l): = \frac{\partial R^*_\theta(l)}{\partial l}$ for all $\theta \in \Theta$. Consider $\theta, \theta^\prime \in \Theta$ such that $\theta < \theta^\prime$. First, using the Fundamental Theorem of Calculus, 
\begin{align*}
p_{\theta'}^*-p_\theta^*
&=
\int_0^{\bar L}
\left[ 1- g_\theta(F_\theta(l)) \right] \left(r_\theta^*(l)-r_{\theta'}^*(l)\right)\,dl \\ 
&\quad +
\int_\theta^{\theta'}\int_0^{\bar L}
\left[\frac{\partial g_s}{\partial s}(F_s(l))
+
g_s'(F_s(l))\frac{\partial F_s(l)}{\partial s}\right]\left(r_{\theta'}^*(l)-r_s^*(l)\right)\,dl\,ds .
\end{align*}

\noindent Moreover,
\begin{align*}
V_{\theta'}(R_{\theta'}^*,p_{\theta'}^*)
-
V_\theta(R_\theta^*,p_\theta^*)
&=
\int_0^{\bar L}
\left[g^{In}(F_\theta(l))-g_\theta(F_\theta(l))\right]
\left(r_\theta^*(l)-r_{\theta'}^*(l)\right)\,dl \\
&\quad+
\int_\theta^{\theta'}\int_0^{\bar L}
\left[\frac{\partial g_s}{\partial s}(F_s(l))
+
g_s'(F_s(l))\frac{\partial F_s(l)}{\partial s}\right]\left(r_{\theta'}^*(l)-r_s^*(l)\right)\,dl\,ds \\
&\quad-
\int_\theta^{\theta'}\int_0^{\bar L}
g^{In\prime}(F_s(l))
\left(-\frac{\partial F_s(l)}{\partial s}\right)
\left(1-r_{\theta'}^*(l)\right)\,dl\,ds.
\end{align*}

\noindent Consider now the following cases.

\smallskip
\begin{enumerate}
\item If $r^*_\theta(l) \in [0,1]$, for a.e.\ $l \in [0,\bar L]$, we know that $g^{In}(F_\theta(l))-g_\theta(F_\theta(l)) \geq 0$ by Assumption \ref{ass:distortion_insurer_agent}, the collection $\{R^*_\theta\}_{\theta \in \Theta}$ is submodular, and the terms $\frac{\partial }{\partial s} (g_s \circ F_s)(l)$ and $\frac{\partial F_s(l)}{\partial s}$ are non-positive. If the following condition holds
\begin{align*}
&\int_0^{\bar L}
[g^{In}(F_\theta(l))-g_\theta(F_\theta(l))]
(r_\theta^*(l)-r_{\theta'}^*(l))\,dl +
\int_\theta^{\theta'}
\int_0^{\bar L}
\frac{\partial }{\partial s} (g_s \circ F_s)(l)
(r_{\theta'}^*(l)-r_s^*(l))
\,dl\,ds \\
&\geq
\int_\theta^{\theta'}
\int_0^{\bar L}
g^{In\prime}(F_s(l))
\left(-\frac{\partial F_s(l)}{\partial s}\right)
(1-r_{\theta'}^*(l))
\,dl\,ds,
\end{align*}

\noindent then $V_{\theta'}(R^*_{\theta'},p^*_{\theta'})\geq V_\theta(R^*_\theta,p^*_\theta)$, for any $\theta, \theta^\prime \in \Theta$ such that $\theta<\theta^\prime$. Hence the insurer's utility is non-decreasing in $\theta$.

\medskip
\item If $r_\theta^* \equiv 0$ for all $\theta$, then the first two terms vanish and the last term is non-positive. Hence the insurer's utility is non-increasing in $\theta$.

\medskip
\item If $r_\theta^*\equiv 1$ for all $\theta$, then by Lemma \ref{le:IRimplication}, $V_\theta(R_\theta^*,p_\theta^*)=0$ for all $\theta \in \Theta$.
\end{enumerate}

\bigskip
\subsection{Proof of Proposition \ref{prop:properties_optimalmenu}}\label{App:proofpropoptimalmenu}
Consider the optimal menu of contracts $(R^*_\theta, p^*_\theta)_{\theta \in \Theta}$, characterized in Theorem \ref{th:solution_characterization_IPO}. 

\medskip

\noindent (1) Since $\{R^*_\theta\}_{\theta \in \Theta}$ is submodular, it follows that $R^*_\theta$ decreases with $\theta$. 
Moreover, consider $\theta, \theta^\prime \in \Theta$ such that $\theta<\theta^\prime$. It follows from incentive compatibility that:
$$
U_\theta(R_\theta^*,p_\theta^*)
\geq
U_\theta(R_{\theta^\prime}^*,p_{\theta^\prime}^*).
$$
Hence,
$$
-p_\theta^*
-\int_0^{\bar L}
\left[1-g_\theta(F_\theta(l))\right]
r^*_\theta(l)\,dl
\geq
-p_{\theta^\prime}^*
-\int_0^{\bar L}
\left[1-g_\theta(F_\theta(l))\right]
r^*_{\theta^\prime}(l)\,dl.
$$

\noindent Rearranging, we obtain
$$
p_{\theta'}^*-p_\theta^*
\geq
\int_0^{\bar L}
\left[1-g_\theta(F_\theta(l))\right]
\left(r^*_\theta(l) - r^*_{\theta^\prime}(l) \right)\,dl.
$$

Since $\{R_\theta^*\}_{\theta\in\Theta}$ is submodular, and since $1-g_\theta(F_\theta(l))\geq 0$, we obtain $p_{\theta'}^*-p_\theta^*\geq 0$.
Hence $p_\theta^*$ is non-decreasing in $\theta$.

\bigskip

\noindent (2) For $\theta = \bar \theta $, by Assumption \ref{ass:distortion_insurer_agent}, we have:
$$
J_{\theta , \eta}(l) \bigg|_{\theta = \bar \theta}=  (1-\alpha) \left[  g^{In} (F_{\bar \theta}(l)) -  g_{\bar \theta} (F_{ \bar \theta}(l))\right] \geq 0, \ \forall l \in [0, \bar L],
$$

Suppose that $F_{\bar\theta}(l)\in(0,1)$ for a.e.\ $l \in (0, \bar L)$, and
$g^{In} (t) >  g_{\bar \theta} (t)$ for all $t \in (0, 1)$. Then $J_{\theta, \eta}(l)\big|_{\theta = \bar \theta} >0$ for a.e.\ $l\in(0, \bar L)$, which implies that
$r^*_{\bar \theta}(l) =0$ for a.e.\ $l \in (0, \bar L)$. Moreover, for every $l \in [0, \bar L]$, 
$$
R^*_{\bar \theta}(l) = \int_0^l r^*_{\bar \theta}(s) \,ds =0.
$$

\bigskip

\noindent (3) For $\theta = \underline{\theta}$, the optimal premium given in \eqref{eq:optimal_premium} reduces to
\begin{align*}
p^*_{\underline{\theta}} &= \int_0^{ \bar L}  \left[ 1 - g_{ \underline{\theta}  } \big( F_{\underline{\theta} }(l)  \big)\right] \left[1 - r^*_{\underline{\theta}}(l)\right]\, dl.  
\end{align*}

\noindent Hence, the lowest type's end-of-period utility is given by
\begin{align*}
U_{\underline{\theta}}(R^*_{\underline{\theta}}, p^*_{\underline{\theta}}) 
&= - p^*_{\underline{\theta}} - \int_0^{ \bar L} \left[ 1 - g_{\underline{\theta}} \big( F_{\underline{\theta}}(l)  \big) \right]\, r^*_{\underline{\theta}}(l) \, dl 
= - \int_0^{ \bar L}  \left[ 1 - g_{ \underline{\theta}  } \big( F_{\underline{\theta} }(l)  \big)\right]\, dl 
= U_{\underline{\theta}} ( L_{\underline{\theta}}, 0). 
\end{align*}

\bigskip

\noindent (4) It follows from the envelope theorem that the partial derivative of $U_\theta (R^*_\theta, p^*_\theta)$ with respect to $\theta$ is given by
\begin{align*}
\Phi(\theta) : =\frac{\partial U_\theta (R^*_\theta, p^*_\theta)}{\partial \theta}
&= \int_0^{\bar L}  \left[ \frac{ \partial g_\theta}{\partial \theta } (F_\theta(l)) + g'_\theta(F_\theta(l))\frac{\partial F_\theta(l)}{\partial \theta}\right] r^*_\theta(l) \, dl \leq 0.
\end{align*}

\noindent Let 
$$
A_\theta(l) = \frac{\partial }{\partial \theta} (g_\theta \circ F_\theta)(l),  \ \text{for all $l \in [0, \bar L]$}.
$$

\noindent Consider $\theta,\theta^\prime \in \Theta$ such that $\theta < \theta^\prime$. We have:
\begin{align*}
\Phi (\theta) - \Phi (\theta^\prime)
&=\int_0^{\bar L} A_\theta(l) \, r^*_\theta(l)\, dl - \int_0^{\bar L} A_{\theta ^\prime}(l) \, r^*_{\theta^\prime}(l) \, dl .
\end{align*}

\noindent Since $\{R^*_\theta\}_{\theta \in \Theta}$ is submodular and $A_\theta(l) \leq 0$ by Remark \ref{Re:chain_rule}, then 
$A_\theta(l)\,r^*_\theta(l) \leq A_\theta(l) \, r^*_{\theta^\prime} (l)$ for a.e. $l \in [0, \bar L]$.
Hence, 
\begin{align*}
\Phi(\theta) - \Phi(\theta^\prime)
&\leq \int_0^{\bar L} \left[ A_\theta(l) - A_{\theta^\prime}(l)\right] \, r^*_{\theta^\prime} (l) \, dl.
\end{align*}

\noindent Moreover, 
\begin{align*}
\frac{\partial}{\partial \theta}A_\theta(l)= \frac{\partial^2}{\partial \theta^2} (g_\theta \circ F_\theta)(l)
&=\frac{\partial }{\partial \theta} \left[ \frac{ \partial g_\theta}{\partial \theta } (F_\theta(l)) + g'_\theta(F_\theta(l))\frac{\partial F_\theta(l)}{\partial \theta} \right] \\
&=
\frac{\partial^2 g_\theta}{\partial \theta^2}(F_\theta(l))
+
2\frac{\partial^2 g_\theta}{\partial \theta \, \partial t}(F_\theta(l))
\frac{\partial F_\theta(l)}{\partial \theta}
+
g_\theta''(F_\theta(l))
\left(\frac{\partial F_\theta(l)}{\partial \theta}\right)^2
+
g_\theta'(F_\theta(l))
\frac{\partial^2 F_\theta(l)}{\partial \theta^2}.
\end{align*}

\noindent Suppose that the following conditions hold:
\medskip
\begin{enumerate}
\item[(i)] The function $g : (\theta, t) \mapsto g_\theta (t)$ is submodular. That is,
$$
\frac{\partial ^2 g_\theta (t)}{\partial \theta \partial t} \leq 0. 
$$

\item[(ii)] The function $\theta \mapsto g_{\theta}(t)$ is convex in $\theta$, for all $t$. That is,
$$
\frac{\partial ^2 g_\theta(t)}{\partial \theta ^2} \geq 0, \ \forall t. 
$$

\item[(iii)] The function $t \mapsto g_{\theta}(t)$ is convex in $t$, for all $\theta \in \Theta$. That is, 
$$
\frac{\partial ^2 g_\theta(t)}{\partial t ^2} \geq 0, \ \forall \theta. 
$$

\item[(iv)] The function $\theta \mapsto F_\theta(l)$ is convex in $\theta$, for all $l$. That is, 
$$
\frac{\partial ^2 F_\theta(l)}{\partial \theta ^2} \geq 0, \, \forall l. 
$$
\end{enumerate} 

It follows from the above conditions that the composite function $(g_\theta \circ F_\theta) (l)$ is convex in $\theta$ for all $l$. Therefore its derivative in $\theta$ is non-decreasing in $\theta$ for all $l$. Specifically, for $\theta < \theta^\prime$, $A_\theta(l) \leq A_{\theta^\prime}(l)$,  for all $l \in [0, \bar L]$, or equivalently,
$$\frac{\partial }{\partial \theta} (g_\theta \circ F_\theta)(l)  \leq \frac{\partial }{\partial \theta} (g_{\theta^\prime} \circ F_{\theta ^\prime})(l), \ \forall  l \in [0, \bar L].$$

\noindent Since $r^*_\theta(l) \in [0,1]$ for all $\theta \in \Theta$ and for almost every $l \in [0, \bar L]$, we obtain $\Phi(\theta) - \Phi(\theta^\prime) \leq 0$, or equivalently:
\begin{align*}
\frac{\partial U_\theta (R^*_\theta, p^*_\theta)}{\partial \theta} - \left[ \frac{\partial U_\theta (R^*_\theta, p^*_\theta)}{\partial \theta} \right]_{\theta = \theta^\prime}
&\leq 0.
\end{align*}

\noindent That is, $\theta \mapsto U_\theta (R^*_\theta, p^*_\theta)$ is convex if conditions (i) to (iv) hold. 
\qed

\bigskip
\subsection{Proof of Proposition \ref{prop:insureraggutility}}
First, we know that 
$$
\int_\Theta V_\theta(R_\theta, p_\theta) \, d\mu(\theta) = \int_\Theta \l[ p_{\theta} - \int_0^{\bar L} \left[ 1 - g^{In} (F_{\theta}(l))\right] \left( 1 - \frac{\partial R_{\theta}(l)}{\partial l }\right) \, dl \r] \, d\mu(\theta).
$$

\noindent Since $(R_{\theta}, p_{\theta})_{\theta \in \Theta} \in \cI\cC$, then substituting the premium by \eqref{eq:premium} we obtain:
\begin{align*}
\int_\Theta V_\theta(R_\theta, p_\theta )\, d\mu(\theta)
&= p_{\underline{\theta} } + \int_0^{ \bar L}  \left[ 1 - g_{ \underline{ \theta }} \big( F_{\underline{ \theta} } (l)  \big)   \right] \,\, \frac{ \partial R_{\underline{ \theta }}(l) }{ \partial l } \, dl \\
&\quad - \int_\Theta\int_{\underline{\theta}} ^ {\theta} \int_0 ^{\bar L}  \left[ \frac{\partial g_s}{ \partial s} (F_s(l)) + g'_s(F_s(l)) \frac{\partial F_s(l)}{\partial s} \right] \frac{\partial R_s(l)}{\partial l } \, dl \, ds\,d\mu (\theta) \nonumber \\
& \quad -
\int_\Theta \int_0^{ \bar L }  \left[ 1 - g_{\theta} \big( F_{\theta}(l)  \big)   \right] \,\, \frac{ \partial R_{\theta}(l) }{ \partial l } \, dl\,d\mu(\theta) \\
&\quad - \int_\Theta \int_0^{\bar L} \left[ 1 - g^{In} (F_{\theta}(l))\right] \left( 1 - \frac{\partial R_{\theta}(l)}{\partial l }\right) \, dl  \, d\mu(\theta). 
\end{align*}

Integrating by parts similarly to the proof of Theorem \ref{th:solution_characterization_IPO} yields the desired expression of $\int_\Theta V_\theta(R_\theta, p_\theta )\, d\mu(\theta)$. \qed

\bigskip
\subsection{Proof of Theorem \ref{th:solution_characterization}}\label{App:proofthsolcharact2}
By Proposition \ref{prop:insureraggutility}, for every incentive compatible menu $(R_\theta,p_\theta)_{\theta\in\Theta}\in \cI\cC$, the insurer's aggregate utility can be written as  \begin{align*}
\int_\Theta V_\theta(R_\theta,p_\theta)\,d\mu(\theta)
&=
p_{\underline\theta}+
\int_0^{\bar L}
[1-g_{\underline\theta}(F_{\underline\theta}(l))]
\frac{\partial R_{\underline\theta}(l)}{\partial l}\,dl \\
&\quad-
\int_\Theta\int_0^{\bar L}
[1-g^{In}(F_\theta(l))]\,dl\,d\mu(\theta)
-\int_\Theta\int_0^{\bar L}J_\theta^I(l)
\frac{\partial R_\theta(l)}{\partial l}
\,dl\,d\mu(\theta).
\end{align*}

The first term is increasing in $p_{\underline\theta}$. Hence at the optimum, $p_{\underline\theta}$ is chosen at its largest value compatible with the lowest type's participation constraint. By Proposition \ref{IR_characterization}, 
$$
p_{\underline\theta}^*=\int_0^{\bar L}
[1-g_{\underline\theta}(F_{\underline\theta}(l))]
\left(1-\frac{\partial R_{\underline\theta}^*(l)}{\partial l}\right)\,dl.
$$

\noindent Substituting this expression into the aggregate utility gives
\begin{align}
\int_\Theta V_\theta(R_\theta,p_\theta)\,d\mu(\theta)
&=
\int_0^{\bar L}
[1-g_{\underline\theta}(F_{\underline\theta}(l))]\,dl
-\int_\Theta\int_0^{\bar L}
[1-g^{In}(F_\theta(l))]\,dl\,d\mu(\theta) \nonumber\\
&\quad -\int_\Theta\int_0^{\bar L}J_\theta^I(l)\frac{\partial R_\theta(l)}{\partial l}\,dl\,d\mu(\theta).
\label{eq:insurer_objective_reduced}
\end{align}

\noindent Therefore, maximizing the insurer's aggregate utility reduces to maximizing
$$
-\int_\Theta\int_0^{\bar L}J_\theta^I(l)
\frac{\partial R_\theta(l)}{\partial l}\,dl\,d\mu(\theta).
$$

For fixed $\theta\in\Theta$, let
$r_\theta(l):=\frac{\partial R_\theta(l)}{\partial l}$. Since admissible retention functions satisfy $0 \leq r_\theta(l) \leq 1$, maximizing the insurer's aggregate utility reduces to
$$
\max_{0\leq r_\theta(\cdot)\leq 1}\left\{-\int_0^{\bar L}J_\theta^I(l)\,r_\theta(l)\,dl\right\}.
$$

The pointwise maximizer is achieved when 
$$
r_\theta^*(l) =
\begin{cases}
0, & J_\theta^I(l)>0,\\
\in[0,1], & J_\theta^I(l)=0,\\
1, & J_\theta^I(l)<0.
\end{cases}
$$

Let $R_\theta^*(l):=\int_0^l r_\theta^*(s)\,ds$, then $R_\theta^*$ is absolutely continuous, $R_\theta^*(0)=0$ and $0\leq \frac{\partial R_\theta^*(l)}{\partial l}\leq1$. Hence, $R^*_\theta \in \cR$. By assumption, $J_\theta^I(l)$ is non-decreasing in $\theta$ for all $l$. On the sets where $\{J_\theta ^I(l) <0\}$ and $\{ J_\theta ^I(l) >0\}$, the pointwise maximizer $r^*_\theta(l)$ is non-increasing in $\theta$. On the set $\{J_\theta ^I(l) =0\}$, the value of $r^*_\theta(l)$ is chosen so that $r^*_\theta(l)$ is non-increasing in $\theta$ for almost every $l$. Therefore, for $\theta, \theta^\prime \in \Theta$ such that $\theta<\theta^\prime$, we have:
$$r_{\theta'}^*(l)\leq r_\theta^*(l),
\ \text{for a.e. } l\in[0,\bar L].$$

\noindent Hence the collection $\{R_\theta^*\}_{\theta\in\Theta}$ is submodular. The optimal premium satisfies
\begin{align*}
p^*_{\theta} 
&= \int_0^{ \bar L}  \left[ 1 - g_{ \underline{\theta}  } \big( F_{\underline{\theta} }(l)  \big)   \right] \, dl  - \int_{\underline{\theta}} ^ {\theta} \int_0 ^{\bar L}  \left[ \frac{ \partial g_s}{\partial s } (F_s(l)) + g'_s(F_s(l)) \frac{\partial F_s(l)}{\partial s}  \right] \frac{\partial R^*_s(l)}{\partial l } \, dl \, ds\\
&\quad -
\int_0^{ \bar L }  \left[ 1 - g_{\theta} \big( F_{\theta}(l)  \big)   \right] \, \frac{ \partial R^*_{\theta}(l) }{ \partial l } \, dl.  
\end{align*}

Similarly to the proof of Theorem \ref{th:solution_characterization_IPO}, we can show that $p^*_\theta \geq 0$ for all $\theta \in \Theta$. Additionally, by the submodularity of $\{R_\theta^*\}_{\theta\in\Theta}$ and since $p^*_\theta$ satisfies \eqref{eq:optimal_premium}, it follows from Proposition \ref{prop:IC_iff} that $(R_\theta^*,p_\theta^*)_{\theta\in\Theta}\in\mathcal{IC}$. Moreover,  
\begin{align*}
U_{\underline{\theta}} (R^*_{\underline{\theta}}, p^*_{\underline{\theta}}) 
&=- \int_0^{ \bar L}  \left[ 1 - g_{ \underline{ \theta }} \big( F_{\underline{ \theta }}(l)  \big)   \right] \, dl  
= U_{\underline{\theta}} (L_{\underline{\theta}}, 0),
\end{align*}
and $\int_\Theta V_\theta (R^*_\theta, p^*_\theta)\, d\mu(\theta) \geq 0$, ensuring that $(R^* _{\theta} , p^*_{\theta}) \in \cI\cR$ since the lowest type satisfies the agent's participation constraint, by Proposition \ref{prop:IR_lowest_type}.\qed

\bigskip
\subsection{Proof of Proposition \ref{prop:objective_agent}}
First, we know that
\begin{align*}
\int_\Theta U_\theta (R_\theta, p_\theta) \,d\mu(\theta) 
&= \int_\Theta\l[ - p_{\theta} - \int_0^{ \bar L}  \left[ 1 - g_{\theta} \big( F_{\theta}(l)  \big)  \right] \, \frac{ \partial R_{\theta}(l) }{ \partial l } \, dl \r] \, d\mu(\theta). 
\end{align*}

Since $(R_\theta, p_\theta)_{\theta \in \Theta} \in \cI\cC$ then $p_\theta$ satisfies \eqref{eq:premium} by Proposition \ref{IC_characterization}. Hence,
\begin{align*}
\int_\Theta U_\theta (R_\theta, p_\theta) \,d\mu(\theta) 
&= - p_{\underline{\theta} } - \int_0^{ \bar L}  \left[ 1 - g_{ \underline{ \theta }} \big( F_{\underline{ \theta} } (l)  \big)   \right] \,\, \frac{ \partial R_{\underline{ \theta }}(l) }{ \partial l } \, dl  \\
&\quad + \int_\Theta \int_{\underline{\theta}} ^ {\theta} \int_0 ^{\bar L}  \left[ \frac{\partial g_s}{ \partial s} (F_s(l)) + g'_s(F_s(l)) \frac{\partial F_s(l)}{\partial s} \right] \frac{\partial R_s(l)}{\partial l } \, dl \,  ds \, d\mu(\theta). 
\end{align*}

\noindent Integrating the third term by parts yields:
\begin{align*}
\int_\Theta U_\theta (R_\theta, p_\theta) \,d\mu(\theta) 
&=- p_{\underline{\theta} } - \int_0^{ \bar L}  \left[ 1 - g_{ \underline{ \theta }} \big( F_{\underline{ \theta} } (l)  \big)   \right] \,\, \frac{ \partial R_{\underline{ \theta }}(l) }{ \partial l } \, dl  \\
&\quad- 
\int_\Theta\int_0^{\bar L}
- \left[\frac{\partial g_\theta}{\partial \theta}(F_\theta(l))+g'_\theta(F_\theta(l))\frac{\partial F_\theta(l)}{\partial \theta}
\right]
\frac{\partial R_\theta(l)}{\partial l}
\bar Q(\theta)
\,dl\,d\theta.
\end{align*}

\noindent Hence, the above expression can be rewritten as follows:
\begin{align*}
\int_\Theta U_\theta (R_\theta, p_\theta) \,d\mu(\theta) 
&=- p_{\underline{\theta} } - \int_0^{ \bar L}  \left[ 1 - g_{ \underline{ \theta }} \big( F_{\underline{ \theta} } (l)  \big)   \right] \,\, \frac{ \partial R_{\underline{ \theta }}(l) }{ \partial l } \, dl - \int_\Theta \int_0^{\bar L} J^A_\theta(l) \frac{\partial R_\theta(l)}{\partial l} \, dl \, d\mu(\theta),
\end{align*}

\noindent where $J^A_\theta(l) = -\frac{\bar Q(\theta)}{q(\theta)} \left[\frac{\partial g_\theta}{\partial \theta}(F_\theta(l))+g'_\theta(F_\theta(l))\frac{\partial F_\theta(l)}{\partial \theta}
\right] \geq0$, since $\frac{\partial g_{\theta}}{ \partial \theta} (F_{\theta}(l)) + g'_{\theta}(F_{\theta}(l))\frac{\partial F_{\theta}(l)}{\partial \theta} \leq 0$.\qed

\bigskip
\subsection{Proof of Proposition \ref{prop:solution_characterization_a1}}
We know from Proposition \ref{prop:objective_agent} that all retention-dependent terms are non-positive. 
For $\theta = \underline{\theta}$, we define  
$$\tilde J^A_{\underline{\theta}} (l) :=  \left[ 1- g_{\underline{\theta}}(F_{\underline{\theta}}(l)) \right] + J^A_{\underline{\theta}} (l) \geq J^A_{\underline{\theta}} (l) \geq 0.$$
The optimal retention function for the lowest risk type $\underline{\theta}$, satisfies:
$$
R_{\underline{\theta} }^* \in \underset{  R_{ \underline{\theta} } \in \cR }{ \arg \max  } - \int_0^{\bar L} \tilde J^A_{\underline{\theta}} (l) \frac{ \partial R_{  \underline{\theta} }(l) }{ \partial l }\, dl . 
$$

\noindent For all other types $\theta \in \Theta$, the optimal retention satisfies
$$
R^*_{\theta} \in \underset{R_{ \theta} \in \cR }{ \arg \max} \ - \int_0^{\bar L} J^A_{\theta}(l) \frac{ \partial R_{ \theta}(l)}{\partial l} \,dl,
$$

\noindent where $J^A_\theta(l) \geq 0$. Hence the above problem can be maximized by choosing $ \frac{ \partial R^*_{\theta }(l) }{ \partial l } =0 $, for a.e.\ $l \in [0, \bar L]$, which implies that $R^*_\theta(l)=0$ for all $l$ and $\theta \in \Theta$. 

It remains to determine the premium. Under full coverage, we have that 
$U_\theta(0,p^*_\theta)=-p^*_\theta$.
Incentive compatibility therefore implies that the premium is pooling:
$$p^*_\theta=p^*, \ \text{ for all } \theta\in\Theta.$$

For $(0, p^*)_{\theta \in \Theta}$, the insurer's participation constraint yields $p^* \geq K$. 
On the other hand, the agent's participation constraint yields
$$ p ^*\leq \int_0^{\bar L} \left[1-g_\theta(F_\theta(l))\right]\,dl, \, \forall \theta\in\Theta.$$

\noindent Since the composite function $g_\theta(F_\theta(l))$ is non-increasing in $\theta$, then   
$p^* \leq M$. 
Using condition \eqref{eq_case3_condition}, any choice of $ p^* \in [K, M]$ leads to an IC-IR menu $(0, p^*)_{\theta \in \Theta}$. 

Let us now determine which feasible pooling premium is optimal. For the menu $(0, p)_{\theta \in \Theta}$, since $\mu(\Theta) =1$, then 
$$ \int_\Theta U_\theta (0, p) \, d\mu(\theta) = -p, $$

\noindent which is decreasing in $p$. It follows that, among all feasible full-coverage pooling premia $p\in[K,M]$, the unique welfare-maximizing premium is the smallest feasible one, namely $p=K$. Thus the optimal full-coverage pooling menu is $(R_\theta^*,p_\theta^*)_{\theta\in\Theta}=
(0,K)_{\theta\in\Theta}$.
Since all types receive the same full-coverage contract $(0, p^*)$, it follows that for every $\theta \in \Theta$, $U_\theta (0, p^*) =- p^*$. Hence, incentive compatibility of the optimal menu holds trivially. At this premium,
$$
\int_\Theta V_\theta(0,p^*)\,d\mu(\theta)
=
p^*-K
=
0,
$$
and so the insurer's participation constraint binds. It remains to verify that the agent's participation holds so that individual rationality of the optimal menu is satisfied. We have that
$$
U_\theta(0, p^*) = -p^* = - \int_\Theta\int_0^{\bar L}
\left[
1-g^{In}(F_\theta(l))
\right]\,dl\,d\mu(\theta)\geq - \int_0^{\bar L} \left[1 - g_{\underline\theta}(F_{\underline\theta}(l))\right] \, dl . 
$$

Since the composite function $g_\theta \circ F_\theta$ is non-increasing in $\theta$ by Remark \ref{Re:chain_rule}, it follows that $g_\theta(F_\theta(l)) \leq g_{\underline{\theta}}(F_{\underline{\theta}}(l))$ for all $\theta \in \Theta$. Hence, 
$$
U_\theta(0, p^*) \geq - \int_0^{\bar L} \left[1 - g_{\underline\theta}(F_{\underline\theta}(l))\right] \, dl \geq - \int_0^{\bar L} \left[1 - g_{\theta}(F_{\theta}(l))\right] \, dl
= U_\theta(L_\theta,0), 
$$
which completes the proof.\qed

\bigskip
\section{Welfare Support of Pareto Optima} \label{AppendixPOsupport}

In this section, we consider the case of Yaari Dual Utilities, as in Section \ref{sec:DU}, and we provide two partial converse statements to Theorem \ref{th:IPOiff} in this case. We first introduce the following lemma which ensures that the function $u$ that maps each type to the pair of the agent's utility function and the insurer's aggregate utility belongs to a Bochner $L^p$ space.

\medskip

\begin{lemma}\label{le:assumsatisfied}
Suppose that for every menu of contracts $
(R_\theta,p_\theta)_{\theta\in\Theta}\in \cI\cR\cap \cI\cC$, the maps
$$
\theta \mapsto U_\theta(R_\theta,p_\theta)
\ \ \hbox{and} \ \ 
\theta \mapsto V_\theta(R_\theta,p_\theta)
$$
are $\cB(\Theta)$-measurable. Then, for every $p\in(1,\infty)$, there exists a constant $M<\infty$ such that for every
$(R_\theta,p_\theta)_{\theta\in\Theta}\in \cI\cR\cap \cI\cC$,
the map
$$\theta \mapsto 
u(\theta):=
\l(
U_{\theta}(R_{\theta},p_{\theta}),
\int_{\Theta}V_{\vartheta}(R_{\vartheta},p_{\vartheta})\,d\mu(\vartheta)
\r)
$$
is in $L^p(\Theta;\bbR^2)$ and satisfies $\|u\|_{L^p}\leq M$.
\end{lemma}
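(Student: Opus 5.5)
The plan is to show that both coordinates of $u$ are bounded in absolute value, uniformly in $\theta$ and uniformly over all menus in $\cI\cR\cap\cI\cC$. An $L^\infty$ bound on a probability space then gives the $L^p$ bound for every $p$. Strong measurability comes for free: in $X=\bbR^2$ it reduces to measurability of the coordinates, by the remark in Appendix \ref{AppBochner}. The first coordinate is measurable by hypothesis. The second is a constant function of $\theta$, which is measurable once we know $\theta\mapsto V_\theta(R_\theta,p_\theta)$ is measurable and integrable, and integrability will follow from the same uniform bound.

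The first step is to bound the premia. For any menu in $\cI\cR$, Proposition \ref{IR_characterization}(2) gives
\[
0\le p_\theta\le \int_0^{\bar L}\bigl[1-g_\theta(F_\theta(l))\bigr]\Bigl[1-\tfrac{\partial R_\theta(l)}{\partial l}\Bigr]dl\le \bar L ,
\]
because $0\le g_\theta\le 1$ and $0\le \partial R_\theta/\partial l\le 1$. Nonnegativity of premia is part of the definition of a menu.

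The second step bounds the utilities using the explicit dual-utility formulas. We have
\[
U_\theta(R_\theta,p_\theta)=-p_\theta-\int_0^{\bar L}[1-g_\theta(F_\theta)]\,\partial_l R_\theta\,dl,
\]
so $|U_\theta|\le 2\bar L$. Similarly,
\[
V_\theta=p_\theta-\int_0^{\bar L}[1-g^{In}(F_\theta)](1-\partial_l R_\theta)\,dl,
\]
so $|V_\theta|\le 2\bar L$. The agent's participation constraint also gives the lower bound $U_\theta\ge -\bar L$, but the crude bound is enough. It follows that $\bigl|\int_\Theta V_\vartheta\,d\mu(\vartheta)\bigr|\le 2\bar L$, since $\mu$ is a probability measure. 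Hence $\|u(\theta)\|_{\bbR^2}\le 2\sqrt2\,\bar L$ for every $\theta$, and
\[
\|u\|_{L^p}=\Bigl(\int_\Theta\|u(\theta)\|^p\,d\mu\Bigr)^{1/p}\le 2\sqrt2\,\bar L=:M .
\]
This $M$ is independent of the menu and of $p$.

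The only delicate point is measurability, not the estimates. One must check that $\theta\mapsto U_\theta(R_\theta,p_\theta)$ is strongly measurable, meaning it is an a.e. limit of simple functions. In finite dimensions this is ordinary Borel measurability of each coordinate, which the hypothesis provides, and it is compatible with the joint measurability of $(\theta,l)\mapsto I_\theta(l)$ in Definition \ref{def:contract_menu}. Given this, no further assumption is needed, and the proof is essentially the two bounding steps above.
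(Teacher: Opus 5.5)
Your proposal is correct and follows essentially the paper's argument: bound $0\le p_\theta\le\bar L$ via Proposition \ref{IR_characterization}, derive uniform bounds on $|U_\theta|$ and on $\int_\Theta V\,d\mu$ from the explicit dual-utility formulas, and use $\mu(\Theta)=1$ to pass to the $L^p$ norm. The only cosmetic difference is in the second coordinate. The paper uses $\Pi\ge 0$ from (P2) together with $V_\theta\le p_\theta\le\bar L$, giving $M=3\bar L$; you use the two-sided pointwise bound on $V_\theta$, giving $M=2\sqrt2\,\bar L$, which also makes explicit that $\theta\mapsto V_\theta$ is integrable, so $\Pi$ is well defined.
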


\medskip

\begin{proof}
Fix a menu $(R_\theta,p_\theta)_{\theta\in\Theta}\in \cI\cR\cap \cI\cC$, and let $
\Pi:=\int_\Theta V_\vartheta(R_\vartheta,p_\vartheta)\,d\mu(\vartheta)$. We show that the map
$$
\theta \mapsto
u(\theta):=\l(U_\theta(R_\theta,p_\theta),\Pi\r)
$$

\noindent is in $L^p(\Theta;\bbR^2)$, with a bound independent of the menu. First, recall that
$$
U_\theta(R_\theta,p_\theta)
=
-p_\theta
-
\int_0^{\bar L}
\l[1-g_\theta(F_\theta(l))\r]
\frac{\partial R_\theta(l)}{\partial l}\,dl.
$$

\medskip

By assumption, the map $\theta \mapsto U_\theta(R_\theta,p_\theta)$ is measurable. Since $\Pi$ is a constant, it follows that $u$ is $\bbR^2$-valued and strongly measurable. Next, by Proposition \ref{IR_characterization}, IR implies that for every $\theta\in\Theta$,
$$
p_\theta
\leq
\int_0^{\bar L}
\l[1-g_\theta(F_\theta(l))\r]
\l(1-\frac{\partial R_\theta(l)}{\partial l}\r)\,dl.
$$
Since $0\leq 1-g_\theta(F_\theta(l))\leq 1$ and $0\leq 1-\frac{\partial R_\theta(l)}{\partial l}\leq 1$, a.e., we obtain
$$
p_\theta\le \bar L,
\ \ \forall\,\theta\in\Theta.
$$

\noindent By admissibility of contracts, we also have $p_\theta\geq 0$. Hence
$$
0\leq p_\theta\leq \bar L,
\ \ \forall\,\theta\in\Theta.
$$

\medskip

Additionally, since $0\leq 1-g_\theta(F_\theta(l))\leq 1$ and 
$0 \leq \frac{\partial R_\theta(l)}{\partial l}\leq 1$, a.e., it follows that
$$
\l|
\int_0^{\bar L}
\l[1-g_\theta(F_\theta(l))\r]
\frac{\partial R_\theta(l)}{\partial l}\,dl
\r|
\leq \bar L.
$$
Therefore,
$$
|U_\theta(R_\theta,p_\theta)|
\leq |p_\theta|+\bar L
\leq 2 \, \bar L,
\ \ \forall\,\theta\in\Theta.
$$

\medskip

Similarly,
$$
V_\theta(R_\theta,p_\theta)
=
p_\theta
-
\int_0^{\bar L}
\l[1-g^{In}(F_\theta(l))\r]
\l(1-\frac{\partial R_\theta(l)}{\partial l}\r)\,dl.
$$
By IR, we have $\Pi \geq 0$. Moreover,
$$
V_\theta(R_\theta,p_\theta)\leq p_\theta\leq \bar L,
\ \ \forall\,\theta\in\Theta,
$$
and hence, since $\mu(\Theta) =1$, we obtain
$$
0\leq \Pi\leq \int_\Theta p_\theta\,d\mu(\theta)\leq \bar L.
$$

\medskip

Consequently, for every $\theta\in\Theta$, we have
$$
\|u(\theta)\|_{\bbR^2}
=
\sqrt{U_\theta(R_\theta,p_\theta)^2+\Pi^2}
\leq |U_\theta(R_\theta,p_\theta)|+|\Pi|
\leq 3\,\bar L.
$$
Hence,
$$
\|u\|_{L^p}^p
=
\int_\Theta \|u(\theta)\|_{\bbR^2}^p\,d\mu(\theta)
\leq
(3\,\bar L)^p\,\mu(\Theta)
=
(3\,\bar L)^p.
$$
Therefore,
$$
\|u\|_{L^p}\leq 3\,\bar L.
$$
\end{proof}

\medskip

The following result provides an approximate welfare functional support to any given incentive efficient menu. 

\medskip

\begin{theorem}
\label{th:IPO_Yaari_direct_support}

Suppose that the set 
$$
K: = \left\{ u \in L^p (\Theta; \R^2) : u(\theta) = \left( U_{\theta} (I_{\theta}, p_{\theta}) ,\int_{\Theta} V_{\vartheta}(I_{\vartheta}, p_{\vartheta}) \,d \mu(\vartheta)\right) , \text{ for some } \, (I_{\theta}, p_{\theta})_{\theta \in \Theta} \in \cI\cR \cap \cI\cC \right\}.
$$

\noindent is closed in $L^p(\Theta;\bbR^2)$. If $(I^*_\theta,p^*_\theta)_{\theta\in\Theta}\in \cI\cP\cO$, then for every $\varepsilon > 0$, there exist some $\phi_\varepsilon \in L^q_+(\Theta,\mu)$ and some $\beta_\varepsilon \in \bbR_+$ such that for every feasible menu $m=
(I_\theta,p_\theta)_{\theta\in\Theta}\in \cI\cR\cap \cI\cC$,
\begin{equation}
\begin{split}
&\int_\Theta \phi_\varepsilon(\vartheta)\,U_\vartheta(I_\vartheta,p_\vartheta)\,d\mu(\vartheta)
+
\beta_\varepsilon \, \int_\Theta V_\vartheta(I_\vartheta,p_\vartheta)\,d\mu(\vartheta)\\
&\qquad\qquad <
\int_\Theta \phi_\varepsilon(\vartheta)\,U_\vartheta(I^*_\vartheta,p^*_\vartheta)\,d\mu(\vartheta)
+
\beta_\varepsilon\,\left(\int_\Theta V_\vartheta(I^*_\vartheta,p^*_\vartheta)\,d\mu(\vartheta)+\varepsilon\right).
\end{split}
\label{eq:eps_support_ineq}
\end{equation}
\end{theorem}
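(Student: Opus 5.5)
The plan is a Hahn--Banach separation in the reflexive space $L^p(\Theta;\bbR^2)$, $1<p<\infty$. We separate the set $K$, enlarged by free disposal, from the shifted point $z_\varepsilon(\theta):=\bigl(U_\theta(I^*_\theta,p^*_\theta),\,\Pi^*+\varepsilon\bigr)$, where $\Pi^*:=\int_\Theta V_\vartheta(I^*_\vartheta,p^*_\vartheta)\,d\mu(\vartheta)$. Incentive efficiency is what keeps $z_\varepsilon$ out of the enlarged set. The positivity of the separating functional then yields $\phi_\varepsilon\geq 0$ and $\beta_\varepsilon\geq 0$.

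\textbf{Step 1: $K$ is convex and bounded.} Under Dual Utility, $U_\theta$ and $V_\theta$ are affine in $(R_\theta,p_\theta)$, since both are linear in $p_\theta$ and in $\partial R_\theta/\partial l$. Hence a convex combination of two menus in $\cI\cR\cap\cI\cC$ has utilities equal to the same convex combination of utilities. It therefore stays in $\cI\cR\cap\cI\cC$: every IC and IR inequality is linear in the menu, and $\cR$ is convex. Joint measurability is also preserved. So $K$ is the affine image of a convex set, hence convex. By Lemma \ref{le:assumsatisfied}, $K$ is bounded in $L^p(\Theta;\bbR^2)$. By hypothesis $K$ is closed, so Corollary \ref{cor:closed_bounded_convex_weakly_compact} shows that $K$ is weakly compact.

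\textbf{Step 2: the free-disposal set is closed and convex, and misses $z_\varepsilon$.} Let $C:=L^p_+(\Theta;\bbR^2)$ be the cone of componentwise nonnegative elements, and set $D:=K-C$. Then $D$ is convex. It is weakly closed because it is the sum of a weakly compact set and the weakly closed convex set $-C$. Concretely, if $k_n-c_n\rightharpoonup d$, pass to a subsequence with $k_n\rightharpoonup k\in K$ (Theorem \ref{thm:eberlein_smulian_appendix}); then $c_n\rightharpoonup k-d\in C$. So $D$ is norm closed.

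Suppose $z_\varepsilon\in D$. Then some menu in $\cI\cR\cap\cI\cC$ has $U_\theta\geq U_\theta(I^*_\theta,p^*_\theta)$ for $\mu$-a.e.\ $\theta$, and an aggregate insurer utility at least $\Pi^*+\varepsilon>\Pi^*$. This contradicts Definition \ref{def:IPO}. Hence $z_\varepsilon\notin D$.

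\textbf{Step 3: separate and read off the weights.} Strict separation of the point $z_\varepsilon$ from the closed convex set $D$ in the locally convex space $L^p$ gives a continuous linear functional $\ell$ with $\sup_D\ell<\ell(z_\varepsilon)$. By Proposition \ref{prop:duality}, $\ell$ is represented by some $g=(g_1,g_2)\in L^q(\Theta;\bbR^2)$. Since $D$ contains $k-tc$ for every $k\in K$, $c\in C$ and $t>0$, boundedness of $\ell$ from above on $D$ forces $\ell(c)\geq 0$ for all $c\in C$. Testing on indicator functions gives $g_1,g_2\geq 0$ $\mu$-a.e.

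Set $\phi_\varepsilon:=g_1\in L^q_+$ and $\beta_\varepsilon:=\int_\Theta g_2\,d\mu\geq 0$. The second component of every element of $K$, and of $z_\varepsilon$, is constant in $\theta$. Hence the second part of the pairing equals $\beta_\varepsilon$ times that constant. Taking $k\in K$ coming from an arbitrary feasible menu in $\ell(k)<\ell(z_\varepsilon)$ gives exactly \eqref{eq:eps_support_ineq}.

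\textbf{Main obstacle.} I expect the delicate points to be the convexity in Step 1, where one has to check that convex combinations of menus remain admissible, and the closedness of $K-C$ in Step 2, which is where the assumed closedness of $K$ and reflexivity enter. The role of $\varepsilon$ is that incentive efficiency only excludes weak domination. Shifting by $\varepsilon$ puts the target point at a positive distance from $D$, which is what makes strict separation available.
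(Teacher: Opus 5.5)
Your proposal is correct and is essentially the paper's argument. The paper separates the weakly compact convex set $K$ from the weakly closed orthant $\cV_\varepsilon = z_\varepsilon + L^p_+(\Theta;\bbR^2)$, which is equivalent to your separation of the point $z_\varepsilon$ from $K-C$; your closedness proof for $K-C$ is just the standard proof of that compact--closed separation theorem. The remaining steps match the paper as well: convexity of $K$ (the paper uses comonotonic additivity of the Choquet integral where you use the equivalent linearity of $U_\theta, V_\theta$ in $(\partial R/\partial l, p)$), positivity of the functional via the unboundedness trick, and reading off $\phi_\varepsilon$ and $\beta_\varepsilon=\int_\Theta g_2\,d\mu$.
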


\medskip

\begin{proof}
For each feasible menu $m=(I_\theta,p_\theta)_{\theta\in\Theta}\in \cI\cR\cap \cI\cC$, let 
$$
\Pi_m:=\int_\Theta V_{\vartheta}(I_{\vartheta}, p_{\vartheta}) \,d \mu(\vartheta),
$$
and define the function $u_m$ by:
$$
u_m(\theta) :=
\l(
U_\theta(I_\theta,p_\theta),
\Pi_m
\r), \ \ \forall \, \theta \in \Theta.
$$

\medskip

First, we show that $K$ is convex and weakly compact. Consider a finite collection of feasible menus 
$$
m^{(k)}=(I_\theta^{(k)},p_\theta^{(k)})_{\theta\in\Theta}\in \cI\cR\cap \cI\cC,
\ \ \hbox{for} 
\ k=1,\ldots,m,
$$

\medskip

\noindent and let $\beta_1,\ldots,\beta_m \in[0,1]$ be such that $\sum_{k=1}^m \beta_k=1$. Let
$$
\bar I_\theta:=\sum_{k=1}^m \beta_k I_\theta^{(k)}
\ \ \hbox{and} \ \ 
\bar p_\theta:=\sum_{k=1}^m \beta_k p_\theta^{(k)}.
$$

\noindent 
Since $I_\theta^{(k)}\in \cI$ for all $k$, the function $\bar I_\theta$ is non-decreasing, $1$-Lipschitz, and satisfies $\bar I_\theta(0)=0$. Hence $\bar I_\theta\in \cI$, for every $\theta \in \Theta$. Fix $\theta,\theta'\in\Theta$. Since each $I_{\theta'}^{(k)}$ is non-decreasing, the vector of random variables
$$\left(
I_{\theta'}^{(1)}(L_\theta),
\ldots,
I_{\theta'}^{(m)}(L_\theta)
\right)$$

\noindent is pairwise comonotone.

\medskip

For an integrable random variable $Z$, let
$$
Y_\theta(Z):=\int_0^1 F_Z^{-1}(t)\,dg_\theta(t)
\ \ \hbox{and} \ \ 
Y^I(Z):=\int_0^1 F_Z^{-1}(t)\,dg^{In}(t).
$$
The insured's and insurer's utilities from a contract $(R,p)\in \cR\times\bbR_+$ can then be written as
$$
U_\theta(R,p)=Y_\theta(-p-R(L_\theta))
\ \ \hbox{and} \ \ 
V_\theta(R,p)=Y^I(p-L_\theta+R(L_\theta)).
$$

\noindent The translation invariance, positive homogeneity, and comonotonic additivity of the Choquet integral implies that
\begin{align*}
U_\theta(\bar I_{\theta'},\bar p_{\theta'})
&=
-\bar p_{\theta'}
+
 Y_\theta\left(-L_\theta+\bar I_{\theta'}(L_\theta)\right)\\
&=
-\sum_{k=1}^m \beta_k \, p_{\theta'}^{(k)}
+
Y_\theta\left(
\sum_{k=1}^m \beta_k \, \left(-L_\theta+I_{\theta'}^{(k)}(L_\theta)\right)
\right)\\
&=
\sum_{k=1}^m \beta_k \, U_\theta(I_{\theta'}^{(k)},p_{\theta'}^{(k)}).
\end{align*}

\noindent Similarly, for the insurer, we have
$$
V_\theta(\bar I_{\theta'},\bar p_{\theta'})
=
\sum_{k=1}^m \beta_k \, V_\theta(I_{\theta'}^{(k)},p_{\theta'}^{(k)}).
$$

\noindent Since $m^{(k)}$ is IC for each $k$, we have
$$
U_\theta(I_\theta^{(k)},p_\theta^{(k)})
\geq
U_\theta(I_{\theta'}^{(k)},p_{\theta'}^{(k)}),
\ \ \forall\,\theta,\theta',\ \forall\,k.
$$

\noindent Multiplying by $\beta_k$ and summing over $k$ yields
$$
U_\theta(\bar I_\theta,\bar p_\theta)\ge U_\theta(\bar I_{\theta'},\bar p_{\theta'}),
\ \ \forall\,\theta,\theta',
$$

\noindent and so the menu $\bar m :=(\bar I_{\theta},\bar p_{\theta})_{\theta\in\Theta}$ is IC.

\medskip

Similarly, averaging the IR inequalities for the insured and the insurer shows that the averaged menu $\bar M$ is also IR. Therefore,
$$
\bar m:=(\bar I_\theta,\bar p_\theta)_{\theta\in\Theta}\in \cI\cR\cap \cI\cC,
\ \ \hbox{and} \ \ 
u_{\bar m}
=
\sum_{k=1}^m \beta_k \, u_{m^{(k)}} \in L^p(\Theta;\bbR^2).
$$
Hence $K$ is convex.

\medskip

Now, by Lemma \ref{le:assumsatisfied}, the set $K$ is bounded in $L^p(\Theta;\bbR^2)$. Since $1<p<\infty$, the space $L^p(\Theta;\bbR^2)$ is reflexive. Therefore, since $K$ is convex, closed, and bounded, it follows from Corollary \ref{cor:closed_bounded_convex_weakly_compact} that $K$ is weakly compact.

\medskip

Next, let $\Pi^\star:=\displaystyle\int_\Theta V_\vartheta(I^*_\vartheta,p^*_\vartheta)\,d\mu(\vartheta)$, and for $\varepsilon>0$, let
$$
\cV_\varepsilon
:=
\left\{
\phi=(\phi_1,\phi_2)\in L^p(\Theta;\bbR^2):
\phi_1(\theta)\geq U_\theta(I^*_\theta,p^*_\theta), \  \mu\hbox{-a.e.},
\ \, 
\phi_2(\theta)\geq \Pi^\star+\varepsilon, \ \mu\hbox{-a.e.}
\right\}.
$$

\noindent The set $\cV_\varepsilon$ is convex and norm closed in $L^p(\Theta;\bbR^2)$. Since every norm-closed convex subset of a Banach space is weakly closed \cite[Corollary 4, p.12]{Diestel1984}, it follows that $\cV_\varepsilon$ is weakly closed. We claim that $K\cap \cV_\varepsilon=\varnothing$. Indeed, if $u_m\in K\cap \cV_\varepsilon$, then for the feasible menu $m=(I_\theta,p_\theta)_{\theta\in\Theta}\in \cI\cR\cap \cI\cC$ corresponding to $u_m$, we have
$$
U_\theta(I_\theta,p_\theta)\geq U_\theta(I^*_\theta,p^*_\theta), \ \mu\hbox{-a.e.},
\ \ \hbox{and} \ \
\int_\Theta V_{\vartheta}(I_{\vartheta}, p_{\vartheta}) \,d \mu(\vartheta)
\geq
\Pi^\star+\varepsilon
>
\Pi^\star,
$$

\noindent contradicting the fact that $m^\star$ is IPO. Hence,
$$
K\cap \cV_\varepsilon=\varnothing.
$$

Since $K$ is weakly compact and convex, and $\cV_\varepsilon$ is weakly closed and convex, the strict separation theorem \cite[Theorem 7.3.4, p.\ 131]{Jarchow1981} gives a nonzero weakly continuous linear functional
$$
\psi_\varepsilon\in \bigl(L^p(\Theta;\bbR^2)\bigr)^*
=L^q(\Theta;\bbR^2)
$$
such that
\begin{equation}
\label{eq:strict_sep}
\sup_{u\in K}\,\psi_\varepsilon(u)
<
\inf_{\phi\in \cV_\varepsilon}\,\psi_\varepsilon(\phi).
\end{equation}

\noindent Furthermore, by Proposition \ref{prop:duality}, there exists a unique $\delta^\varepsilon=(\delta_1^\varepsilon,\delta_2^\varepsilon)\in L^q(\Theta;\bbR^2)$, with $\delta^\varepsilon\neq 0$, such that 
$$
\psi_\varepsilon(f)
= \int_{\Theta} f_1(\vartheta) \, \delta_1^\varepsilon(\vartheta) \,d\mu(\vartheta) + \int_{\Theta} f_2(\vartheta) \, \delta_2^\varepsilon(\vartheta)\, d\mu(\vartheta),
$$

\noindent for all $f = (f_1, f_2) \in L^p(\Theta;\R^2)$. 

\medskip

We now show that $\delta_1^\varepsilon \geq 0$ and $\delta_2^\varepsilon\geq 0$, $\mu$-a.e. For each $\theta \in \Theta$, let
$$
\phi_{\min}^\varepsilon(\theta)
:=
\l(U_\theta(I^*_\theta,p^*_\theta),\,\Pi^\star+\varepsilon\r).
$$

\noindent Then $\phi_{\min}^\varepsilon\in \cV_\varepsilon$. Moreover, if $h=(h_1,h_2)\in L^p(\Theta;\bbR^2)$ is such that $h_1\geq 0$ and $h_2\geq 0$, $\mu$-a.e., then $\phi_{\min}^\varepsilon+h\in \cV_\varepsilon$. Let $A:=\l\{\theta \in \Theta: \delta_1^\varepsilon(\theta) < 0\r\}$ and $h:=(\mathbf 1_A,0)$. Suppose, by way of contradiction, that $
\mu(A)>0$. Then, for every $t>0$, we  have 
$$
\phi_{\min}^\varepsilon + t \, h
=
\l(
\l(\phi_{\min}^\varepsilon\r)_1 + t \, \mathbf \, 1_A, \,
\l(\phi_{\min}^\varepsilon\r)_2
\r)
\in \cV_\varepsilon.
$$

\noindent 
Moreover,
\begin{align*}
\psi_\varepsilon(\phi_{\min}^\varepsilon+t\,h)
&=
\int_\Theta \Bigl((\phi_{\min}^\varepsilon)_1(\vartheta)+t\,\mathbf 1_A(\vartheta)\Bigr) \, \delta_1^\varepsilon(\vartheta)\,d\mu(\vartheta)
+
\int_\Theta (\phi_{\min}^\varepsilon)_2(\vartheta)\,\delta_2^\varepsilon(\vartheta)\,d\mu(\vartheta)\\
&=
\int_\Theta (\phi_{\min}^\varepsilon)_1(\vartheta)\,\delta_1^\varepsilon(\vartheta)\,d\mu(\vartheta)
+
\int_\Theta (\phi_{\min}^\varepsilon)_2(\vartheta)\,\delta_2^\varepsilon(\vartheta)\,d\mu(\vartheta)
+
t \, \int_\Theta \mathbf 1_A(\vartheta)\,\delta_1^\varepsilon(\vartheta)\,d\mu(\vartheta)\\
&=
\psi_\varepsilon(\phi_{\min}^\varepsilon)
+
t \, \int_A \delta_1^\varepsilon \, d\mu.
\end{align*}

\noindent Since $\int_A \delta_1^\varepsilon\,d\mu < 0$, it follows that 
$$
\psi_\varepsilon(\phi_{\min}^\varepsilon+t h)
\ \underset{t\to\infty} \longrightarrow \ 
-\infty,
$$

\noindent which contradicts the inequality
$$
\psi_\varepsilon(\phi) 
\geq \underset{\varphi \,\in\, \cV_\varepsilon} \inf \, \psi_\varepsilon(\varphi)
> \underset{u\,\in\, K} \sup \, \psi_\varepsilon(u)
> -\infty,
\ \ \forall \, \phi \in \cV_\varepsilon,
$$

\noindent where the last inequality follows from $|\psi_\varepsilon(f)| <+\infty$ for all $f \in L^p(\Theta;\R^2)$, by Proposition \ref{prop:duality}. Hence $\delta_1^\varepsilon \in L^q_+(\Theta,\mu)$. The same argument, using $\tilde h=(0,\mathbf 1_A)$, shows that $\delta_2^\varepsilon \in L^q_+(\Theta,\mu)$.

\medskip

Now, since $\delta_1^\varepsilon \geq 0$ and $\delta_2^\varepsilon \geq 0$, $\mu$-a.e., the functional
$$
\psi_\varepsilon(f)
=
\int_\Theta f_1(\vartheta)\,\delta_1^\varepsilon(\vartheta)\,d\mu(\vartheta)
+
\int_\Theta f_2(\vartheta)\,\delta_2^\varepsilon(\vartheta)\,d\mu(\vartheta)
$$
is monotone with respect to the pointwise order on \(L^p(\Theta;\bbR^2)\). Now, let $\phi = (\phi_1,\phi_2)\in \cV_\varepsilon$. By definition of $\cV_\varepsilon$, we have $\phi_1(\theta)\geq (\phi_{\min}^\varepsilon)_1(\theta)$ and $\phi_2(\theta)\geq (\phi_{\min}^\varepsilon)_2(\theta)$, $\mu$-a.e. Hence, $\psi_\varepsilon(\phi)\geq \psi_\varepsilon(\phi_{\min}^\varepsilon)$. Since $\phi_{\min}^\varepsilon\in \cV_\varepsilon$, it follows that
$$
\underset{\phi \,\in\, \cV_\varepsilon} \inf \, \psi_\varepsilon(\phi)
=
\psi_\varepsilon(\phi_{\min}^\varepsilon).
$$

\noindent Moreover, the strict separation inequality in \eqref{eq:strict_sep} gives
$$
\underset{u\in K} \sup \, \psi_\varepsilon(u)
<
\underset{\phi\in \cV_\varepsilon} \inf \, \psi_\varepsilon(\phi),
$$
and hence
$$
\underset{u\in K} \sup \, \psi_\varepsilon(u)
<
\psi_\varepsilon(\phi_{\min}^\varepsilon).
$$

For any feasible menu $m=(I_\theta,p_\theta)_{\theta\in\Theta}$ with $u_m \in K$, we have $u_m(\theta) = \l(U_\theta(I_\theta,p_\theta), \Pi_m\r)$, for all $\theta \in \Theta$, where $\Pi_m=\int_\Theta V_\vartheta(I_\vartheta,p_\vartheta)\,d\mu(\vartheta)$.
Therefore,
$$\psi_\varepsilon(u_m)
=
\int_\Theta \delta_1^\varepsilon(\vartheta)\,U_\vartheta(I_\vartheta,p_\vartheta)\,d\mu(\vartheta)
+
\beta_\varepsilon \, \Pi_m,
$$
where
$$
\beta_\varepsilon:=\int_\Theta \delta_2^\varepsilon(\vartheta)\,d\mu(\vartheta)\geq 0.
$$
Similarly,
$$
\psi_\varepsilon(\phi_{\min}^\varepsilon)
=
\int_\Theta \delta_1^\varepsilon(\vartheta)\,U_\vartheta(I^*_\vartheta,p^*_\vartheta)\,d\mu(\vartheta)
+
\beta_\varepsilon \, (\Pi^\star+\varepsilon).
$$

\noindent Thus,
\begin{equation*}
\begin{split}
&\int_\Theta \delta_1^\varepsilon(\vartheta)\,U_\vartheta(I_\vartheta,p_\vartheta)\,d\mu(\vartheta)
+
\beta_\varepsilon \, \int_\Theta V_\vartheta(I_\vartheta,p_\vartheta)\,d\mu(\vartheta)\\
&\qquad\qquad <
\int_\Theta \delta_1^\varepsilon(\vartheta)\,U_\vartheta(I^*_\vartheta,p^*_\vartheta)\,d\mu(\vartheta)
+
\beta_\varepsilon\,\left(\int_\Theta V_\vartheta(I^*_\vartheta,p^*_\vartheta)\,d\mu(\vartheta)+\varepsilon\right).
\end{split}
\end{equation*}
Letting $\phi_\varepsilon:=\delta_1^\varepsilon$ concludes the proof.
\end{proof}

\medskip

The second result gives an exact welfare functional support to any incentive efficient menu, but under some stronger structural assumptions.

\medskip

\begin{theorem}
\label{thm:direct_exact_support}
Let $m^\star=(I_\theta^\star,p_\theta^\star)_{\theta\in\Theta}\in \cI\cR\cap \cI\cC$, and let
$$
S_{m^\star}:=K-u_{m^\star}
=
\{u-u_{m^\star}:u\in K\}
\subset L^p(\Theta;\bbR^2).
$$ 

\noindent Suppose that there exists an open convex cone
$D\subset L^p(\Theta;\bbR^2)$
such that the following hold:
\smallskip
\begin{enumerate}
\item[(i)] $S_{m^\star} \cap D=\varnothing$;
\medskip
\item[(ii)] $(0,1)\in D$;
\medskip
\item[(iii)] For every $A\in \cB(\Theta)$ with $\mu(A)>0$, we have $(\mathbf 1_A,0)\in D$.
\end{enumerate}

\medskip

\noindent Then there exist some $\phi\in L^q_+(\Theta,\mu)$ and $\beta>0$, with $\phi>0$, $\mu$-a.e., such that
$$
m^\star
\in
\underset{(I_\theta,p_\theta)_{\theta\in\Theta}\,\in\, \cI\cR\cap \cI\cC}{\argsup}
\left\{
\int_\Theta \phi(\vartheta)\,U_\vartheta(I_\vartheta,p_\vartheta)\,d\mu(\vartheta)
+
\beta \, \int_\Theta V_\vartheta(I_\vartheta,p_\vartheta)\,d\mu(\vartheta)
\right\}.
$$

\noindent Consequently, letting
$$
m_\phi:=\int_\Theta \phi(\vartheta)\,d\mu(\vartheta)>0,
\ \ 
d\eta:=\frac{\phi}{m_\phi}\,d\mu,
\ \hbox{ and } \ 
\alpha:=\frac{m_\phi}{m_\phi+\beta}\in(0,1),
$$
the measure $\eta$ is equivalent to $\mu$, and
$$
m^\star
\in
\underset{(I_\theta,p_\theta)_{\theta\in\Theta}\,\in\, \cI\cR\cap \cI\cC} \argsup
\l\{
\alpha\int_\Theta U_\vartheta(I_\vartheta,p_\vartheta)\,d\eta(\vartheta)
+
(1-\alpha)\int_\Theta V_\vartheta(I_\vartheta,p_\vartheta)\,d\mu(\vartheta)
\r\}.
$$
\end{theorem}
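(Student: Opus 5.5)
The plan is a Hahn--Banach separation of the convex set $S_{m^\star}=K-u_{m^\star}$ from the open convex cone $D$, followed by sign and strict-positivity arguments on the separating functional. Along the way I will use three earlier facts: $K$ is convex (shown in the proof of Theorem~\ref{th:IPO_Yaari_direct_support}), so $S_{m^\star}$ is convex and contains $0$; $D$ is open, convex and nonempty; and, by (i), $S_{m^\star}\cap D=\varnothing$. The geometric Hahn--Banach theorem for a convex set disjoint from an open convex set in a normed space then gives a nonzero $\psi\in (L^p(\Theta;\bbR^2))^*$ and $c\in\bbR$ with
\[
\psi(s)\le c<\psi(d)\quad \text{for all } s\in S_{m^\star},\ d\in D ,
\]
where the strict inequality holds because $\psi(D)$ is an open set. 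Since $D$ is a cone whose closure contains $0$ and $0\in S_{m^\star}$, we get $c=0$. Hence $\psi\le 0$ on $S_{m^\star}$ and $\psi>0$ on $D$.

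By Proposition~\ref{prop:duality}, $\psi$ is represented by some $\delta=(\delta_1,\delta_2)\in L^q(\Theta;\bbR^2)$. I then read off signs from (ii) and (iii).
\begin{itemize}
\item From (iii), for every $A$ with $\mu(A)>0$ we have $\int_A \delta_1\,d\mu=\psi(\mathbf 1_A,0)>0$. Taking $A=\{\delta_1\le 0\}$ forces $\mu(A)=0$, so $\phi:=\delta_1>0$ $\mu$-a.e. This step is where exact support (rather than the $\varepsilon$-support of Theorem~\ref{th:IPO_Yaari_direct_support}) becomes possible, and it is the key point.
\item From (ii), $\beta:=\int_\Theta \delta_2\,d\mu=\psi(0,1)>0$.
\end{itemize}

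Next comes the support inequality. For any feasible $m$, the second coordinate of $u_m-u_{m^\star}$ is the constant $\Pi_m-\Pi^\star$, so
\[
\psi(u_m-u_{m^\star})=\int_\Theta \phi\,(U_\vartheta(I_\vartheta,p_\vartheta)-U_\vartheta(I^\star_\vartheta,p^\star_\vartheta))\,d\mu+\beta(\Pi_m-\Pi^\star)\le 0 .
\]
This is exactly the statement that $m^\star$ attains the supremum of $\int\phi U\,d\mu+\beta\int V\,d\mu$. Only the integral $\beta$ of $\delta_2$ enters, so no sign condition on $\delta_2$ pointwise is needed. All integrals are finite by Lemma~\ref{le:assumsatisfied} and H\"older.

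Finally I normalize. Since $\phi>0$ a.e., $m_\phi=\int\phi\,d\mu\in(0,\infty)$, using $\phi\in L^q\subset L^1$ because $\mu$ is finite. The measure $\eta=(\phi/m_\phi)\,\mu$ is a probability measure equivalent to $\mu$, because $\phi>0$ a.e. Dividing the objective by $m_\phi+\beta>0$ gives the $\alpha$-weighted form with $\alpha=m_\phi/(m_\phi+\beta)\in(0,1)$, and dividing by a positive constant does not change the argmax. I expect the main obstacle to be only the careful justification that $c=0$ and that the separation inequality is strict on $D$; both follow from openness and the cone property.
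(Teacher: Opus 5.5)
Your proposal is correct and follows the paper's proof essentially step for step. Both separate $S_{m^\star}$ from the open cone $D$ by Hahn--Banach, pin the separating constant to $0$ using $0\in S_{m^\star}$ and the cone property, read off $\beta>0$ from (ii) and $\phi>0$ $\mu$-a.e.\ from (iii) via the $L^q$ representation, and then normalize; your choice $A=\{\delta_1\le 0\}$ simply makes explicit the step the paper leaves implicit when concluding $\phi>0$ $\mu$-a.e.
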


\medskip

\begin{proof}
Since $K$ is convex, so is $S_{m^\star}$. By assumption, there exists an open convex cone
$D\subset L^p(\Theta;\bbR^2)$ such that $S_{m^\star}\cap D=\varnothing$. Since $0=u_{m^\star}-u_{m^\star}\in S_{m^\star}$, the set $S_{m^\star}$ is nonempty. Because $S_{m^\star}$ and $D$ are nonempty disjoint convex subsets of the locally convex Hausdorff space $L^p(\Theta;\bbR^2)$, and since $D$ is open, it follows from the Hahn-Banach separation theorem \cite[Theorem 7.3.2, p.\ 130]{Jarchow1981} that there exists a nonzero continuous linear functional $\psi\in \l(L^p(\Theta;\bbR^2)\r)^*=L^q(\Theta;\bbR^2)$ and some $\alpha_0 \in \bbR$ such that
$$
\psi(s)\leq \alpha_0 < \psi(d),
\ \  \forall\,s\in S_{m^\star},\ \forall\,d\in D.
$$
Since \(0\in S_{m^\star}\), we have
$$
0=\psi(0)\leq \alpha_0.
$$

Next, we show that $\psi(d)>0$, for all $d\in D$. Indeed, fix $d\in D$. Since $D$ is a cone, we have $\lambda \, d\in D$, for every $\lambda>0$. Therefore, the separation inequality above gives
$$
\alpha_0<\psi(\lambda \, d) = \lambda \, \psi(d),
\ \ \forall\,\lambda>0.
$$

\noindent If $\psi(d)<0$, then $\lambda \,\psi(d) \to -\infty$ as $\lambda \to +\infty$. That is, for each $R \in \bbR$, there exists some $\lambda_R > 0$ such that $\lambda \, \psi(d) < R$, for all $\lambda \geq \lambda_R$. In particular, for $R=\alpha_0$, there exists $\lambda_{\alpha_0} > 0$ such that $
\lambda \, \psi(d) < \alpha_0$, for all $ \lambda \geq \lambda_{\alpha_0}$, which contradicts the inequality above.
Hence $\psi(d)\geq 0$. If $\psi(d)=0$, then the separation inequality above gives $\alpha_0<0$, contradicting the fact that $\alpha_0\geq 0$. Therefore, 
$$\psi(d)>0, \ \ \forall \, d \in D.$$

Now, for any $d\in D$ and any $\lambda>0$, we have $
\alpha_0 < \lambda \, \psi(d)$. Letting $\lambda\downarrow 0$, we obtain $\alpha_0\leq 0$. Since $\alpha_0\geq 0$, it follows that $
\alpha_0=0$. Hence,
$$
\psi(s)\leq 0<\psi(d),
\ \ \forall\,s\in S_{m^\star},\ \forall\,d\in D.
$$

By Proposition \ref{prop:duality}, there exists a unique $(\phi,\widetilde \phi)\in L^q(\Theta;\bbR^2)$ such that
$$
\psi(f)
=
\int_\Theta f_1(\vartheta)\,\phi(\vartheta)\,d\mu(\vartheta)
+
\int_\Theta f_2(\vartheta)\,\widetilde\phi(\vartheta)\,d\mu(\vartheta),
\ \  \forall\,f=(f_1,f_2)\in L^p(\Theta;\bbR^2).
$$

\noindent Let $\beta := \displaystyle\int_\Theta \widetilde\phi(\vartheta)\,d\mu(\vartheta)$. Since $(0,1)\in D$ by assumption, we have
$$ 0< \psi(0,1) =\beta,$$
and so $\beta>0$. Next, fix $A\in \cB(\Theta)$ such that $\mu(A)>0$. Since $(\mathbf 1_A,0) \in D$ by assumption, we have
$$
0<\psi(\mathbf 1_A,0)=\int_A \phi(\vartheta)\,d\mu(\vartheta).
$$
Since this holds for every $A\in \cB(\Theta)$ such that $\mu(A)>0$, it follows that
$$
\phi >0, \ \mu\text{-a.e.}
$$

Finally, let $u_m\in K$, and let $m = (I_{\theta}, p_{\theta})_{\theta \in \Theta} \in \cI\cR \cap \cI\cC$ be the associated menu, such that $u(\theta) = \l( U_{\theta} (I_{\theta}, p_{\theta}) ,\int_{\Theta} V_{\vartheta}(I_{\vartheta}, p_{\vartheta}) \,d \mu(\vartheta)\r)$, for all $\theta \in \Theta$. Then
$$
u_m -u_{m^\star}\in S_{m^\star},
$$
and therefore $\psi(u_m-u_{m^\star})\leq 0$ by the separation inequality. Since $\psi$ is linear, we obtain
$$
\psi(u_m)\leq \psi(u_{m^\star}),
$$
that is, letting $\Pi_m := \int_{\Theta} V_{\vartheta}(I_{\vartheta}, p_{\vartheta}) \,d \mu(\vartheta)$ and $\Pi^\star_m := \int_{\Theta} V_{\vartheta}(I^\star_{\vartheta}, p^\star_{\vartheta}) \,d \mu(\vartheta)$, 
\begin{align*}
\int_\Theta \phi(\vartheta)\,U_\vartheta(I_\vartheta,p_\vartheta)\,d\mu(\vartheta)
+
\beta\,\Pi_m
\leq
\int_\Theta \phi(\vartheta)\,U_\vartheta(I_\vartheta^\star,p_\vartheta^\star)\,d\mu(\vartheta)
+
\beta\,\Pi^\star.
\end{align*}

\noindent Consequently,
$$
m^\star
\in
\underset{(I_\theta,p_\theta)_{\theta\in\Theta}\,\in\, \cI\cR\cap \cI\cC} \argsup
\l\{
\int_\Theta \phi(\vartheta)\,U_\vartheta(I_\vartheta,p_\vartheta)\,d\mu(\vartheta)
+
\beta \, \int_\Theta V_\vartheta(I_\vartheta,p_\vartheta)\,d\mu(\vartheta)
\r\}.
$$

\medskip

Finally, since $\phi>0$, $\mu$-a.e., let
$$
m_\phi:=\int_\Theta \phi(\vartheta)\,d\mu(\vartheta)>0,
\ \ 
d\eta:=\frac{\phi}{m_\phi}\,d\mu,
\ \hbox{ and } \ 
\alpha:=\frac{m_\phi}{m_\phi+\beta}\in(0,1).
$$

\noindent Then $\eta$ is a probability measure equivalent to \(\mu\), and for every feasible menu $m=(I_\theta,p_\theta)_{\theta\in\Theta} \in \cI\cR\cap \cI\cC$, we have
\begin{align*}
&\int_\Theta \phi(\vartheta)\,U_\vartheta(I_\vartheta,p_\vartheta)\,d\mu(\vartheta)
+
\beta\int_\Theta V_\vartheta(I_\vartheta,p_\vartheta)\,d\mu(\vartheta)\\
&\qquad\qquad=
(m_\phi+\beta)
\left[
\alpha\int_\Theta U_\vartheta(I_\vartheta,p_\vartheta)\,d\eta(\vartheta)
+
(1-\alpha)\int_\Theta V_\vartheta(I_\vartheta,p_\vartheta)\,d\mu(\vartheta)
\right].
\end{align*}

\noindent Since multiplication by the positive constant $m_\phi+\beta$ does not change the set of maximizers, this concludes the proof.
\end{proof}

\vspace{0.8cm}
\bibliographystyle{ecta}
\bibliography{biblio}
\vspace{0.7cm}

\end{document}